\newtheorem{theorem}{Theorem}[section]
\newtheorem{lemma}[theorem]{Lemma}
\newtheorem{claim}[theorem]{Claim}
\newtheorem{proposition}[theorem]{Proposition}
\newtheorem{corollary}[theorem]{Corollary}
\newtheorem{definition}{Definition}
\newcommand{\OO}{{\widetilde{O}}}
\newcommand{\N}{{\mathbb{N}}}
\newcommand{\MI}{{\mathrm{MIS}}}
\newcommand{\AND}{{\;\wedge\;}} 
\newcommand{\OR}{\bigvee} 
\newcommand{\poly}{{\mathrm{poly}}}
\newcommand{\comment}[1]{}
\newcommand{\Z}{\mathbb{Z}}
\newcommand{\cost}{\textrm{cost}}
\newcommand{\supp}{\textbf{supp}}
\newcommand{\LRGE}{\textsc{large}}
\newcommand{\child}{\textbf{child}}
\newcommand{\prob}{\textbf{Pr}}
\newcommand{\expected}{\mathbb{E}}
\newcommand{\ED}{\textbf{ED}}
\newcommand{\HAM}{\textbf{Ham}}
\newcommand{\emptystr}{\varepsilon}
\newcommand{\concat}{\circ}
\newcommand{\bigconcat}{\odot}
\newcommand{\eddelete}{\textbf{del}}
\newcommand{\edsubst}{\textbf{sub}}
\newcommand{\edinsert}{\textbf{ins}}
\newcommand{\recover}{\textbf{restore}}
\newcommand{\SK}{\textsc{sk}}
\newcommand{\sk}{\textsc{sk}}
\newcommand{\fingerprint}{\textsc{fingerprint}}
\newcommand{\indices}{I_{\neq}}
\newcommand{\tr}{\textbf{tr}}
\newcommand{\leafroot}{\textbf{leaf}$\rightarrow$\textbf{root}}
\newcommand{\costly}[1]{\textbf{costly}(#1)}
\newcommand{\gsize}{s}
\newcommand{\gramsize}{k}
\newcommand{\gpoint}[2]{\langle #1,#2 \rangle}
\newcommand{\suprange}{S}
\newcommand{\RECOVER}{\textsc{recover}}
\newcommand{\critset}{canonical difference summary}
\newcommand{\treecap}{\kappa}
\newcommand{\stringcap}{\kappa}
\newcommand{\locationcap}{\lambda}
\newcommand{\locprob}{\textbf{loc}}
\newcommand{\canon}{{\textbf{canon}}}
\newcommand{\occupied}{\textbf{occ}}
\newcommand{\PMR}{\textbf{TMR}}
\newcommand{\HMR}{\textbf{HMR}}
\newcommand{\mdecomp}{\textsc{main-decomp}}
\newcommand{\msketch}{\textsc{main-sketch}}
\newcommand{\msketchout}{\textbf{main-sk}}
\newcommand{\fullsketchout}{\textbf{sk}}
\newcommand{\fullsketch}{\textsc{ED-sketch}}
\newcommand{\edrecover}{\textsc{ED-recover}}
\newcommand{\mreconstruct}{\textsc{main-reconstruct}}
\newcommand{\bdecomp}{\textsc{basic-decomp}}
\newcommand{\vectorcondense}{\textsc{grammar-condense}}
\newcommand{\locationcondense}{\textsc{location-condense}}
\newcommand{\findstrings}{\textsc{find-strings}}
\newcommand{\findlocations}{\textsc{find-locations}}
\newcommand{\undefnd}{\textsc{undefined}}
\newcommand{\gthreshsk}{\textsc{ted-fingerprint}}
\newcommand{\threshsk}{\textsc{OR-fingerprint}}
\newcommand{\threshskout}{\textbf{OR-print}}
\newcommand{\hmrsketch}{\textsc{hmr-sketch}}
\newcommand{\hmrrecover}{\textsc{hmr-recover}}
\newcommand{\tval}{\text{value}}
\newcommand{\tprod}{\text{product}}
\newcommand{\tsq}{\text{square}}
\newcommand{\thash}{\text{hash}}
\newcommand{\dindex}{\text{index}}
\newcommand{\xval}{\text{x-val}}
\newcommand{\yval}{\text{y-val}}
\newcommand{\OCC}{\text{OCC}}
\newcommand{\MERGE}{\text{MERGE}}
\newcommand{\grid}{\textbf{Grid}}
\newcommand{\edges}{\textbf{E}}
\newcommand{\paramspace}{\Gamma}
\newcommand{\decode}{\textsc{basic-decode}}
\newcommand{\ztree}{\textbf{substrs}}
\newcommand{\gapthresh}{t}
\newcommand{\bvec}{\textbf{grams}}
\newcommand{\leftchild}{\textbf{left-size}}
\newcommand{\lvec}{\leftchild}
\newcommand{\LOS}{L}
\newcommand{\BASE}{W}
\newcommand{\zint}[1]{\textbf{Loc}(#1)}
\newcommand{\interval}{\textbf{I}}
\newcommand{\startindex}{\textbf{start}}
\newcommand{\bvsize}{{N}}
\newcommand{\bvsizeval}{n^{60}}
\newcommand{\loadpar}{R}
\newcommand{\dfactor}{{s_{{\mathrm{E}\rightarrow\mathrm{H}}}\,}}
\newcommand{\bkfailure}{{s_{{\mathrm{split}}}\,}}
\newcommand{\tdepth}{d}
\newcommand{\orgap}{s_{{\mathrm{OR}}}}
\newcommand{\orgapvalue}{2^{O(\sqrt{\log(n)\log\log(n)})}}
\newcommand{\bitlength}{\text{bit-length}}
\newcommand{\bigfield}{\mathbb{F}}
\newcommand{\fsketch}{\textbf{str-sk}}
\newcommand{\gsketch}{\textbf{loc-sk}}
\newcommand{\outputset}{\textbf{found-edges}}
\newcommand{\tstree}{\textbf{OR-prints}}
\newcommand{\fptree}{{\textbf{fingerprints}}^*}
\newcommand{\COMMENT}{\textsc{Comment}}
\newcommand{\stringmismatch}{\textbf{gram-mismatch}}
\newcommand{\locmismatch}{\textbf{size-mismatch}}
\newcommand{\reportednodes}{\textbf{top-nodes}}
\newcommand{\claimededges}{\textbf{candids}}
\newcommand{\claimedsize}{\textbf{found-size}}
\newcommand{\claimedlocation}{\textbf{found-start}}
\newcommand{\claimedstring}{\textbf{found-str}}
\newcommand{\claimedbv}{\textbf{found-gram}}
\newcommand{\claimednodes}{\textbf{loc-nodes}}
\newcommand{\stringnodes}{\textbf{str-nodes}}
\newcommand{\hmrload}[1]{\widehat{#1}}
\newcommand{\lexorder}{<_{\text{lex}}}
\newcommand{\nonemptynodes}{\textbf{vis-nodes}}
\newcommand{\DONE}{$\textbf{D}_1$}
\newcommand{\DTWO}{$\textbf{D}_2$}
\newcommand{\numrep}{10\gsize+50}
\newcommand{\normex}{\mathrm{\textbf{Normal}}\,}
\newcommand{\abnormex}{\mathrm{\textbf{Abnormal}}\,}
\newcommand{\loadestimator}{\textbf{load-ub}}
\newcommand{\CS}{CS}
\newcommand{\NCS}{NCS}
\newcommand{\critpar}{{s_{\mathrm{load}}\,}}
\newcommand{\LOC}{\textbf{LOC}}
\newcommand{\STR}{\textbf{STR}}
\newcommand{\eventT}{\textbf{T}}
\newcommand{\eventC}{\textbf{C}}
\newcommand{\eventF}{\textbf{F}}
\newcommand{\eventL}{\textbf{L}}
\begin{document}



\date{\today}

\title{Almost Linear Size Edit Distance Sketch}

\author[1,2]{Michal Kouck{\'{y}}\thanks{Email: koucky@iuuk.mff.cuni.cz. 
Part of the work was carried out during an extended visit to DIMACS, with support from the National Science Foundation under grant          
number CCF-1836666 and from The Thomas C. and Marie M. Murray Distinguished Visiting Professorship in the Field of Computer Science at Rutgers University.
Partially supported by the Grant Agency of the Czech Republic under the grant agreement no. 19-27871X and 24-10306S. 
This project has received funding from the European Union’s Horizon 2020 research and innovation programme under the Marie Skłodowska-Curie grant agreement No. 823748 (H2020-MSCA-RISE project CoSP). 
}}
\author[3]{Michael Saks\thanks{Email: msaks30@gmail.com.}}
\affil[1]{Computer Science Institute of Charles University,
Prague, Czech Republic}
\affil[2]{DIMACS, Rutgers University, Piscataway, NJ, USA}
\affil[3]{Department of Mathematics, Rutgers University, Piscataway, NJ, USA}

\maketitle

\begin{abstract}
	Edit distance is an important measure of string similarity.
    It counts the number of insertions, deletions and substitutions one has to make to a string $x$ to get a string $y$.
    In this paper we design an almost linear-size sketching scheme for computing edit distance up to a given threshold $k$.  
    The scheme consists of two algorithms, a sketching algorithm and a recovery algorithm. The sketching algorithm depends on the parameter $k$ and takes
    as input a string $x$ and a public random string $\rho$ and computes a  sketch $sk_{\rho}(x;k)$, which is a digested version of $x$.  
    The recovery algorithm is given two sketches $sk_{\rho}(x;k)$ and $sk_{\rho}(y;k)$ as well as the public
    random string $\rho$ used to create the two sketches, and (with high
    probability) if the edit distance $\ED(x,y)$ between $x$ and $y$ is at most $k$, will output $\ED(x,y)$ together
    with an optimal sequence of edit operations that transforms $x$ to $y$, and 
    if $\ED(x,y) > k$ will output \LRGE.   The size of the sketch output by the
    sketching algorithm
    on input $x$ is $k{2^{O(\sqrt{\log(n)\log\log(n)})}}$ (where $n$ is an upper bound on length of $x$).
    The sketching and recovery algorithms both run in time polynomial in $n$.
    The dependence of sketch size on $k$ is information theoretically optimal and
    improves over the quadratic dependence on $k$ in schemes of Kociumaka, Porat and Starikovskaya (FOCS'2021), and Bhattacharya and Kouck\'y (STOC'2023).
 \end{abstract}

\section{Introduction}

The \emph{edit distance} of two strings $x$ and $y$ measures how many edit operations (removing a symbol, inserting a symbol
or substituting a symbol by another) are needed to transform $x$ to  $y$.
Computing edit distance is a classical algorithmic problem.  For input strings of length at most $n$,
edit distance can be computed in time $O(n^2)$ using dynamic programming~\cite{WF74,MP80,G16}.
Assuming the Strong Exponential Time Hypothesis (SETH), this cannot be improved to truly sub-quadratic time $O(n^{2-\epsilon})$~\cite{BI15}.  
When parameterized by the edit distance $k=\ED(x,y)$, the running time has been improved to $O(n+k^2)$~\cite{LMS98}.
The edit distance of two strings can be approximated within a constant factor in time $O(n^{1+\epsilon})$ \cite{AN20,KS20,BR20,CDGKS20}.

This paper concerns \emph{sketching schemes} for edit distance, which consist of a
\emph{sketching algorithm}, parameterized by an integer $k$, that takes a string $x$ and (using a public random string $\rho$) maps it to a short {\em sketch} $\sk_{\rho}(x;k)$, and a \emph{recovery algorithm} that takes as input
two sketches $\sk_{\rho}(x;k)$ and $\sk_{\rho}(y;k)$ and the public random string $\rho$ and, with high
probability (with respect to $\rho$), outputs $\ED(x,y)$ when $\ED(x,y) \leq k$ and outputs $\LRGE$ otherwise.

The goal is to get polynomial time sketch and recovery algorithms that achieve the smallest
possible sketch length.
Jin, Nelson and Wu \cite{nelson_edit_sketch} proved that sketches must have length $\Omega(k)$.  
For edit distance it is not apriori clear whether sketches of size $k^{O(1)}n^{o(1)}$ exist, even non-uniformly.
The first sketching scheme with $poly(k)$ sketch size was found by Belazzougui and Zhang~\cite{belazzougui_zhang} who attained sketch  size $\OO(k^{8})$. 
(Here, and throughout the paper, $\OO(t)$ means $t\log^{O(1)}n$, where $n$ is an upper bound on the length of the strings.) 
This was improved to $\OO(k^{3})$  by Jin, Nelson and Wu~\cite{nelson_edit_sketch}, and then to $\OO(k^{2})$ by Kociumaka, Porat and Starikovskaya~\cite{editsketchfocs2021}.
The above sketches used {\em CGK random walks} on strings~\cite{CGK16} to embed the edit distance metric into the Hamming distance metric with distortion $O(k)$, plus additional techniques.
A different approach, based on a string decomposition technique, was used by Bhattacharya and Kouck\'y~\cite{BK23}. 
Here we will also use this technique.

The quadratic dependence on $k$ in the (very different) sketches of~\cite{editsketchfocs2021}  and 
\cite{BK23} and also in the exact computation algorithm of~\cite{LMS98} is suggestive that there may be something intrinsic to the problem that requires quadratic dependence on $k$ for both sketching
and evaluation.
In this paper, we show that this is not the case, by presenting an efficiently computable sketch of size $O(k\orgapvalue)$ which is $k$ times a "slowly" growing function of $n$, that is intermediate between $\log^{\omega(1)}n$ and $n^{o(1)}$.   Our main result is:
\begin{theorem}[Sketch for edit distance]\label{t-main}
   There is a randomized sketching algorithm $\fullsketch$ that on an input string $x$ of length at most $n$ with parameter $k<n$ and using a public random string $\rho$ 
   produces a sketch $\sk_{\rho}(x)$ of size $O(k \orgapvalue)$, and recovery algorithm  $\edrecover$ such that given two sketches $\sk_{\rho}(x)$ and $\sk_{\rho}(y)$ for two strings $x$ and $y$ and $\rho$, with probability at least $1-1/n$ (with respect to $\rho$), outputs $\ED(x,y)$ if $\ED(x,y) \leq k$ and $\LRGE$ otherwise.
   The running time of the $\fullsketch$ is $n^{O(1)}$ and the running time of $\edrecover$ is $\OO(\min(n^2,k^3 \orgapvalue))$.
\end{theorem}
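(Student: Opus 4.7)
The plan is to assemble the sketch by sketching a carefully chosen hierarchical decomposition of $x$, following the string-decomposition paradigm of~\cite{BK23}, but replacing their per-level sub-sketch (whose size was $\OO(k^2)$) by a sparse-recovery primitive whose size is only $\OO(k)$ per level. If each of the $\tdepth = O(\sqrt{\log(n)/\log\log(n)})$ levels contributes $\OO(k)$ bits, the total sketch length is $k \cdot 2^{O(\sqrt{\log(n)\log\log(n)})} = O(k \cdot \orgapvalue)$, as required.

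First I would fix a locally-consistent hierarchical parsing of $x$ at $\tdepth$ levels, with geometrically increasing block sizes roughly $b^i$ where $b = 2^{O(\sqrt{\log(n)\log\log(n)})}$, so that $b^{\tdepth} \ge n$. The parsing should have the property that a single edit in the underlying string changes only a polylogarithmic number of super-symbols at each level; consequently, if $\ED(x,y) \le k$, the level-$i$ super-symbol sequences for $x$ and $y$ agree except on at most $O(k \cdot \log^{O(1)} n)$ positions, and the per-level discrepancies compose (up to constants) into an edit script of cost $\ED(x,y)$. Such a parsing can be built from run-length block-compression combined with recompression-style pair compression, as is standard in grammar-based compression.

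Second, for each level I would store a sparse-recovery sketch that, given the two sketches, lets the recovery algorithm extract the differing positions together with short Karp--Rabin-style fingerprints of the two differing super-symbols at each such position. The sketch stores only fingerprints, not full strings; the actual contents of a level-$i$ super-symbol are reconstructed later from its level-$(i-1)$ constituents. The primitive I have in mind is a fingerprinted invertible Bloom lookup table, or one of the OR-fingerprint structures hinted at by the $\threshsk$/$\threshskout$ notation of the paper, sized to tolerate an expected support of $O(k \cdot \log^{O(1)} n)$ while still occupying only $\OO(k)$ bits. The recovery algorithm $\edrecover$ then processes levels from top down, using the level-$i$ sparse-recovery output to choose which level-$(i-1)$ blocks to descend into, and at the character level assembles an optimal sequence of insertions, deletions, and substitutions; if at any level the decoded discrepancy pattern is inconsistent with $\ED(x,y) \le k$, it outputs $\LRGE$.

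The main obstacle will be simultaneously achieving three things: (i) a parsing whose per-level discrepancy stays $O(k \cdot \log^{O(1)} n)$ and whose disagreements across levels compose into a genuine edit script of cost $\ED(x,y)$; (ii) a sketching primitive of size $\OO(k)$ that is robust against $O(k \cdot \log^{O(1)} n)$ active candidates, rather than the optimal $O(k)$; and (iii) cross-level consistency between the fingerprint of a super-symbol at level $i$ and the fingerprints of its children at level $i-1$, so that top-down decoding does not accumulate error. High-probability correctness then follows from a union bound over the $\tdepth \cdot \poly(n)$ fingerprint-collision and sparse-recovery failure events, each of which can be driven below $1/n^{\omega(1)}$ by tuning the field size of the Karp--Rabin fingerprints and the load parameters of the sparse-recovery primitive. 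Once decoding succeeds at every level, constructing the output edit script takes time polynomial in $n$, and a careful accounting gives the $\OO(\min(n^2, k^3 \orgapvalue))$ bound claimed for $\edrecover$.
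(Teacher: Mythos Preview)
Your proposal has a genuine gap at exactly the point you flag as ``the main obstacle,'' namely item (i): the claim that at every level the number of differing super-symbols stays $O(k\cdot\log^{O(1)} n)$. With any locally-consistent hierarchical parsing (including the one from~\cite{BK23}), a single \emph{bad split}---a block boundary placed near an edit operation---can cause the children of that block in the two strings to be misaligned, and this misalignment propagates downward. Below a badly split node, corresponding fragments of $x$ and $y$ can have arbitrarily large edit distance, so the number of level-$i$ super-symbols that differ can be $\omega(k\cdot\mathrm{polylog}\,n)$; the paper calls this a \emph{mismatch flood}. An invertible Bloom lookup table (or any standard sparse-recovery structure) sized for $\OO(k)$ differences will then fail globally, not just on the flooded region. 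The paper's new ingredient is precisely a replacement for IBLT-style recovery---the \emph{hierarchical mismatch recovery} scheme---which is engineered so that flooded subtrees do not corrupt recovery elsewhere: each node has a bounded capacity, overloaded nodes are effectively truncated, and mismatches in underloaded regions remain recoverable. Your outline does not contain this idea.

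There is a second gap in how you account for the $2^{O(\sqrt{\log n\log\log n})}$ factor. In your arithmetic, $\tdepth=O(\sqrt{\log n/\log\log n})$ levels of $\OO(k)$ bits each gives $\OO(k\sqrt{\log n})$, not $k\cdot 2^{O(\sqrt{\log n\log\log n})}$; the numbers do not match. In the paper the factor arises not from the number of levels but from the Ostrovsky--Rabani embedding: to decide, from sketches alone, whether two corresponding fragments are ``far'' (and must be fully recovered at this level) or ``close'' (and should be subdivided further), one watermarks each fragment's grammar with a \emph{threshold edit-distance fingerprint} of gap $\orgap=2^{O(\sqrt{\log n\log\log n})}$. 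This gap forces the per-level capacity (and hence the sketch size) to be $\OO(\orgap^2 k)$. Without such a fingerprint, your top-down decoder has no mechanism to distinguish a fragment whose entire grammar must be recovered now from one whose few grammar-bit differences should be deferred to a deeper level; and without that distinction, you cannot bound the number of full-grammar recoveries per level by $\OO(k/\text{(grammar size)})$.
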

 
We remark that we did not attempt to optimize the running time, or poly-log factors in the sketch sizes.
The running time of $\fullsketch$ is largely determined by the running time of the Ostrovsky-Rabani embedding~\cite{OR07} (see below) which runs in polynomial time, but we don't know the exponent. 
The amount of randomness the algorithm uses can be reduced to poly-logarithmic in $n$.

Our sketch has the additional property that the recovery algorithm also determines an optimal sequence of edit operations that transforms $x$ to $y$.

Sketching for Hamming distance has been studied extensively and is well understood.
Several approaches yield sketches of size $\OO(k)$ that can recover the Hamming distance and can solve the harder problem of \emph{mismatch recovery}, i.e.,
reconstructing the set of positions where the two strings differ, see e.g.~\cite{rollinghashSODA2019,PL07}; this sketch size
is information theoretically optimal.
Our construction for edit distance uses sketches for Hamming distance, but
we need a more refined version of Hamming distance sketches that allow for recovery of all differences
in regions of the strings where the density of differences is low, even if the overall Hamming distance
is large.  
We call this new sketch a {\em hierarchical mismatch recovery scheme}.
This is the main new technical tool.

Not much is known about sketching edit distance when we only want to approximate edit distance from the sketches.
For Hamming distance, there are known sketches of poly-logarithmic size in $n$ that allow
recovery of Hamming distance within a $(1+\epsilon)$-factor \cite{DBLP:journals/siamcomp/FeigenbaumKSV02}.
For edit distance nothing like that is known.
A more stringent notion of sketching is that of embedding edit distance metrics into $\ell_1$ metrics.
The best known result in this direction, by Ostrovsky and Rabani~\cite{OR07}, 
gives an embedding of edit distance into $\ell_1$ with {\em distortion} (approximation factor) $\orgap(n)=\orgapvalue$.
Interestingly, our sketch relies on this embedding to {\em fingerprint} strings by their approximate 
edit distance.
The dependence of our sketch size on $n$ in Theorem~\ref{t-main} can be stated more precisely
as $\OO(\orgap(n)^2)$.  Since our use of their embedding is ``black box'',
any improvement on the distortion factor for embedding edit distance into $\ell_1$ 
would give a corresponding improvement in our sketch size.

\section{Our technique}

The starting point of our sketch is the string decomposition algorithm of Bhattacharya and Kouck\'y~\cite{BK23},
The algorithm $\bdecomp$ (see Section~\ref{subsec:bdecomp}) takes a string $x$ and partitions it into fragments 
so that each fragment can be described concisely by a small context-free grammar.
The size of the grammar is at most $k'$ for some chosen parameter $k'$.
(In~\cite{BK23} this $k'$ is chosen to be $\OO(k)$.)
The partitioning uses randomness to select the starting point for each fragment
and has the property that for any given position in the string, the
probability that it will start a fragment is at most $p\approx 1/k'$.

The  partitioning process is \emph{locally consistent}, i.e.,
when applied to two strings $x$ and $y$ with probability at least $1-\OO(\ED(x,y)/k')$ the decompositions of $x$ and $y$ are \emph{compatible}. 
Being compatible means that they have the same number of fragments and $\ED(x,y)$ is equal to the sum
of edit distance of corresponding fragments.  
In particular, if $x$ and $y$ are compatibly split then to recover $\ED(x,y)$
it suffices to reconstruct each corresponding pair of fragments of $x$ and $y$ that are different and sum up their edit distances.

In \cite{BK23} this decomposition procedure was used
to obtain edit distance sketches of size $\OO(k^2)$ by
reducing the problem of
sketching for edit distance to the easier problem
of sketching for  (Hamming) mismatch recovery
mentioned in the introduction. 
This is the problem of sketching two equal length sequences so that from their sketches one can recover all the
locations where the two sequences differ. 

The reduction to mismatch recovery sketches proceeds as follows.
The string $x$ is decomposed into fragments.
The grammar of each fragment is encoded using some fixed-size encoding of length $\OO(k')$, 
the encodings are concatenated and the resulting sequence is sketched using a mismatch recovery scheme. 
Given the Hamming distance sketch for the sequence of fragments of $x$, and the corresponding
sketch for $y$, 
we can (with an additional technical trick) recover all corresponding pairs of grammars that are different.  
Each pair of corresponding grammars represents a pair of corresponding fragments of $x,y$, and $\ED(x,y)$
is equal to the sum of the edit distances of all recovered pairs of fragments.
There are at most $k$ corresponding pairs of grammars that differ, and each of them has $k'=\OO(k)$ bits.
The size of the Hamming sketch needed to perform this recovery is $\OO(k^2)$.
This is the sketch from~\cite{BK23}.

The sketch length is $\OO(k^2)$ because the sketch must handle two different extremes.
If all edit operations appear in large clusters of
size $k/C$, for $C=\OO(1)$,  and each cluster is contained in a single fragment pair, then
there are at most $C$ fragment pairs that are unequal and these could be
handled by sketches of size $\OO(k)$.
On the other hand, if the edit operations are well separated, then we could choose a value of $k'$ that
is $\OO(1)$ resulting in
 a partition into much smaller fragments each of which has grammar size $\OO(1)$. In this
 case the edit operations may appear in $\Omega(k)$ different fragment pairs, but because
 the grammar size of the fragments is $\OO(1)$ we can again manage with sketch size $\OO(k)$.
Of course, the distribution of edit operations will rarely match either of these extremes,
there may be clusters of edit operations of varying size and density.

{\em Decomposition tree.}
A natural approach to handling this is to build decompositions for many different
values of $k'$.  We start with a decomposition obtained from parameter $k'=k_0$
which is larger, but not much larger than $k$.  
We then apply the decomposition  again to each fragment of the first decomposition, using
the smaller parameter $k_1=k_0/2$.  
We iterate this recursively where the value $k_i$ of $k'$ at recursion level $i$
is $k_0/2^i$ stopping when  $k_i$ is $\OO(1)$.
This would decompose the string into smaller and smaller fragments ending with fragments described by constant size grammars.
We call this a {\em decomposition tree} of the string.  We can then do separate sketches for
each of the levels of the tree.  The sketches at the top of the tree (small $i$) are used
to find  edit operations that occur together in large clusters, and there can't be too many of these clusters.
Sketches at the bottom of the tree (large $i$) are used to find edit operations in regions
where the edit operations are well spread.  These edit operations may
be spread over $\Omega(k)$ pairs
of fragments, but each such pair has  edit distance $\OO(1)$. The intermediate
levels would handle cases where the density of edit operations is intermediate between these extremes.

To be more explicit, the sketch associated to decomposition level $i$ is responsible for recovering pairs of compatible fragments at level $i$ whose edit distance is
(roughly) comparable to or larger than $k_i$,
that could not be recovered from the sketches for previous levels of the decomposition. 
There are at most $\OO(k/k_i)$ such pairs of fragments (counting only compatibly split pairs).
Since the fragments are represented by grammars of size at most $k_i$, the level
$i$ scheme will need to find at most $\OO(k)$ mismatches for compatibly split fragments.

For compatible pairs of fragments at level $i$ whose edit distance is small compared to $k_i$,
the decomposition procedure will split them compatibly (with {\em fairly high probability}\footnote{Here {\em fairly high probability} 
means probability at least $1-1/\poly\log n$ for a sufficiently large poly-logarithmic function.}) and the
edit distance for these will be recovered from the sketches
corresponding to deeper levels of the decomposition.  In this way, all of the edit
operations will be found by some level of the decomposition.  

While the sketch for level $i$ only needs to identify at most $\OO(k)$ mismatches, it can
not use an ordinary  mismatch recovery sketch for $\OO(k)$ mismatches, because
the sequences of the grammar encodings
for the level $i$ fragments may differ in many more than $\OO(k)$ positions, since
the $\OO(k)$ positions account only for the differences due to pairs of fragments that must be recovered
at level $i$ but not those due to differences coming from other fragment pairs as explained further.
To deal with this, 
we will need the extension of mismatch recovery
mentioned earlier that handles hierarchical mismatch recovery (which will be discussed
in more detail later in this section.)

{\em Grammar representation.}
From the sketch at a level $i$ which corresponds to grammars of size at most $k_i$
we hope to recover those fragments that have roughly $k_i$ edit operations that were not recovered
from the sketches at earlier levels.
Here it becomes important how we represent the grammars to the Hamming sketch.
Each grammar is a subset of {\em rules} from a domain of polynomial size.
The grammars that we produce have the following useful property: 
If we represent two fragments of $x$ and $y$ by grammars and the fragments have edit distance at most $h$
then the grammars will differ in at most $\OO(h)$ rules.
So if we represent the grammars by their characteristic vectors
the vectors will have Hamming distance at most $\OO(h)$.
This is useful since the fragments that contain much less than $k_i$ edit operations  
and that should be recovered from a Hamming sketch for some level $>i$
will contribute to the Hamming sketch of level $i$ by at most $\OO(k)$ differences in their grammars in total.
For fragments that have edit distance $\ge k_i$ (that were not recovered from earlier levels) we want to recover their complete grammars.

\emph{Watermarking using edit distance fingerprints.}
For the level $i$ encoding,
we want that if two corresponding fragments are at edit distance at least $t_i \approx k_i$, 
then recovering the mismatches in their
grammar encodings is enough to fully recover both grammars and therefore both fragments.  To ensure
this, we will \emph{watermark} the grammar encodings by a special fingerprint (computed from the entire fragment),
and replace each $1$ in the grammar encoding by the fingerprint.
Assuming
that the watermark of the two fragments are different, then this will allow one
to recover both grammars from the set of mismatches.
The watermark we use is a {\em threshold edit distance fingerprint}.
This is a randomized fingerprinting scheme depending on parameter $t$,
that maps each string to an integer so 
that if two strings have edit distance more than $t$, their fingerprints differ with high probability,
and if two strings have  edit distance less than  $t/P$, for some parameter $P>1$ 
then their fingerprints will be the same with fairly high probability (with no
promise if the edit distance lies in the interval $[t/P,t]$.) The parameter
$P$ is a measure of the quality of the fingerprinting algorithm; with smaller $P$ being
higher quality.
Such a fingerprinting scheme can be obtained from an embedding of the edit distance metric into the $\ell_1$ metric.
We use the embedding of Ostrovsky and Rabani~\cite{OR07} which has
distortion $\orgap=2^{\sqrt{\log n \log \log n}}$
and this distortion translates into the parameter $P$ of the fingerprinting scheme.

{\em Canonical edit operations.}
For the sketching procedure as outlined, the recovery algorithm will recover corresponding
pairs of unequal but compatible fragments and compute their edit distance.  While the recovered fragment pairs
are likely to encompass most edit differences between the two strings, it will typically
miss some pairs, and may also incorrectly recover a small fraction of the fragment pairs.  This
means that the set of edit operations recovered will only be approximate.  To address this we will need
to do multiple independent sketches and when doing recovery, we combine
the outcomes recovered from each independent sketch (as described below).  As mentioned earlier
if $\ED(x,y)=k$, there could be many different sets of
$k$ edit operations that transform $x$ to $y$.
To output edit operations consistently among independent runs of the sketching algorithm
we will always opt for a {\em canonical} choice of the edit operations.
The canonical choice prefers insertions into $x$ over substitutions which in turn are preferred over deletions from $x$.
This preference is applied on edit operations from left to right in $x$ so
the choice of edit operations corresponds to the left-most shortest path in the usual edit distance graph of $x$ and $y$.
Importantly, the canonical path is consistent under taking substrings of $x$ and $y$.
See Section~\ref{subsec:grids} for details on the choice of the path.

{\em Bad splits.}
For two strings $x$ and $y$,
as we split the fragments of each through successive levels we will inevitably split some fragments near an edit operation;
this is called a {\em bad split}.  A bad split might cause $x$ to be divided at a particular location but not $y$.
This causes two problems: (1)  sub-fragments of the badly split fragment may not
align, and (2) the bad split may cause fragments to the right of the badly split fragment to be
misaligned.  Both problems need to be dealt with. 

{\em Regularization of trees.}
To handle the second issue we regularize the decomposition tree.
The decomposition tree is of depth at most $O(\log n)$ and it might have degree up to $n$.
We make it regular as follows: for any node with fewer than $n$ children we append dummy children to the
right of the real children; these are thought of as representing empty strings. We do this
at all levels so that the tree has degree $n$ in all the internal nodes and 
all the leaves are at depth exactly $\tdepth = O(\log k + \sqrt{\log n \log \log n})$. 
Each node of the tree is labeled by a fragment of $x$ and its corresponding grammar.
This ensures that the underlying tree for the decomposition is the same for every input string.
Now, in constructing the level $i+1$ decomposition from the level $i$, the fragments
that correspond to the children of node $v$ in the two trees are paired with each other.
A bad split arising from the decomposition of the two fragments corresponding
to tree node $v$ may  cause misalignment among the descendants of $v$ but will not affect the
alignment of nodes to $v$'s right.

{\em Sparse representation.}
The regularized tree is of super-polynomial size $\le n^{\log n}$, but only has
$\OO(n)$ nontrivial nodes that represent nonempty strings.  As a result, the tree can
be constructed in polynomial time, and represented concisely by the set of ordered pairs
of (node, fragment) for nonempty fragments. Furthermore
the sketching algorithm works well with the sparse representation and its running
time is at most polynomial in the number of nontrivial nodes.

{\em Mismatch floods.}
The more significant issue caused by bad splits is the first one, that a bad split at a node
may result in bad splits at many of its descendants. Indeed, if the partition of the fragments
$x_v$ and $y_v$ 
of $x$ and $y$ at node $v$ are not compatibly split, then the fragment pairs
of its children will not be aligned and the edit distances
between fragments corresponding to children of $v$ may be arbitrarily large.
This misalignment
will propagate down the tree possibly resulting in huge total edit distance between 
fragments for many nodes in the subtree rooted at $v$, and  the grammar encodings
of these fragments may have a very large Hamming distance.


A node in the tree that is in the subtree of a badly split node is referred to as \emph{flooded} if its corresponding fragments need many edit operations.
At level $i$ of the tree,
we only need to recover the grammars corresponding to the unflooded
nodes of the level (because we expect that the edit operations for each flooded node
will be recovered by the sketches corresponding to an unflooded ancestor in the tree.)
The usual mismatch recovery sketch does not allow for such selective recovery of unflooded portions, because
the flooded portion may cause the total number of mismatches to far exceed the  capacity
of the mismatch recovery scheme.
Thus we design a new variant of Hamming schemes, which we call {\em hierarchical mismatch recovery scheme}, that recovers differences in the unflooded parts.

{\em  Hierarchical mismatch recovery.}
The {\em hierarchical mismatch recovery scheme} is applied to
a vector that is indexed by the leaves of a tree.  
The specification of the problem includes an assignment
of a positive capacity $\treecap_v$ to each node $v$ in the tree, where 
all nodes at level $i$ have the same capacity $\treecap_i$.
For any two strings $x$ and $y$, the capacity function induces a {\em load function}
$\hmrload{\treecap}$, where the value $\hmrload{\treecap}_v$
is 1 on leaves where $x$ and $y$ differ and 0 on other leaves, and the load on a node $v$ is the minimum of the sum of the loads
of its children and its own capacity.  A node $v$ is {\em underloaded} if  its load
is less than $\treecap_v/R$ for some parameter $R$, and a leaf is \emph{accessible}
if every node on its path to the root is underloaded.  The scheme is required to recover
the mismatch information for all accessible leaves.  

The sketch is implemented 
as follows: Let $d$ be the depth of the tree.
For each node at level $d-1$ we apply a (standard) Hamming mismatch recovery scheme to the vector
of its children that handles (slightly more than) $\treecap_{d-1}$ mismatches.  Then working
the way up the tree, for each node  we apply a mismatch recovery to the vector consisting
of the sketches computed at each of its children. This is passed upwards and the final
sketch at the root is the output.  The specific mismatch recovery scheme used
is a {\em superposition scheme} which is described in Section~\ref{subsec:superposition}. 
The hierarchical sketches for two strings $x$ and $y$ will allow for recovery
of mismatches occurring at all accessible leaves. 

The intuition behind this is as follows.
For two strings $x$ and $y$, label each node by the pair of intermediate sketches
for $x$ and $y$ at that node.  This pair of sketches
encodes information about the mismatches of $x$ and $y$ at the leaves
of its subtree. This information at node $v$ is a {\em compression} to $\OO(\treecap_i)$ bits
of the information from its children. Inductively one can show that for an underloaded node the compressed string it computes
has enough bits to preserve the  information about accessible mismatches that is encoded
in the sketches of its children.  For overloaded nodes, the compression will destroy the information about
its children, but this is not a problem because the scheme is not required to recover
mismatch information for leaves below an overloaded node.  An important thing to note
is that the impact of overloaded nodes among $v$'s children (which may have
a large number of mismatches among its leaves) is controlled by the
fact that the sketch size at those nodes is restricted by its capacity.

In our application to sketching the grammars we assign the capacity to each level of the tree 
so that the capacity is proportional to the grammar size $k_i$ we expect at that level in the decomposition tree.
Nodes that correspond to misaligned fragments (due to a bad split at an ancestor) and have a huge edit distance
will correspond to overloaded nodes, and as discussed above, the hierarchical scheme contains the
damage caused by the error flood to the subtree below the occurrence of the bad split.
The idea behind the choice of the parameters is that the probability of a bad split is proportional to the number of edit operations in the fragment;
it is roughly $\OO(\textit{\# of edit op's}/k_i)$.
Hence, in expectation each edit operation is responsible for $\OO(1)$ mismatches resulting from bad splits that contribute to possible flooding of a node.
We can adjust the parameters so that the flooding of a node is in expectation only a tiny fraction of its capacity.
We can then apply Markov's inequality to argue that with a good probability nodes along a chosen path are not overloaded, i.e., flooded.
Details for our hierarchical mismatch recovery scheme are given in Section~\ref{subsec:HMR}.

{\em Parameters.}
For our edit distance sketch we will set the parameters as follows: $\treecap_0 = \OO(\orgap^2 k)$, $\treecap_i = \treecap_0 / 2^i$,
$k_0=\OO(\orgap k)$, $k_i = k_0/2^i$ and $t_0=\OO(\orgap k)$, $t_i = t_0 /2^i$.
Our sketch will be obtained by applying the hierarchical mismatch recovery scheme to each level of the decomposition tree with those parameters.

{\em Infrequent bad splits.}  The decomposition procedure is designed so that for a node $v$
at level $i$ with  fragments $x_v$ and $y_v$, if $\ED(x_v,y_v) \leq k_i/C$ where
$C=\text{polylog}(n)$ the decompositions of $x_v$ and $y_v$ will be compatible with probability
$1-1/\text{polylog}(n)$.  While this is near 1, it is likely a non-trivial
fraction of nodes will fail to be compatibly split even though the edit distance between
the strings is small compared to $k_i$. As a result, for any given
edit operation, there may be a small but non-trivial chance that the
recovery algorithm fails to recover it.  To ensure
that we get all of the edit operations,
we will need to run the scheme $O(\log n)$ times and include those edit operations produced by
more than half the runs to guarantee that every edit operation gets recovered with good probability.

{\em Location, location, location!}
The recovery procedure identifies pairs of fragments that differ, and can therefore
reconstruct the canonical edits between those fragments.  But as described so far,
there is nothing that allows the reconstruction to pinpoint where these fragments
appear in the full string.  Without this information, we can not combine information
obtained from the independent sketches as just described.  
Our sketch will need an additional component to properly position each recovered fragment.

{\em Location tree.}
We will use technique of Belazzougui and Zhang~\cite{belazzougui_zhang} (suggested to us by Tomasz Kociumaka).
We turn the decomposition tree into a binary tree by expanding each node with $n$ children
into a binary tree of depth $\log(n)$ with $n$ leaves.
For each node in the binary tree, we record the length of the substring represented by its left child.
We watermark this size by the usual Karp-Rabin fingerprint of the node substring,
and we sketch the sizes using the hierarchical mismatch sketch as in the grammar tree, i.e., level by level.
Fragments that contain edit operations will differ in the Karp-Rabin fingerprint so they will reveal the size of their left child.
For a given node that contains an edit operation all of its ancestors on the path to the root will also be watermarked.
Hence we will be able to recover the information about the size of all the left children along the path.
That suffices to calculate the position of each differing fragment.
We will use the same setting of capacities for the hierarchical mismatch scheme that we use for grammars.
That is clearly sufficient as grammars are larger objects than a single integer.

{\em Putting things together.}
The actual sketch consists of multiple independent sketches.  Each of these
sketches is the output of the hierarchical mismatch recovery scheme applied
to each of the levels of the grammar decomposition tree,  and applied
to each of the levels of the binary location tree. 
Its full details are provided in Section~\ref{sec:ED scheme}.

{\em Recovery.}
We  briefly explain the recovery of edit operations from sketches for two strings.
The reconstruction starts by running the recovery procedure for all the hierarchical mismatch recovery sketches.
This recovers various pairs of grammars with information about their location within the original input strings $x$ and $y$.
From those grammars we pick only those which do not have any ancestor grammar node recovered as well.
(Descendant grammars are superseded by the ancestor grammars.)
For every pair of grammars we reconstruct the fragments they represent and compute the associated edit operations.
(The edit operations could be actually computed without decompressing the grammars~\cite{ED_compressed_string_Soda22}.) For each pair of recovered fragments,
we use the location tree sketches to recover
the exact location of that fragment.

So for each  pair of recovered fragments we calculate the canonical sequence of edit operations
and given the exact location of the fragment, we can determine
the exact location within $x$ and $y$ where each edit occurred.  
We repeat this for each independent copy of the sketch.
Our final output is  the set of edit operations  that appear in the majority of the copies.
These are all the operations we were supposed to find.
We provide details of the analysis in Section~\ref{subsec:main result}.
The analysis of the process is somewhat delicate as one needs to deal with various dependencies among the probabilistic events.
Overall it reflects the intuition provided above.

\section{Preliminaries}

\subsection{Strings, sequences, trees, and string decomposition}
\label{subsec:strings}

For an alphabet $\Gamma$, $\Gamma^*$ denotes the set of (finite) strings of symbols from $\Gamma$,
$\Gamma^n$ is the strings of length exactly $n$ and $\Gamma^{\leq n}$ is the set of strings of length at most $n$.
We write $\emptystr$ for the empty string of length 0.  
The length of string $x$ is denoted $|x|$.   
For an index $i \in \{1,\ldots,|x|\}$, $x_i$ is the $i$-th symbol of $x$.

We will consider the input alphabet of our strings to be $\Sigma=\{0,1,\dots\}$ which is the set of natural numbers.
For strings of length up-to $n$ we will assume they are over sub-alphabet $\Sigma_n=\{0,1,\dots,n^3-1\}$
so typically we will assume our input comes from the set $\Sigma_n^{\le n}$.
This assumption is justified as for computing edit or Hamming distance of two strings of length at most $n$ we can hash any larger alphabet randomly to the range $\Sigma_n$ without affecting the distance of the two strings with high probability.

If $\Gamma$ is linearly ordered then $\Gamma^*$ is also linearly ordered under lexicographic order, denoted by $\lexorder$, given by
$x \lexorder y$ if $x$ is prefix of $y$ or if $x_j<y_j$ where $j$ is the least index $i$ for which $x_i \neq y_i$.
We usually write $x < y$ instead of $x \lexorder y$. 

We write $x \concat y$ for the concatenated string $x$ followed by $y$ and for a list $z_1,\ldots,z_k$ of strings
we write $\bigconcat_{i=1}^k z_i$ for $z_1 \concat \cdots \concat z_k$.


\paragraph{Substrings and fragments}
\label{subsec:strings and fragments}
For a string $x$ and an interval $I \subseteq \{1,\dots, |x|\}$, a string $z$ is a \emph{substring of $x$ located at $I$} if $|z|=|I|$ and for all $i\in I$, $z_{i - \min(I)+1}=x_i$.
We denote this substring by $x_I$.  When using intervals to index substrings, it is convenient
to represent intervals in the form $(i,j]=\{i+1,\ldots,j\}$ and $(i,i]$ denotes the empty set for any $i$.
(So a substring is always a consecutive sub-sequence of a string.)
We can also say that $z$ is the {\em substring of $x$ starting at position $\min(I)$}.
Furthermore, $z$ is a {\em substring of $x$} if $z$ is a substring of $x$ starting at some position.  
However, the statement that $z$ is a substring of $x$ says nothing about where $z$ appears in $x$, and there may be multiple (possibly overlapping) occurrences of $z$ in $x$.
For us it will be important where a substring appears.
For a string $x$ and an interval $I \subseteq \{1,\dots, |x|\}$, the {\em fragment located at $I$} is the pair $x_I$ together with $I$. 
 

\paragraph{Sequences and Hamming Distance}
\label{subsec:functions}
We will consider finite sequences of elements from some domain.
It will be convenient to allow sequences to have index sets other than the usual integers $\{1,\ldots,n\}$. 
If $D$ is any set, a \emph{$D$-sequence} is an indexed collection $a=(a_i:i \in D)$.  
A \emph{$D$-sequence over the set $A$} is a $D$-sequence with entries in $A$. 
$A^D$ denotes the set of $D$-sequences over $A$.
The \emph{Hamming Distance} $\HAM(x,y)$ between two $D$-sequences $x$ and $y$ is the number of indices $i \in D$ for which $x_i \neq y_i$.
We let $\indices(x,y)=\{i\in D; x_{i} \neq y_{i}\}$.

\paragraph{Trees}
\label{subsec:trees} 
Our algorithm will organize the processed data in a tree structure.
To simplify our presentation we will give the tree very regular structure.
For finite sets $\LOS_1,\ldots,\LOS_d$, $T(\LOS_1 \times \cdots \times \LOS_d)$ denotes the rooted tree of depth $d$ where
for each $j \in \{1,\ldots,d\}$ every internal node $v$ at depth $j$ has $|\LOS_j|$ children, and the edges
from $v$ to its children are labeled by distinct elements of $\LOS_j$.
Each node $v$ at depth $j$ is identified with the length $j$ sequence of edge labels on the path from the root to $v$; under this correspondence
the set of nodes at level $j$ is $\LOS_1 \times \cdots \times \LOS_j$.   The root is therefore the empty sequence $\emptystr$.
 For an internal node $v$ at depth $j-1$,
its children are nodes of the form $v \concat a$ where $a \in \LOS_j$.  Also the path from $\emptystr$ to $v$ at depth $j$
is equal to the sequence of nodes $v_{\leq 0},v_{\leq 1},\ldots,v_{\leq j}$ where $v_{\leq i}$ is the prefix of $v$ of length $i$.

Usually in this paper, the sets $\LOS_1,\ldots,\LOS_d$ are all equal to the same set $\LOS$ and in this case
$T(\LOS_1 \times \cdots \times \LOS_d)$ is denoted $T(\LOS^d)$.  Usually, $\LOS$ is a  linearly ordered set and it is useful
to visualize the planar drawing of $T$ in which the left-to-right order of the children of an internal node  corresponds to the total ordering on the edge labels.


\paragraph{String decompositions, and tree decompositions}
\label{subsec:tree decomp}
A \emph{decomposition} of a string $x$ is a sequence $z_1,\ldots,z_r$ of strings
such that $x=\bigconcat_{i =1}^r z_i$. 
More generally if $\LOS$ is a linearly ordered set then an $\LOS$-sequence $(z_{i}:i \in \LOS)$ where each $z_{i}$ is a string is a decomposition
of $x$ if $x=\bigconcat_{i \in \LOS} z_i$ where the concatenation is done in the order determined by $\LOS$.
Given a decomposition $(z_{i}:i \in \LOS)$ of $x$, each substring $z_i$ is naturally
associated to a location interval $\zint{z}_{i}=(s_i,t_i]$ where $s_i = \sum_{j \in \LOS, j<i} |z_j|$
and $t_i = s_i + |z_i|$.

Supppose that $z=(z_v:v \in \LOS_1 \times \cdots \times \LOS_d)$ is a labeling of the leaves of
$T=T(\LOS_1 \times \cdots \times \LOS_d)$ by strings.  Given
such a labeling, we extend it to all nodes of the tree so that for
any node $v$, $z_v$ is defined to be the concatenation of
the strings $z_{\ell}$ where $\ell$ is a leaf below $v$ and the concatenation is done in lexicographic order according to the leaves.
If $x_{\emptystr}$ is the string labeling the root, then
the $T$ with the node labeling  $z$ is said to be a \emph{decomposition tree for $x$}.
Note that for any $j \in \{1,\ldots,d\}$, $z^j=(z_v:v \in \LOS_1 \times \cdots \times \LOS_j)$ is the decomposition of $x$ corresponding
to the level $j$ nodes in the tree, and $z^j$ is a refinement
of the decomposition $z^{j-1}$.
In a decomposition tree $z$, each node $v$ corresponds to a specific
fragment whose location in $x$ is the interval $\zint{z}_{v}=(s_v,t_v]$ where $s_v = \sum_{u\in L^{d}, u<v} |z_u|$ and $t_v = s_v + |z_v|$.
Hence, $x_{\zint{z}_{v}} = z_v$.


\subsection{Edit distance and its representation in grid graphs}
\label{sec:edit distance}

For $x \in \Sigma^*$, we consider three {\em edit operations} on $x$:

\begin{itemize}

\item  $\edinsert(i,a)$ where $i \in \{1,\ldots,|x|+1\}$ and $a \in \Sigma$, which means insert $a$ immediately following the prefix of length $i-1$. In the resulting sequence
the  $i$-th entry is $a$.
\item  $\eddelete(i)$ where $i \in \{1,\ldots,|x|\}$, deletes the $i$-th entry of $x$.
\item $\edsubst(i,b)$: replace $x_i$ by $b$.
\end{itemize}

For strings $x,y$, the \emph{edit distance of $x$ and $y$}, $\ED(x,y)$, is the minimum length of a sequence of operations that transforms $x$ to $y$.
It is well-known and easy to show that $\ED(x,y)=\ED(y,x)$.

\paragraph{Representing edit distance by paths in weighted grids.}
\label{subsec:grids}
\label{subsec:weighted grids}
We define $\grid{}$ to be the directed graph whose vertex set $V(\grid)$ is the set $\N \times \N$ (\emph{points})
and whose edge set $\edges(\grid)$ consists of three types of directed edges: {\em horizontal edges} of the form
 $\gpoint{i}{j}\to \gpoint{i+1}{j}$, {\em vertical edges} of the form $\gpoint{i}{j}\to\gpoint{i}{j+1}$
and {\em diagonal edges} of the form $\gpoint{i}{j} \to \gpoint{i+1}{j+1}$ for any $i,j \geq \N$.
For non-empty intervals $I,J \subseteq \N$ not-containing zero, we define the $\grid_{I\times J}$ to be the subgraph of $\grid$ induced on the
set $(I \cup \{\min(I)-1\}) \times (J \cup \{\min(J)-1\})$, and for $P \subseteq E(\grid)$, the restriction of $P$ to $I\times J$ is $P_{I\times J} = P \cap  E(\grid_{I\times J})$.
We call $I\times J$ a {\em box}.
A directed path from $\gpoint{\min(I)-1}{\min(J)-1}$ to $\gpoint{\max(I)}{\max(J)}$ is called a \emph{spanning path} of $\grid_{I\times J}$.

As is well known, the edit distance problem for a pair of strings $x,y$ can be represented as a shortest path problem on a grid
with weighted edges (see e.g. \cite{LMS98}).
The \emph{grid of $x,y$}, $\grid(x,y)$, is the subgraph $\grid_{(0,|x|] \times (0,|y|]}$
with edge set $\edges(x,y) \subseteq \edges(\grid)$.
For an edge $e=\gpoint{i}{j} \to \gpoint{i'}{j'}$ in $\grid(x,y)$, let $x_e = x_{i'}$ if $i'=i+1$ and $x_e = \emptystr$ if $i'=i$.
Similarly, let $y_e = y_{j'}$ if $j'=j+1$ and $y_e = \emptystr$ if $j'=j$.
We assign a cost to edge $e$ to be 0 if $y_e=x_e$ and it is 1 otherwise.
In particular, every horizontal edge and every vertical edge costs 1,
and diagonal edges cost 0 or 1 depending on whether the corresponding symbols of $x$ and $y$ differ.
An edge of non-zero cost is \emph{costly}. 
If $P$ is a set of edges, the \emph{costly part of $P$}, $\costly{P}$, is the set of costly edges.
Define the cost of $P$ to be $\cost(P)=|\costly{P}|$.

We define an \emph{annotated edge} to be a triple $(e,a,b)$ where $a,b \in \Sigma \cup \{ \emptystr \}$.
The \emph{$(x,y)$-annotation of $e$} is the annotated edge $(e,x_e,y_e)$, which is denoted $e^+(x,y)$. 
An annotated edge $(e,a,b)$ is said to be \emph{$(x,y)$-consistent} or simply \emph{consistent}
if $a=x_e$ and $\beta=y_e$.
We emphasize that each edge $e$ has a unique consistent annotation with respect to any given $x$ and $y$. 

For a set of edges $P$ we write $P^+(x,y)$ for the set of annotated edges
$\{e^+(x,y):e \in P\}$.  
In particular if $P$ is a path then $\costly{P^+(x,y)}$ is the set of costly edges of $P$ with their $x,y$-annotations.
When the pair $x,y$ of strings is fixed by the context (which is almost always the case) we  write $e^+$ for $e^+(x,y)$ and 
for a set $P$ of edges, we write $P^{+}$ for $P^+(x,y)$. 
In particular, $\edges^+$ for the set $\{e^+:e \in \edges(x,y)\}$.

It is well known and easy to see that there is a correspondence between spanning paths of $\grid(x,y)$ and sequences of edit operations
that transform $x$ to $y$ where a sequence of $k$ edit operations corresponds to a spanning path of cost $k$. 
Thus, we will refer to a spanning path of $\grid(x,y)$ as an {\em alignment of $x$ and $y$}.
We have:

\begin{proposition}
\label{prop:alignment}
$\ED(x,y)$ is equal to the minimum cost of an alignment of $x$ and $y$, i.e., the minimum over
all alignments $P$ of $|\costly{P}|$.
\end{proposition}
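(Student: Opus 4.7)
The plan is to prove the equality by establishing the two inequalities separately, exploiting the natural correspondence between the edges of $\grid(x,y)$ and the three edit operations. In one direction I will show that any spanning path $P$ yields a sequence of $|\costly{P}|$ edit operations transforming $x$ to $y$; in the other direction I will show that any edit sequence yields a spanning path whose cost equals the sequence's length.

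For the inequality $\ED(x,y) \le \min_{P} |\costly{P}|$, I would fix a spanning path $P$ and walk its edges in order from $\gpoint{0}{0}$ to $\gpoint{|x|}{|y|}$, maintaining an intermediate string that starts as $x$. For a cost-$0$ diagonal edge $\gpoint{i}{j}\to\gpoint{i+1}{j+1}$ (which forces $x_{i+1}=y_{j+1}$) I leave the intermediate string untouched; for a costly diagonal edge I apply $\edsubst(j+1, y_{j+1})$; for a horizontal edge I apply $\eddelete(j+1)$; and for a vertical edge I apply $\edinsert(j+1, y_{j+1})$. A direct induction on how much of $P$ has been processed shows that after reaching $\gpoint{i}{j}$ the intermediate string equals $y_{(0,j]} \concat x_{(i,|x|]}$, so at the terminal vertex it equals $y$. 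The number of operations used is exactly $|\costly{P}|$, giving the desired bound.

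For the reverse inequality $\min_{P} |\costly{P}| \le \ED(x,y)$, I would take an optimal edit sequence $S$ of length $k=\ED(x,y)$ and invert the construction above to produce a spanning path of cost $k$. The cleanest route is first to normalize $S$ so that its operations are carried out left-to-right with respect to the output $y$; then every position of $y$ is produced by exactly one event (a match, a substitution, or an insertion), and every position of $x$ is consumed by exactly one event (a match, a substitution, or a deletion). Listing these events in order gives a walk from $\gpoint{0}{0}$ to $\gpoint{|x|}{|y|}$ whose costly edges are in bijection with the non-match events, hence $|\costly{P}|=k$.

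The only delicate step is the bookkeeping in this reverse direction: the index $i$ in $\edinsert(i,a)$ and $\eddelete(i)$ refers to a position in the current intermediate string rather than in $x$ or $y$, so some care is needed to translate these into grid coordinates. Canonicalizing $S$ to a left-to-right order, so that operations on later positions are performed after operations on earlier positions, resolves this ambiguity and is the only non-routine part of the argument; the rest is a straightforward verification.
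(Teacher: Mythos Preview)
Your argument is correct and is the standard one. The paper does not actually prove this proposition: it is stated immediately after the remark ``It is well known and easy to see that there is a correspondence between spanning paths of $\grid(x,y)$ and sequences of edit operations that transform $x$ to $y$ where a sequence of $k$ edit operations corresponds to a spanning path of cost $k$,'' and no further justification is given. Your two-direction proof, with the invariant that the intermediate string equals $y_{(0,j]}\concat x_{(i,|x|]}$ upon reaching $\gpoint{i}{j}$, is exactly the routine verification the paper is alluding to and omitting.
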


We have the following:
\begin{proposition}
\label{prop:costly}
\begin{enumerate}
\item 
An alignment $P$ of $x$ and $y$ is uniquely determined by  $\costly{P}$.
\item 
For any alignment $P$ of $x$ and $y$, given the set $\costly{P^+}$ of
annotated costly edges and either of the strings $x$ and $y$, the other string is determined.
\end{enumerate}
\end{proposition}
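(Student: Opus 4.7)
The plan is to prove both parts by exploiting the basic structural fact that in the grid graph $\grid(x,y)$, every horizontal edge and every vertical edge is automatically costly (since one of $x_e, y_e$ is $\emptystr$ while the other lies in $\Sigma$), whereas a diagonal edge is costly exactly when its endpoints disagree. Consequently, any non-costly edge of an alignment is a diagonal edge on which $x$ and $y$ agree, and from any given point $\gpoint{i}{j}$ there is a unique diagonal edge leaving it.

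For part~(1), I would first observe that the edges of any alignment $P$ are naturally totally ordered along the spanning path, and that this order is determined by the coordinates alone (paths in $\grid$ are strictly monotone in both coordinates), so listing the elements of $\costly{P}$ as $e_1,\ldots,e_m$ in this order requires no extra information beyond the set. Let $s_j$ and $t_j$ denote the tail and head of $e_j$. By the observation above, every edge of $P$ lying strictly between $e_j$ and $e_{j+1}$, as well as any edges before $e_1$ or after $e_m$, must be a non-costly diagonal edge. Because diagonal edges are unique at every point, the maximal run of diagonal edges starting at $t_j$ is uniquely determined; since $P$ eventually reaches $s_{j+1}$ using only such diagonals, this run must terminate at $s_{j+1}$ (and similarly for the prefix from $\gpoint{0}{0}$ to $s_1$ and the suffix from $t_m$ to $\gpoint{|x|}{|y|}$). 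Thus $P$ is fully reconstructed from $\costly{P}$.

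For part~(2), I would use part~(1) to first recover the whole path $P$ from $\costly{P^{+}}$ (by discarding the annotations). Having $P$ in hand, I can read $y$ off edge by edge: the alignment/grid-path correspondence expresses $y$ as the concatenation, in path order, of $y_e$ over $e\in P$. For a costly edge, $y_e$ is read directly off the annotation in $\costly{P^{+}}$; for a non-costly diagonal edge $\gpoint{i}{j}\to\gpoint{i+1}{j+1}$, the defining equality $y_e=x_e$ together with the given string $x$ yields $y_{j+1}=x_{i+1}$. Concatenating in order produces $y$, and the symmetric argument recovers $x$ from $y$ and $\costly{P^{+}}$.

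The only mild subtlety, and the point I would verify carefully rather than gloss over, is the remark at the start of part~(1) that the set $\costly{P}$ already determines the order in which its elements appear along $P$; without this, ``list the costly edges in the order they occur along $P$'' would be circular. Fortunately, this is immediate from the coordinate-wise monotonicity of grid paths. Once that is in place, both statements reduce to the observation that diagonal non-costly edges are the unique ``free moves'' available, and the rest is bookkeeping.
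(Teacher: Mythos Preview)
Your proposal is correct and follows essentially the same argument as the paper's: fill the gaps between consecutive costly edges with the unique run of non-costly diagonals for part~(1), then recover $P$ and read off the other string symbol by symbol for part~(2). One small slip: grid paths are only \emph{weakly} monotone in each coordinate separately (a run of vertical edges keeps $i$ fixed, horizontal edges keep $j$ fixed), but since $i+j$ strictly increases along every edge the path order on $\costly{P}$ is still determined by the coordinates, so your conclusion stands.
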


\begin{proof}
For the first part,
an alignment $P$ can be partitioned into subpaths (sets of edges) $P_0,P_1,\ldots,P_k$
where for each even $i$, $P_i$ consists  of 0-cost edges, and for $i$ odd, $P_i$ consists
of costly edges. All the paths are nonempty
except for (possibly) the first.  Trivially $\costly{P}$  is the union of
all the paths $P_i$ for $i$ odd.  For each $i$ even, we can determine $P_i$ since we know
the start and end vertex, and the path consists entirely of diagonal edges.

For the second part, we show that  $\costly{P^+}$ and $x$ determine $y$; the result
with $x$ and $y$ exchanged is proved similarly.  From the first part of the
proposition $\costly{P}$ determines $P$. Let $e_1,\ldots,e_t$ be the subsequence of 
$P$ obtained by deleting all vertical edges.  Then $|y|=t$, since the undeleted 
edges are in 1-1 correspondence with the indices of $y$.  Furthermore
$y_j$ is determined by the $j$-th undeleted edge as follows: if the $j$-th edge is costly,
(either vertical, or a costly diagonal edge) then its $y$ annotation is equal to $y_j$,
and if the $j$-th edge is a non-costly diagonal edge $\gpoint{i-1}{j-1} \to \gpoint{i}{j}$,
then $y_j=x_i$.
\end{proof}

\label{subsec:fragment pairs}

A pair $x,y$ of strings together with a box $I \times J$ with $I \subseteq (0,|x|]$ and $J \subseteq (0,|y|]$ specifies the edit distance sub-problem $\ED(x_I,y_J)$.  
$\grid_{I \times J}(x,y)$ denotes the (edge-weighted) sub-graph of $\grid(x,y)$ induced on $(I \cup \{\min(I)-1\}) \times (J \cup \{\min(J)-1\})$.

There may be many optimal alignments.  
We will need a {\em canonical alignment} for each $x$ and $y$ that is unique.  
For any graph $\grid_{I\times J}(x,y)$, define the \emph{canonical alignment of $\grid_{I\times J}(x,y)$} as follows:
Associate each path in $\grid$ to the sequence from $\{\mathbf{vertical}, \mathbf{diagonal}, \mathbf{horizontal}\}$ which records
the edge types along the path. 
The \emph{canonical alignment of $\grid_{I\times J}(x,y)$} is the optimal spanning path of $\grid_{I\times J}(x,y)$ that is lexicographically maximum with respect to the order $\mathbf{vertical}> \mathbf{diagonal} > \mathbf{horizontal}$.  
The {\em canonical alignment $\canon(x,y)$ of $x$ and $y$} is the canonical alignment of $\grid(x,y)$. 

The proof of the following is left to the reader.

\begin{proposition}
\label{prop:subgrid}
For strings $x,y$ and box $I \times J \subseteq (0,|x|] \times (0,|y|]$, the (edge-weighted) graph $\grid_{I\times J}(x,y)$ is isomorphic (in the graph theoretic sense) to $\grid(x_I,y_J)$
and so $\ED(x_I,y_J)$ is equal to the length of the shortest spanning path of $\grid_{I \times J}(x,y)$.
Also, the canonical alignment of $x_I$ and $y_J$ is isomorphic to the canonical alignment of $\grid_{I \times J}(x,y)$ (when viewed as paths of their respective graphs).
\end{proposition}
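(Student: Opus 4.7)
The plan is to exhibit an explicit weight-preserving graph isomorphism between $\grid_{I \times J}(x,y)$ and $\grid(x_I,y_J)$ induced by a translation of coordinates, and then argue that everything we care about (spanning paths, costs, edge-type labels along paths) is preserved.

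Write $I = (a,b]$ and $J = (c,d]$, so the vertex set of $\grid_{I \times J}(x,y)$ is $\{a,a+1,\ldots,b\} \times \{c,c+1,\ldots,d\}$, while the vertex set of $\grid(x_I,y_J)$ is $\{0,1,\ldots,b-a\} \times \{0,1,\ldots,d-c\}$. Define $\varphi: \gpoint{i}{j} \mapsto \gpoint{i-a}{j-c}$. First I would check that $\varphi$ is a bijection on vertex sets, and that it carries horizontal, vertical, and diagonal edges of $\grid_{I\times J}(x,y)$ to horizontal, vertical, and diagonal edges of $\grid(x_I,y_J)$ respectively, simply because the shift is a bijection between the relevant integer boxes and the three edge types are defined purely by the coordinate difference. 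So $\varphi$ is a graph isomorphism, and in particular it preserves the label (edge type) of every edge.

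Next I would verify that $\varphi$ preserves edge weights. Horizontal and vertical edges have cost $1$ in both grids, so only the diagonal case needs attention. A diagonal edge $e : \gpoint{i-1}{j-1} \to \gpoint{i}{j}$ in $\grid_{I\times J}(x,y)$ (with $i \in I$, $j \in J$) has cost $0$ iff $x_i = y_j$. Its image $\varphi(e) : \gpoint{i-a-1}{j-c-1} \to \gpoint{i-a}{j-c}$ in $\grid(x_I,y_J)$ has cost $0$ iff $(x_I)_{i-a} = (y_J)_{j-c}$. But by the definition of the substring indexing in Section~\ref{subsec:strings and fragments} we have $(x_I)_{i-a} = x_i$ and $(y_J)_{j-c} = y_j$, so the two conditions are identical. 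Hence $\varphi$ is a weight-preserving isomorphism.

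Under $\varphi$, spanning paths of $\grid_{I\times J}(x,y)$ (from $\gpoint{a}{c}$ to $\gpoint{b}{d}$) correspond bijectively to spanning paths of $\grid(x_I,y_J)$ (from $\gpoint{0}{0}$ to $\gpoint{b-a}{d-c}$), with equal cost. By Proposition~\ref{prop:alignment}, $\ED(x_I,y_J)$ equals the minimum cost of such a spanning path, which is therefore the shortest spanning path cost in $\grid_{I\times J}(x,y)$. For the last claim, observe that the canonical alignment is defined by the lexicographically largest sequence of edge types along an optimal spanning path with respect to the fixed order $\mathbf{vertical} > \mathbf{diagonal} > \mathbf{horizontal}$. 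Since $\varphi$ preserves both cost and edge type of every edge, it carries optimal paths to optimal paths with the same edge-type sequence, so it maps the canonical alignment of $\grid_{I\times J}(x,y)$ to the canonical alignment of $\grid(x_I,y_J)$, which by definition is the canonical alignment of $x_I$ and $y_J$.

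There is no real obstacle here; the only care needed is the indexing bookkeeping in verifying that diagonal-edge weights correspond, which is where the definition of the substring indexing is crucial.
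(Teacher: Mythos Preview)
Your proof is correct and is exactly the natural translation argument one would expect; the paper in fact leaves this proposition to the reader, so there is no proof to compare against. The only subtlety is the indexing check for diagonal-edge weights, which you handle correctly using the substring convention $(x_I)_{i-\min(I)+1}=x_i$.
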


Let $P$ be an alignment of $x$ and $y$.  
A box $I \times J$ is \emph{compatible with $P$} provided that 
$P$ passes through $\gpoint{\min(I)-1}{\min(J)-1}$ and $\gpoint{\max(I)}{\max(J)}$, and 
for such a box, the \emph{restriction $P_{I \times J}$ of $P$ to $I \times J$} is 
the portion of $P$ joining $\gpoint{\min(I)-1}{\min(J)-1}$ and $\gpoint{\max(I)}{\max(J)}$. 
This restriction is an alignment for $\grid_{I \times J}(x,y)$. 
The following proposition is straightforward.

\begin{proposition}
\label{prop:optimal restriction}
If $P$ is an optimal alignment of $x$ and $y$ and $I \times J$ is compatible with $P$ then $P_{I \times J}$ is an optimal alignment for the sub-problem $\grid_{I \times J}(x,y)$.
\end{proposition}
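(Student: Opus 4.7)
The plan is a standard cut-and-paste (exchange) argument. The key observation is that since $I \times J$ is compatible with $P$, the path $P$ naturally decomposes into three consecutive segments, so any improvement on the middle segment can be spliced back in.

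First, I would formalize the decomposition. Compatibility of the box with $P$ means that $P$ passes through the two corner vertices $p^{-} = \gpoint{\min(I)-1}{\min(J)-1}$ and $p^{+} = \gpoint{\max(I)}{\max(J)}$. Since $P$ is a spanning path of $\grid(x,y)$, we can split it uniquely as $P = P_1 \cup P_{I\times J} \cup P_3$, where $P_1$ is the portion of $P$ from $\gpoint{0}{0}$ to $p^{-}$, $P_{I\times J}$ is the portion from $p^{-}$ to $p^{+}$, and $P_3$ is the portion from $p^{+}$ to $\gpoint{|x|}{|y|}$. Since $P$ edges only move right, down, or diagonally, monotonicity in both coordinates forces every vertex of $P_{I\times J}$ to lie in $(I \cup \{\min(I)-1\}) \times (J \cup \{\min(J)-1\})$; hence $P_{I\times J}$ is a spanning path of $\grid_{I \times J}(x,y)$, and the cost is additive: $\cost(P) = \cost(P_1) + \cost(P_{I\times J}) + \cost(P_3)$.

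Next I would argue optimality by contradiction. Suppose $P_{I \times J}$ is not optimal for $\grid_{I \times J}(x,y)$. Then there exists some spanning path $Q$ of $\grid_{I\times J}(x,y)$ with $\cost(Q) < \cost(P_{I\times J})$. Both $Q$ and $P_{I\times J}$ go from $p^{-}$ to $p^{+}$, so we can define $P' = P_1 \cup Q \cup P_3$, which is a spanning path of $\grid(x,y)$ (i.e.\ an alignment of $x$ and $y$). Then
\[
\cost(P') = \cost(P_1) + \cost(Q) + \cost(P_3) < \cost(P_1) + \cost(P_{I\times J}) + \cost(P_3) = \cost(P),
\]
contradicting the optimality of $P$ guaranteed by Proposition~\ref{prop:alignment}. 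Hence $P_{I \times J}$ is optimal for the sub-problem, as claimed.

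There is no real obstacle here; the statement is essentially the optimal-substructure property of shortest paths, specialized to the edit-distance grid. The only mildly delicate point is verifying that $P_{I\times J}$ really is a spanning path of $\grid_{I\times J}(x,y)$ (not just a walk that leaves and re-enters the box), but this is immediate from the monotone nature of edges in $\grid$.
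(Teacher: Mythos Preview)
Your proof is correct; the paper does not actually supply a proof of this proposition (it simply calls it ``straightforward''), and the cut-and-paste argument you give is exactly the standard optimal-substructure reasoning one would expect here.
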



For strings $x$ and $y$, we say a box $I \times J$ is $(x,y)$-compatible if $I \times J$ is compatible with the canonical alignment $\canon(x,y)$.
We have:

\begin{proposition}
\label{prop:canonical}
Let $x,y$ be strings and let $I \times J$ be a box that is $(x,y)$-compatible.  
The restriction $\canon(x,y)_{I \times J}$ is equal to the canonical alignment of $\grid_{I \times J}(x,y)$.
\end{proposition}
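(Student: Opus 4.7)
The plan is to argue by contradiction, replacing the middle portion of the global canonical alignment by a hypothetical ``better'' alignment of the sub-box and deriving that the original global alignment cannot have been canonical. The key facts we rely on are Proposition~\ref{prop:optimal restriction}, which tells us that restrictions of optimal alignments to compatible boxes are themselves optimal, and the definition of $\canon(x,y)$ as the lexicographically maximum optimal spanning path (under $\mathbf{vertical} > \mathbf{diagonal} > \mathbf{horizontal}$).

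Concretely, let $P=\canon(x,y)$ and let $R = P_{I\times J}$. Since $I\times J$ is $(x,y)$-compatible, $P$ passes through the two corner vertices of $I\times J$, so $P$ decomposes uniquely as $P = P_{\text{pre}} \concat R \concat P_{\text{post}}$, where $P_{\text{pre}}$ is the portion of $P$ from $\gpoint{0}{0}$ to $\gpoint{\min(I)-1}{\min(J)-1}$ and $P_{\text{post}}$ is the portion from $\gpoint{\max(I)}{\max(J)}$ to $\gpoint{|x|}{|y|}$. By Proposition~\ref{prop:optimal restriction} applied to the optimal alignment $P$, the restriction $R$ is an optimal spanning path of $\grid_{I\times J}(x,y)$. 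Let $C$ denote the canonical alignment of $\grid_{I\times J}(x,y)$ (which exists and is unique by definition as the lex-maximum among optimal spanning paths). We want to show $R = C$.

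Suppose for contradiction that $R \neq C$. Since both $R$ and $C$ are optimal spanning paths of $\grid_{I\times J}(x,y)$, they have equal cost, and by the maximality property defining $C$, the edge-type sequence of $C$ is strictly lexicographically greater than that of $R$. Form the path
\[
P' \;=\; P_{\text{pre}} \concat C \concat P_{\text{post}}.
\]
Since $C$ shares endpoints with $R$, $P'$ is a spanning path of $\grid(x,y)$, and $\cost(P') = \cost(P_{\text{pre}}) + \cost(C) + \cost(P_{\text{post}}) = \cost(P_{\text{pre}}) + \cost(R) + \cost(P_{\text{post}}) = \cost(P)$, so $P'$ is also an optimal alignment of $x$ and $y$. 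Comparing edge-type sequences, $P$ and $P'$ agree on the prefix corresponding to $P_{\text{pre}}$, and at the first position within the $I\times J$-portion where $R$ and $C$ differ, $C$ has the larger symbol under the order $\mathbf{vertical} > \mathbf{diagonal} > \mathbf{horizontal}$. Hence $P' > P$ lexicographically, contradicting the definition of $P = \canon(x,y)$ as the lex-maximum optimal alignment.

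The proof is essentially a swap argument, and no step looks like a serious obstacle; the only mildly subtle point is being careful that the lexicographic comparison on the full sequence really is decided within the middle portion. This works precisely because the prefixes agree and the edge-type sequences of $R$ and $C$ have the same length (both $R$ and $C$ are spanning paths of the same subgrid, so their lengths are bounded and their alphabets coincide), so a lex-increase in the middle yields a lex-increase of the concatenation.
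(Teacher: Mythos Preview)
Your proof is correct and is exactly the swap/contradiction argument the paper gives, just written out in more detail. The paper's own proof is a two-sentence version of what you wrote: it invokes Proposition~\ref{prop:optimal restriction} for optimality of the restriction, then observes that if the restriction were not the canonical alignment of the sub-box, replacing it by that canonical alignment would yield a lexicographically larger optimal alignment of $x,y$, a contradiction.
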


\begin{proof}
By Proposition~\ref{prop:optimal restriction}, the restriction $\canon(x,y)_{I \times J}$ is an optimal spanning path
for $\grid_{I \times J}(x,y)$.  We claim that this restriction is equal to the canonical alignment of $\grid_{I \times J}(x,y)$.  
Otherwise we could modify $\canon(x,y)$ by replacing the subpath on $I\times J$ by the canonical alignment of $\grid_{I \times J}(x,y)$ 
and the result would be lexicographically larger.
\end{proof}

\subsection{Computational Considerations}
\label{sec:compconsider}

Our construction uses a finite field of large characteristics to design hierarchical mismatch recovery schemes (a certain kind of Hamming sketches).
For that our algorithm picks a prime $p$ of size roughly $n^{O(\log n)}$  during its initialization.
Such a prime can be picked at random and tested for its primality using standard algorithms (randomized or deterministic).
The prime $p$ can be stored as a part of the sketch.
All arithmetics over $\bigfield_p$ can be done in time poly-logarithmic in $p$ or $n$, respectively.
Our sketch could be modified to use primes of only polynomial size but for the simplicity of presentation we use a large prime.




\subsection{Function families and randomized functions}
\label{subsec:fcn family}

Let $A,B$ be sets.  
We consider families $\textbf{f}=(f_{\rho}:\rho \in \paramspace)$ of functions from $A$ to $B$. The
set $\paramspace$ is the \emph{randomizing parameter space} and the subscript $\rho \in \paramspace$ is the \emph{randomizing parameter}.
If $\mu$ is a probability distribution over $\paramspace$, $\mu$ induces a distribution on the
function family $(f_{\rho}:\rho \in \paramspace)$, and we say that $f_{\rho}$ is a randomized function.  
We often suppress the index $\rho$ and say that $f$ is a randomized function.

Evaluating a randomized function at a domain element $x$ is in two parts.  
The first part is the selection of the parameter $\rho \in \paramspace$ according to $\mu$, which is done by a randomized algorithm.  
Once $\rho$ is selected, $f_{\rho}$ is a fixed (deterministic) function and 
the second part is to evaluate $f_{\rho}(x)$; 
the algorithm for doing this may possibly be randomized.  
We say that $(f_{\rho},\mu)$ is \emph{efficiently computable} provided that the sampling of  $\rho$ according to $\mu$ and the evaluation $f_{\rho}(a)$ can be done in time polynomial in the size of the bit-representation of $a$.

A randomized $A \rightarrow B$  function $(\textbf{f}, \mu)$ is \emph{pairwise independent} provided that
for all $a,a' \in A$ and $b,b' \in B$ such that $a \neq a'$ we have
$\prob_{\rho \sim \mu} [(f_{\rho}(a)=b) \AND (f_{\rho}(a')=b')] = \frac{1}{|B|^2}$.
We have:

\begin{proposition}[Dietzfelbinger~\cite{Dietzfelbinger96}]
\label{prop:pairwise}
For all positive integers $m,n$ with $m$ a power of 2 there is a pairwise independent family of functions $(\textbf{h},\mu)$ 
mapping $\{0,\ldots,n-1\}$ to $\{0,\ldots,m-1\}$ that is efficiently computable, where members of the family can be specified with $O(\log m + \log n)$ bits.
\end{proposition}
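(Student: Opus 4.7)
The plan is to give an explicit construction via affine maps over a binary finite field. Set $\ell = \lceil \log_2 \max(n, m) \rceil$ so that the domain $\{0,\ldots,n-1\}$ embeds into $\bigfield_{2^\ell}$ under the natural bijection between $\ell$-bit strings and field elements (for some fixed basis), and write $m = 2^k$ with $k \leq \ell$. I would take $\paramspace = \bigfield_{2^\ell}^2$ with $\mu$ uniform and define $h_{(a,b)}(x) = \pi(a \cdot x + b)$, where arithmetic is in $\bigfield_{2^\ell}$ and $\pi \colon \bigfield_{2^\ell} \to \{0,\ldots,m-1\}$ is the $\bigfield_2$-linear projection onto the low $k$ coordinates in the chosen basis.

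For pairwise independence, the first step is to show that for any distinct $x, x'$ the map $(a,b) \mapsto (ax + b,\; ax' + b)$ is a bijection of $\bigfield_{2^\ell}^2$: given a target $(y,y')$, subtracting the two equations gives $a(x - x') = y - y'$, which has a unique solution for $a$ because $x - x' \neq 0$ in the field, after which $b$ is forced. Hence $(ax + b,\; ax' + b)$ is uniform on $\bigfield_{2^\ell}^2$. The second step observes that $\pi$ is a surjective $\bigfield_2$-linear map with all fibers of equal size $2^{\ell - k}$, so applying $\pi$ coordinatewise pushes the uniform distribution on $\bigfield_{2^\ell}^2$ forward to the uniform distribution on $\{0,\ldots,m-1\}^2$, which is exactly pairwise independence.

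For efficient computation, I would represent $\bigfield_{2^\ell}$ via a fixed irreducible polynomial of degree $\ell$ over $\bigfield_2$; such a polynomial can be found by randomized search combined with Rabin's irreducibility test in time $\poly(\ell)$, after which addition and multiplication take $\poly(\ell) = \poly(\log n + \log m)$ time. Sampling $\mu$ reduces to drawing $2\ell$ uniform bits, giving the claimed $O(\log n + \log m)$-bit specification of $(a,b)$. The argument is entirely standard and I expect no substantive obstacle; the only reason to prefer this $\bigfield_{2^\ell}$ construction over the more familiar prime-modulus version $h_{a,b}(x) = ((ax+b) \bmod p) \bmod m$ is that the former delivers \emph{exact} pairwise independence whenever $m$ is a power of two, avoiding the small non-uniformity that the outer $\bmod m$ step introduces when $p$ is not divisible by $m$.
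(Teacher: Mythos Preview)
The paper does not prove this proposition; it states it as a known result and cites Dietzfelbinger. Your argument is correct and self-contained.

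Your construction differs from Dietzfelbinger's: you work over $\mathbb{F}_{2^\ell}$ with an affine map followed by an $\mathbb{F}_2$-linear projection onto $k$ coordinates, whereas the point of Dietzfelbinger's paper was precisely to \emph{avoid} finite-field arithmetic and primes by using only integer multiply--add--shift on machine words. Both constructions deliver exact pairwise independence when $m$ is a power of two, and both meet the $O(\log n + \log m)$ description bound. Your route is the standard textbook argument and is arguably cleaner to verify; Dietzfelbinger's buys a practical speedup by replacing a field multiplication (which needs an irreducible polynomial and polynomial reduction) with a single word multiply and a shift, at the cost of a somewhat more delicate correctness proof. For the purposes of this paper, where the hash is evaluated polynomially many times and constant factors in the running time are immaterial, either construction is adequate.
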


We note here a version of the Chernoff-Hoeffding bound~\cite{AS08} that will be needed in our algorithm analysis.

\begin{lemma} 
\label{lemma:CH}
Let $X_1,\ldots,X_r$ be i.i.d. 0-1 random variables with $\prob[X_i=1]=p$. For $\delta \geq 0$,
\[
\prob[|\frac{1}{r}\sum_{i=1}^r X_i - p| \geq \delta] \leq 2e^{-2\delta^2r}.
\]
\end{lemma}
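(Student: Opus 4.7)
The plan is to prove this by the standard exponential moment (Chernoff) method applied to bounded random variables, treating the upper and lower tails separately and combining them via a union bound to account for the factor of $2$. Let $S_r = \sum_{i=1}^{r} X_i$ so that $\E[S_r] = pr$, and let $Y_i = X_i - p$, which are i.i.d., mean-zero, and satisfy $Y_i \in [-p, 1-p]$, so in particular each $Y_i$ lies in an interval of length $1$.

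First I would handle the upper tail $\prob[S_r - pr \geq \delta r]$. For any $t > 0$, applying Markov's inequality to $e^{t(S_r - pr)}$ yields
\[
\prob[S_r - pr \geq \delta r] \;\leq\; e^{-t \delta r}\, \E\bigl[e^{t(S_r - pr)}\bigr] \;=\; e^{-t \delta r}\prod_{i=1}^{r} \E\bigl[e^{t Y_i}\bigr],
\]
using independence of the $Y_i$. The key auxiliary step is Hoeffding's lemma: for any mean-zero random variable $Y$ supported on an interval of length $L$, one has $\E[e^{tY}] \leq e^{t^2 L^2 / 8}$. I would prove this inline by the standard convexity argument: write $e^{tz}$ as a convex combination of its values at the endpoints of the supporting interval, take expectations to bound $\E[e^{tY}]$ by a function $\varphi(t)$, and then show via a second-derivative computation that $\log \varphi(t) \leq t^2 L^2 / 8$. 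Applied with $L = 1$, this gives $\E[e^{t Y_i}] \leq e^{t^2 / 8}$, hence
\[
\prob[S_r - pr \geq \delta r] \;\leq\; e^{-t\delta r + r t^2/8}.
\]
Optimizing by choosing $t = 4\delta$ yields the bound $e^{-2\delta^2 r}$.

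For the lower tail $\prob[S_r - pr \leq -\delta r]$, I would apply exactly the same argument to the random variables $-Y_i$, which are also mean-zero and supported in an interval of length $1$, giving an identical bound $e^{-2\delta^2 r}$. A union bound over the two tails then gives
\[
\prob\!\left[\left|\tfrac{1}{r} S_r - p\right| \geq \delta \right] \;\leq\; 2 e^{-2\delta^2 r},
\]
as desired.

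The only nontrivial step is Hoeffding's lemma, and in particular getting the sharp constant $1/8$ (as opposed to a weaker constant that would yield an exponent of, say, $-\delta^2 r / 2$). This is the one place where care is needed: the convexity bound by itself only shows $\log \varphi(t)$ has $\varphi(0) = \varphi'(0) = 0$, and one must verify $\varphi''(t) \leq L^2/4$ uniformly in $t$ to conclude via Taylor expansion. Everything else is a mechanical application of Markov's inequality, independence, and optimization over $t$.
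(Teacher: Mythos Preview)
Your proof is correct and follows the standard exponential-moment derivation of the Hoeffding inequality. Note, however, that the paper does not actually prove this lemma: it is stated without proof and attributed to the literature (Alon--Spencer), as a well-known concentration bound to be used later in the analysis. So there is no ``paper's own proof'' to compare against; your write-up simply supplies the standard argument the paper elected to cite.
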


\section{Three auxiliary procedures}
\label{sec:auxiliary}

The sketching scheme for edit distance that we present in Section~\ref{sec:ED scheme} uses several auxiliary sketching schemes.  Three of these are from previous work and one is new to this paper.  In this section, we describe the key properties of the previous three schemes.

\subsection{Fingerprinting}
\label{subsec:fingerprinting}

The \emph{fingerprinting problem} for a $\Sigma^*$ is to define a family of sketching functions that distinguishes between distinct
elements of $\Sigma^*$.  More formally, given a parameter $n$ we want a family $\fingerprint_{\rho}(\cdot;n):\Sigma_n^{\le n} \rightarrow \Z^+$
such that for any strings $x,y \in \Sigma_n^{\le n}$ with $x \neq y$ $$\prob_\rho[\fingerprint_{\rho}(x;n) = \fingerprint_{\rho}(y;n)] < 1/n^4.$$  

The following classic result of Karp-Rabin \cite{rabin_karp} provides an efficient fingerprinting scheme. We refer to it as the {\em Karp-Rabin fingerprint}.

\begin{theorem}
\label{thm:fingerprint}
There is an efficiently computable randomized function  $\fingerprint_{\rho}(\cdot;n):\Sigma_n^{\le n} \rightarrow \{1,\dots,n^5\}$ such that for any $x,y \in \Sigma_n^{\le n}$ with $x \neq y$,
$\prob_\rho[\fingerprint_{\rho}(x;n) = \fingerprint_{\rho}(y;n)] < 1/n^4$.  The number of bits needed to describe $\rho$ is
$O(\log n)$. The time to compute $\fingerprint_{\rho}(x;n)$ is $O(|x| \cdot \log^{O(1)} n)$.
\end{theorem}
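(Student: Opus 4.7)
The plan is to implement the classical Karp--Rabin polynomial hash. The randomization procedure first selects a prime $p$ in the interval $[n^5/2, n^5]$; such a prime exists by Bertrand's postulate and can be sampled efficiently by drawing a uniform element of the interval and running a standard primality test, re-sampling on failure. Then it picks $r$ uniformly at random in $\{0,1,\ldots,p-1\}$. The parameter $\rho = (p, r)$ fits in $O(\log n)$ bits.

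Given a string $x \in \Sigma_n^{\le n}$, define the polynomial
\[
P_x(z) \;=\; \sum_{i=1}^{|x|} (x_i + 1)\, z^{i-1} \;\in\; \mathbb{F}_p[z],
\]
where the ``$+1$'' shift ensures every coefficient lies in $\{1,\ldots,n^3\} \subseteq \mathbb{F}_p^\times$, so the leading coefficient is nonzero and $\deg(P_x) = |x| - 1$ exactly. I would set $\fingerprint_\rho(x;n) = 1 + (P_x(r) \bmod p)$, landing in $\{1,\ldots,n^5\}$. Evaluation is done by Horner's rule using $O(|x|)$ arithmetic operations in $\mathbb{F}_p$, each of cost $\log^{O(1)} n$, for total runtime $O(|x| \cdot \log^{O(1)} n)$.

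For the collision bound, fix $x \neq y$ in $\Sigma_n^{\le n}$ and consider $Q(z) := P_x(z) - P_y(z) \in \mathbb{F}_p[z]$. I would argue $Q$ is a nonzero polynomial of degree at most $n-1$ in two cases: if $|x| \neq |y|$, then $P_x$ and $P_y$ have distinct degrees since both leading coefficients are nonzero over $\mathbb{F}_p$; if $|x| = |y|$, then at some coordinate $i$ we have $x_i \neq y_i$, and since $|x_i - y_i| < n^3 < p$ this coefficient of $Q$ is nonzero in $\mathbb{F}_p$. A nonzero polynomial of degree $d$ over a field has at most $d$ roots, so
\[
\prob_{r}[\fingerprint_\rho(x;n) = \fingerprint_\rho(y;n)] \;=\; \prob_r[Q(r) = 0] \;\le\; \frac{n-1}{p} \;\le\; \frac{2(n-1)}{n^5} \;<\; \frac{1}{n^4}.
\]

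The only real subtlety is arranging that strings of different lengths cannot collide deterministically, which the alphabet shift handles by forcing the leading coefficient to be a unit; everything else is bookkeeping about the prime size (to control both the collision probability and the output range) and invoking the standard fact that polynomial-degree bounds root counts. No step looks like a serious obstacle.
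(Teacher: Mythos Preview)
Your approach is the standard Karp--Rabin construction that the paper is citing (the paper does not supply its own proof, merely references \cite{rabin_karp}). The argument is correct in structure, but the final inequality is off by a constant: $\frac{2(n-1)}{n^5} < \frac{1}{n^4}$ would require $2(n-1) < n$, which fails for every $n \ge 2$; with $p \in [n^5/2, n^5]$ you only get collision probability below $2/n^4$. To meet the stated bound you need $p > (n-1)n^4 = n^5 - n^4$. A prime in $(n^5 - n^4, n^5]$ exists for all sufficiently large $n$ by standard prime-gap results (e.g.\ Baker--Harman--Pintz gives a prime in $[m - m^{0.525}, m]$, and $n^4 = (n^5)^{4/5} \gg (n^5)^{0.525}$), so sampling $p$ from that narrower interval fixes the arithmetic without changing anything else in your construction.
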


\subsection{Threshold edit distance fingerprinting}
\label{subsec:ED threshold}

In the \emph{threshold edit distance fingerprinting} we
are given a parameter $n$ and a threshold parameter $k$. 
We want a family of sketching functions  $\gthreshsk_{\rho}(\cdot;k,n) : \Sigma^{*} \to \Z^+$ 
such that for all $x,y \in \Sigma_n^{\leq n}$ such that $\ED(x,y)\geq k$, $\prob[\gthreshsk_\rho(x;k,n) = \gthreshsk_\rho(y;k,n)] \leq 1/n^4$,
i.e. the sketch is very likely to distinguish strings $x,y$ that are far.  We also want that for some \emph{inaccuracy gap} $s>1$,
for strings $x,y$ such that $\ED(x,y)<\frac{k}{s}$, $\prob[\gthreshsk(x) \neq \gthreshsk(y)]$ is small.  Precisely we want that 
for all $x,y$: 

\[
\prob[\gthreshsk_{\rho}(x) \neq \gthreshsk_{\rho}(y)] \leq \frac{\ED(x,y)}{k} \cdot s.
\]
Note that the requirement becomes easier as $s$ gets larger.
The problem of constructing a threshold edit distance fingerprinting scheme with inaccuracy gap $s>1$ is closely related to the problem of
finding an approximate embedding function $f$ that maps strings to vectors in $\mathbb{R}^d$ (for some $d$) so that
$\ED(x,y)$ is approximated by $|f(x)-f(y)|_1=\sum_{i=1}^d |f(x)-f(y)|$ within a factor $s$.    
This latter problem was investigated by Ostrovsky and Rabani  \cite{OR07} and their results yield the following consequence:

\begin{theorem}
\label{thm:OR}
There is an algorithm $\threshsk(x;k,n)$ for the threshold edit distance fingerprinting problem that
for arbitrary $k,n \in \mathbb{N}$ has inaccuracy gap $\orgap=2^{O(\sqrt{\log(n)\log\log(n)})}$ and for any string $x \in \Sigma^{\le n}_n$ gives a fingerprint of value at most $O(n^4)$.
The fingerprinting algorithm uses a randomness parameter of length $O(\log^3 n)$ and runs in time polynomial in $n$ and $k$.
\end{theorem}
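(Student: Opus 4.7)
The plan is to treat the Ostrovsky--Rabani embedding~\cite{OR07} as a black box and then reduce the threshold edit distance fingerprinting problem to a standard fingerprinting task for the $\ell_1$ metric. The result of~\cite{OR07} provides a polynomial-time randomized map $f:\Sigma_n^{\le n}\to\mathbb{Z}^d$ with $d$ and $\|f(x)\|_\infty$ both polynomial in $n$, such that (after appropriate rescaling) for any two strings $x,y$,
\[
\ED(x,y)\ \le\ \|f(x)-f(y)\|_1\ \le\ \orgap\cdot \ED(x,y),
\]
while using $O(\log^3 n)$ random bits. Given this, it suffices to construct a randomized hash $h:\mathbb{Z}^d\to\{1,\dots,O(n^4)\}$ satisfying (i) $\Pr[h(u)\neq h(v)]\le\|u-v\|_1/k$ for every $u,v$, and (ii) $\Pr[h(u)=h(v)]\le 1/n^4$ whenever $\|u-v\|_1\ge k$; the final fingerprint is then $\threshsk_\rho(x;k,n)=h(f(x))$, and the two required guarantees follow by substituting the distortion bounds.

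For the $\ell_1$ hash I would use the classical random grid construction. Pick $\tau\in[0,k)^d$ uniformly with independent coordinates and set $g_\tau(v)_i=\lfloor (v_i+\tau_i)/k\rfloor$. A one-line computation gives $\Pr[g_\tau(u)_i=g_\tau(v)_i]=1-\min(|u_i-v_i|,k)/k$, and coordinate independence of $\tau$ gives
\[
\Pr[g_\tau(u)=g_\tau(v)]=\prod_{i=1}^d\Bigl(1-\tfrac{\min(|u_i-v_i|,k)}{k}\Bigr).
\]
The inequality $1-\prod(1-\delta_i)\le\sum\delta_i$ yields $\Pr[g_\tau(u)\neq g_\tau(v)]\le\|u-v\|_1/k$, while $1-\delta\le e^{-\delta}$ together with $\sum\delta_i\ge 1$ in the far case yields $\Pr[g_\tau(u)=g_\tau(v)]\le 1/e$. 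To boost the far-case collision probability down to $1/n^4$, I would take $T=\lceil 8\ln n\rceil$ independent shifts $\tau^{(1)},\dots,\tau^{(T)}$ and form the tuple $G(v)=(g_{\tau^{(t)}}(v))_{t=1}^T$, which satisfies $\Pr[G(u)=G(v)]\le e^{-T}\le 1/n^4$ on far pairs and $\Pr[G(u)\neq G(v)]\le T\cdot\|u-v\|_1/k$ on all pairs by a union bound over the $T$ repetitions.

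To compress $G(f(x))$ to a single integer, apply the Karp--Rabin fingerprint of Theorem~\ref{thm:fingerprint} to the binary encoding of $G(f(x))$ with range $\{1,\dots,n^5\}$: this never causes spurious disagreements and contributes at most an additive $1/n^4$ to the collision probability on distinct tuples, keeping the far-case collision bound at $O(1/n^4)$. Composing the three stages, the general bound becomes $\Pr[\threshsk_\rho(x)\neq\threshsk_\rho(y)]\le T\cdot\orgap\cdot\ED(x,y)/k$, and absorbing $T=O(\log n)$ into $\orgap=2^{O(\sqrt{\log n\log\log n})}$ preserves the claimed inaccuracy gap, since $\log\log n=o(\sqrt{\log n\log\log n})$. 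The main obstacle is meeting the $O(\log^3 n)$ randomness budget: the na\"ive choice of $T\cdot d$ independent shift coordinates is too expensive, so the truly uniform $\tau^{(t)}$ must be replaced by samples from a pairwise- or $k$-wise independent family generated from a polylog-length seed (via Proposition~\ref{prop:pairwise}). Pairwise independence across coordinates already suffices for the union-bound step giving the ``general'' guarantee, and the $1/e$ estimate for the far case survives under sufficiently high but still polylog-seed order of independence via a standard second-moment argument, which is where I expect the bulk of the technical bookkeeping to lie.
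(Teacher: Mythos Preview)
The paper does not actually give a proof of this theorem: it states it as a consequence of the Ostrovsky--Rabani embedding~\cite{OR07} and only adds a remark that the randomness used inside~\cite{OR07} can be brought down to polylogarithmic by replacing the random-subset sampling in their Lemma~9 with $O(\log n)$-wise independent samples via~\cite{SSS93}. Your proposal is therefore not competing with a proof in the paper but rather supplying the reduction, left implicit there, from an $\ell_1$ embedding to a single-integer threshold fingerprint. The construction you give (random-shift grid cells in $\ell_1$, $T=O(\log n)$ independent repetitions, then Karp--Rabin compression) is correct and is the standard way to do this.

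On the randomness issue you flag as the main obstacle: pairwise independence across the $d$ coordinates of a single shift vector $\tau$ already suffices, and the bookkeeping is short. For the close case you only use the marginals and a union bound, so nothing changes. For the far case, with $X_i=\mathbf{1}[g_\tau(u)_i\neq g_\tau(v)_i]$ and $X=\sum_i X_i$, pairwise independence of the $\tau_i$ gives pairwise independence of the $X_i$, hence $\mathrm{Var}(X)\le\mathbb{E}[X]$, and Paley--Zygmund yields $\Pr[X>0]\ge \mathbb{E}[X]/(1+\mathbb{E}[X])\ge 1/2$ whenever $\sum_i\delta_i\ge 1$. So the per-repetition collision bound degrades from $1/e$ to $1/2$, which is harmless. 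Each repetition is seeded by $O(\log d+\log k)=O(\log n)$ bits from a pairwise-independent family (Proposition~\ref{prop:pairwise}), and with $T=O(\log n)$ truly independent repetitions the grid stage costs $O(\log^2 n)$ random bits on top of whatever the embedding of~\cite{OR07} needs, comfortably within the stated $O(\log^3 n)$ budget.
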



We refer to $\threshsk(x;k,n)$ as Ostrovsky-Rabani fingerprint.

We make a few remarks on this theorem.\footnote{We had difficulties reading the paper \cite{OR07}. 
Indeed, there is a minor correctable issue in the way Lemmas 8 and 9 are presented in the paper. 
The lemmas claim output from $\ell_1$ which allows for vectors with arbitrary real numbers. Indeed, because of the scaling in their proofs
they do output vectors with real numbers.
However, both lemmas need to be applied iteratively and they assume input to be a $0$-$1$-vector. 
This disparity can be corrected using standard means but it requires additional effort. 
}
\cite{OR07} does not provide explicit bound on the amount of randomness needed. 
In particular, Lemma 9 samples random subsets and uses tail-bound inequalities to bound the probability of bad events.
One can use $O(\log n)$-wise independent samples to reduce the necessary randomness~\cite{SSS93}.
 
\subsection{The procedure \bdecomp}
\label{subsec:bdecomp}

The previous sketch for edit distance of \cite{BK23}  gave
sketches of size $\tilde{O}(k^2)$ that allowed for recovery of edit distance between strings $x,y$ with $\ED(x,y) \leq k$.   That work is the
starting point for this paper, and one of the key
procedures in their algorithm is an essential part of our sketch-and-recover scheme, and we now describe its important properties.

\cite{BK23} gives a decomposition algorithm \bdecomp{} that takes a string $x$ of length at most $n$ and partitions $x$ into
a sequence of fragments where each fragment is represented by a small {\em grammar} of size $\OO(k)$.
A {\em grammar} is a subset of {\em rules} from a certain domain of size polynomial in $n$. 
For us the actual meaning of the grammar is irrelevant; 
however we use a procedure  $\decode$ implicitly defined in \cite{BK23} that takes as input a grammar 
and either outputs a string in $\Sigma^{\leq n}$ or \undefnd.
We will represent grammars by their characteristic vector within their domain. 
We say that a bit-vector is \emph{$t$-sparse} if it has at most $t$ 1's in it.

We now restate the properties of algorithm \bdecomp{} in a form suitable for us.
The algorithm takes as input a string $x$, parameter $n$ and an integer \emph{sparsity parameter} $\gramsize$, 
where $|x|\le n$.
We denote this by $\bdecomp(x;n,\gramsize)$.

$\bdecomp(x;n,\gramsize)$ outputs:
\begin{itemize}
\item A string decomposition of $x$, $z=(z_i:i \in \BASE)$, where $\BASE$ is the set $\{0,1\}^{\lceil \log n \rceil}$ with the lexicographic order.  (Some strings may be empty.)
\item A collection $\bvec=(\bvec_i:i \in \BASE)$ of $\gramsize$-sparse bit-vectors of length $\bvsize=\bvsizeval$ such that if $z_i \neq \emptystr$ then $\decode(\bvec_i)=z_i$ (so $\bvec_i$ is a $\gramsize$-sparse encoding of $z_i$)
and if $z_i=\emptystr$ then $\bvec_i$ is the all 0 vector  $0^{\bvsize}$. Furthermore, $\bvec_i$ has the following
\emph{minimality property}: for any bit-vector $b$ that is bit-wise less than $\bvec_i$, $\decode(b)$ is \undefnd.
(For technical convenience we require $\decode(0^\bvsize)$ to be also \undefnd{} although $0^\bvsize$ represents the empty string.)
Each $\bvec_i$ is represented as a list of positions that are set to $1$.
\end{itemize}

%
%
%

\begin{theorem}
\label{thm:BK}
The algorithm $\bdecomp$ has the following properties for all $n,\gramsize$:  
\begin{enumerate}
\item
\label{BK1}
For any input $x$ of length at most $n$, 
the running time of $\bdecomp(x;n,\gramsize)$ as well as the total number of ones in $\bvec_i$'s is bounded by $O(|x| \cdot \log^{O(1)} n)$.
\item 
\label{BK2}
For any pair of inputs $x,y$, suppose 
$z=(z_i:i \in \BASE)$ and $\bvec=(\bvec_i:i \in \BASE)$ is the output of $\bdecomp(x;\gramsize)$, and 
$z'=(z'_i:i \in \BASE)$ and $\bvec'=(\bvec'_i:i \in \BASE)$ is the output of $\bdecomp(y;\gramsize)$.  
\begin{enumerate}
\item
\label{BK2a}
With probability at least $1-\bkfailure\cdot \frac{\ED(x,y)}{\gramsize }$ where $\bkfailure=O(\log^4 n)$,
for all $i \in W$, the box $\zint{z}_i \times \zint{z'}_i$ is $(x,y)$-compatible whenever $z_i$ is non-empty, and $z_i$ is non-empty iff $z'_i$ is non-empty. 
\item
\label{BK2b}
For all $i \in W$, $\HAM(\bvec_i,\bvec'_i) \leq \dfactor \cdot \ED(z_i,z'_i)$ where $\dfactor=O(\log^2 n)$.
\end{enumerate}
\end{enumerate}
\end{theorem}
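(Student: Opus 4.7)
The plan is to derive Theorem~\ref{thm:BK} by translating the relevant results of \cite{BK23} into the notational framework used here. The decomposition produced by $\bdecomp$ in \cite{BK23} assigns each position of $x$ a \emph{level} in $\{0,1,\ldots,\lceil\log n\rceil\}$ via a locally-consistent random labeling, with higher levels being exponentially rarer and acting as fragment boundaries. I would first identify the fragment indexing: each fragment corresponds to a prefix-signature in $\{0,1\}^{\lceil\log n\rceil}$, so placing these signatures in lex order realizes the index set $\BASE$ in the statement. Empty entries in $z$ arise precisely from signature values that do not correspond to an actual fragment in $x$; adopting the convention $\bvec_i=0^\bvsize$ (with $\decode(0^\bvsize)=\undefnd$) handles these cleanly.

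For part \ref{BK1}, I would directly invoke the complexity bound of \cite{BK23}: the local-consistency labeling, the grammar construction, and the conversion to a $\gsize$-sparse characteristic vector of length $\bvsize=n^{60}$ each cost only $\poly\log n$ time per symbol processed, and the grammar assigned to each fragment has at most $\gramsize$ distinct rules, so both the runtime and the total sparsity $\sum_i \|\bvec_i\|_1$ are bounded by $|x|\cdot \log^{O(1)} n$.

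For part \ref{BK2a}, the crucial property established in \cite{BK23} is that the random labeling is \emph{locally consistent}: for each canonical edit operation in the alignment of $x$ and $y$, the probability that the labelings of $x$ and $y$ disagree in a way that creates a spurious boundary (a \emph{bad split}) inside the corresponding window is at most $O(\log^4 n)/\gramsize$. I would apply a union bound over the $\ED(x,y)$ edit operations in $\canon(x,y)$. If no edit operation causes a bad split, then the level sequences on the matched positions of $x$ and $y$ coincide, forcing the two fragment sequences to be in one-to-one correspondence; the substring boundaries then split $\canon(x,y)$ into sub-alignments whose concatenation is $\canon(x,y)$, which is exactly the definition of each box $\zint{z}_i\times\zint{z'}_i$ being $(x,y)$-compatible. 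Matching non-emptiness follows from the fact that only positions of the matched alphabet on both sides can generate the same signature.

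For part \ref{BK2b}, I would invoke the grammar-stability property of \cite{BK23}: for any pair of strings $u,v$, the characteristic vectors of their grammars differ in at most $O(\log^2 n)\cdot \ED(u,v)$ coordinates, because each edit operation on $u$ can create, destroy, or modify only $O(\log^2 n)$ grammar rules along the $O(\log n)$ levels of the recursive grammar construction. Applying this fragmentwise and using the minimality property (which makes the grammar of a given string \emph{canonical} so that the encoding is a deterministic function of the string alone) yields the claim. The main obstacle I anticipate is purely bookkeeping: reconciling the per-position failure probability in \cite{BK23} with the per-edit-operation bound here requires identifying, for each edit operation, the short window of positions whose random labels could be affected, and verifying that the events for distinct edit operations decompose cleanly enough for a union bound to give the stated factor $\bkfailure=O(\log^4 n)$.
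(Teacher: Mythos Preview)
Your approach is essentially the same as the paper's: both treat Theorem~\ref{thm:BK} as a restatement of results from \cite{BK23}, with the work being to translate their guarantees into the present notation. The paper's own treatment is in fact briefer than yours---it simply remarks after the statement that Item~\ref{BK2a} is immediate from the analysis of what \cite{BK23} call an \emph{undesirable split}, and that Item~\ref{BK2b} is implicit in the same analysis.

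One small point you omit: the paper explicitly modifies $\bdecomp$ so that, instead of failing with probability $O(1/n)$ as in \cite{BK23}, it falls back to the trivial decomposition into length-$1$ fragments. This is needed so that the output is always defined, and the paper notes it only negligibly affects the bound in Item~\ref{BK2a}. You should include this adjustment if you want your account to match the stated guarantees exactly.
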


The significance of Item~\ref{BK2a} is that if for each $i\in\BASE$, the box $\zint{z}_i \times \zint{z'}_i$ is $(x,y)$-compatible 
then $\ED(x,y)=\sum_{i\in \BASE} \ED(z_i,z'_i)$.

Note, for Item 1, \cite{BK23} allows the decomposition procedure to fail with probability $O(1/n)$. 
We modify the procedure that instead of failing it produces the trivial decomposition of its input into fragments of length 1.
When $x\neq y$, this increases the failure probability in Item 2a by a negligible amount.
(For $x=y$, the procedure never fails.)
For Item 2a, \cite{BK23} does not specify the success probability explicitly the way we do but our formula is immediate from their analysis
of probability of what they call {\em undesirable split}. 
The upper bound on the difference between two grammars in terms of edit distance of $x$ and $y$ is also implicit in the same analysis. 

Given a grammar $G$ by its sparse representation as a list of $t$ positions of 1's, $\decode(G)$ runs in time $O(t+|x|)$ if $G$ represents a string $x$
and in time $O(t)$ otherwise. A grammar representing $x$ contains at most $\dfactor \cdot |x|$ ones.

\section{Sketch-and-recover schemes}
\label{subsec:difference recovery}

We consider the following general setting: We have a universe of objects (such as
strings, matrices, or graphs) for which we have a specific 
way of comparing two objects in the universe.  In this paper the universe is strings over a given
alphabet and for each pair $x,y$ of strings, we are interested in determining $\ED(x,y)$.
A {\em sketch-and-recover} scheme consists of two algorithms: 
a sketching algorithm that takes a string $x$ over some alphabet and produces a short \emph{sketch}  $\SK(x)$ 
and a {\em recovery algorithm} that takes a pair of sketches $\SK(x)$ and $\SK(y)$ and recovers the
desired information about the relationship between $x$ and $y$.


The sketch and recovery algorithms typically take auxiliary parameters.
For edit distance, the auxiliary parameters are $n$, an upper bound on the length
of strings the sketch will handle, and $k$, a threshold parameter such that given any two
inputs $x,y$ if $\ED(x,y) \leq k$ the recovery algorithm will determine $\ED(x,y)$.  (The precise
requirements of our scheme will be given later.)

Sketch-and-recover schemes are typically randomized.  We assume that the algorithms use
a \emph{randomizing parameter} $\rho$, which is computed from a publicly available random string
so that the same value of $\rho$ is used by the sketching and recovery algorithms.
We allow the recovery algorithm to fail with a small probability over $\rho$ (with output \undefnd{}).
We usually suppress the parameters $\rho,n$ and $k$ and write simply $\SK(x)$.
By the {\em total sketch} we mean the pair $(\rho,\SK_{\rho}(x;k,n))$, the sketch together with its randomizing parameter. 

We want to implement sketch-and-recover schemes by efficient algorithms.  Here efficiency is measured by:
\begin{itemize}
\item The \emph{sketch length} $\bitlength(\SK_{\rho}(x; k,n))$, which is the number of bits in the binary encoding of  $\SK_{\rho}(x; k,n)$.
\item The \emph{randomness length} 
$\bitlength(\rho)$ is the number of bits for 
 (the binary encoding of) the parameter $\rho$.  We think of $\rho$ as being determined by a randomized algorithm
which runs using a shared public string of random bits.  We could just take $\rho$ to be this shared
string, but in some cases the number of bits needed to represent $\rho$ may be less than the number of bits needed
to generate $\rho$.  For example, if $\rho$ is supposed to be uniformly distributed among $s$-bit primes, then $\bitlength(\rho)=s$ but the
number of random bits needed by an algorithm to generate $\rho$ is larger than $s$. 
\item The running time
of the sketch and recovery functions.  
\end{itemize}

Our primary focus here  is on achieving small sketch length.  We also want the running time of the sketch and recovery
algorithms to be at most polynomial in the length of the strings being sketched, but  we don't try
to optimize the polynomial here.

In designing a sketch for recovering edit distance, we want our recovery algorithm
to do more than just recover the edit distance of two strings $x$ and $y$ from their sketch;
we would also like the recovery algorithm to provide enough information so that given
one of the strings, and the output of the recovery algorithm, one can determine
the other string.  For example, in the related but easier case of sketches for Hamming distance,
we want to recover the \emph{mismatch information} for $x,y$,  $\MI(x,y)$, which is the set $\{(i,x_i,y_i):i \in \indices(x,y)\}$, where $\indices(x,y)=\{i:x_i\neq y_i\}$.  We refer
to $(i,x_i,y_i)$ as a \emph{mismatch triple} for $x,y$

The analog of mismatch information for edit distance
is the sequence of edit operations which is determined by the \emph{set of costly
annotated edges of the canonical alignment of $x$ and $y$}, as defined in Section 
~\ref{sec:edit distance}.  In proposition~\ref{prop:costly} it is shown that an alignment is
completely determined by its set of costly edges and also that given the set of
costly edges with their annotations and either string $x$ or $y$, one can recover
the other string.  We require the alignment to be canonical because the optimal alignment
is not, in general, unique and we need that
the set being recovered is uniquely determined by $x$ and $y$.

\subsection{Hierarchical  mismatch recovery} 
\label{subsec:HMR}
 
The problem of Hamming distance recovery 
can be viewed as the special case of edit distance recovery when the only edit operations
allowed are substitutions (which preserve the position of letters in the string).
To formalize the problem we fix an index set $D$ and alphabet $\Gamma$
and take as our universe the set of $D$-sequences over  $\Gamma$.  
We want a sketch-and-recover scheme that supports \emph{mismatch recovery},
which is the recovery of the mismatch information 
$\MI(u,w)$ defined in Section~\ref{subsec:difference recovery}.  

In the standard formulation, we have a parameter $k$ and require recovery of the mismatch information
$\MI(x,y)$
from the sketches of $x$ and $y$ whenever their Hamming distance is at most $k$.
We will need a more general formulation of mismatch recovery in which we only require recovery of mismatch information for some indices.
Those indices will be determined for each pair of strings $u$ and $w$ separately.

This more general formulation is called \emph{targeted mismatch recovery}.  
Before giving a formal definition, we give a simple example.  
Consider sequences of length $n=m^2$ for some integer $m$, 
where we think of a sequence $u$ as the concatenation of $m$ fragments $u^1,\ldots,u^m$ each of length $m$.  
For strings $u,w$ we say that fragment $i$ is overloaded if there is more than one mismatch in the fragment, 
and underloaded if there is at most one mismatch in the fragment. 
Let $F(u,w)$ be the set of mismatch indices belonging to underloaded fragments.   
The set $F(u,w)$ defines a targeted mismatch recovery problem where our goal is to provide sketching and recovery algorithms 
so that for any pair $u,w \in \Gamma^D$, given the sketches of $u$ and $w$, 
the recovery algorithm finds all mismatch triples corresponding to $F(u,w)$.  Thus
one needs to recover up to $m$ mismatches in underloaded fragments even if other fragments are overloaded fragments. This particular example has a simple sketch of size $\OO(m)$ by separately constructing
an $\OO(1)$-size sketch of each fragment that supports recovery of a  single error.

In general, a \emph{targeted mismatch recovery} problem is specified by a target function $F$ which for each pair $u,w \in \Gamma^D$, is a subset $F(u,w)$ of the set $\indices(u,w)$ of indices $i$ such
that $u_i\neq w_i$.  We write  $\MI_F(u,w)$ as the set $\{(i,u_i,w_i):i \in F(u,w)\}$, this
is the subset of mismatch triples corresponding to indices in $F(u,w)$.
The targeted mismatch recovery problem for a target function $F$ is denoted $\PMR(F)$.   
The output of the $\RECOVER$ function applied to two sketches is required to be  a set of triples $(i,a,b)$ where $i \in D$
and $a,b \in \Gamma$. 
Such a triple is called a \emph{mismatch triple}.  
The success conditions for $\PMR(F)$ for the pair $u,w$ are:

\begin{description}
\item[\textbf{Soundness}.] $\RECOVER(\SK(u),\SK(w)) \subseteq \MI(u,w)$, i.e. every mismatch claim is correct.
\item[\textbf{Completeness}.]  $\MI_F(u,w) \subseteq \RECOVER(\SK(u),\SK(w))$ so the algorithm recovers all mismatch triples indexed by $F(u,w)$. 
\end{description}

Notice, we do not require that $\RECOVER(\SK(u),\SK(w)) \subseteq \MI_F(u,w)$, i.e., we allow
the recovery algorithm to output mismatch triples not indexed by $F(u,w)$ as long as they are
correct mismatch triples.  Thus, while Completeness depends on the target $F$, Soundness does not.
A scheme for targeted mismatch recovery with target function $F$ has failure probability at most $\delta$ provided that for any  $u,w \in \Gamma^D$, with probability at least $1-\delta$, both Completeness and Soundness hold.
Here, the probability is taken over the randomness of the sketching and recovery algorithms.
(Our recovery algorithms will actually be deterministic, so all the randomness is coming from the sketch.)

\emph{Hierarchical  mismatch recovery} (\HMR) is a special case of targeted mismatch recovery
It is  applicable to $D$-sequences indexed by a product set $D=\LOS_1 \times \cdots \times \LOS_d$, which we view as the leaves of the tree $T=T(\LOS_1 \times \cdots \times \LOS_d)$ as in Section~\ref{subsec:trees}.
The target function depends on a \emph{capacity function} $\treecap$ and an \emph{overload parameter} $\loadpar$.    The capacity function is a positive valued function defined on the levels of the tree
with $\treecap_j$ denoting the capacity for nodes at level $j$.
We require  $1\le \treecap_{\tdepth} \le \treecap_{\tdepth-1} \le \cdots \le \treecap_{0}$.  
The overload parameter $\loadpar$ is a positive integer.

The triple consisting of $\LOS_1 \times \cdots \times \LOS_d$, $\treecap$, and $\loadpar$ determines
a \emph{hierarchical mismatch recovery problem} which is a targeted mismatch recovery
problem whose target set on sequences $u,w$, denoted $F_{T,\treecap,\loadpar}(u,w)$, is defined as follows.


First, the 
\emph{load function  $\hmrload{\treecap}(u,w)$} induced by $\treecap$ and the  pair $u,w$ is defined
on the vertices of the tree $T$ as follows:
 
\begin{itemize}
\item For a leaf $v\in \LOS^d$, $\hmrload{\treecap}_v=1$ if $u_v\neq w_v$, and is 0 otherwise.
\item For an internal node $v$ at level $j<d$, $\hmrload{\treecap}_v=\min(\treecap_j, \sum_{v' \in \child(v)}\hmrload{\treecap}_{v'})$.
\end{itemize}

Clearly $\hmrload{\treecap}_v \leq \treecap_j$ for every vertex $v$ at level $j$.
We say that an internal node $v$ at level $j$ is \emph{$\loadpar$-overloaded}, if $\hmrload{\treecap}_v \geq \frac{1}{\loadpar}\treecap_j$ and is \emph{$\loadpar$-underloaded} otherwise.
(Later we will fix the parameter $\loadpar$ to be $4\tdepth$, and refer to nodes simply as overloaded or
underloaded.)

Notice that a $\loadpar$-underloaded $v$ satisfies $\hmrload{\treecap}_v = \sum_{v' \in \child(v)} \hmrload{\treecap}_{v'}$.
A leaf is said to be \emph{$\loadpar$-accessible} if its path to the root consists entirely of $\loadpar$-underloaded nodes.

We define $F_{T,\treecap,\loadpar}(u,w)$ to be the set $\loadpar$-accessible leaves $\ell$ where $u_\ell \neq w_\ell$.  Intuitively, the leaves below a  $\loadpar$-overloaded node are ``crowded'' with mismatches and we do not require them to be recovered.  
We denote by $\HMR(T,\treecap,\loadpar)$ the targeted mismatch recovery problem with target function $F_{T,\treecap,\loadpar}$.
In Section~\ref{subsec:hmr proof} we will prove:

\begin{theorem}
\label{thm:HMR}
Let $T=T(\LOS_1 \times \cdots \times \LOS_d)$ be a level-uniform tree as defined in Section~\ref{subsec:trees}
and let $\treecap=(\treecap_0,\dots,\treecap_\tdepth)$ be a capacity function  $1\le \treecap_{\tdepth} \le \treecap_{\tdepth-1} \le \cdots \le \treecap_{0} \le \prod_{j=1}^d|\LOS_j|$.  
Assume that for all $j \in \{0,\ldots,d\}$, $|\LOS_j|$ and all $\treecap_j$ are powers of two. 
Let $\Gamma=\mathbb{F}_p$ where $p \geq 4 \prod_{j=1}^d|\LOS_j|^2$.
There is a sketch-and-recover scheme for hierarchical mismatch recovery for $(\LOS_1 \times \cdots \times \LOS_d)$-sequences over $\Gamma$
defined by procedures $\hmrsketch$ and $\hmrrecover$ that given $\delta>0$ satisfies:
\begin{enumerate}
\item If $\loadpar \geq 4d$ then for any two inputs $x$ and $y$, the probabilty that $\HMR(T,\treecap,\loadpar)$  fails to satisfy Soundness and Completeness is at most $\delta$.
\item The sketch bit-size is $O(\treecap_0 \cdot \log |\Gamma|\cdot (\sum_{j=1}^d\log |\LOS_j| + \log(1/\delta)))$ bits. 
\item The sketching algorithm runs in time $O(\prod_{j=1}^d|\LOS_j| \cdot \log^{O(1)} |\Gamma| \cdot \log(1/\delta))$.
\item The recovery algorithm runs in time $O(\treecap_0 \cdot  \log^{O(1)} |\Gamma| \cdot \log(1/\delta))$.
\item If the $(\LOS_1 \times \cdots \times \LOS_d)$-sequence $u$ is given via the sparse representation $\{(j,u_j):j \in \supp(u)\}$ where
$\supp(u)=\{\ell \in \LOS_1 \times \cdots \times \LOS_d:u_{\ell} \neq 0\}$ then the time to construct the sketch
is $O((\treecap_0 + |\supp(u)|) \cdot \log^{O(1)} |\Gamma| \cdot \log(1/\delta))$.
\item The number of mismatch pairs output by the algorithm is at most the capacity $\treecap_0$ of the root.
\end{enumerate}
\end{theorem}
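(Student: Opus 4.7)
The plan is to build the sketch bottom up through the tree, maintaining at every internal node $v$ at level $j$ a compressed encoding $s_v(u)$ of size $O(\treecap_j \log|\Gamma|)$ bits that summarizes the sequence $u$ restricted to the leaves below $v$, and then to perform recovery by inverting this compression top-down from the root, recursing only into those children where the summaries for $u$ and $w$ disagree. The key building block is the superposition mismatch recovery scheme of Section~\ref{subsec:superposition}: at each internal node $v$ on level $j$ with children $v_1,\dots,v_m$ (where $m=|\LOS_{j+1}|$), I view $(s_{v_1},\dots,s_{v_m})$ as an $\LOS_{j+1}$-sequence over the alphabet $\Gamma^{O(\treecap_{j+1})}$ and apply a superposition sketch with capacity $\treecap_j/\treecap_{j+1}$; each ``slot'' of this sketch is a linear combination over $\bigfield_p$ of several children's sketches selected by a pairwise-independent hash from Proposition~\ref{prop:pairwise}. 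At the leaves we take $s_\ell(u)=u_\ell$ directly.

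The correctness analysis proceeds by induction on depth from root to leaves, maintaining the invariant: \emph{if $v$ is $\loadpar$-underloaded, then given the pair $(s_v(u),s_v(w))$ the recovery procedure returns exactly the list of children $v_i$ for which $s_{v_i}(u)\neq s_{v_i}(w)$, together with the correct pair of sketches for each such child.} The base case at the leaves is immediate. The inductive step uses the fact that an underloaded $v$ satisfies $\hmrload{\treecap}_v=\sum_{v_i\in\child(v)}\hmrload{\treecap}_{v_i}\le \treecap_j/\loadpar$, so at most $\treecap_j/\loadpar$ children have differing summaries, which is well within the recovery radius of the superposition base scheme. Soundness follows because the superposition inversion only declares a slot to be a singleton when the two alleged preimages are consistent with the linear combination, and our choice $p\ge 4\prod_j|\LOS_j|^2$ makes the probability that an arbitrary pair of summaries passes the linear check less than $1/(4\prod_j|\LOS_j|^2)$ per node, which we union-bound over the at most $\prod_j|\LOS_j|$ nodes and amplify by repetition.

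The main obstacle is handling the interaction between overloaded and underloaded nodes: when an underloaded $v$ has some overloaded children, the sketches of those children are essentially arbitrary ``garbage,'' and yet the superposition step at $v$ must still isolate the at-most-$\treecap_j/\loadpar$ differing children among all $m$ of them. This is exactly the reason for taking $\loadpar\ge 4d$: the $4d$-factor slack gives one $1/d$ of headroom per level, so that even if every level loses a small constant fraction of its recovery capacity to collisions or spurious hits from overloaded descendants, the recovery still succeeds with probability $1-O(\delta/d)$ per level, and a union bound over the $d$ levels keeps the total failure below $\delta$. The superposition scheme's $\bigfield_p$ linearity is crucial here, since the ``noise'' contributed by an overloaded child collapses into a single slot of the parent's sketch, confined by the capacity $\treecap_j$ rather than propagating laterally.

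Finally, the quantitative items follow by bookkeeping. Each level of the tree holds sketches of total size dominated by the root sketch $O(\treecap_0\log|\Gamma|)$; taking $O(\sum_j\log|\LOS_j|+\log(1/\delta))$ independent parallel copies of the whole scheme and outputting mismatches confirmed by a majority (Lemma~\ref{lemma:CH}) boosts the per-pair success probability to $1-\delta$ and gives the claimed sketch size. Sketching time is dominated by touching every leaf once, which is $\prod_j|\LOS_j|$; the sparse-input version only constructs the $O(|\supp(u)|)$ nodes with a nonzero descendant, plus the $O(\treecap_0)$ ambient work at the root levels whose slots must be initialized. Recovery only ever descends into at most $\treecap_0$ leaves (by the capacity bound at the root), and each node processed costs $\log^{O(1)}|\Gamma|$ field operations, giving both the recovery time and the bound on the output size.
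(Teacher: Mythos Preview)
Your recursive scheme has a genuine parameter gap. At an $\loadpar$-underloaded node $v$ on level $j$, the number of children whose summaries differ can be as large as $\hmrload{\treecap}_v$, which is only bounded by $\treecap_j/\loadpar$ (each such child contributes at least $1$ to the load). You allocate $\treecap_j/\treecap_{j+1}$ slots for the superposition step at $v$; for a pairwise-independent hash to make each differing child $3/4$-recoverable you need the number of slots to be at least four times the number of differing items, so you are implicitly assuming $\treecap_j/\treecap_{j+1} \geq 4\treecap_j/\loadpar$, i.e.\ $\treecap_{j+1} \leq \loadpar/4 = d$. The theorem must cover capacity sequences with $\treecap_{j+1}$ arbitrarily large compared to $d$ (in the paper's application $\treecap_{j+1}$ is of order $k\cdot\mathrm{poly}\log n$), so the step ``at most $\treecap_j/\loadpar$ children have differing summaries, which is well within the recovery radius'' does not go through. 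Repetition and majority vote cannot repair this: when the collision probability per child already exceeds $1/4$, no amount of independent copies makes that child $3/4$-recoverable.

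The paper's construction avoids this by hashing at a finer granularity. Each node $v$ at level $j$ is given $\treecap_j$ scalar \emph{buckets}, and a pairwise-independent routing function $r_j:\LOS_{j+1}\times\{1,\dots,\treecap_{j+1}\}\to\{1,\dots,\treecap_j\}$ maps child-bucket pairs---not whole children---to parent buckets. Composing $r_0,\dots,r_{d-1}$ yields a single hash from leaves to the $\treecap_0$ root-buckets, and the sketch is one flat superposition sketch (Definition~\ref{def:superposition}) with this composite hash; recovery is a single call to the $\RECOVER$ procedure at the root that reads leaf indices directly from the trace encoding---there is no top-down peeling. The relevant count at $v$ is then the number of \emph{occupied child-buckets}, which Proposition~\ref{prop:occupied} bounds by $\hmrload{\treecap}_v < \treecap_j/(4d)$; against $\treecap_j$ parent buckets the merge probability for a fixed accessible leaf at this level is below $1/(4d)$, and summing over the $d$ levels gives $3/4$-recoverability (Lemma~\ref{lemma:hierarchical}), after which Proposition~\ref{prop:random superposition} finishes. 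Your intuition that the factor $4d$ buys $1/d$ headroom per level is correct, but it is spent on the collision probability along a single leaf's trajectory through buckets, not on a ratio of differing children to slots in a node-by-node scheme.
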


 The sketch $\hmrsketch(u)$ depends  on  $\LOS_1 \times \cdots \times \LOS_d$, $\treecap$, $\loadpar$, and the error parameter $\delta$ and when we apply this sketch in our edit distance algorithm we denote it by  $\hmrsketch(u;\LOS_1 \times \cdots \times \LOS_d,\treecap,\loadpar,\delta)$ for the sketch of $u$
and $\hmrrecover(u',w';\LOS_1 \times \cdots \times \LOS_d,\treecap,\loadpar,\delta )$ for the recovery function which takes as input the sketches $u'$ and $w'$ output by $\hmrsketch$ on strings
$u'$ and $w'$.  The dependence of both the sketching and recovery functions on the random
parameter $\rho$ is left implicit.

We mention that our application will require a sketch-and-recovery scheme
for the trivial case that $d=0$.  In this case the tree is a single node
and $u \in \Gamma$ and we take the sketch to consist of the value itself, and
recovery is trivial.

\subsection{Superposition sketch-and-recover schemes for targeted mismatch recovery}
\label{subsec:superposition}

The next two subsections present a general approach to targeted mismatch recovery.  We apply this approach
in Section~\ref{subsec:hmr proof} to construct the sketch-and-recover scheme $\hmrsketch$ and $\hmrrecover$
for hierarchical mismatch recovery and prove Theorem~\ref{thm:HMR}.

We make the following assumptions:
\begin{itemize}
\item $D=\{0,\ldots,|D|-1\}$.
\item $\Gamma$ is the field $\mathbb{F}_p$ for some prime larger than $|D|$ so $D \subseteq \Gamma$.
\end{itemize}

If these assumptions do not hold we can often reduce our situation to one where it does hold.  
We do not need $D$ to be integers; it is enough that there is an easily computable 1-1 mapping $m$ from $D$ to the nonnegative integers.  
Letting $m=\max_{j \in D} m(j)+1$, we can think of $D$ as a subset
of $\{0,\ldots,m-1\}$, and enlarge the domain to $\{0,\ldots,m-1\}$ defining
any $D$-sequence to be 0 on those indices outside of $D$.  
Similarly we can replace the range $\Gamma$ by $\mathbb{F}_p$ for some $p$
whose size is at least $\max\{|\Gamma|,|D|\}$, where we interpret
$\Gamma$ as a subset of $\mathbb{F}_p$ via some easily computable 1-1 map.

Let $u,w$ be $D$-sequences over $\Gamma$.
Here we use a basic technique from \cite{PL07} (see also \cite{DBLP:conf/soda/CliffordEPR09,rollinghashSODA2019} for related constructions), that allows for the recovery of $\MI(u,w)$ at a specific index $i$. 
For a parameter $\alpha \in \Gamma$, the \emph{trace} of  $u \in \Gamma^D$ (with respect to $\alpha$), denoted $\tr_{\alpha}(u)$ is the $D$-sequence over $\Gamma^4$ where for each $i \in D$, $\tr_{\alpha}(u)_i$ has entries:

\begin{eqnarray*}
\tr_{\alpha}(u)_{i,\tval}&=&u_i,\\
\tr_{\alpha}(u)_{i,\tprod}&=&i\cdot u_i,\\
\tr_{\alpha}(u)_{i,\tsq}&=&u_i^2,\\
\tr_{\alpha}(u)_{i,\thash}&=&\alpha^iu_i.
\end{eqnarray*}
  
All the calculations are done over $\mathbb{F}_p$.
We refer to a vector in $\Gamma^4$ with indices from $\{\tval$, $\tprod$, $\tsq$, $\thash\}$ as a \emph{trace vector} and we refer to $\alpha$ as the \emph{trace parameter}

For $u,w \in \Gamma^D$, the \emph{trace difference of $u,w$} is $\Delta_{\alpha}(u,w)=\tr_{\alpha}(u)-\tr_{\alpha}(w)$.  Here, for each $i$, $\Delta_{\alpha}(u,w)_i=\tr_{\alpha}(u)_i-\tr_{\alpha}(w)_i$ is a trace vector obtained by coordinate-wise subtraction.   

Define the function $\recover$ which maps trace vectors $t$  to $\Gamma^3$ as follows:
\begin{eqnarray*}
\recover(t)_{\dindex}&=&\frac{t_{\tprod}}{t_{\tval}}\\
\recover(t)_{\xval}& = & \frac{t_{\tsq}+t_{\tval}^2}{2t_{\tval}}\\
\recover(t)_{\yval}&=& \frac{t_{\tsq}-t_{\tval}^2}{2t_{\tval}}.
\end{eqnarray*}  

It is easy to check:

\begin{proposition}
\label{prop:trace}
The mismatch information for $(u,w)$ at $i$ is completely determined by  
$\Delta_\alpha(u,w)_i$ as follows: $i$ is a mismatch index of $u,w$ if and only if
$\Delta_\alpha(u,w)_i \neq 0$ and for such an $i$,
\begin{eqnarray*} 
\recover(\Delta_\alpha(u,w))_{i,\dindex} &=& i,\\
\recover(\Delta_\alpha(u,w))_{i,\xval} &=& u_i,\\
\recover(\Delta_\alpha(u,w))_{i,\yval} &=& w_i,
\end{eqnarray*}
and therefore $\recover(\Delta_\alpha(u,w))_i=(i,u_i,w_i)=\MI(u,w)_i$. 
\end{proposition}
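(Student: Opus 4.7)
The plan is a direct computation. First I would unpack the four coordinates of the trace difference at index $i$, using the linearity of each coordinate of $\tr_\alpha$ in the value $u_i$ (resp.\ $w_i$). Writing $\delta := u_i - w_i$ in $\mathbb{F}_p$, this gives
\[
\Delta_\alpha(u,w)_{i,\tval} = \delta, \qquad
\Delta_\alpha(u,w)_{i,\tprod} = i\,\delta, \qquad
\Delta_\alpha(u,w)_{i,\tsq} = u_i^2 - w_i^2, \qquad
\Delta_\alpha(u,w)_{i,\thash} = \alpha^i\,\delta.
\]
Since $\mathbb{F}_p$ is a field, $\delta = 0$ iff $u_i = w_i$, which handles the iff statement: $i$ is a mismatch index iff $\Delta_\alpha(u,w)_i \neq 0$ (indeed the $\tval$ coordinate alone detects this).

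Next, assuming $\delta \neq 0$ (so division by $\Delta_{\tval}$ is legal in $\mathbb{F}_p$), I would verify the three recovery formulas by algebraic manipulation. The index recovery is immediate from $\Delta_{\tprod}/\Delta_{\tval} = i\delta/\delta = i$. For the value recoveries, I would use the identity $u_i^2 - w_i^2 = \delta(u_i+w_i)$ together with $\delta^2 = (u_i-w_i)^2$, giving
\[
\Delta_{\tsq} + \Delta_{\tval}^2 = \delta(u_i+w_i) + \delta(u_i-w_i) = 2\delta\,u_i,
\qquad
\Delta_{\tsq} - \Delta_{\tval}^2 = \delta(u_i+w_i) - \delta(u_i-w_i) = 2\delta\,w_i,
\]
so dividing by $2\Delta_{\tval} = 2\delta$ returns $u_i$ and $w_i$ respectively. (Since $p > |D| \geq 2$, the element $2$ is a unit in $\mathbb{F}_p$, so these divisions are also valid.)

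There is essentially no obstacle here: the proposition is purely a statement that a small linear-algebraic gadget over $\mathbb{F}_p$ inverts correctly at any single index. The only subtlety worth stating explicitly is why the division operations make sense, and this is handled by the field assumption on $\Gamma$ together with $u_i \neq w_i$; the $\thash$ coordinate plays no role in Proposition~\ref{prop:trace} itself and is carried along only for later use (to be combined across indices in the sketch of Section~\ref{subsec:HMR}).
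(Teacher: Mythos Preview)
Your proof is correct and is exactly the direct computation the paper has in mind; the paper itself omits the proof entirely, writing only ``It is easy to check.'' Your remark that the $\thash$ coordinate is unused here and that division by $2$ requires odd characteristic matches the paper's own aside immediately following the proposition.
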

We remark that the division by 2 in the definition of $\recover$ is the reason why we need that $\Gamma$ does not have characteristic 2.
Also, note that $\tr_\alpha(u)_{i,\thash}$ is not used in $\recover$, but is used later to check soundness.
In standard binary representation of integers, all arithmetic operations over $\mathbb{F}_p$ that are necessary to
compute trace or its restoration at a single coordinate can be computed in time $O(\log^{O(1)} p)$.


We are now ready to define the class of \emph{superposition sketches} for functions from $D$ to $\Gamma$.

\begin{definition}[Superposition sketch]
\label{def:superposition}
Let $\suprange$ be a set, $h:D \rightarrow \suprange$ and $\alpha \in \Gamma$.
The \emph{superposition sketch induced by $(\alpha,h)$} is the  function $\tr_{\alpha,h}$ 
that maps $u \in \Gamma^D$ to $ \tr_{\alpha,h}(u) \in (\Gamma^4)^{\suprange}$ where for $j \in \suprange$:

\[
\tr_{\alpha,h}(u)_j=\sum_{i \in h^{-1}(j)} \tr_{\alpha}(u)_i.
\]

In words, the function $h$ is used to partition $D$ into $|\suprange|$ classes $h^{-1}(j)$, and  $\tr_{\alpha,h}(u)$  at $j \in \suprange$ is the sum of the trace vectors of $u$ corresponding to indices of $D$ in the class $h^{-1}(j)$.   The size (in bits) of the output is
 $O(|\suprange|\log |\Gamma|)$.
\end{definition}

Note that we can compute the superposition sketch of any $D$-sequence over $\Gamma$ easily: Initialize  $\tr_{\alpha,h}(u)$ to all zero
and then for each $i \in D$ add the trace vector $\tr_{\alpha}(u)_i$ to  $\tr_{\alpha,h}(u)_{j}$ where $j=h(i)$.

For $u,w \in \Gamma^D$ and $h:D \rightarrow \Gamma$, a mismatch index $i \in D$ is \emph{recoverable{} for $u,w,h$}  
if $h^{-1}(h(i)) \cap \indices(u,w) = \{ i \}$.
We now define a procedure $\RECOVER_{\alpha,h}$ that recovers all recoverable indices. 

\begin{algorithm}[H]
\label{algo:H-recover}
   \caption{$\RECOVER_{\alpha,h}(\tr_{\alpha,h}(u),\tr_{\alpha,h}(w), \alpha, h )$}\label{alg-compress}
   \KwIn{Traces $\tr_{\alpha,h}(u),\tr_{\alpha,h}(w)$ for two strings $u,w \in \Gamma^D$, trace parameters $\alpha\in \Gamma$, $h : D \rightarrow \suprange$.}
   \KwOut{The set $M_{\alpha,h}(u,w)$ of mismatch triples.}
   
   \vspace{1mm}
   \hrule\vspace{1mm}
    Let $\Delta_{\alpha,h}(u,w)=\tr_{\alpha,h}(u)-\tr_{\alpha,h}(w)$.   

    Let $J$ be the set of $j \in \suprange$ such that $\Delta_{\alpha,h}(u,w)_{j,\tval} \neq 0$.  

   \textbf{Rebuilding step:} For each $j \in J$ let $z_j=\recover(\Delta_{\alpha,h}(u,w)_j)$.
   
   \textbf{Filtering step:} Let $I = \{ j \in J,\;
z_{j,\dindex} <|D|$ $\wedge$ $\Delta_{\alpha,h}(u,w)_{j,\thash} =\alpha^{z_{j,\dindex}} (z_{j,\xval}-z_{j,\yval})\}$.

   Return $M_{\alpha,h}(u,w) = \{z_j:j \in I\}$.

\end{algorithm}

In the rebuilding step, the algorithm produces a list of mismatch triples by applying $\recover$  to every trace vector it can among
the $\Delta_{\alpha,h}(u,w)_j$.  In the filtering step, it eliminates some of these mismatch triples, and then it outputs the rest.  
The following lemma shows that (1) The rebuilding step produces all mismatch triples corresponding to recoverable indices (and possibly some others)
and (2) The filtering step  with high probability eliminates all mismatch triples corresponding to indices that are not recoverable, so Soundness holds with high probability.

\begin{lemma}
\label{lemma:MIh}
Let $u,w \in \Gamma^D$ and $h:D\to \suprange$ be fixed.
Let $I$ and $J$ be as in the $\RECOVER_{\alpha,h}(u,w)$. 
For each $j \in \suprange$:
\begin{enumerate}
\item If $|h^{-1}(j) \cap \indices(u,w)|= 0$ then $j \not\in J$.
\item If $|h^{-1}(j) \cap \indices(u,w)|= 1$  then $z_j \in M_{\alpha,h}(u,w)$ and $z_j=(i,u_i,w_i)$ where $i$ is the unique
mismatch index such that $h(i)=j$. 
\item The probability that $M_{\alpha,h}(u,w)$ outputs a triple that is not in $\MI(u,w)$ (i.e. that Soundness fails) is at most $\frac{(|D|-1)\cdot|\suprange|}{|\Gamma|}$ over a uniformly random choice of $\alpha \in \Gamma$. 
\end{enumerate}
\end{lemma}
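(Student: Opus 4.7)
\medskip
\noindent\textbf{Proof proposal for Lemma~\ref{lemma:MIh}.}
The plan is to handle Parts 1 and 2 by direct (deterministic) unpacking of the definitions, and then to reduce Part 3 to a Schwartz--Zippel argument on the hash coordinate $\thash$ of the trace. The crucial observation driving the whole proof is that the $\tval$, $\tprod$, and $\tsq$ coordinates of $\tr_\alpha(u)$ do not depend on $\alpha$, so the triples $z_j$ produced by $\recover$ in the Rebuilding step are a deterministic function of $u,w,h$; only the Filtering step involves $\alpha$. This will let me fix each $z_j$ before taking the probability over $\alpha$.

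For Part 1, if $h^{-1}(j) \cap \indices(u,w) = \emptyset$, then for every $i \in h^{-1}(j)$ we have $u_i = w_i$, so $\tr_\alpha(u)_i - \tr_\alpha(w)_i = 0$ in every coordinate; summing over $i \in h^{-1}(j)$ gives $\Delta_{\alpha,h}(u,w)_{j} = 0$, so in particular $\Delta_{\alpha,h}(u,w)_{j,\tval}=0$ and $j \notin J$. For Part 2, if $h^{-1}(j) \cap \indices(u,w) = \{i\}$, then only the index $i$ contributes a non-zero difference to the sum, so $\Delta_{\alpha,h}(u,w)_j = \Delta_\alpha(u,w)_i$ and by Proposition~\ref{prop:trace} the first three coordinates are non-trivial (in particular $j\in J$) and $\recover(\Delta_{\alpha,h}(u,w)_j) = (i,u_i,w_i)$. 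It only remains to verify that this triple survives the Filtering step: since $i\in D=\{0,\dots,|D|-1\}$ we have $z_{j,\dindex}<|D|$, and the hash check $\Delta_{\alpha,h}(u,w)_{j,\thash} = \alpha^i(u_i-w_i) = \alpha^{z_{j,\dindex}}(z_{j,\xval}-z_{j,\yval})$ is satisfied identically.

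For Part 3, suppose some $j\in J$ produces a triple $z_j\in M_{\alpha,h}(u,w)$ that is not in $\MI(u,w)$. By Parts 1 and 2 this can only happen when $S_j := h^{-1}(j)\cap\indices(u,w)$ has size at least $2$. Writing out the hash check, the event is
\[
\sum_{i\in S_j} \alpha^{i}(u_i - w_i) \;=\; \alpha^{z_{j,\dindex}}(z_{j,\xval} - z_{j,\yval}),
\]
which is $P_j(\alpha)=0$ for the polynomial $P_j(X) = \sum_{i\in S_j} X^{i}(u_i-w_i) - X^{z_{j,\dindex}}(z_{j,\xval}-z_{j,\yval})$. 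Because the triple $z_j$ is independent of $\alpha$, the polynomial $P_j$ is a fixed polynomial in $X$; since $z_{j,\dindex}<|D|$ and all $i\in S_j$ lie in $\{0,\dots,|D|-1\}$, its degree is at most $|D|-1$. The main obstacle is to show $P_j\not\equiv 0$. Here I will split into two cases: if $z_{j,\dindex}\notin S_j$ then the coefficient of $X^{i}$ for any $i\in S_j$ is $u_i-w_i\neq 0$; if $z_{j,\dindex}=i_0\in S_j$, then since $|S_j|\geq 2$ there is some other $i\in S_j\setminus\{i_0\}$ whose coefficient $u_i-w_i$ is non-zero. Either way $P_j\neq 0$, so $\Pr_\alpha[P_j(\alpha)=0]\le (|D|-1)/|\Gamma|$ by the fundamental theorem of algebra over $\bigfield_p$. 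A union bound over at most $|\suprange|$ values of $j\in\suprange$ yields the claimed bound $(|D|-1)\cdot|\suprange|/|\Gamma|$ on the failure probability of Soundness.
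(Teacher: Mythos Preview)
Your proof is correct and follows essentially the same approach as the paper's: Parts 1 and 2 are direct from the definitions and Proposition~\ref{prop:trace}, and Part 3 reduces to a root-counting bound on the polynomial $P_j(X)=\sum_{i\in S_j}X^i(u_i-w_i)-X^{z_{j,\dindex}}(z_{j,\xval}-z_{j,\yval})$, using the key observation that $z_j$ is determined by the $\alpha$-free coordinates $\tval,\tprod,\tsq$. Your case split on whether $z_{j,\dindex}\in S_j$ is just an explicit unpacking of the paper's one-line remark that ``the sum has at least two non-zero terms, and at most one is cancelled by the subtracted term.''
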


\begin{proof}
For the first part, if $|h^{-1}(j) \cap \indices(u,w)|= 0$ then $\Delta_{\alpha,h}(u,w)_j$ is equal to $(0,0,0,0)$ and so $j \not\in J$.

For the second part, suppose $i$ is the unique mismatch index in $h^{-1}(j)$. Then $\Delta_{\alpha,h}(u,w)_j= \Delta_\alpha(u,w)_i$. 
Since $u_i \neq w_i$, $\Delta_{\alpha}(u,w)_{i,\tval} \neq 0$ and so $z_j=\recover(\Delta_{\alpha,h}(u,w)_i)=(i,u_i,w_i) \in \MI(u,w)$, by Proposition~\ref{prop:trace}.
This triple is not eliminated by the filtering step since $z_{j,\dindex}=i<|D|$ and 
$\Delta_{\alpha}(u,w)_{i,\thash}=(u_i-w_i)\cdot \alpha^i=(z_{j,\xval}-z_{j,\yval})\cdot \alpha^{z_{j,\dindex}}$. 

For the third part, we only need to look on $z_j$ where $|h^{-1}(j) \cap \indices(u,w)| > 1$
and show that they will be filtered out. 
Namely, we will show that $\prob_\alpha[z_j \in M_{\alpha,h}] \leq \frac{|D|-1}{|\Gamma|}$. Then 
a union bound over $j \in \suprange$ completes the proof.
If $\Delta_{\alpha,h}(u,w)_{j,\tval}=0$ then $j \not\in J$ irrespective of $\alpha$.  
Otherwise the filtering step excludes $z_j$ unless
$z_{j,\dindex}<|D|$ and
$\Delta_{\alpha,h}(u,w)_{j,\thash}-\alpha^{z_{j,\dindex}} (z_{j,\xval}-z_{j,\yval})=0$.  
Letting $I'$ be the set of mismatch indices in $h^{-1}(j)$,  
the left hand side of this equation equals $\sum_{i' \in I'} \alpha^{i'}(u_{i'}-w_{i'})-\alpha^{z_{j,\dindex}}(z_{j,\xval}-z_{j,\yval})$.  
Here, $z_{j,\dindex}, z_{j,\xval}, z_{j,\yval}$ do not depend on $\alpha$ so
for fixed $h,u,w$  this is a nonzero polynomial in $\alpha$ of degree at most $|D|-1$ (nonzero because the sum has at least two non-zero terms, and at most one is cancelled by the subtracted term). 
This polynomial has at most $|D|-1$ roots, so the probability that $\alpha$ is a root is at most $\frac{|D|-1}{|\Gamma|}$.
\end{proof}

The following result gives upper bounds on the running time of the sketch and recover algorithms, and on the space needed for the sketch
\begin{proposition}
\label{prop:superposition time}
Let $\Gamma=\mathbb{F}_p$ and let $D=\{0,\ldots,|D|-1\}$ with $|D| \leq p$.
Let $\suprange$ be a set, $h:D \rightarrow \suprange$ and $\alpha \in \Gamma$.  
The superposition sketch $\tr_{\alpha,h}$ maps a $D$-sequence $u$ over $\Gamma$
to a sketch of bit-length $O(|\suprange|\log |\Gamma|)$. 
The running time for the sketch algorithm is $O(|D| \cdot T)$
and the running time of the recover algorithm is $O(|\suprange| \cdot T)$, 
where $T$ is an upper bound on the time to perform a single arithmetic operation over $\Gamma$
and evaluate $h$ at a single point.

Furthermore if a $D$-sequence $u$ over $\Gamma$ is given via a sparse representation,
via $\{(j,u_j):j \in \supp(u)\}$ where $\supp(u)=\{j \in D:u_j \neq 0\}$
then the running time of the sketch algorithm is $O((|\suprange|+|\supp(u)| )\cdot T)$.
\end{proposition}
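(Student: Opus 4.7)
\bigskip

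\noindent\textbf{Proof proposal for Proposition~\ref{prop:superposition time}.}
The plan is to verify each of the four claims directly from the definition of $\tr_{\alpha,h}$ and the description of the recovery procedure, with the only non-trivial optimization being how powers of $\alpha$ are computed.

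First, for the bit-length bound: by Definition~\ref{def:superposition}, $\tr_{\alpha,h}(u)$ is an $\suprange$-sequence whose entries live in $\Gamma^4$, so the total representation has $4|\suprange|$ field elements, each stored in $\lceil\log|\Gamma|\rceil$ bits, giving the stated $O(|\suprange|\log|\Gamma|)$ bound.

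Next, for the dense sketch algorithm: I would maintain an accumulator array indexed by $\suprange$, initialized to the zero trace vector (time $O(|\suprange|\cdot T)$, which is swallowed by $O(|D|\cdot T)$ since $|\suprange|\le|D|$ in any meaningful instance; otherwise one can allocate lazily). Then iterate $i=0,1,\ldots,|D|-1$ while keeping a running power $\beta=\alpha^i$, updated by a single multiplication per step. For each $i$, the four components of $\tr_{\alpha}(u)_i$ ($u_i$, $i\cdot u_i$, $u_i^2$, $\beta\cdot u_i$) require $O(1)$ arithmetic operations once $\beta$ is known, and $h(i)$ is charged to $T$ by hypothesis; adding the resulting trace vector into bucket $h(i)$ is $O(1)$ field operations. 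Summing gives the advertised $O(|D|\cdot T)$ time. For the sparse case, the key observation is that entries with $u_j=0$ contribute the zero trace vector and can be skipped. After an $O(|\suprange|)$ initialization of the accumulator, I process each pair $(j,u_j)\in\supp(u)$ by computing $\alpha^j$ via repeated squaring (whose $O(\log|D|)$ multiplications I fold into the cost $T$ of an arithmetic operation, since $T\ge\Omega(\log p)\ge\Omega(\log|D|)$ already for field representation) and accumulating into bucket $h(j)$. This yields the stated $O((|\suprange|+|\supp(u)|)\cdot T)$ bound.

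Finally, for the recovery algorithm of Algorithm~\ref{algo:H-recover}: subtracting the two sketches coordinate-wise to form $\Delta_{\alpha,h}(u,w)$ and then scanning for the set $J$ takes $O(|\suprange|)$ field operations. The rebuilding step applies $\recover$ coordinate-wise at each $j\in J$, which is a constant number of field operations (two divisions, two additions/subtractions, one squaring) per $j$, for a total $O(|\suprange|\cdot T)$. The filtering step needs $\alpha^{z_{j,\dindex}}$ for each $j\in J$, which I again obtain by fast exponentiation at cost absorbed into $T$; the comparison to $\Delta_{\alpha,h}(u,w)_{j,\thash}$ is then $O(1)$ field operations. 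Aggregating gives $O(|\suprange|\cdot T)$ for the recovery. There is no real obstacle here; the only subtlety is the uniform treatment of $\alpha^i$ computations, which is handled incrementally in the dense path and by fast exponentiation (charged against the slack in $T$) in the sparse and recovery paths.
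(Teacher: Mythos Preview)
The paper does not give an explicit proof of this proposition; it only remarks, just before the statement, that one computes the sketch by initializing to zero and then adding each trace vector $\tr_\alpha(u)_i$ into bucket $h(i)$. Your argument follows exactly this outline and is correct for the bit-length bound, the dense sketch (including the incremental computation of $\alpha^i$), and the overall structure of the recovery analysis.

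There is one soft spot. Your justification for absorbing the $O(\log|D|)$ cost of repeated squaring into $T$ does not work as written. You argue that since $T\ge\Omega(\log p)\ge\Omega(\log|D|)$, the $O(\log|D|)$ field multiplications needed to compute $\alpha^j$ (in the sparse sketch) or $\alpha^{z_{j,\dindex}}$ (in the filtering step) cost $O(T)$ in total. But $O(\log|D|)$ multiplications cost $O(\log|D|\cdot T)$, and the inequality $T\ge\log|D|$ does not let you conclude $\log|D|\cdot T=O(T)$. Taken literally, the bounds your argument yields for the sparse sketch and for recovery are $O((|\suprange|+|\supp(u)|\log|D|)\cdot T)$ and $O(|\suprange|\log|D|\cdot T)$, respectively. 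This extra $\log|D|$ factor is harmless in the paper's downstream analysis (everything is stated up to $\log^{O(1)}|\Gamma|$ factors, and indeed Theorem~\ref{thm:HMR} already carries such factors), so the proposition as stated is arguably a bit loose on exactly this point; but your specific ``folding'' argument is not a valid way to remove the factor.
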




\subsection{Randomized superposition sketches}
\label{subsec:randomized superposition}

In order to apply the superposition sketch we need to select a good $h$.
However, one can hardly hope that if $\suprange$ is comparable in size to $\MI(u,w)$ then one can find a single $h : D \to \suprange$ 
for which all mismatch indices $\indices(u,w)$ will be recoverable.
Hence, we will try superposition sketches for \emph{multiple randomly chosen}  $h$'s.  
Fix a (small) family $H \subseteq \{ h: D \to \suprange\}$
and a probability distribution $\mu$ on $H$ (not necessarily uniform).
For $\beta \leq 1$, we say that $i$ is \emph{$\beta$-recoverable for $u,w,\mu$} 
provided that for $h \sim \mu$, the probability that $i$ is recoverable for $u,w,h$ is at least $\beta$. 
(Recall that if $i$ is recoverable for $u,w,h$ then for any choice of $\alpha$, the output of recovery procedure from the sketches $\tr_{\alpha,h}(u)$ and $\tr_{\alpha,h}(w)$ includes the triple $(i,u_i,w_i)$.)

We select hash functions $h_1,\ldots,h_\ell$ independently according to $\mu$, 
for some \emph{redundancy parameter} $\ell$.
We also select trace parameters $\alpha_1,\ldots,\alpha_\ell$ uniformly at random from $\Gamma$.  
The sketch of $u$ consists of the sequences $h_1,\ldots,h_\ell$ and $\alpha_1,\ldots,\alpha_\ell$ together with
$\tr_{\alpha_1,h_1}(u),\ldots,\tr_{\alpha_\ell,h_\ell}(u)$.  For the recovery algorithm, given the sketches for $u$ and $w$ we compute each of the sets
$M_{\alpha_i,h_i}(u,w)$ for $i \in [\ell]$ and define 
 $M_{\alpha_1,\ldots,\alpha_\ell,h_1,\ldots,h_\ell}(u,w)$ to be the set of triples that appear in strictly more than half of the sets.

We refer to a scheme of the above type as a \emph{randomized superposition scheme}.
Such a scheme is determined
by the distribution $(H,\mu)$ over hash functions and the redundancy $\ell$.

The size of the sketch in bits (not including the description of the hash functions used)
is $O(\ell \cdot |\suprange| \cdot \log|\Gamma|)$.    
The description of the hash functions depends on the method used to represent members of $H$. 
For standard explicit choices of $H$ (such as explicit families of $O(1)$-wise independent functions),
members of $h$  are represented in $O(\log|H|)$ space. 

\begin{proposition}
\label{prop:random superposition}
Let $D$ and $\suprange$ be sets and let $\Gamma$ be a field of size at least $4(|D|-1)|\cdot |\suprange|$.
For each pair of strings $u,w \in \Gamma^D$, suppose
$F(u,w)$ is a subset of $D$.
Let  $(H,\mu)$ be a distribution over hash functions from $D$ to $\suprange$.   
Suppose that for every $u,w$, every index belonging to
$F(u,w)$ is 3/4-recoverable for $u,w,\mu$.  
Then for any $\delta>0$,
the superposition sketch using $(H,\mu)$ with 
redundancy $\ell \geq 8 (\ln|D| + \ln \frac{1}{\delta}+ 2)$
satisfies the Completeness and Soundness conditions for
 $F$
with   failure probability at most $\delta$.
\end{proposition}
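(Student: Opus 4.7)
The plan is to analyze Completeness and Soundness separately by applying Chernoff--Hoeffding (Lemma~\ref{lemma:CH}) to the $\ell$ i.i.d.\ copies $(h_j,\alpha_j)$ that determine the randomized superposition sketch, then combine the failure probabilities via a union bound.

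For Completeness, fix $i\in F(u,w)$ and let $X_j^{(i)}$ be the indicator that the triple $(i,u_i,w_i)$ lies in $M_{\alpha_j,h_j}(u,w)$. By Lemma~\ref{lemma:MIh}(2), whenever $i$ is recoverable for $u,w,h_j$, the triple $(i,u_i,w_i)$ is in $M_{\alpha_j,h_j}(u,w)$ for every choice of $\alpha_j$. Since the $3/4$-recoverability assumption says this hash-level event holds with probability at least $3/4$, the variable $X_j^{(i)}$ stochastically dominates a Bernoulli$(3/4)$, and the $X_j^{(i)}$ are independent across $j$ because the pairs $(h_j,\alpha_j)$ are drawn independently from the same distribution. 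Applying Lemma~\ref{lemma:CH} with deviation $1/4$ gives $\prob[\tfrac{1}{\ell}\sum_j X_j^{(i)}<1/2]\le 2e^{-\ell/8}$. A union bound over $i\in F(u,w)\subseteq D$ bounds the Completeness failure by $2|D|\,e^{-\ell/8}$.

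For Soundness, let $B_j$ be the indicator that $M_{\alpha_j,h_j}(u,w)$ contains some triple outside $\MI(u,w)$. By Lemma~\ref{lemma:MIh}(3) together with the hypothesis $|\Gamma|\ge 4(|D|-1)\cdot|\suprange|$, we have $\prob[B_j=1]\le (|D|-1)\cdot|\suprange|/|\Gamma|\le 1/4$, and the $B_j$ are i.i.d.\ by the same independence argument. Chernoff--Hoeffding yields $\prob[\sum_j B_j>\ell/2]\le 2e^{-\ell/8}$. The key (and only slightly delicate) observation is that any triple $t\notin\MI(u,w)$ can appear in $M_{\alpha_j,h_j}(u,w)$ only when $B_j=1$; hence, outside the bad event, every incorrect triple appears in at most $\ell/2$ sketches and is therefore excluded by the strict majority rule used to form the final output.

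Combining the two bounds gives a total failure probability of at most $(|D|+1)\cdot 2e^{-\ell/8}\le 4|D|\,e^{-\ell/8}$, and the choice $\ell\ge 8(\ln|D|+\ln(1/\delta)+2)$ is then easily verified to bring this below $\delta$. The main thing to be careful about is correctly separating the roles of $h$ and $\alpha$: recoverability is purely a property of $h$ (so Lemma~\ref{lemma:MIh}(2) applies uniformly over $\alpha$ and yields the $3/4$ lower bound in the Completeness step), while the Soundness bound from Lemma~\ref{lemma:MIh}(3) is a property of $\alpha$ conditional on $h$ (so it transfers to the marginal distribution of $(h_j,\alpha_j)$). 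All remaining computations are routine.
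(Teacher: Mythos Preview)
Your proof is correct and follows essentially the same approach as the paper: separate Completeness and Soundness, apply Chernoff--Hoeffding (Lemma~\ref{lemma:CH}) to the $\ell$ i.i.d.\ copies, union-bound Completeness over indices in $D$, and for Soundness use the indicator that a copy outputs any incorrect triple together with the majority rule. The only cosmetic differences are that the paper allocates $\delta/2$ to each of the two conditions separately while you combine them at the end, and the paper conditions on the $\alpha_j$'s in the Completeness argument whereas you phrase it via stochastic domination; neither affects the substance.
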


\begin{proof}
We show that  each of Soundness and Completness fail with probability at most $\delta/2$.

To bound the probability that Completeness fails, we claim that for  each 3/4-recoverable mismatch index $i \in D$, it is the case that for every fixed choice of $\alpha_1,\ldots,\alpha_\ell$,
the  probability (with respect to the choice of $h_1,\ldots,h_\ell$) that $M_{\alpha_1,\ldots,\alpha_\ell,h_1,\ldots,h_\ell}(u,w)$ does not include index $i$ is at most $\delta/(2D)$.
If this claim is true then summing over the at most $|D|$ mismatch indices that are 3/4-recoverable, the probability that the output of the scheme omits at least one such index is at most $\delta/2$.

So we verify the claim.
Fix $\alpha_1,\ldots,\alpha_\ell$ and a 3/4-recoverable mismatch index $i \in D$.
For $j \in \{1,\ldots, \ell\}$, let $Z_j$ be the 0-1 indicator of the event that $M_{\alpha_j,h_j}$ 
includes index $j$.
Since $j$ is 3/4-recoverable, $Z_{j}$ is 1 with probability at least $3/4$.  
Let $Z=\sum_{j=1}^\ell Z_{j}$. 
The expectation of $Z$ is at least $3\ell/4$.  
Index $i$ is not output by the algorithm if $Z \leq \ell/2$.  By the Chernoff-Hoeffding bound 
(Lemma~\ref{lemma:CH}), the probability  that $Z \leq \ell/2$ is at most $2e^{-\ell/8} \leq e^{-\ln|D|-\ln(1/\delta)-1} \leq \frac{\delta}{2D}$, as required to prove the claim.

Next we bound the probability that Soundness fails.
For each $j \in \{1,\ldots, \ell\}$, let $Y_j$ be the indicator of the event that 
$M_{\alpha_j,h_j}(u,w)$ contains at least one triple that is not in $\MI(u,w)$.  
By part 3 of Lemma~\ref{lemma:MIh},
for any fixed value of $h_j$, the probability, with respect to $\alpha_j$
that  $M_{\alpha_j,h_j}$ contains any triples
that do not belong to $\MI(u,w)$ is at most $\frac{(|D|-1)|\suprange|}{|\Gamma|}$ which is at most $1/4$ by hypothesis.  
Let $Y = \sum_j Y_j$.
If $Y \leq \ell/2$ then there is no incorrect triple that is in the output of more than $\ell/2$ out of the $\ell$ runs, and so it suffices to
show that $\prob[Y>\ell/2] \leq \frac{\delta}{2}$.  
We have $\expected[Y]=\sum_j\expected[Y_j] \leq \ell/4$. 
By the Chernoff-Hoeffding bound, the probability that the total number of such triples exceeds $\ell/2$ is  at most $2e^{-\ell/8} \leq \frac{\delta}{2}$, as required.
\end{proof}

As an example we give a simple application of random superposition schemes that recover all mismatch triples (with high probability) whenever $\HAM(u,w) \leq K$.
As mentioned in the introduction better sketches already exist.

\begin{corollary}
\label{cor:simple hamming}
For any $C>1$, there is a randomized  superposition scheme that
given alphabet $\Gamma$ of size at most $n$, domain $D$, $|D|\le n$, and positive integer $K$, for any string from $\Gamma^D$ produces a sketch of size $O(K \log^2 n + \log^2 n )$, and
on input of two sketches for strings $u,w \in \Gamma^D$ with $\HAM(u,w) \leq K$ recovers all mismatches with probability $\ge 1-1/n^{C}$.
\end{corollary}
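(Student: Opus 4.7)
The plan is to apply Proposition~\ref{prop:random superposition} with a suitable range $\suprange$, a pairwise independent hash family, and appropriate redundancy.

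First, I embed the given problem into the setting required by Proposition~\ref{prop:random superposition}. Round $K$ up to the nearest power of two and set $|\suprange|=4K$. Choose a prime $p$ with $p \geq 4(|D|-1)\cdot |\suprange|=\Theta(nK)$, which is still polynomial in $n$, and identify $\Gamma$ with a subset of $\mathbb{F}_p$ via an arbitrary injection (padding ``missing'' entries of $D$-sequences with $0$ if $D$ is not of the form $\{0,\dots,|D|-1\}$). Let $H$ be the pairwise independent family mapping $D$ (viewed inside $\{0,\dots,2^{\lceil\log|D|\rceil}-1\}$) into $\suprange$ guaranteed by Proposition~\ref{prop:pairwise}; each member of $H$ is described by $O(\log n)$ bits. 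Define the target set $F(u,w)=\indices(u,w)$ whenever $\HAM(u,w)\le K$, and $F(u,w)=\emptyset$ otherwise; this is the recovery requirement we need.

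The key step is verifying the $3/4$-recoverability hypothesis of Proposition~\ref{prop:random superposition}. Fix $u,w$ with $\HAM(u,w)\le K$ and fix a mismatch index $i\in\indices(u,w)$. Then $i$ is recoverable for $u,w,h$ exactly when no other index of $\indices(u,w)$ is mapped by $h$ to $h(i)$. By pairwise independence, for every other $j\in\indices(u,w)$, $\Pr_h[h(j)=h(i)]=1/|\suprange|$, so a union bound over the at most $K-1$ other mismatch indices gives collision probability at most $(K-1)/(4K)<1/4$. Hence $i$ is $3/4$-recoverable for $u,w,\mu$. Applying Proposition~\ref{prop:random superposition} with $\delta=1/n^{C}$ and redundancy $\ell=\lceil 8(\ln|D|+\ln(1/\delta)+2)\rceil=O(C\log n)$, we obtain Completeness and Soundness for $F$ with probability at least $1-1/n^{C}$.

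Finally I account for the sketch size. The superposition sketch contributes $O(\ell\cdot|\suprange|\cdot\log|\Gamma|)=O(C\log n\cdot K\cdot \log n)=O(K\log^2 n)$ bits, and storing the $\ell$ hash functions and $\ell$ trace parameters $\alpha_1,\dots,\alpha_\ell\in\mathbb{F}_p$ contributes an additional $O(\ell\log n)=O(\log^2 n)$ bits (absorbing the constant $C$). This matches the claimed bound $O(K\log^2 n + \log^2 n)$. I do not expect any serious obstacle here: all ingredients are immediate once $|\suprange|=\Theta(K)$ and pairwise independence are in place; the only mild technicality is normalizing $D$, $\Gamma$, $K$, and $|\suprange|$ so that the hypotheses of Propositions~\ref{prop:pairwise} and \ref{prop:random superposition} (powers of two, $|\Gamma|\geq 4(|D|-1)|\suprange|$) hold, which is handled by the padding and the choice of $p$ above.
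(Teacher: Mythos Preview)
Your proposal is correct and follows essentially the same approach as the paper: choose $|\suprange|=\Theta(K)$, use a pairwise independent hash family, verify $3/4$-recoverability via a union bound over the at most $K$ mismatch indices, and apply Proposition~\ref{prop:random superposition} with $\delta=1/n^C$. The paper's proof is slightly terser (it does not spell out the embedding into $\mathbb{F}_p$ or the sketch-size bookkeeping), but the argument is identical.
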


\begin{proof}
Choose $\suprange$ to be a set of size between $4K$ and $8K$.
Let $H$ be a pairwise independent family of hash functions from $D$ to $\suprange$.  By Proposition~\ref{prop:pairwise}
we can describe  an $h \in H$ by $O(\log(n))$ bits, and the superposition sketch induced by $h$ requires $O(K \log n)$ bits.

We claim that for any $u,w \in \Sigma^D$ such that $\HAM(u,w)\leq K$ every mismatch index is $3/4$-recoverable for $u,w,\mu$.
For $h \sim \mu$, the probability that $i$ is not recoverable for $u,w,h$ is as most $\sum_{i'} \prob[h(i)=h(i')]$
where the sum is over $i' \neq i$ such that $i'$ is a mismatch index.  By pairwise independence,
$\prob[h(i)=h(i')] =1/|\suprange|$ and so the probability that $i$ is not recoverable for $u,w,h$ is at most $K/|\suprange|\leq 1/4$.

The result now follows immediately from
 Proposition~\ref{prop:random superposition}.
\end{proof}

\subsection{Proof of Theorem~\ref{thm:HMR}} 
\label{subsec:hmr proof}

In this section we show how to apply the randomized superposition schemes of Section~\ref{subsec:randomized superposition} to construct
a sketch-and-recover scheme for the hierarchical mismatch recovery and prove Theorem~\ref{thm:HMR}.
The reader should review the set-up for hierarchical mismatch recovery in Section~\ref{subsec:HMR}.

Recall that the scheme depends on the domain $\LOS_1 \times \cdots \times \LOS_d$,
the capacity function $\treecap$ where $\treecap=(\treecap_0,\dots,\treecap_d)$,
for each $j$, $\treecap_j \ge \treecap_{j+1}$ and an overload parameter $\loadpar$ which we fix to be $4d$. We view $\LOS_1 \times \cdots \times \LOS_d$ as the set of leaves of $T=T(\LOS_1 \times \cdots \times \LOS_d)$.
In preparation for describing the sketch-and-recover scheme, each node  $v$ at level $j<d$ of $T$ is
associated with 
a set of \emph{buckets} which are ordered pairs $(v,i)$ where $1 \leq i \leq \treecap_j$.   
We refer to $(v,i)$ as a $v$-bucket, and a bucket of the root is a \emph{root-bucket}.   
Each leaf $\ell$ has only one bucket, $(\ell,1)$.

The hash functions of our superposition scheme are \leafroot{} functions, which
are functions that map the set of leaves 
$\LOS_1\times \cdots \times \LOS_d$ to the set of root-buckets. We will use Proposition~\ref{prop:random superposition} to prove that the scheme works by:

\begin{enumerate}
\item Describing a distribution $\mu$ over \leafroot{} functions.
\item Showing that for any pair of strings $u,w$, every leaf that belongs
to $F_{T(\LOS_1 \times \cdots \times \LOS_d),\treecap,\loadpar}(u,w)$ is $3/4$-recoverable for  $u,w,\mu$.
\end{enumerate}

To describe the distribution $\mu$ on \leafroot{} functions, we consider a specific representation of a \leafroot{} functions, and for this we need the notions
of a \emph{trajectory} and \emph{routing functions}.

For a leaf $\ell \in \prod_{j=1}^d \LOS_j$, the modes on the path from $\ell$ to the root $\emptystr$ in $T$ are the nodes identified by the strings $\ell_{\leq d}, \ell_{\leq d-1}, \ldots, \ell_0$.
We define a \emph{trajectory for $\ell$} to be a sequence of buckets one for each node on the path from $\ell$ to $\emptystr$,
$(\ell_{\leq d},i_d),(\ell_{\leq d-1},i_{d-1}),\ldots,(\ell_0,i_0)$ where $i_j \in \{1,\ldots,\treecap_j\}$.
A trajectory is uniquely  determined by the leaf $\ell$ and the sequence of indices $(i_{d},i_{d-1},\ldots,i_0)$.
Note that $i_d$ must equal 1. 

We want a way to specify a trajectory for every leaf.  
We do this using a collection $r=(r_j : j < d)$ of \emph{routing functions}, one for each internal level of the tree.
The routing function $r_j$ is a function from $\LOS_{j+1} \times \{1,\ldots,\treecap_{j+1}\}$ to $\{1,\ldots,\treecap_j\}$.  
For each $v$ at level $j$ $r_j$ is used to specify a function
that maps buckets corresponding to children of $v$ to buckets of $v$ as follows:
for $a \in \LOS_{j+1}$ and $i \in \{1,\ldots,\treecap_{j+1}\}$, the bucket $(v \concat a,i)$ is mapped to $(v,r_j(a,i))$. 
Thus the collection of routing functions determines a trajectory for every leaf $\ell$ with sequence of indices $i_d(\ell)=1$
and for $j<d$, $i_j(\ell)=r_j(\ell_{j+1},i_{j+1}(\ell))$.  
This induces the \leafroot{} mapping that maps each $\ell\in \prod_{j=1}^d \LOS_j$ to the bucket $(\emptystr,i_0(\ell))$.

We are now ready to specify the distribution $\mu$.
For each  level $0 \leq j < d$, let  $H_j=\{ h: \LOS_{j+1} \times \{1,\ldots,\treecap_{j+1}\} \to \{1,\ldots,\treecap_j\}\}$ be a pairwise independent family of routing functions for level $j$.
Independently select $r_0,\ldots,r_{d-1}$ from $H_1,\ldots,H_{d-1}$.
The distribution $\mu$ on \leafroot{} functions is the distribution induced by the selection
of $r_0,\ldots,r_{d-1}$. 
Since $|\LOS_{j+1}|$ and all values of $\treecap_j$ are powers of 2,  
by Proposition~\ref{prop:pairwise}, we can choose $H_j$ so that
members of $H_j$ can be indexed with
 $O(\log|\LOS_{j+1}|+\log \treecap_{j+1} +\log \treecap_j)$ bits. 
Thus the total number of bits to represent a \leafroot{} function in the family is $O(\sum_{j=1}^d\log|\LOS_j|+ d\log \treecap_0))$.

\begin{lemma}
\label{lemma:hierarchical}
For $T, \LOS_1,\ldots,\LOS_d$ and $\treecap$ as in Theorem~\ref{thm:HMR},  
let $\mu$ be the distribution on \leafroot{} functions 
induced by choosing routing functions $r_0,\ldots,r_{d-1}$ independently from pairwise independent distribution.  
For any two strings $u,w$ in $\Gamma^{\LOS_1\times \cdots \times \LOS_d}$, 
every $\ell \in F_{T,\treecap,4d}(u,w)$, i.e., every $4d$-accessible mismatch leaf,  is $3/4$-recoverable for  $u,w,\mu$.  
\end{lemma}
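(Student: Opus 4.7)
The plan is to bound, for any $4d$-accessible mismatch leaf $\ell$, the probability over $\mu$ that some other mismatch leaf is routed to $\ell$'s root-bucket, and show this is at most $1/4$.

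First I would analyze a single pair. Let $\ell'\neq \ell$ be a mismatch leaf and let $j$ be the level of their LCA $v_j$; by definition $\ell_{j+1}\neq \ell'_{j+1}$. After conditioning on $r_{j+1},\dots,r_{d-1}$, the child-bucket indices $i_{j+1}(\ell)$ and $i_{j+1}(\ell')$ are fixed, and the two inputs $(\ell_{j+1},i_{j+1}(\ell))$ and $(\ell'_{j+1},i_{j+1}(\ell'))$ to $r_j$ differ in their first coordinates. Pairwise independence of $r_j$ yields $\Pr[i_j(\ell)=i_j(\ell')]=1/\treecap_j$, and once the two indices coincide at $v_j$, the routings $r_{j-1},\dots,r_0$ receive identical inputs, so equality propagates deterministically to the root. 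The single-pair collision probability is thus $1/\treecap_j$.

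The delicate step is the union bound. A naive $\sum_j |A_j|/\treecap_j$ over the set $A_j$ of mismatches whose LCA with $\ell$ is at level $j$ is doomed: $|A_j|$ can vastly exceed $\hmrload{\treecap}_{v_j}$, since an overloaded sibling of $v_{j+1}$ may carry arbitrarily many mismatches while contributing only $\treecap_{j+1}$ to the load. The fix is to union-bound not over individual mismatches but over the distinct child-buckets of $v_j$ they actually occupy. For each sibling $v'$ of $v_{j+1}$ set
\[
Y_{v'}=\{\,i_{j+1}(\ell'):\ell'\text{ is a mismatch in the subtree of } v'\,\}.
\]
All mismatches under $v'$ enter $r_j$ through the at most $|Y_{v'}|$ inputs $(v',b)$ with $b\in Y_{v'}$; conditional on $r_{j+1},\dots,r_{d-1}$, each such input collides with $(v_{j+1},i_{j+1}(\ell))$ with probability $1/\treecap_j$ by pairwise independence of $r_j$, and a union bound gives collision probability at most $|Y_{v'}|/\treecap_j$ for the whole sibling subtree.

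The core inductive claim is that $|Y_v|\leq \hmrload{\treecap}_v$ for every node $v$ and every fixing of the routing functions below $v$. Processing the tree from leaves upward, for $v$ at level $k$ the set $Y_v$ is bounded both by $\treecap_k$ (since it lies in $v$'s bucket set) and by $\sum_{v''\in\child(v)}|Y_{v''}|$ (since $Y_v$ is covered by its children's hit sets), so the inductive hypothesis gives $|Y_v|\leq\min(\treecap_k,\sum\hmrload{\treecap}_{v''})=\hmrload{\treecap}_v$. Plugging in and summing over siblings of $v_{j+1}$ bounds the level-$j$ collision probability by $(\sum_{v'\in\child(v_j)}\hmrload{\treecap}_{v'})/\treecap_j=\hmrload{\treecap}_{v_j}/\treecap_j$, where the equality uses that the accessible node $v_j$ is underloaded and hence uncapped. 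By $4d$-underloadedness of $v_j$ this is strictly less than $1/(4d)$, and summing over the $d$ levels gives total collision probability at most $1/4$, so $\ell$ is $3/4$-recoverable. The main conceptual obstacle is precisely that raw mismatch counts cannot be used in the union bound — the refinement to bucket-occupancy counts $|Y_{v'}|$ together with their inductive domination by loads is what lets the accessibility hypothesis do its work.
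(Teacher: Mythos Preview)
Your core mechanism—bounding the number of occupied child-buckets by the load via the inductive claim $|Y_v|\le\hmrload{\treecap}_v$, then union-bounding over occupied buckets rather than over mismatch leaves—is exactly the paper's argument (its $\occupied(v)\le\hmrload{\treecap}_v$ is your induction). But the LCA framing introduces a genuine gap.

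Your single-pair claim is false. If $\ell'$ has LCA with $\ell$ at level $j$, then indeed $\Pr[i_j(\ell)=i_j(\ell')]=1/\treecap_j$, but this is \emph{not} the root-bucket collision probability. When $i_j(\ell)\neq i_j(\ell')$, both trajectories at level $j{-}1$ sit in buckets of the same node $v_{j-1}$ with the same edge label $\ell_j=\ell'_j$, so the inputs $(\ell_j,i_j(\ell))$ and $(\ell_j,i_j(\ell'))$ to $r_{j-1}$ still differ, and pairwise independence gives a fresh $1/\treecap_{j-1}$ chance to merge there—and similarly at every level below. Consequently your partition by LCA level under-counts: a mismatch whose LCA with $\ell$ is at level $j$ can first merge at any level $\le j$, yet your ``level-$j$ collision'' term, which unions only over \emph{siblings} of $v_{j+1}$, captures only the merge-at-$j$ event for such leaves, and your terms at levels $<j$ ignore them entirely.

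The fix is to partition by \emph{merge level}, not LCA level. Let $\mathrm{MERGE}_j$ be the event that some mismatch (any LCA) is in a different bucket from $\ell$ at level $j{+}1$ but the same at level $j$. The union bound at level $j$ then must run over \emph{all} occupied child-buckets of $\ell_{\le j}$, including those of $v_{j+1}=\ell_{\le j+1}$ itself, not just its siblings. Your induction gives the right bound on that count, namely $\sum_{v'\in\child(v_j)}\hmrload{\treecap}_{v'}=\hmrload{\treecap}_{v_j}$ (using that $v_j$ is underloaded), so $\Pr[\mathrm{MERGE}_j]<1/(4d)$ and the sum over $j$ finishes. Your displayed sum already ranges over all of $\child(v_j)$, so you may have half-intended this; but the surrounding narrative and the single-pair claim do not justify it. With the merge-level framing in place, your proof coincides with the paper's.
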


\begin{proof}
Let $\ell \in \LOS_1 \times \cdots \times \LOS_d$ be a $4d$-accessible leaf with respect to $u,w$ where $u_\ell \neq w_\ell$. 
We must show that $\ell$ is $3/4$-recoverable. 
Recall that $\ell$ is $4d$-accessible if each node along the path from $\ell$ to root is $4d$-underloaded, 
i.e., for each $j<d$, $\hmrload{\treecap}_{\ell_{\le j}} < \treecap_j/4d$. 
Let $(r_j:0 \leq j <d)$ be the sequence of random routing functions selected as above and let $f$ be the induced \leafroot{} function.

Let $(i_d,\ldots, i_0)$ denote the sequence of indices $(i_d(\ell),\ldots,i_0(\ell))$ for the trajectory of $\ell$. 
This is a random variable depending on the choice of $r_0,\ldots,r_{d-1}$.
By definition, $\ell$ is not recoverable if and only if there is a mismatch leaf $\ell' \neq \ell$ such that
$f(\ell)=f(\ell')$.    
If $\ell' \neq \ell$ is a leaf such that $f(\ell')=f(\ell)$ then 
the trajectories of $\ell$ and $\ell'$ have non-empty intersection.  
We say that $\ell$ and $\ell'$ \emph{merge at level $j$}
if they are in different buckets at level $j+1$, but in the same bucket at level $j$.   
(Note that once the trajectories merge, they remain the same all the way to the root.)

For $j \in \{0,\ldots,d-1\}$, let $\MERGE_j$ be the event
that there is a mismatch leaf $\ell' \neq \ell$ that merges with $\ell$ at level $j$.  We now fix $j$ and prove that $\prob[\MERGE_j] \leq 1/4d$.  
This will finish the proof, since summing over all the levels, we will get that
the probability that $\ell$ is not recoverable is at most $1/4$.

We condition on the $r_{j+1},\ldots,r_{d-1}$, which determines the trajectory of all leaves up to level $j+1$.  In particular this
determines $i_{j+1},\ldots,i_d$,  

Consider the set of child buckets of $\ell_{\leq j}$.  
These have the form $(\ell_{\leq j} \concat a,i)$ where $(a,i) \in \LOS_{j+1} \times \{1,\ldots,\treecap_{j+1}\}$.  
This includes the bucket $(\ell_{\leq j}\concat \ell_{j+1},i_{j+1})$ 
on the trajectory of $\ell$. 

 Say that a bucket $(v,i)$ is \emph{occupied} if it lies on the trajectory of some
mismatch leaf.
Let $\OCC$ be the set of pairs $(a,i) \neq (\ell_{j+1},i_{j+1})$ such
that $(\ell_{\leq j} \concat a, i)$ is occupied.  The event $\MERGE_j$ is equivalent to
the event that there is an $(a,i) \in \OCC$ such that $r_j(a,i)=r_j(\ell_{j+1},i_{j+1})$.
For each $(a,i) \in \OCC$, $\prob[r_j(a,i)=r_j(\ell_{j+1},i_{j+1})]=\frac{1}{\treecap_j}$ since
$r_j$ is a pairwise independent map whose range has size $\treecap_j$, and so the conditional
probability of $\MERGE_j$ given $r_{d-1},\ldots,r_{j+1}$ is at most $|\OCC|/\treecap_j$.

We need to upper bound $|\OCC|$.  
Let $\occupied(v)$ denote the number of occupied $v$-buckets.  
In the above analysis $|\OCC|=\sum_{a\in \LOS_{j+1}} \occupied(\ell_{\leq j} \concat a)-1$.
We claim:

\begin{proposition}
\label{prop:occupied}
For any choice of routing functions $r_{d-1},\dots,r_0$:
\begin{enumerate}
\item  For any internal node $v$ at level $j'<d$, $\occupied(v) \leq \min(\treecap_{j'},\sum_{v' \in \child(v)} \occupied(v'))$.
\item  For any node $v$ at level $j'\leq d$, $\occupied(v) \leq \hmrload{\treecap}(v)$.
\end{enumerate} 
\end{proposition}

\begin{proof}
For the first part, let $v$ be an internal node. 
Then $\occupied(v)$ is trivially at most $\treecap_{j'}$.  
Also, a $v$-bucket is occupied if and only if some occupied child maps to it, so $\occupied(v) \leq \sum_{v' \in \child(v)} \occupied(v')$.

For the second part, if $v$ is a leaf then
$\occupied(v)=1$ if $v$ is a mismatch leaf and 0 otherwise, so $\occupied(v)=\hmrload{\treecap}(v)$.
If $v$ is an internal node, the first part implies $\occupied(v) \leq \min(\treecap_{j'},\sum_{v' \in \child(v)} \occupied(v'))$
and applying induction and the definition of $\hmrload{\treecap}(v)$ we have
 $\min(\treecap_{j'},\sum_{v' \in \child(v)} \hmrload{\treecap}(v'))=\hmrload{\treecap}(v)$.
\end{proof}

Thus $|\OCC| \le \sum_{a\in \LOS_{j+1}} \hmrload{\treecap}(\ell_{\leq j} \concat a)-1$.
We know that $\hmrload{\treecap}(\ell_{\leq j}) \le \treecap_{j} / 4d$ and in particular, $\hmrload{\treecap}(\ell_{\leq j}) < \treecap_{j}$.
Hence, $\hmrload{\treecap}(\ell_{\leq j}) = \sum_{a\in \LOS_{j+1}} \hmrload{\treecap}(\ell_{\leq j} \concat a) > |\OCC|$.
It follows that for any choice of the routing functions $r_{d-1},\ldots,r_{j+1}$, $|\OCC|  < \treecap_j/4d$.  
Therefore $\prob[\MERGE_j] \leq \frac{1}{4d}$, as required to complete the proof of the theorem.
\end{proof}

We are ready to conclude Theorem~\ref{thm:HMR}.
Let us define the sketching function $\hmrsketch(u; T,\treecap,\delta)$ for targeted mismatch recovery $\HMR(T,\treecap,4d)$ 
to be the superposition sketching function on the tree $T$ with the capacity function $\treecap$, 
where the distribution on \leafroot{} functions is as given in the above lemma, and with the  
redundancy set to $\lceil 8(\sum_{j=1}^d\ln|\LOS_j| +\log(1/\delta)+2)\rceil$.
Let us define $\hmrrecover$ to be the associated recovery function.
To conclude the correctness of the scheme (the first item of Theorem~\ref{thm:HMR}) we apply Proposition~\ref{prop:random superposition} 
together with Lemma~\ref{lemma:hierarchical} with parameters set as follows: 
$D=\prod_{j=1}^d |\LOS_j|$,  $\suprange=\{1,\dots,\treecap_0\}$, redundancy $\ell = \lceil 8(\sum_{j=1}^d \ln |\LOS_j|+\log(1/\delta)+2)\rceil$,
and $(H,\mu)$ as defined above for $(T,\treecap)$.
(Notice, the theorem hypothesis that $|\Gamma| \ge 4 \prod_{j=1}^d |\LOS_j|^2$ implies that $|\Gamma| \ge 4(|D|-1)\cdot |\suprange|$ as required by Proposition~\ref{prop:random superposition}.)
The sketch consists of $O(\ell \cdot |\suprange|)$ elements from $\Gamma$ so it takes $O(\treecap_0 \cdot \log |\Gamma|\cdot (\log\prod_{i=1}^d|\LOS_j| + \log(1/\delta)))$ bits.
Evaluating a hash function from $H$ at a single point takes time $O(d \log^{O(1)} |\Gamma|)$
so by Proposition~\ref{prop:superposition time}, the sketching algorithm runs in time $O(\ell \cdot \prod_{i=1}^d|\LOS_j| \cdot d \log^{O(1)} |\Gamma|)=O(\prod_{i=1}^d|\LOS_j| \cdot \log^{O(1)} |\Gamma| \cdot \log(1/\delta))$.
If $u$ is given via its sparse representation then the time to construct the sketch
is $O((\treecap_0 + |\supp(u)|) \cdot \log^{O(1)} |\Gamma| \cdot \log(1/\delta))$.
The recovery algorithm runs in time $O(\treecap_0 \cdot  \log^{O(1)} |\Gamma| \cdot \log(1/\delta))$ as required.
Finally, each mismatch pair that is output by the recovery algorithms must appear in more than half of the $\ell$ redundant sketches.
As each sketch outputs at most $|\suprange|$ elements, 
the number of mismatch pairs output by the algorithm is at most $\treecap_0$.




\section{Sketch-and-recover scheme for edit distance}
\label{sec:ED scheme}

In this section we describe our sketch-and-recover scheme for edit distance. 

This scheme  takes  a \emph{distance parameter $k$} and size parameter $n$ 
and has the property that
given strings $x,y$ of length at most $n$, with probability at least $2/3$,  if $\ED(x,y) \leq k$
the recover algorithm on sketches of $x$ and $y$ will output the set of costly edges in the canonical alignment of $x$ and $y$, 
as defined in Section~\ref{sec:edit distance}.  If $\ED(x,y) > k$ the recover algorithm
should output $\LRGE$.
The strings $x$ and $y$ are over the alphabet $\Sigma_n=\{1,\ldots,n^3\}$.
We make the simplifying assumption that $k\le n/4$ otherwise each string can serve as its own sketch. 


Let $\gsize=\log n$.
Our goal is that the length of the  sketch  should be ``nearly linear'' in $k$,
which means $k \times T$ where $T$ is ``small''. 
We achieve this
with $T=\orgap^2 \gsize^{O(1)}$, where $\orgap=2^{O(\sqrt{\gsize \log \gsize})}$ is the gap parameter from Theorem~\ref{thm:OR}, so this
is the dominant term.  Our construction is multi-layered and intricate but fairly clean.  We are careless with factors of $\gsize$ and
of constants in the sketch size; it is clearly possible to 
reduce some of these factors, but it would complicate our already involved construction.

An important note is that we use the gap threshold algorithm from Theorem~\ref{thm:OR} as a black box.  Improving
the value of $\orgap$ in Theorem~\ref{thm:OR} will translate into a reduction in the sketch size. 

The sketch and recover algorithms build recursive structures that are represented as labelled trees $T(\BASE^{\tdepth})$ or $T(\{0,1\}^{\tdepth \gsize})$.  
Both of these trees have a number of nodes that is superpolynomial in $n$, 
so explicitly evaluating the labeling at all nodes leads to superpolynomial running time. 
We can avoid this because each of these trees has the property that all but polynomially many nodes are labelled by a \emph{default value}, and the set of nodes not labelled by the default value is a rooted subtree.  
Thus we only have to explicitly compute the labeling at nodes whose value is not the default value.  
For example, a key step in building the sketch for $x$ is to build a decomposition for $x$ on the tree $T(\BASE^{\tdepth})$
where we assign $x$ to the root, and work our way down the tree: if node $v$ has been assigned $x_v$ then
the children of $v$ are assigned a (carefully chosen) decomposition of $x_v$.  
At most polynomially many nodes will be assigned a nonempty string (because there are at most $|x|$ nonempty substrings at each level of the tree) and the default value is the empty string $\emptystr$. 
As we build the tree, when we encounter a node $u$ that is assigned the empty string whose parent is assigned a nonempty string, 
we note that $x_u=\emptystr$ and implicitly assign  all descendants to $\emptystr$ without doing any computation on the descendants.  
The set of explicitly assigned nodes is a rooted subtree of polynomial size and any node not appearing in the subtree is implicitly 
assigned to $\emptystr$, and our algorithms only pay for the explicitly assigned values.

\subsection{Pseudo-code for edit distance sketch-and-recover algorithms}
\label{subsec:sketch code}

Our sketching algorithms and their analysis depend on various parameters. 
We review the parameters here. 
They are implicitly part of the input of all of our procedures.

The following parameters are set by the user of the scheme: $k$ is the edit distance bound that the scheme is designed to handle,
and $n$ is the upper bound on the size of the input strings.
We assume $n$ is a power of two and that it is larger than $40k\orgap$, where $\orgap$ is the distortion of the Ostrovsky-Rabani fingerprint.
We use $\gsize = \log n$ as a shorthand.

We recall the following parameters that were introduced in formulating the properties of the Ostrovsky-Rabani fingerprint and the function $\bdecomp$. 
They are used to define other parameters of our algorithm:
$\bvsize=\bvsizeval$ is the length of bit-vectors that represent grammars output by $\bdecomp$.  
$\dfactor=O(\log^ 2n)$ appears in Part~\ref{BK2b} of Theorem~\ref{thm:BK} concerning $\bdecomp$.
It bounds how much is the edit distance of two strings multiplied when we measure the Hamming distance of their representation by grammars.
$\bkfailure=O(\log^4 n)$ is the multiplier that appears in the failure probability in Part~\ref{BK2a} of Theorem~\ref{thm:BK}.
The failure probability is the ratio between the edit distance of two strings and the desired grammar sparsity multiplied by $\bkfailure$.
$\orgap=2^{O(\sqrt{\log(n)\log\log(n)})}$ is the gap parameter ({\em distortion}) from Theorem~\ref{thm:OR} concerning the Ostrovsky-Rabani fingerprints.  

Our algorithm uses some large field $\bigfield=\mathbb{F}_p$ to represent fingerprints and compute hierarchical mismatch recovery schemes. 
Here $p$ is a prime whose bit length is at least $4 (\log \bvsize \cdot \log n)$ which is sufficient for all those applications.  
We assume that the prime $p$ is selected during the initialization of our algorithm (see Section~\ref{sec:compconsider}).

The sketching algorithm for a string $x$ operates by implicitly building some tree data structures to represent $x$, which are then compressed.  
The trees underlying these data structures are fixed.
The decomposition tree produced by $\mdecomp$ is $T(\BASE^{\tdepth})$, 
where $\BASE=\{0,1\}^{\gsize}$ indexes its edges, 
and $\tdepth=\lceil \log (20k \cdot \orgap) \rceil$ is its depth.
For $i \in \{1,\ldots,\tdepth\}$, $T^j$ denotes the subtree $T(\BASE^{j})$ of $T(\BASE^{\tdepth})$.
$\mdecomp$ also builds some data structures on the binary tree $T(\{0,1\}^{\gsize\tdepth})$.  
The nodes at level $j$ of $T(\BASE^{\tdepth})$ correspond in the obvious way to the nodes at level $\gsize j$ in $T(\{0,1\}^{\gsize\tdepth})$.

The following parameters are chosen carefully for the algorithm and its analysis. 
They bound quantities at nodes of $T(\BASE^{\tdepth})$:
$\gapthresh_i$ is the threshold of Ostrovsky-Rabani fingerprint used for substrings at level $i$ of the tree,
$\gramsize_i$ is the sparsity of grammar bit-vectors at level $i$ of the tree,
$\stringcap_i$ is the capacity of nodes at level $i$ of the hierarchical mismatch recovery scheme on $T(\BASE^{\tdepth})$.
The quantities shrink by factor of 2 at each level and the latter two quantities can be defined in terms of $t_i$.
The parameters are chosen to satisfy various constraints that arise in the analysis 
and we summarize the constraints in Section~\ref{sec:constaints}.

\setlength{\tabcolsep}{10pt} 
\renewcommand{\arraystretch}{1.7} 
\begin{center}
\begin{tabular}{l|l|l}
 \makecell{Ostrovsky-Rabani\\ fingerprint threshold} 
&  \makecell{Grammar sparsity} 
& \makecell{Hierarchical mismatch \\ recovery capacity}   \\
 $\gapthresh_0 = \lceil 20k \cdot \orgap \rceil_2$ 
& $\gramsize_0=(32\bkfailure\log^6n)\cdot \gapthresh_0$ 
& $\stringcap_0=(512 \bkfailure \orgap \log^{12} n)\cdot \gapthresh_0 $  \\
 \makecell[l]{$\gapthresh_i = \gapthresh_{i-1}/2 $ \\  $\hphantom{\gapthresh_i} = \gapthresh_0/2^{d-i}$ } 
& \makecell[l]{$\gramsize_i=\gramsize_{i-1}/2$  \\ $\hphantom{\gramsize_i}=(32\bkfailure\log^6n)\cdot \gapthresh_i$ } 
& \makecell[l]{$\stringcap_i=\stringcap_{i-1} /2 $ \\ $\hphantom{\stringcap_i}=(512\bkfailure \orgap \log^{12} n)\cdot \gapthresh_i $ }
\end{tabular}
\end{center}

Here by $\lceil a \rceil_2$ we mean the smallest power of two larger than $a$. 
We set $\critpar=64\bkfailure \, \orgap \log^6 n$, which is used in Proposition~\ref{prop:loadestimator 2}.
For $j \in \{1,\ldots,\tdepth\}$, $\stringcap^{j}$ is the capacity function on $T^j$ where for $i \in \{1,\ldots,j\}$,
$\stringcap^{\leq j}_i=\stringcap_i$  and $\stringcap^{\leq j}_{j+1}=1$.
Additionally we need to define a capacity for a hierarchical mismatch recovery scheme on $T(\{0,1\}^{\gsize \tdepth})$.
For $j \in \{1,\ldots,\tdepth\gsize-1\}$, $\locationcap^{\leq j}$ is the capacity function on $T(\{0,1\}^j)$ where for $i \in \{1,\ldots,j\}$,
$\locationcap^{\leq j}_i=\stringcap_{\lfloor i/\gsize \rfloor}$.

The main component of our sketching algorithm is procedure $\mdecomp$. 
It takes the input string $x$ and recursively applies procedure $\bdecomp$ to it with sparsity parameter $\gramsize_j$, where $j$ is the recursion level.
Each call to procedure $\bdecomp$ splits a substring of $x$ into smaller substrings, each represented by a {\em grammar} of sparsity $\gramsize_j$.
We label the tree $T(\BASE^{\tdepth})$ by the iteratively obtained substrings of $x$.
The set $\nonemptynodes$ traces which nodes of the tree are non-empty.  

\begin{algorithm}[H]
\label{algo:mdecomp}
   \caption{$\mdecomp(x )$}
   \KwIn{String $x \in \Sigma_n^{\le n}$.}
   \KwOut{Tree decomposition $\ztree$ of $x$  indexed by $\BASE^{\le\tdepth}$ with associated tree $\tstree$ of Ostrovsky-Rabani fingerprints and grammars $\bvec$,  $\nonemptynodes$ which is the support of $\ztree$,
         refinement $\ztree^*$ of the tree decomposition of $x$ indexed by $\{0,1\}^{\le\tdepth \gsize}$  with the tree of fingerprints $\fptree$ and left child size tree $\lvec^*$, and $\nonemptynodes^*$ which is the support of $\ztree^*$.}
   
   \vspace{1mm}
   \hrule\vspace{1mm}

   Set $\ztree_{\emptystr} = x$; $\nonemptynodes = \{\emptystr\}$. 
   
   \For{$j = 0, \dots, \tdepth-1$}{

      \For{$v \in \nonemptynodes \cap \BASE^{j}$}{

          \If{$\ztree_{v}\neq \emptystr$}{

               Call $\bdecomp(\ztree_{v};\gramsize_{j+1})$ to obtain decomposition $(z_{i}:i \in \BASE)$ of $\ztree_v$  with associated grammars $(G_i:i \in \BASE)$ where $G_i \in \{0,1\}^{\bvsize}$.


               \For{each $i \in \BASE$}{
               
                   \If{$z_i \neq \emptystr$}{
                       $\ztree_{v \concat i} = z_i$;
                       
                       $\bvec_{v \concat i} = G_i$; 
                       
                       $\tstree_{v \concat i} = \threshsk(z_i;\gapthresh_{j+1})$;
                       
                       $\nonemptynodes = \nonemptynodes \cup \{v \concat i\}$.
                    }
                }
          }

      }
      
    }

    $\nonemptynodes^* = \nonemptynodes \cap \{0,1\}^{\tdepth  \gsize}$;

    \lFor{each $v \in \nonemptynodes^*$}{$\ztree^*_v=\ztree_v$}

    \For{$j= \tdepth \gsize -1, \dots, 0$}{

       \For{each $v \in \{0,1\}^j$ where $v\concat 0$ or $v\concat 1 \in \nonemptynodes^*$}{

           $\ztree^*_v = \ztree^*_{v \concat 0} \concat \ztree^*_{v \concat 1}$;

           $\fptree_v = \fingerprint(\ztree^*_v)$;
           
           $\lvec^*_v = |\ztree^*_{v \concat 0}|$;

           $\nonemptynodes^* = \nonemptynodes^* \cup \{v\}$.
       }  
    }

    Output $\ztree$, $\bvec$, $\tstree$, $\nonemptynodes$, $\ztree^*$, $\lvec^*$, $\fptree$, $\nonemptynodes^*$.

\end{algorithm}

Thus we build two vertex labelings of the tree $T(\BASE^{\tdepth})$:  
$\ztree$ is the labeling by substrings of $x$ that is a decomposition tree for $x$, 
and $\bvec$ labels each node, except the root, by a bit-vector that represents a grammar that encodes $\ztree_v$.
All of the labelings are represented via a sparse representation.  
The algorithm constructs the set $\nonemptynodes$ of nodes consisting of the root $\ztree_{\emptystr}$ and all nodes $v$ for which $\ztree_{v} \neq \emptystr$. 
The labels $\ztree_v$ and $\bvec_v$ are assigned values explicitly only if $v \in \nonemptynodes$; 
for $v \not\in \nonemptynodes$, $\ztree_v$ and $\bvec_v$ are not explicitly assigned, and are implicitly set, respectively, to $\emptyset$ and the 0 vector.

Additionally, for each non-empty substring $\ztree_v$ we also calculate its Ostrovsky-Rabani fingerprint for a suitable threshold parameter.
The fingerprints form another labeling $\tstree$ of  $T(\BASE^{\tdepth})$.
The fingerprints are interpreted as elements from $\bigfield$.

In the second part of $\mdecomp$ we build a refinement $\ztree^*$ of $\ztree$. 
$\ztree^*$ is a vertex labeling of the binary tree $T(\{0,1\}^{\tdepth \gsize})$
where $\ztree^*_v$ is the concatenation of $\ztree_u$ for $u \in \BASE^{\tdepth}$ that have $v$ as its prefix.
(Here we view $u$ as a binary string of length $\tdepth \gsize$.) 
This can be efficiently built bottom-up starting from leaves.
For each $\ztree^*_v$ we calculate its Karp-Rabin fingerprint $\fptree_v$
and we set $\lvec^*_{v}$  to be the length of $\ztree^*_{v \concat 0}$, the left child of $v$.
Hence,  $\fptree$ and $\lvec^*$ are both labelings of  $T(\{0,1\}^{\tdepth \gsize})$.
The labeling $\ztree$ of $T(\BASE^{\tdepth})$ corresponds to a partial labeling of the binary tree $T(\{0,1\}^{\tdepth \gsize})$ where only nodes at levels that are multiples of $\gsize$ are labeled.  
$\ztree^*$ "interpolates" this partial labeling to a full labeling of the binary tree.  
As with $\ztree$, $\ztree^*_v$ is only explicitly defined for nodes where it is nonempty.
The set $\nonemptynodes^*$ traces the non-empty nodes $v$ of $\ztree^*$.
For all other $v$, $\ztree^*_v$ is implicitly equal to $\emptystr$ and  $\lvec^*_{v}$ is implicitly set to 0. 



After computing the decomposition of a string $x$, the decomposition is {\em condensed} into hierarchical mismatch recovery sketch
in procedures $\vectorcondense$ and $\locationcondense$. 
The former is used to sketch the grammar tree $\bvec$ and the latter is used to sketch the tree of child-sizes $\lvec^*$. 
The first sketch will allow us to recover grammars of differing substrings between $x$ and $y$, 
the second will be used to determine the exact position of the differing substrings in $x$ and $y$.
The $\vectorcondense$ takes each of the layers of $\bvec$, and applies to it $\hmrsketch$ with a suitable choice of parameters.
Each layer of $\bvec$ is a sequence of grammars represented by a binary vector.
Each of the binary vectors is multiplied by a corresponding Ostrovsky-Rabani fingerprint so that 
if two corresponding nodes in the decomposition of $x$ and $y$ are labelled by substrings of $x$ and $y$ of large edit distance 
then the multiplied grammar vectors will differ in each non-zero entry.
Those non-zero positions will be recovered from the hierarchical sketch if the path to the root in the tree of grammars is not overloaded.
The collection of the hierarchical sketches, one for each layer of $T(\BASE^{\tdepth})$, is the sketch output by $\vectorcondense$.

\begin{algorithm}[H]
\label{algo:vectorcondese}
   \caption{\vectorcondense$(\bvec,\tstree,\nonemptynodes)$}
   \KwIn{Tree of grammars $\bvec$ indexed by $\BASE^{\le\tdepth}$ with support $\nonemptynodes$, and tree $\tstree$ of associated Ostrovsky-Rabani fingerprints.}
   \KwOut{$\HMR$ sketch for each level of the $\bvec$.}
   
   \vspace{1mm}
   \hrule\vspace{1mm}

   \For{each $j \in \{1,\ldots,\tdepth\}$}{

       Let $\bvec^j$ be the all-zero $D$-sequence over $\bigfield$ where $D=\BASE^j \times \{1,\ldots,\bvsize\}$

       \For{$v \in \nonemptynodes \cap \BASE^j$ and $i \in \{1,\dots, \bvsize\}$}{
           $\bvec^j_{v\concat i} = \tstree_v \times_\bigfield \bvec_{v,i}$.
       }

       $\fsketch^j = \hmrsketch(\bvec^j;T(\BASE^j \times \{1,\dots,\bvsize\}),\stringcap^{\le j},4(\tdepth+1),1/n^4)$.
   
   }

   Output $\fsketch=(\fsketch^1,\ldots,\fsketch^{\tdepth})$.

\end{algorithm}

Similarly, each of the $\gsize \tdepth$ layers of $\lvec^*$ consists of integers from $\{1,\dots,n\}$.
Each of the integers is offset by a corresponding Karp-Rabin fingerprint from $\fptree$ so that 
if the fingerprints in $x$ and $y$ differ the corresponding values will differ.
(Here, we assume that the field $\bigfield$ is larger than the maximum value of the fingerprint times $n$ so the mapping is invertible.)   
Again,  hierarchical mismatch recovery sketch of each fingerprinted layer with suitable parameters allows to recover sizes of left substrings for differing pairs of corresponding nodes in $\ztree^*$ for $x$ and $y$.
Summing-up the sizes of the left subtrees along a path from some node in $T(\{0,1\}^{\tdepth \gsize})$ to the root allows to recover
the position of a differing substring in the decomposition of $x$ and $y$, respectively.
So the collection of the hierarchical sketches, one for each layer of $\lvec^*$, is the sketch output by $\locationcondense$.

\begin{algorithm}[H]
\label{algo:locationcondense}
   \caption{\locationcondense$(\lvec^*,\fptree,\nonemptynodes^*)$}
   \KwIn{Tree of Karp-Rabin fingerprints $\fptree$ for the refined tree decomposition $\ztree^*$ of $x$ indexed by $\{0,1\}^{\le\tdepth \gsize}$ with left child size tree $\lvec^*$ of $\ztree^*$ supported on $\nonemptynodes^*$.}
   \KwOut{$\HMR$ sketch for each level of the $\lvec^*$.}
   
   \vspace{1mm}
   \hrule\vspace{1mm}


   \For{each $j \in \{0,\ldots,\tdepth\gsize -1\}$}{

       Let $\lvec^j$ be the all-zero $D$-sequence over $\bigfield$ where $D=\{0,1\}^j$.

       \For{$v \in \nonemptynodes^* \cap \{0,1\}^j$}{
           $\lvec^j_v = n \times_\bigfield \fptree_v+_\bigfield \lvec^*_v$.
       }

       $\gsketch^j = \hmrsketch(\lvec^j;T(\{0,1\}^j),\locationcap^{\leq j}, 4 \tdepth \gsize,1/n^4)$.   
   }

   Output $\gsketch=(\gsketch^0,\ldots,\gsketch^{\tdepth\gsize-1})$.

\end{algorithm}

Procedure $\msketch$ builds a single instance of a sketch for $x$ 
using a single call to each of the procedures $\mdecomp$, $\vectorcondense$, $\locationcondense$.
$\fullsketch$ applies the procedure $\msketch$  $(\numrep)$-times using independent randomness to get a sequence of sketches for $x$.
Their concatenation is the sketch for $x$.

\begin{algorithm}[H]
\label{algo:msketch}
   \caption{\msketch$(x)$}
   \KwIn{String $x \in \Sigma_n^{\le n}$.}
   \KwOut{A sketch $\msketchout$ of $x$.}
   
   \vspace{1mm}
   \hrule\vspace{1mm}

    Invoke $\mdecomp(x)$ which outputs $\ztree$, $\bvec$, $\tstree$, $\nonemptynodes$, $\ztree^*$, $\lvec^*$, $\fptree$,  $\nonemptynodes^*$.
 
    $\fsketch = \vectorcondense(\bvec, \tstree, \nonemptynodes)$.
    
    $\gsketch = \locationcondense(\lvec^*, \fptree, \nonemptynodes^*)$.
    
    Output $\msketchout=(\fsketch,\gsketch)$.

\end{algorithm}

\begin{algorithm}[H]
\label{algo:fullsketch}
   \caption{\fullsketch$(x,k,n)$}
   \KwIn{String $x \in \Sigma_n^{\le n}$, integer parameters $k$ and $n$.}
   \KwOut{A sketch $\fullsketchout$ of $x$.}
   
   \vspace{1mm}
   \hrule\vspace{1mm}

   Determine $\gapthresh_1,\dots, \gapthresh_{\tdepth}, \stringcap^1,\dots,\stringcap^{\tdepth}, \locationcap^1,\dots,\locationcap^{\tdepth \gsize}$.

   $\threshskout = \threshsk(x;20k\orgap)$.

   \lFor{$i \in \{1,\ldots, \numrep\}$}{ $\msketchout_i = \msketch(x)$} 
   
    Output $\fullsketchout = (\threshskout, \msketchout_1,\ldots,\msketchout_{\numrep})$. 
    
\end{algorithm}

   




 
    


    

\subsection{Pseudo-code for the recovery algorithm}
\label{subsec:recovery code}

The recovery of edit distance from sketches of $x$ and $y$ starts in procedures $\findstrings$ and $\findlocations$.
$\findstrings$ takes two sketches $\fsketch(x)$ and $\fsketch(y)$ that were produced by $\vectorcondense$.
$\fsketch(x)$ consists of $d$ hierarchical mismatch recovery sketches, each of them sketches a level of the grammar tree of $x$.
Similarly for $\fsketch(y)$.
$\findstrings$ calls the recovery procedure $\hmrrecover$ on each corresponding pair of hierarchical sketches.
The call to $\hmrrecover$ provides a list $\stringmismatch$ of mismatch triples.
Each mismatch triple $((v,i),\alpha, \beta)$ asserts that the $i$-th bit of the grammar at node $v$ of the decomposition tree $\bvec$ for $x$
is $1$ if $\alpha$ is non-zero, and the same bit of a grammar for $y$ is $1$ if $\beta$ is non-zero.
Grammars for which we get some mismatch triple are decoded by $\bdecomp$.

\begin{algorithm}[H]
\label{algo:findstrings}
   \caption{$\findstrings(\fsketch(x),\fsketch(y))$}
   \KwIn{$\hmrsketch$ sketches $\fsketch(x)$ and $\fsketch(y)$ of grammars for tree decompositions of $x$ and $y$.}
   \KwOut{Sequences $\claimedstring_v(x)$ and $\claimedstring_v(y)$ indexed by some $\stringnodes \subseteq \BASE^{\le \tdepth}$ of substrings on which $x$ and $y$ differ.}

   \vspace{1mm}
   \hrule\vspace{1mm}

   Set $\stringnodes =\emptyset$.
 
   \For{$j \in \{1,\ldots,\tdepth\}$}{

        $\stringmismatch^j =  \hmrrecover(\fsketch^j(x),\fsketch^j(y))$.  
        

        $\stringnodes^j=\{ v \in \BASE^j;\; ((v,i),\alpha, \beta) \in \stringmismatch^j$ for some $i,\alpha,\beta \}$.

        \For{$v \in \stringnodes^j$}{

            $\claimedbv_v(x) = \claimedbv_v(y) = 0^{\bvsize}$.


            \For{each $((v',i),\alpha,\beta)\in \stringmismatch^j$ where $v'=v$}{

                \lIf{$\alpha \neq 0$}{$\claimedbv_{v,i}(x) = 1$}

                \lIf{$\beta \neq 0$}{$\claimedbv_{v,i}(y) = 1$}
                
            }


            $\claimedstring_v(x) = \decode(\claimedbv_v(x))$. 

            $\claimedstring_v(y) = \decode(\claimedbv_v(x))$. 

            \If{$\claimedstring_v(x)$ and $\claimedstring_v(y)$ are both defined}{Add $v$ to $\stringnodes$.}
        
        }

    }
        
    Output $(\claimedstring_v(x):v \in \stringnodes)$ and $(\claimedstring_v(y):v \in \stringnodes)$.

\end{algorithm}

If all non-zero entries of the grammar were recovered then $\bdecomp$ returns the corresponding string otherwise it returns $\undefnd$.
(The latter case occurs if only a subset of ones was recovered for the given grammar vector.)
This provides a collection of substrings of $x$ from its $\ztree$, and similarly for $y$.
We organize the substrings as a partial labeling of $T(\BASE^{\tdepth})$, for which $\stringnodes$ identifies the support of the labeling.
The two partial labelings for $x$ and $y$ are the output of the procedure.

Similarly, $\findlocations$ gets two sketches $\gsketch(x)$ and $\gsketch(y)$ that were produced by $\locationcondense$,
and a set of target nodes $\stringnodes$ from the tree decomposition of $x$ and $y$. 
$\gsketch(x)$ consists of $\gsize \tdepth-1$ hierarchical mismatch recovery sketches, each of them sketches a level of the $\lvec^*$ tree of $x$.
Similarly for $y$.
On each pair of sketches for $x$ and $y$ we apply $\hmrrecover$ to identify {\em differences} between the values at a given level of $\lvec^*$ tree for $x$ and $y$.
(A difference at node $v$ might come from two sources: either the actual values $\lvec^*_v$ for $x$ and $y$ differ or the associated
$\ztree^*_v$ differ.)
For each node $v$ where the recovery procedure identifies a {\em difference} we recover the value $\lvec^*_v$ for $x$ and $y$, respectively. 
After we find all recoverable {\em differences} between $\lvec^*$ for $x$ and $y$,
for each node $v \in \stringnodes$ where we identified a {\em difference}, we attempt to calculate the starting position in $x$ and $y$, resp., 
of the substring $\ztree_v$ of $x$ and $y$.
The starting position of $\ztree_v$ is given by the sum of $\lvec^*_u$ over all nodes $u$ in $T(\{0,1\}^{\tdepth \gsize})$ which we reach from right
on the path from $v$ to root.
(If any of the $\lvec^*_u$ does not have its value defined then the starting position for $v$ will remain undefined.)
We return two sequences $\claimedlocation_v(x)$ and $\claimedlocation_v(y)$ of starting positions we calculated.

\begin{algorithm}[H]
\label{algo:findlocations}
   \caption{$\findlocations(\gsketch(x),\gsketch(y),\stringnodes)$}
   \KwIn{$\hmrsketch$ sketches $\gsketch(x)$ and $\gsketch(y)$ of substring locations for refined tree decompositions of $x$ and $y$, $\stringnodes$ are nodes of the decomposition for which we want to recover their starting position.}
   \KwOut{Sequences $\claimedlocation_v(x)$ and $\claimedlocation_v(y)$ indexed by some $\claimednodes \subseteq \stringnodes$ of starting positions of substrings on which $x$ and $y$ differ.}

   \vspace{1mm}
   \hrule\vspace{1mm}
 
   \For{$j \in \{0,\ldots,\tdepth \gsize-1\}$}{

        $\locmismatch^j = \hmrrecover(\gsketch^j(x),\gsketch^j(y))$.  

        \For{$(v,\alpha,\beta) \in \locmismatch^j$}{

            $\claimedsize_v(x) = \alpha \mod n$ (over $\N$)

            $\claimedsize_v(y) = \beta \mod n$ (over $\N$)

        }
    }

    $\claimednodes=\emptyset$.

    \For{$v\in \stringnodes$}{
    

        $\claimedlocation_v(x) = \sum_{u \text{ a left ancestor of $v$}} \claimedsize_u(x)$.   

        $\claimedlocation_v(y) = \sum_{u \text{ a left ancestor of $v$}} \claimedsize_u(y)$.   

        
        \COMMENT: $u \in \{0,1\}^{<|v|}$ is a left ancestor of $v$ if $u$ is a prefix of $v$ and $v_{|u|+1}=1$. 

         \If{$\claimedlocation_v(x)$ and $\claimedlocation_v(y)$ are both defined}{Add $v$ to $\claimednodes$.}

    }
        
    Output $(\claimedlocation_v(x):v \in \claimednodes)$ and $(\claimedlocation_v(y):v \in \claimednodes)$.

\end{algorithm}

Procedures $\findstrings$ and $\findlocations$ are called from $\mreconstruct$ which attempts to reconstruct edit distance information
from a pair of sketches $\msketchout(x)$ and $\msketchout(y)$ obtained from a single run of $\msketch$.
$\msketchout(x)$ consists of $\fsketch(x)$ and $\gsketch(x)$, and similarly for $\msketchout(y)$.
Sketches $\fsketch(x)$ and $\fsketch(y)$ are sent to $\findstrings$ to recover differing pairs in $\ztree$ of $x$ and $y$:
collections $\claimedstring(x)$ and $\claimedstring(y)$ indexed by some subset $\stringnodes$ of the tree nodes.

Then $\findlocations$ is applied on $\gsketch(x)$, $\gsketch(y)$ and $\stringnodes$ to obtain 
starting locations $\claimedlocation(x)$ of substrings $\claimedstring(x)$ within $x$
and similarly starting positions $\claimedlocation(y)$ of substrings of $y$.
The set of substrings for which the starting position was recovered is identified by $\claimednodes$.
We let $\reportednodes$ be the set of nodes from $\claimednodes$ that do not have any other node from $\claimednodes$ 
on their path to the root in $T(\BASE^{\tdepth})$.
For each node  $v \in \reportednodes$ we have recovered substrings $\claimedstring_v(x)$ and $\claimedstring_v(y)$ and their respective
positions $\claimedlocation_v(x)$ and $\claimedlocation_v(y)$ in $x$ and $y$.
We evaluate the edit distance of the two substrings positioned at those locations and we calculate the costly edges of their canonical alignment.
Union of all the edges over $v \in \reportednodes$ is the output of $\mreconstruct$.

\begin{algorithm}[H]
\label{algo:mreconstruct}
   \caption{$\mreconstruct(\msketchout(x), \msketchout(y))$}
   \KwIn{A pair $\msketchout(x)$, $\msketchout(y)$ from a single run of $\msketch$, where $\msketchout(x)$ consists of $\fsketch(x)$ and $\gsketch(x)$.}
   \KwOut{Set of candidate costly edges for some pairs of substrings where $x$ and $y$ differ.}
   
   \vspace{1mm}
   \hrule\vspace{1mm}
 
   Call $\findstrings(\fsketch(x),\fsketch(y))$ to get  $(\claimedstring_v(x):v \in \stringnodes)$ and
$(\claimedstring_v(y):v \in \stringnodes)$ for some $\stringnodes \subseteq \BASE^{\le \tdepth}$. 


    Call $\findlocations(\gsketch(x),\gsketch(y),\stringnodes)$ to get $(\claimedlocation_v(x):v \in \claimednodes)$ and
$(\claimedlocation_v(y):v \in \claimednodes)$ for some $\claimednodes \subseteq \stringnodes$. 


    Let $\reportednodes=\{v \in \claimednodes;\;$ $v$ has no proper prefix in $\claimednodes\}$.

    \For{$v \in \reportednodes$}{

        Calculate the canonical alignment $\canon^+(\claimedstring_v(x),\claimedstring_v(y))$ of the string $\claimedstring_v(x)$ starting at location $\claimedlocation_v(x)$ and the string $\claimedstring_v(y)$ starting at location $\claimedlocation_v(y)$.

        Let $\claimededges_v = \costly{\canon^+(\claimedstring_v(x),\claimedstring_v(y)}$.

    }

    $\claimededges \leftarrow \bigcup_{w \in \reportednodes} \claimededges_v$.

   Output $\claimededges$.

\end{algorithm}

The main recovery function is given by $\edrecover$.
It receives two sketches for $x$ and $y$.
The sketch of $x$ contains Ostrovsky-Rabani fingerprint $\threshskout(x)$ of the whole $x$
and a sequence of sketches $\msketchout_1(x),\ldots,$ $\msketchout_{\numrep}(x)$. 
Similarly for the sketch of $y$.
If $\threshskout(x)$ differs from $\threshskout(y)$ then $x$ and $y$ are of edit distance more than $k$.
Otherwise we invoke procedure $\mreconstruct$ for each $\msketchout_{i}(x)$ and $\msketchout_{i}(y)$, $i=1,\dots,\numrep$.
Each call to  $\mreconstruct$ produces a collection of costly edges that should be part of the canonical alignment of $x$ and $y$.
Some of the edges might be misidentified by a given call to $\mreconstruct$ so as the final output of $\edrecover$ we output only edges that
$\mreconstruct$ outputs more than half of the time.

\begin{algorithm}[H]
\label{algo:edrecover}
   \caption{$\edrecover(\fullsketchout(x),\fullsketchout(y);k,n)$}
   \KwIn{Sketch of $x$: $\fullsketchout(x)=(\threshskout(x)$, $\msketchout_1(x),\ldots,$ $\msketchout_{\numrep}(x))$ and 
    sketch of $y$:  $\fullsketchout(y)=(\threshskout(y),$ $\msketchout_1(y),\ldots,\msketchout_{\numrep}(y))$.}
   \KwOut{Edit distance of $x$ and $y$ if it is smaller than $k$, and $\LRGE$ otherwise.}
   
   \vspace{1mm}
   \hrule\vspace{1mm}

   \lIf{$\threshskout(x) \neq \threshskout(y)$}{return $\LRGE$}

    \For{$i \in \{1,\ldots,\numrep\}$}{
       
        $\claimededges_i = \mreconstruct(\msketchout_i(x),\msketchout_i(y))$ 
    }

    Let $\outputset$ be the set of edges that appear in more than half of the sets $\claimededges_i$.

    \lIf{$|\outputset|>k$}{return $\LRGE$} 
    
    Return $\outputset$.

\end{algorithm}

\section{Analysis of the sketch-and-recover scheme}
\label{sec:analysis}

\subsection{Main result}
\label{subsec:main result}

An execution of $\fullsketch$ and $\edrecover$ on a pair of strings $x,y$ consists of evaluation of $\fullsketch(x)$ and $\fullsketch(y)$ and of $\edrecover(\fullsketch(x),\fullsketch(y))$ where all three evaluations use the same randomizing parameter $\rho$.  The execution of $\fullsketch(x)$, $\fullsketch(y)$ and $\edrecover(\fullsketch(x),\fullsketch(y))$ \emph{succeeds} provided that 
\begin{itemize}
    \item If $\ED(x,y) > k$ then $\edrecover$ outputs $\LRGE$, and
    \item If $\ED(x,y) \leq k$ then $\edrecover$ outputs $\costly{\canon^+(x,y)}$,
\end{itemize}
and fails otherwise.

In this section we prove the main result of the paper:

\begin{theorem}
\label{thm:main}
The pair of algorithms $\fullsketch$ and $\edrecover$ provide a sketch-and-recover scheme for the \critset{} for edit distance. 
Given sufficiently large size parameter $n$, $\gsize=\log n$, distance parameter $k\le n/40\orgap$ and alphabet $\Sigma_n$:
\begin{enumerate}
\item $\fullsketch$ runs on strings $x$ of
length at most $n$ and outputs a sketch $\fsketch(x)$ of length at most $O(k \orgap^2 \gsize ^{O(1)})$.
\item 
The running time of $\fullsketch$ is $\OO(|x| \cdot T(n))$ and the  running time of $\edrecover$ is $\OO(\min(n^2,k^3 \orgap^2))$, where $T(n)\ge n$ is the time to compute $\threshsk$ 
on inputs of length $n$. (Recall, $T(n)$ is polynomial in $n$.)
\item  For all $x$,$y$ of
length at most $n$, the execution of  $\fullsketch(x)$, $\fullsketch(y)$, and
$\edrecover($ $\fullsketch(x),$ $\fullsketch(y))$  fails with probability at most 1/3.
\end{enumerate}
\end{theorem}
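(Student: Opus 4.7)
The plan is to prove items (1)--(3) of the theorem in sequence: items (1) and (2) reduce to direct accounting against the bounds of Theorems~\ref{thm:HMR}, \ref{thm:BK}, \ref{thm:OR}, and \ref{thm:fingerprint}, while item (3) is the heart of the argument and is handled by a layer-by-layer probabilistic analysis. For the size bound, a single $\msketchout$ comprises $\tdepth$ grammar-level HMR sketches $\fsketch^j$ and $\tdepth\gsize$ location-level sketches $\gsketch^j$; Theorem~\ref{thm:HMR}.2 gives each a bit-size of $O(\stringcap_0 \log|\bigfield| \cdot \mathrm{polylog}(n))$, since the root capacity is at most $\stringcap_0$ and the underlying tree has polynomially many leaves. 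Substituting $\stringcap_0 = O(k\orgap^2 \gsize^{O(1)})$ and $\log|\bigfield|, \tdepth, \gsize = \mathrm{polylog}(n)$ yields $O(k\orgap^2\gsize^{O(1)})$ per $\msketchout$; multiplying by $\numrep$ and adding the $O(\log n)$ bits from $\threshskout$ preserves the bound. For running times I will use that $|\nonemptynodes|$ and $|\nonemptynodes^*|$ are polynomially bounded (each of the $O(\tdepth\gsize)$ levels holds at most $|x|$ nonempty fragments), so the $\bdecomp$, $\threshsk$, $\fingerprint$, and sparse-input HMR calls (Theorem~\ref{thm:HMR}.5) collectively run in $\mathrm{poly}(n)$ time; for $\edrecover$ the dominant cost is $\numrep$ invocations of $\mreconstruct$, each running $\hmrrecover$ at $O(\tdepth\gsize)$ levels followed by canonical alignments on the reported fragment pairs, with routine bookkeeping giving the stated $\OO(\min(n^2, k^3\orgap^2))$.

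For correctness, the central claim is that a single $\mreconstruct$ on independent sketches of $x,y$ with $\ED(x,y)\le k$ outputs exactly $\costly{\canon^+(x,y)}$ with probability bounded away from $1/2$. Fix a costly edge $e \in \canon(x,y)$ and let $v_i(e)$ denote its level-$i$ ancestor in $T(\BASE^{\tdepth})$. For each node $v$ at level $j$, define $B_v$ to be the event that $\bdecomp$ splits the fragment pair at $v$ incompatibly; Theorem~\ref{thm:BK}.2a gives $\Pr[B_v \mid \text{ancestors compatibly split}] \leq \bkfailure\cdot\ED(\ztree_v(x),\ztree_v(y))/\gramsize_{j+1}$. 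Summing over nodes at each level and applying Markov's inequality to the induced HMR load, the parameter inequality $\stringcap_i/\gapthresh_i = \Omega(\bkfailure\,\orgap\,\gsize^{12})$ forces the expected load per level-$i$ node to sit well below $\stringcap_i/(4\tdepth)$, so with high probability every node on the root-to-$v_i(e)$ path is underloaded in the sense of Section~\ref{subsec:HMR}. Conditioned on this, Theorem~\ref{thm:HMR}.1 recovers all grammar-bit mismatches along the path; the Ostrovsky-Rabani watermarks distinguish fragment pairs of edit distance $\geq \gapthresh_i$ from those $\leq \gapthresh_i/\orgap$, so grammar vectors are fully recovered and $\decode$ restores both fragments by the minimality clause of Theorem~\ref{thm:BK}; the Karp-Rabin watermarks in $\locationcondense$ similarly pin down each fragment's start position; and the $\reportednodes$-pruning keeps only the deepest valid ancestor, so Propositions~\ref{prop:canonical} and \ref{prop:costly} align local canonical edges with those of $\canon^+(x,y)$ with no double-counting. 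A Chernoff bound over $\numrep = 10\gsize+50$ independent runs combined with a union bound over the at-most-$k$ canonical edges drives the majority-vote failure below $1/\mathrm{poly}(n)$. For the complementary case $\ED(x,y) > k$, either $\ED \geq 20k\orgap$ and $\threshskout$ catches it with probability $1-1/n^4$ by Theorem~\ref{thm:OR}, or $k < \ED < 20k\orgap$ and the same per-edge recovery applied to any $k+1$ canonical costly edges forces $|\outputset| > k$, triggering the $\LRGE$ return.

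The principal obstacle will be the flooding step in (3): simultaneously certifying that (a) bad-split contributions at every level remain a $\gsize^{-O(1)}$ fraction of the HMR capacity $\stringcap_i$, (b) the watermark gap between edit distances $\gapthresh_i$ and $\gapthresh_i/\orgap$ is respected by the Ostrovsky-Rabani fingerprints with failure rate $1/\mathrm{polylog}(n)$, and (c) the grammar expansion factor $\dfactor$ never pushes the Hamming contribution from compatibly-split fragments past the usable portion of $\stringcap_i$. The parameter ratios fixed in Section~\ref{subsec:sketch code}---in particular $\stringcap_i = (512\bkfailure\,\orgap\,\gsize^{12})\gapthresh_i$ and $\gramsize_i = (32\bkfailure\,\gsize^6)\gapthresh_i$---are engineered for precisely this balance, and the constraint list in Section~\ref{sec:constaints} will be invoked to close each of these inequalities.
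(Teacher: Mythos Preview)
Your accounting for parts (1) and (2) is essentially the paper's argument and is fine.

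For part (3), there is a genuine gap. Your ``central claim'' that a single run of $\mreconstruct$ outputs \emph{exactly} $\costly{\canon^+(x,y)}$ with probability bounded away from $1/2$ is not established by the per-edge argument that follows, and is almost certainly false: a single run has $\Theta(1/\mathrm{polylog}\,n)$ probability of a bad split along the path to any given costly edge, so with $k$ essentially independent such paths, the chance that \emph{every} edge is handled correctly in one run decays with $k$. More importantly, your per-edge analysis only shows \emph{completeness}: after Chernoff and a union bound over the $\le k$ true costly edges, you get $\costly{\canon^+(x,y)}\subseteq\outputset$. You do not address \emph{soundness} at all---nothing in your argument rules out spurious annotated edges entering $\outputset$. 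Such edges arise whenever a node $v\in\reportednodes$ is \emph{incompatible} (because an ancestor was badly split); then $\claimededges_v=\costly{\canon^+_v}$ need not agree with $\costly{\canon^+(x,y)}\cap\edges^+_v$. Your sentence about ``$\reportednodes$-pruning keeps only the deepest valid ancestor'' is both backwards ($\reportednodes$ keeps the \emph{highest} node in $\claimednodes$, not the deepest) and insufficient, since the highest reported node on a given root-to-leaf path need not be compatible.

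The paper closes this gap by organizing the analysis around \emph{edge slices} rather than individual costly edges. For each of the $O(n)$ vertical or horizontal slices $S$, it shows (Main Lemma) that a single run satisfies $\claimededges\cap S=\costly{\canon^+(x,y)}\cap S$ with probability $1-O(1/\log n)$; this equality handles completeness and soundness simultaneously for that slice, because only nodes on the unique root-to-leaf path determined by $S$ can contribute to $\claimededges\cap S$ (Proposition~\ref{prop:agree on S}). Chernoff over the $\numrep=\Theta(\log n)$ independent runs then drives the per-slice majority-failure probability below $1/(10n)$, and a union bound over the $\le 2n$ slices finishes. The $\Theta(\log n)$ redundancy is needed precisely because the union bound is over $n$ slices, not $k$ edges. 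Along each slice's path the paper singles out the critical level $q$ (the largest with $\ED_{w(q)}<\gapthresh_q$) and proves that $w(q{+}1)$ is both compatible and fully reconstructed via a carefully defined load-upper-bound $\loadestimator^j$ (Propositions~\ref{prop:loadestimator} and \ref{prop:loadestimator 2}); your Markov-on-load sketch is pointing at this, but without the slice framing it cannot control spurious output.

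A minor additional omission: in the case $\ED(x,y)\le k$ you must also argue that $\threshskout(x)=\threshskout(y)$ with good probability (the paper bounds the failure by $1/20$ via Theorem~\ref{thm:OR}), otherwise $\edrecover$ returns $\LRGE$ prematurely.
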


As usual, the value 1/3 for failure probability is arbitrary, and can be replaced by any function  $\delta>0$ by repetition:
construct a sketch consisting of $r=O(\log(1/\delta))$ independent sketches of $x$ and of $y$, apply
the recovery algorithm to each pair of corresponding sketches, and take majority vote.

The bounds on the sketch size and running time are straightforward and given in Section~\ref{subsec:time and space}.

The most
significant part of the analysis is the upper bound on the failure probability. 
This proof is contained in Sections~\ref{subsec:Normal} to~\ref{subsec:LOC}.

\subsection{Time and space analysis}
\label{subsec:time and space}

In this section we prove parts 1 and 2 of Theorem~\ref{thm:main}.
For part 1 we bound the size of the output of $\fullsketch$.  The output
consists of $\numrep$ independent instances of the output of $\vectorcondense$ and $\locationcondense$.
Each call to $\vectorcondense$ produces $\tdepth$ sketches using $\hmrsketch$, each of bit-size $\OO(\orgap^2 k)$.
Each call to $\locationcondense$ produces $\tdepth \gsize-1$ sketches using $\hmrsketch$, each of bit-size $\OO(\orgap^2 k)$.
Hence, the total size of the sketch is $\OO(k \orgapvalue)$.

For part 2 we bound the time of $\fullsketch$ and $\edrecover$.
Let $m=|x|$ and $T(n)$ be the time to compute $\threshsk$ on strings of length at most $n$.
$\fullsketch$ makes one call to $\threshsk$ which takes time $T(n)$
and then makes $\numrep=\OO(1)$ independent calls to $\msketch$. 
$\msketch$ calls $\mdecomp$, $\vectorcondense$, and $\locationcondense$.

$\mdecomp$ constructs  $\ztree$, $\bvec$, $\tstree$, $\ztree^*$, $\lvec^*$, and $\fptree$.
Even though each of these is defined on trees which have (roughly) $n^{\tdepth}$ nodes, which is superpolynomial
in $n$, the time to compute them will be polynomial in $m$ as we use sparse representation for each labeling of the tree.
Sets $\nonemptynodes$ and $\nonemptynodes^*$ determine the nodes with non-empty labels in those trees.

Since $\ztree$ is a decomposition tree for $x$, each layer is a decomposition of $x$ and has at most $m$ nonempty strings, so  $|\nonemptynodes| \leq \tdepth m$.
Thus, the total cost of calls to $\bdecomp$ for strings on a single layer of $\ztree$ is $\OO(m)$. 
Hence, each layer of $\bvec$ contains in total $\OO(m)$ ones in the grammar bit-vectors. 
To construct $\tstree$, we apply $\threshsk$ to each nonempty $\ztree_v$.
This take total time $\OO(m \cdot T(n))$.

The decomposition tree $\ztree^*$ also uses a sparse representation and is explicitly defined
on the set $\nonemptynodes^*$ which consists of all nodes in the binary tree that are ancestors of
a node in $\nonemptynodes$.  
Since each node in $\nonemptynodes$ has at most $\tdepth\gsize$ ancestors
$|\nonemptynodes^*| = \OO(m)$.  
Hence, $\lvec^*_v$ can be computed in time $\OO(m)$.
We apply $\fingerprint$ to each node of $\ztree^*$ in $\nonemptynodes^*$.
We need $\OO(m)$ time to compute the fingerprints for nodes on a single layers of $\ztree^*$.
Thus the total time to compute $\ztree^*$, $\lvec^*$, and $\fptree$ is $\OO(m)$.

$\vectorcondense$ involves the computation of $\tdepth$ instances
of  $\hmrsketch$.  
The $j$-th instance is for $\bvec^j$ which is the $j$-th layer of $\bvec$.
Thus the sparse representation of $\bvec^j$ contains at most $\OO(m)$ entries
so the time to compute a single $\hmrsketch$ is $\OO(m + \orgapvalue k)$.

Similarly $\locationcondense$ involves the computation of $\tdepth\gsize-1$
instances of $\hmrsketch$.  
The $j$-th instance is for the function $\lvec^j$ which has at most $m$ non-zero entries 
so again the  time of $\hmrsketch$ is $\OO(m + \orgapvalue k)$.

Thus the overall  time of $\fullsketch$ is $\OO(m \cdot T(n) + m + \orgapvalue k)$.

Next we consider the time of $\edrecover$.  
We will argue first that its running time is at most $\OO(n^2)$.
The we point out how to make it faster.
$\findstrings$ is applied to each pair
$\fsketch^j(x),\fsketch^j(y)$ 
for $j \in \{1,\ldots,\tdepth\}$ and $\findlocations$ is applied to each pair $\gsketch^j(x),$ $\gsketch^j(y)$
for $j \in \{1,\ldots,\tdepth\gsize-1\}$.  

Each call to $\findstrings$ runs $\hmrrecover$ whose running time is $\OO(\orgapvalue k)$.  
This produces a set of at most $\OO(\orgapvalue k)$ mismatch pairs.  
This determines the set $\stringnodes$ of nodes, also of size
at most the number of mismatch pairs.
For each node $v \in \stringnodes$ we use the mismatch pairs to construct $\claimedbv^j_v$, which is
a proposed reconstruction of $\bvec_v$.  
We apply $\decode$ to each of these to get $\claimedstring_v$. 
The total time is within $\OO(n)$.

Similarly, each call to $\findlocations$ runs $\hmrrecover$.  
This produces a set of at most $\OO(\orgapvalue k)$ mismatch pairs.
For each node in $\stringnodes$ we calculate a sum of at most $\gsize \tdepth$ elements.
Hence, the total time is $\OO(n)$.

Last, for each node $v$ in $\stringnodes$ we have to compute the edit distance of $\claimedstring_v(x)$ and $\claimedstring_v(y)$
which takes time $O(|\claimedstring_v(x)| \cdot |\claimedstring_v(y)|)$.
By Corollary~\ref{cor:correctly reconstructed} (which is stated later) under {\em normal execution}, 
both $\claimedstring_v(x)$ and $\claimedstring_v(y)$ are real substrings of $x$ and $y$, respectively.
Thus the sum of lengths of $\claimedstring_v(x)$ at the same layer of the decomposition tree is at most $n$.
Thus the total time needed for the reconstruction is $\OO(n^2)$ under {\em normal execution}.
If the recovery algorithm runs for more time than $\OO(n^2)$ we can terminate it as the execution is {\em abnormal} and the result might be incorrect.
That happens with only small probability as shown later.

To obtain a faster running time for the recovery, observe that one does not have to expand the grammars to compute their canonical alignment.
Using the result of~\cite{ED_compressed_string_Soda22} one can compute the set of costly edges of the canonical alignment
for two strings represented by grammars of size at most $g$ in time $\OO(g+k'^2)$ where $k'$ is the edit distance of the two strings.
The total size of grammars we will recover is $\OO(k \orgap^2)$, their number is also at most $\OO(k \orgap^2)$, and the edit distance we care for is $k$.
Hence running the edit distance algorithm on all recovered pairs of grammars costs at most $\OO(k^3 \orgap^2)$.

\subsection{Upper bound on the failure probability}
\label{subsec:success}
Having proved parts 1 and 2 of Theorem~\ref{thm:main}, we now turn to the proof of Part 3.  In this section we state Theorem~\ref{thm:outputset} and use it to prove
Part 3 of Theorem~\ref{thm:main}.  The proof of Theorem~\ref{thm:outputset} will be divided into several subsections. 

We fix strings $x$ and $y$ and consider an execution of $\fullsketch(x)$, $\fullsketch(y)$ and $\edrecover(\fullsketch(x),\fullsketch(y))$.  We want to show that the probability of failure is at most
1/3.

Recall that the algorithm $\edrecover$ first compares $\threshskout(x)$ and $\threshskout(y)$ and outputs $\LRGE$ if they are unequal.  If they are equal then
it executes $\edrecover$ to produce the set $\outputset$ of annotated edges and outputs $\LRGE$ if
$|\outputset|>k$ and outputs $\outputset$ if $|\outputset|\leq k$.

It simplifies the discussion to modify
$\edrecover$ so that it starts by computing $\outputset$ and only then compares $\threshskout(x)$ to $ \threshskout(y)$ and outputs $\LRGE$ 
if the fingerprints differ and outputs $\outputset$ otherwise.
This modified algorithm clearly has the same output behavior as $\edrecover$ but has the analytic advantage that the  (random) value of $\outputset$ is independent of the event $\threshskout(x) = \threshskout(y)$.

We will prove:

\begin{theorem}
\label{thm:outputset}
If $\ED(x,y) < 20k \orgap$  then
$\prob[\outputset \neq \costly{\canon^+(x,y)}]\leq 1/4$.
\end{theorem}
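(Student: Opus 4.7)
The plan is to reduce Theorem~\ref{thm:outputset} to a per-run, per-edge correctness statement and then boost it via Lemma~\ref{lemma:CH} over the $\numrep=10\gsize+50$ independent runs of $\msketch$. Concretely, the core lemma I would prove is: for every candidate annotated edge $e^+$ in $\grid(x,y)$, the probability (over a single run of $\msketch$) that $\mreconstruct$ emits $e^+$ lies within $1/8$ of the indicator $[e^+\in\costly{\canon^+(x,y)}]$. Given this, Lemma~\ref{lemma:CH} with $\delta=3/8$ bounds the failure of the majority vote defining $\outputset$ on any fixed $e^+$ by $2e^{-9\numrep/32}=n^{-\Omega(1)}$, and a union bound over the $O(n^2)$ candidate annotated edges (each edge of $\grid(x,y)$ has a unique consistent annotation given $x$ and $y$) yields the claimed $1/4$.

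For the \emph{correctness} direction, fix $e^+\in\costly{\canon^+(x,y)}$. Since $\ED(x,y)<\gapthresh_0$ by hypothesis, I would choose the deepest level $j^\star\in\{0,\dots,\tdepth-1\}$ at which the ancestor $v^\star$ of $e$'s location satisfies $\ED(\ztree_{v^\star}(x),\ztree_{v^\star}(y))\ge \gapthresh_{j^\star+1}$ (i.e.\ the last level where the local edit distance is still "at scale"). The goal is to show that with probability at least $7/8$ the following four events occur simultaneously:
\begin{enumerate}
\item[(i)] every split produced by $\bdecomp$ along the root-to-$v^\star$ path is compatible; by Theorem~\ref{thm:BK}~Part~\ref{BK2a} the level-$(j+1)$ split fails with probability at most $\bkfailure\cdot\ED/\gramsize_{j+1}$, and the sum telescopes comfortably via the slack $\gramsize_j/\gapthresh_j=\Theta(\bkfailure\log^6 n)$;
\item[(ii)] $v^\star$ and all its ancestors are $4(\tdepth+1)$-underloaded in the hierarchical mismatch sketch of $\bvec^{j^\star}$: a Markov bound on the expected Hamming load, using Theorem~\ref{thm:BK}~Part~\ref{BK2b} to control the direct $\dfactor$-blowup of genuine edit operations and absorbing the "mismatch floods" from bad splits deeper in the subtree of $v^\star$ against the slack $\stringcap_i/\gramsize_i=\Theta(\orgap\log^6 n)$;
\item[(iii)] the Ostrovsky-Rabani watermark $\tstree_{v^\star}$ differs between $x$ and $y$, which by Theorem~\ref{thm:OR} holds with probability $\ge 1-n^{-4}$ since the fragment edit distance at $v^\star$ is $\ge \gapthresh_{j^\star}/2$;
\item[(iv)] the starting position of $v^\star$ is recovered exactly by $\findlocations$, following from a parallel underload analysis on $\locationcondense$ over the binary tree $T(\{0,1\}^{\tdepth\gsize})$ together with Theorem~\ref{thm:fingerprint} certifying that Karp-Rabin fingerprints disagree at every binary ancestor of $v^\star$.
\end{enumerate}
On the intersection of these events, Theorem~\ref{thm:HMR} fully recovers the mismatches of $\bvec^{j^\star}(x)$ and $\bvec^{j^\star}(y)$ at $v^\star$; because each grammar bit is watermarked by $\tstree_{v^\star}$, the triples returned by $\hmrrecover$ separate the two grammars cleanly; the minimality property of $\bdecomp$-grammars then forces $\decode$ inside $\findstrings$ to return the true fragments $\ztree_{v^\star}(x)$ and $\ztree_{v^\star}(y)$; and Proposition~\ref{prop:canonical} applied to the $(x,y)$-compatible box at $v^\star$ ensures the canonical alignment computed inside $\mreconstruct$ coincides with $\canon^+(x,y)$ restricted to that box, and in particular contains $e^+$.

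For the \emph{soundness} direction, any $e^+\notin\costly{\canon^+(x,y)}$ appearing in $\claimededges$ must originate from some $v\in\reportednodes$ whose $\claimedstring_v$ or $\claimedlocation_v$ deviates from the truth. I would rule this out up to $n^{-\Omega(1)}$ probability by combining: (a) the Soundness clause of Theorem~\ref{thm:HMR} applied to both $\vectorcondense$ and $\locationcondense$, which only outputs consistent mismatch triples; (b) the minimality property of $\bdecomp$-grammars, under which $\decode$ accepts only the unique grammar that $\bdecomp$ would produce; and (c) the collision bounds of Theorems~\ref{thm:fingerprint} and~\ref{thm:OR}, which prevent an incorrectly reconstructed fragment or left-size from slipping through the watermark check. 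Once $\claimedstring_v$ and $\claimedlocation_v$ equal the true fragments and true starting positions, Proposition~\ref{prop:canonical} yields $\claimededges_v\subseteq\costly{\canon^+(x,y)}$, so the spurious $e^+$ cannot appear.

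The main obstacle is step~(ii) of the correctness direction: bounding the load at $v^\star$ by $\stringcap_{j^\star}/(4(\tdepth+1))$. The load receives two species of contributions — the direct $\dfactor$-blowup of genuine edit operations inside the subtree of $v^\star$, and, more dangerously, the "mismatch floods" caused by bad splits at \emph{arbitrary} descendants of $v^\star$, each of which can dump a full $\OO(\gramsize_i)$ worth of ones into the level-$j^\star$ grammar vector. The multiplicative slack $\critpar=\Theta(\bkfailure\,\orgap\log^6 n)$ between $\stringcap_i$ and $\gapthresh_i$ was chosen precisely so that a single Markov argument absorbs both sources; disentangling the randomness used by $\bdecomp$ across different tree levels from the randomness used inside $\hmrsketch$, so that the expected flood per edit operation can be charged per level as sketched in the intuition of Section~\ref{sec:auxiliary}, will be the delicate part of the analysis.
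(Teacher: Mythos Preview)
Your overall strategy---prove a per-run correctness lemma and boost via Chernoff over the $\numrep$ independent runs---matches the paper's. The paper's version (Lemma~\ref{lemma:mreconstruct}) is stated per \emph{slice} rather than per edge, but since every grid edge lies in some slice and each slice meets $\costly{\canon(x,y)}$ in at most one edge (Proposition~\ref{prop:exactly one edge}), the two formulations are essentially interchangeable; your $O(n^2)$ union bound is slacker than the paper's $O(n)$ but still absorbed by the $\numrep=10\gsize+50$ repetitions.

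There is, however, a genuine gap in your soundness argument. You write that once $\claimedstring_v$ and $\claimedlocation_v$ equal the true fragments and starting positions, Proposition~\ref{prop:canonical} gives $\claimededges_v\subseteq\costly{\canon^+(x,y)}$. But that proposition requires the box $\interval_v(x)\times\interval_v(y)$ to be $(x,y)$-\emph{compatible}, and correct reconstruction does not guarantee this. If some ancestor of $v$ suffered a bad split, $v$ can be correctly reconstructed---its fragments and locations faithfully recovered by $\findstrings$ and $\findlocations$---yet incompatible; then $\claimededges_v=\costly{\canon^+_v}$ is the canonical alignment of a sub-box that does \emph{not} lie on $\canon(x,y)$ and may contain edges outside $\costly{\canon^+(x,y)}$ (compare Corollary~\ref{cor:correctly reconstructed}, where part~2 explicitly adds the compatibility hypothesis). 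The paper handles this by folding the compatibility analysis---your item~(i)---into \emph{both} directions at once: Corollary~\ref{cor:S correct} shows that if the least $j$ with $w(j)\in\claimednodes$ has $w(0),\dots,w(j)$ all compatible, the entire slice is correctly reconstructed, covering completeness and soundness simultaneously. Your soundness argument must likewise carry the bad-split bound along the path; it cannot be discharged by reconstruction correctness and fingerprint collisions alone.

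A smaller issue: in item~(iii) you place $v^\star$ at level $j^\star$ with $\ED_{v^\star}\ge\gapthresh_{j^\star+1}=\gapthresh_{j^\star}/2$, but the OR-fingerprint applied at level $j^\star$ has threshold $\gapthresh_{j^\star}$, so Theorem~\ref{thm:OR} only guarantees $\tstree_{v^\star}(x)\neq\tstree_{v^\star}(y)$ when $\ED_{v^\star}\ge\gapthresh_{j^\star}$. The paper's indexing avoids this: it takes $q$ to be the \emph{largest} level with $\ED_{w(q)}<\gapthresh_q$ and then recovers at level $q{+}1$, where $\ED_{w(q+1)}\ge\gapthresh_{q+1}$ matches the level-$(q{+}1)$ fingerprint threshold exactly.
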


Using this theorem, it is easy to bound the failure probability of $\edrecover$ to complete the proof of the third part of
Theorem~\ref{thm:main}.

We divide into three cases depending on $\ED(x,y)$.

{\bf Case (1):} $\ED(x,y) > 20k\orgap$. The algorithm correctly outputs $\LRGE$ 
unless $\threshsk(x) = \threshsk(y)$ and by 
Theorem~\ref{thm:OR}, $\threshsk(x)=\threshsk(y)$ happens with probability at most $1/n^4$. 

{\bf Case (2):}$20k \orgap > \ED(x,y) > k$. The algorithm correctly
outputs $\LRGE$ if
$\threshsk(x) \neq \threshsk(y)$  or if $\outputset=\costly{\canon^+(x,y)}$ (since $|\costly{\canon^+(x,y)}|=\ED(x,y)$).  
So the output will be incorrect only if $\outputset \neq \costly{\canon^+(x,y)}$.
By Theorem~\ref{thm:outputset} this happens with probability at most $1/4$. 

{\bf Case (3):}$k \geq \ED(x,y)$. Success requires the algorithm to output $\costly{\canon^+(x,y)}$.  
This will  fail only if $\threshsk(x) \neq \threshsk(y)$ or  $\outputset \neq \costly{\canon^+(x,y)}$.
By Theorem~\ref{thm:OR}, $\prob[\threshsk(x) \neq \threshsk(y)] \leq \frac{k \orgap}{20k \orgap} =\frac{1}{20} $ and by Theorem~\ref{thm:outputset}, $\prob[\outputset \neq \costly{\canon^+(x,y)}] \leq \frac{1}{4}$ so the overall error probability is less than 1/3..

It remains to prove Theorem~\ref{thm:outputset} which
is done in the sections that follow.  Since the hypothesis of
Theorem~\ref{thm:outputset} is that $\ED(x,y) < 20k\orgap$ this assumption is
made throughout.  







\subsection{Normal executions}
\label{subsec:Normal}

In this section we identify certain events in the execution of $\fullsketch$ and $\edrecover$ as \emph{abnormal} and prove
that  the probability of an abnormal execution tends to 0 as $n$ gets large. 

On input $x$, $\fullsketch$ constructs a sequence of sketches $\msketchout_i(x)$ for $i \in \{1,\ldots, \numrep\}$.

For each corresponding pair  $\msketchout_i(x)$ and $\msketchout_i(y)$, the recovery algorithm applies $\mreconstruct(\msketchout_i(x),\msketchout_i(y))$.  If the output is defined then it is a set $\claimededges_i$ of annotated edges. The set $\claimededges_i$ is supposed to ``approximate'' $\costly{\canon(x,y)}$ in a suitable sense.  The final output $\outputset$ of $\edrecover$ is the set of annotated edges
that appear in more than half the sets of $\claimededges_i$.

Each  run of $\msketch(x)$ constructs a labeling $\ztree(x)$ of the tree $T(\BASE^{\tdepth})$ by substrings and another labeling $\bvec(x)$ of the same tree where for each $v$, $\bvec_v$ is a bit-vector of length $\bvsize$ that encodes a grammar for $\bvec(x)$. By the properties of
$\bdecomp$ these
labelings satisfy:

\begin{description}
\item[\DONE(x):] $\ztree(x)$ is a decomposition tree for $x$
\item[\DTWO(x):] For each node $v$ for which $\ztree_v(x) \neq \emptystr$,
 $\bvec_v(x)$ is an encoding of $\ztree_v(x)$, i.e. $\ztree_v(x)=\decode(\bvec_v(x))$,
and the number of 1's in $\bvec_v(x)$ is at most $\gramsize_v$
\end{description}

We now identify four \emph{abnormal events}.  We will prove that it is very unlikely
that any of these conditions occur.

\begin{description}
\item[Fingerprinting abnormality.] There is a node $v$, such that $\ztree^*_v(x) \neq \ztree^*_v(y)$ and $\fptree_v(x) = \fptree_v(y)$.
\item[Threshold detection abnormality.] There is a  node $v$ at a level $j$, such that $\ED(\ztree_v(x),$ $\ztree_v(y)) \geq \gapthresh_j$ and $\tstree_v(x) = \tstree_v(y)$.
\item[HMR abnormality.]  One of the calls to $\hmrrecover$ within the recovery algorithm fails to satisfy
the Completeness and Soundness conditions (see Theorem~\ref{thm:HMR}).
\end{description}

An execution of $\fullsketch(x)$, $\fullsketch(y)$, and $\edrecover(\fullsketch(x),\fullsketch(y))$ is \emph{abnormal} if any of these abnormalities ever occurs and is \emph{normal} if no such abnormality occurs.  We define $\normex$ to denote the event that the execution is normal, and $\abnormex$ to denote the event that the execution is abnormal.

\begin{lemma}
\label{lemma:no abnormality 1}
For any $x,y$, $\prob[\abnormex]$ tends to 0 as $n$ gets large.
\end{lemma}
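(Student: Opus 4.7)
The plan is a straightforward union bound over all three kinds of abnormality. For each kind I will identify the set of nodes or calls at which the event could fire, bound its cardinality polynomially in $n$, and then use the $1/n^4$ failure guarantees built into each primitive to bound the probability of any individual instance.

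The first step is to bound the number of relevant locations per single run of $\msketch$. Since each level of $\ztree$ is itself a decomposition of $x$, each level has at most $|x|$ nonempty fragments, so $|\nonemptynodes| \leq \tdepth \cdot n = \OO(n)$. Each node of $\nonemptynodes$ contributes at most $\tdepth\gsize$ binary ancestors to $\nonemptynodes^*$, so $|\nonemptynodes^*| = \OO(n)$ as well. A fingerprinting abnormality can only fire at a node in $\nonemptynodes^*(x) \cup \nonemptynodes^*(y)$, since elsewhere both labels $\ztree^*_v(x),\ztree^*_v(y)$ are empty; by Theorem~\ref{thm:fingerprint}, each distinct pair of labels collides under $\fingerprint$ with probability at most $1/n^4$. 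A threshold detection abnormality can only fire at a node in $\nonemptynodes(x) \cup \nonemptynodes(y)$, and by Theorem~\ref{thm:OR}, each pair of labels at edit distance $\geq \gapthresh_j$ collides under $\threshsk(\cdot;\gapthresh_j)$ with probability at most $1/n^4$. An HMR abnormality is a failure of one of the $\tdepth + \tdepth\gsize = \OO(1)$ calls to $\hmrsketch$ made inside $\vectorcondense$ and $\locationcondense$, each invoked with failure parameter $1/n^4$ and hence failing with probability at most $1/n^4$ by Theorem~\ref{thm:HMR}.

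Summing each of the three contributions across the $\numrep = \OO(1)$ independent runs of $\msketch$ gives an overall bound of $\OO(n)/n^4 = 1/n^{3-o(1)}$, which tends to $0$ as $n$ grows. There is no real obstacle in the argument; the only point requiring minor care is verifying that the supports $\nonemptynodes$ and $\nonemptynodes^*$ remain polynomially bounded in $n$ despite the underlying trees $T(\BASE^{\tdepth})$ and $T(\{0,1\}^{\tdepth\gsize})$ being super-polynomially large, which is precisely why the algorithm manipulates these objects via sparse representations in the first place.
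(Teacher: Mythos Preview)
Your proposal is correct and follows essentially the same approach as the paper: a union bound over the three listed abnormality types, using the $\OO(n)$ bound on $|\nonemptynodes|$ and $|\nonemptynodes^*|$, the $1/n^4$ collision guarantees of Theorems~\ref{thm:fingerprint} and~\ref{thm:OR}, the $1/n^4$ failure parameter fed to $\hmrsketch$, and the $\OO(1)$ factor from the $\numrep$ repetitions. The paper's proof additionally includes a paragraph bounding a ``decomposition abnormality'' (failure of \DONE{} or \DTWO{}), but this is arguably redundant given the modification to $\bdecomp$ that replaces failure by a trivial decomposition; your omission of it is not a gap.
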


\begin{proof}
First consider decomposition abnormality for $x$.  
The decomposition tree is built in $\mdecomp$, which grows a decomposition starting at the root by calling $\bdecomp$ on each node for which $\ztree_v$ is nonempty.  
\DONE{} or \DTWO{} fail only if some of the calls to $\bdecomp$ returns $\undefnd$.
By Theorem~\ref{thm:BK} the probability that a particular call to $\bdecomp(z)$ returns $\undefnd$ is at most $1/n^4$.  
In a given call to $\mdecomp(x)$, the number of nonempty strings at each level of the decomposition is at most $|x| \leq n$ 
so the number of calls made to $\bdecomp(z)$ for nonempty $z$ is at most $\tdepth n$.  
So the probability that there is a call to $\bdecomp$ that returns $\undefnd$ is at most $\tdepth /n^3$.  
If every call to $\bdecomp$ is defined then a simple induction using the properties of $\bdecomp$ ensures that the resulting tree satisfies \DONE{} and \DTWO{}.  
Taking a union bound over the $\numrep$ calls to $\mdecomp$ still gives a probability of abnormality that is $\le 1/n$ for $n$ large enough.
Similarly for $y$.

The probability that a fingerprinting abnormality occurs at a particular node $v \in \nonemptynodes^*$ is at most $1/n^4$ by Theorem~\ref{thm:fingerprint}.   
Similarly, the probability that a  threshold detection abnormality occurs at $v\in \nonemptynodes$ is at most $1/n^4$ by Theorem~\ref{thm:OR}.
Taking a union bound over the at most $\gsize \tdepth n$ nodes in $\nonemptynodes$ and $\nonemptynodes^*$, and 
the $\numrep$ calls to $\msketch$ yields a probability of either of these abnormalities that is at most $1/n$ for $n$ large enough.

An HMR abnormality occurs if one of the calls to $\hmrrecover$ fails. 
(A failure of $\hmrrecover$ is determined by the randomness used to build the sketches of $x$ and $y$).  
We set the failure parameter of the hierarchical mismatch recovery scheme to be $1/n^4$ and 
the number of its instances is $(\gsize \tdepth)^{O(1)}$ so the overall probability of failure
tends to zero as $n$ grows. 
\end{proof}

\subsection{Reduction to the main lemma}
\label{subsec:Main lemma}
In this section we introduce the Main Lemma (Lemma~\ref{lemma:mreconstruct}) and use it to prove Theorem~\ref{thm:outputset}.

We need to define some subsets of the 
and edge sets of $\grid(x,y)$. 

The \emph{$i$-th vertical edge slice} $S_V(i)$ of the grid graph
$\grid(x,y)$ consists of all edges that join a vertex with first coordinate $i-1$ to a vertex with first coordinate $i$. The \emph{$j$-th horizontal edge slice} $S_H(j)$  consists of all edges that join a vertex with second coordinate $j-1$ to a vertex with second coordinate $j$. 

Note that a vertical edge slice consists of horizontal and diagonal edges but not vertical edges, and a horizontal edge slice consists of vertical and diagonal edges but not horizontal edges. The union of all of the slices is the entire edge set of $\grid(x,y)$.  
We leave the proof of the following observation to the reader.

\begin{proposition}
    \label{prop:exactly one edge}
    Any spanning path of $\grid(x,y)$ contains exactly one edge from each horizontal slice and
    from each vertical slice.  In particular, 
    $\costly{\canon(x,y)}$ contains at most one edge from any slice.
\end{proposition}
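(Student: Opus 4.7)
The plan is to prove the first sentence via a monotonicity argument on coordinates, and then derive the ``in particular'' clause as an immediate corollary from the fact that $\canon(x,y)$ is itself a spanning path.

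First, I would observe the following basic fact about edges of $\grid$: a horizontal edge $\gpoint{i}{j}\to\gpoint{i+1}{j}$ increases the first coordinate by exactly one and leaves the second coordinate unchanged; a vertical edge $\gpoint{i}{j}\to\gpoint{i}{j+1}$ does the opposite; and a diagonal edge $\gpoint{i}{j}\to\gpoint{i+1}{j+1}$ increases both coordinates by one. In particular, along any directed path in $\grid$ both coordinates are monotonically non-decreasing, and each edge changes the first coordinate by $0$ or $+1$, and similarly for the second coordinate.

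Next, for a spanning path $P$ of $\grid(x,y)$, which starts at $\gpoint{0}{0}$ and ends at $\gpoint{|x|}{|y|}$, I would look at the subsequence of edges of $P$ whose first-coordinate increment is $+1$ (i.e.\ horizontal and diagonal edges). By telescoping, the total first-coordinate increment along $P$ equals $|x|$, and since each such edge contributes exactly $1$, there are exactly $|x|$ of them. Moreover, because the first coordinate is monotone non-decreasing and increases only by steps of $+1$, for each $i \in \{1,\ldots,|x|\}$ the path crosses from first-coordinate $i-1$ to first-coordinate $i$ exactly once. The edge that performs this crossing is, by definition, the unique edge of $P$ in $S_V(i)$. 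The argument for horizontal slices is completely symmetric, using the second coordinate.

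Finally, for the ``in particular'' part, recall that $\canon(x,y)$ is (by definition) a spanning path of $\grid(x,y)$, so the first part of the proposition gives that $\canon(x,y)$ contains exactly one edge in each slice. Since $\costly{\canon(x,y)}$ is a subset of the edge set of $\canon(x,y)$, it contains at most one edge from any slice, as claimed.

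I do not expect any substantial obstacle here: the argument is essentially a direct unwinding of the definitions together with the observation that the coordinates are monotone along directed paths of $\grid$. The only care needed is to handle all three edge types uniformly when tracking coordinate increments, and to note that the vertical slice $S_V(i)$ excludes vertical edges (consistent with the fact that vertical edges do not change the first coordinate, hence cannot contribute to the $|x|$ required unit increments).
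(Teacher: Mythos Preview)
Your argument is correct; it is the standard monotonicity/telescoping count on coordinate increments and handles all three edge types cleanly. The paper itself leaves this proposition to the reader, so there is no proof to compare against, and what you wrote is exactly the kind of verification the authors intend.
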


The sketch-and-recover scheme $\fullsketch + \edrecover$ operates by performing $\numrep$ independent executions
of $\msketch+\mreconstruct$ and taking majority vote.  The Main Lemma below states the key property of a single instance
of $\msketch+\mreconstruct$, and Theorem~\ref{thm:outputset} follows easily from it.  

\begin{lemma}
\label{lemma:mreconstruct}{(\textbf{Main Lemma})}
Let $x,y$ be strings such that $\ED(x,y) < 20k\orgap$. 
Consider a single run of $\msketch$ on $x$ and $y$ producing $\msketchout(x)$ and $\msketchout(y)$, followed by
a run of $\mreconstruct$ on $\msketchout(x),\msketchout(y)$, which produces the set $\claimededges$ of annotated edges. 
For any   (horizontal or vertical) edge slice $S$
\[
\prob[(\claimededges \cap S \neq \costly{\canon(x,y)} \cap S) \AND \normex]
\]
tends to 0 as $n$ gets large.
\end{lemma}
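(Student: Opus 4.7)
}
Fix a vertical slice $S = S_V(i)$; the horizontal case is symmetric. By Proposition~\ref{prop:exactly one edge}, $\costly{\canon(x,y)} \cap S$ consists of at most one edge $e^*$, so the goal reduces to recovering $e^*$ (or confirming its absence) while introducing no spurious edges in $S$. The plan is to locate a single ``good ancestor'' in the decomposition tree $T(\BASE^{\tdepth})$ from whose recovered grammar and location the edge on slice $S$ is correctly reconstructed by $\mreconstruct$. For each level $\ell$, let $v_\ell \in \BASE^\ell$ denote the node whose location interval $\zint{\ztree(x)}_{v_\ell}$ contains position $i$, with ties broken consistently. I will show that, conditioned on $\normex$, with probability $1 - o(1)$ there exists a level $\ell^*$ at which three events hold jointly: \textbf{(COMPAT)} the box $\zint{\ztree(x)}_{v_{\ell^*}} \times \zint{\ztree(y)}_{v_{\ell^*}}$ is $(x,y)$-compatible; \textbf{(STR)} both grammars $\bvec_{v_{\ell^*}}(x)$ and $\bvec_{v_{\ell^*}}(y)$ are fully recovered by $\findstrings$; and \textbf{(LOC)} the starting positions of $v_{\ell^*}$ in $x$ and $y$ are correctly recovered by $\findlocations$. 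Whenever these three hold, $v_{\ell^*}\in\claimednodes$, so either $v_{\ell^*}$ or some ancestor of it lies in $\reportednodes$, and Propositions~\ref{prop:subgrid} and~\ref{prop:canonical} yield that the restriction of the reconstructed canonical alignment to slice $S$ agrees with $\costly{\canon(x,y)}\cap S$. Soundness on $S$ (no spurious edges) follows because, under $\normex$, any reported fragment pair whose grammars are recovered must come from an ancestor sharing the watermark pattern, which by Proposition~\ref{prop:canonical} cannot introduce an extra costly edge on $S$ beyond those forced by the true canonical alignment.

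The first step is to propagate COMPAT down to level $\ell^*$. Theorem~\ref{thm:BK}(\ref{BK2a}) gives that the split at $v_\ell$ is compatible with probability at least $1 - \bkfailure\cdot\ED(\ztree_{v_\ell}(x),\ztree_{v_\ell}(y))/\gramsize_{\ell+1}$. When all ancestors are compatibly split, the fragment edit distances along the sibling set at each level sum to $\ED(x,y) < 20k\orgap$, while $\gramsize_\ell$ and $\gapthresh_\ell$ both decay by a factor of two per level with $\gramsize_0 = (32\bkfailure\log^6 n)\gapthresh_0$ and $\gapthresh_0 \geq 20k\orgap$; a union bound along the $\tdepth$-step path therefore gives COMPAT with probability $1 - O(1/\log^2 n)$. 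Choose $\ell^*$ to be the deepest level on the path at which $\ED(\ztree_{v_{\ell^*}}(x),\ztree_{v_{\ell^*}}(y)) \geq \gapthresh_{\ell^*}$; such a level exists if $e^*$ exists (since the edit distance at the root is at least $1$ and thresholds halve down to $O(1)$), and the threshold-detection portion of $\normex$ then ensures $\tstree_{v_{\ell^*}}(x) \neq \tstree_{v_{\ell^*}}(y)$, so every $1$-bit of the grammar appears as a nonzero coordinate of $\bvec^{\ell^*}$ for both $x$ and $y$. If no $e^*$ exists, then no grammar mismatch is generated on the path and the slice is trivially recovered as empty.

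The second step is STR: by Theorem~\ref{thm:HMR}, $\hmrrecover$ delivers every mismatch coordinate whose path to the root in $T(\BASE^{\ell^*}\times\{1,\dots,\bvsize\})$ is $\loadpar$-underloaded. Using Theorem~\ref{thm:BK}(\ref{BK2b}), the Hamming load contributed by descendants of any ancestor $u$ at level $j$ is bounded by $\dfactor \cdot \sum_{w\subset u} \ED(\ztree_w(x),\ztree_w(y))$ times the watermark length, plus a contribution from bad splits that have cascaded into the subtree of $u$. The parameter choice $\stringcap_j = (512\bkfailure\,\orgap \log^{12} n)\gapthresh_j$ dominates $\loadpar\cdot\dfactor\cdot\gramsize_j\cdot\orgap$ by poly-logarithmic factors, so a Markov argument on each ancestor combined with a union bound over the $\tdepth$ ancestors yields $\loadpar$-underload throughout the path with probability $1 - o(1)$; this delivers both grammars in their entirety, so $\decode$ reconstructs the fragments. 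Step three, LOC, is analogous but simpler: on the binary location tree, the Karp-Rabin fingerprint $\fptree$ watermarks each $\lvec^*$ entry, so whenever $\ztree^*_u(x) \neq \ztree^*_u(y)$ (which holds for every ancestor of $v_{\ell^*}$ under $\normex$), the corresponding entry of $\lvec^j$ is mismatched, and the same capacity-based HMR analysis recovers all left-child sizes along the path to the root, pinning down the start positions in both strings.

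The hard part is Step~2: one must simultaneously control (a) grammar mismatches from true edit operations in compatibly split regions, which scale with edit distance times the watermark magnitude, (b) a potentially enormous flood of garbage mismatches cascading from any single bad split higher in the tree, and (c) the layered watermark expansion multiplying each grammar bit by the Ostrovsky-Rabani fingerprint. The delicate accounting is to show that the expected contribution from (b) into any given ancestor is at most $O(\bkfailure\cdot\dfactor \cdot \ED(x,y) \cdot \gapthresh_j / \gramsize_j)$, small compared to $\stringcap_j / \loadpar$. This is precisely what motivates the factor of $\orgap^2$ in the definition of $\stringcap_0$ and the extra $\log^{12}n$ safety margin; verifying the inequality uniformly over all $\tdepth$ levels of ancestors is where the chain of parameter constraints summarized in Section~\ref{subsec:sketch code} is used in full.
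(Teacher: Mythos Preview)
Your overall architecture—trace the path $v_0,\dots,v_{\tdepth}$ containing slice $S$, find a ``good'' level where the fragment is compatibly placed, fully recovered, and correctly located—matches the paper's approach. But your choice of $\ell^*$ and your soundness argument both have genuine gaps.

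\textbf{The choice of $\ell^*$ is wrong.} You take $\ell^*$ to be the \emph{deepest} level with $\ED_{v_{\ell^*}}\geq \gapthresh_{\ell^*}$. But once the threshold is crossed, it typically stays crossed: if $q$ is the last level with $\ED_{v_q}<\gapthresh_q$, then for every $j>q$ one has $\ED_{v_j}\geq \gapthresh_j$, so your $\ell^*$ is simply $\tdepth$. Now your COMPAT step tries to union-bound the bad-split probability $\bkfailure\cdot\ED_{v_\ell}/\gramsize_{\ell+1}$ along the whole path down to $\ell^*=\tdepth$. At levels $\ell>q$ you only know $\ED_{v_\ell}<\ED(x,y)<\gapthresh_0=2^{\ell}\gapthresh_\ell$, so the ratio $\ED_{v_\ell}/\gramsize_{\ell+1}$ can be of order $2^{\ell}/\mathrm{polylog}(n)$, which is not $o(1)$. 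The same problem hits your STR step: the load analysis (the paper's Proposition~\ref{prop:loadestimator 2}) requires $\ED_{v_i}<\gapthresh_i$ at the ancestors in order to bound the expected load by $\critpar\,\ED_{v_i}(j-i+2)$, and this hypothesis fails for $i>q$. The paper instead targets level $q+1$, the \emph{first} level after the last one with $\ED<\gapthresh$; then every ancestor $v_0,\dots,v_q$ satisfies $\ED_{v_i}<\gapthresh_i$ along the relevant stretch, which is exactly what makes both the compatibility chain (Proposition~\ref{prop:U(j)}) and the Markov bound on the HMR load go through.

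\textbf{The soundness step is incomplete.} You observe that if $v_{\ell^*}\in\claimednodes$ then some ancestor lies in $\reportednodes$, and then invoke Proposition~\ref{prop:canonical}. But Proposition~\ref{prop:canonical} applies only when the reported ancestor's box is $(x,y)$-compatible; if the topmost $v_m\in\claimednodes$ on the path is not compatible, the canonical alignment of the recovered pair $(\claimedstring_{v_m}(x),\claimedstring_{v_m}(y))$ need not agree with $\canon(x,y)$ on slice $S$ at all. Your sentence about ``sharing the watermark pattern'' does not address this. The paper closes this gap in Corollary~\ref{cor:S correct}(2): it shows that with high probability \emph{all} of $w(0),\dots,w(q+1)$ are compatible, so whichever of them is topmost in $\claimednodes$ (hence in $\reportednodes$) is automatically compatible.

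A smaller point: in your STR accounting, the watermark does not multiply the number of nonzero coordinates—the Ostrovsky--Rabani fingerprint replaces each $1$ by a nonzero field element, so the Hamming load at a level-$j$ node is at most $2\gramsize_j$ when the fingerprints differ and at most $\dfactor\cdot\ED$ when they agree (Theorem~\ref{thm:BK}(\ref{BK2b})). The paper's $\loadestimator^j$ (Proposition~\ref{prop:loadestimator}) formalizes exactly this dichotomy; your ``$\dfactor\cdot\sum\ED$ times the watermark length'' does not match it.
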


\begin{proof}[Proof of Theorem~\ref{thm:outputset} assuming Lemma~\ref{lemma:mreconstruct}]
Assume $\ED(x,y) < 20k\orgap$.
We must establish an upper bound on
$\prob[\outputset \neq \costly{\canon^+(x,y)}]$.  We have:

\begin{align*}
\prob[&\outputset  \neq \costly{\canon^+(x,y)}] \\
& \leq \prob[\abnormex]+\prob[\outputset \neq \costly{\canon^+(x,y)} \AND \normex].
\end{align*}
Since the first term tends to 0 as $n$ gets large it suffices to prove:

\begin{equation}
    \label{eqn:outputset}
  \prob[\outputset \neq \costly{\canon^+(x,y)} \AND \normex] \leq 
  \frac{1}{5},  
\end{equation}

The hypothesis of the theorem assumes $n$ is sufficiently large.  We choose $n$ large enough so that in Lemma~\ref{lemma:mreconstruct}
for any edge slice $S$ the probability that $\claimededges(x,y) \cap S \neq \costly{\canon(x,y)} \cap S$ is at most $1/10$.

We note that every edge of $\grid(x,y)$ lies in some (horizontal or vertical) edge slice.  It follows that $\outputset = \costly{\canon^+(x,y)}$ if and only if for all edge slices $S$ we have $\outputset \cap S= \costly{\canon^+(x,y)} \cap S$. 
Define the bad event $B(S)$ to be the event that for more than half of the
sets $(\claimededges_i:i \in \{1,\ldots, \numrep\})$, $\claimededges_i \cap S \neq \costly{\canon^+(x,y)} \cap S$.  By the definition of $\outputset$, if $\outputset \neq \costly{\canon^+(x,y)}$ then there is at least one slice $S$ for which $B(S)$ happens and so:

\[
\prob[\outputset \neq \costly{\canon^+(x,y)} \AND \normex] \leq \sum_S \prob[B(S) \AND \normex].
\]

Consider a particular slice $S$.  By Lemma~\ref{lemma:mreconstruct} and the above choice of $n$, 
for each $i$,
$\prob[(\claimededges_i \cap S \neq \costly{\canon^+(x,y)} \cap S) \AND \normex] \leq 1/10$. Therefore the
Chernoff-Hoeffding bound (Lemma~\ref{lemma:CH}), implies that 
$\prob[B(S) \AND \normex] \leq 2e^{-0.32(10\gsize+50)}$ which is easily less than $\frac{1}{10}2^{-\gsize} \le \frac{1}{10n}$.  
Since the number of horizontal and vertical slices are each at most $n$:
\begin{eqnarray*}
  \prob[\outputset \neq \costly{\canon^+(x,y)} \AND \normex] & \leq & 2n \max_S\prob[B(S) \AND \normex] \\
  & \leq & \frac{1}{5},
\end{eqnarray*}
as required to complete the proof of Theorem~\ref{thm:outputset}.
\end{proof}

It remains to prove Lemma~\ref{lemma:mreconstruct}.

\subsection{Preliminaries for the proof of Lemma~\ref{lemma:mreconstruct}}
\label{subsec:mreconstruct prelims}
Lemma~\ref{lemma:mreconstruct} concerns a single execution of $\msketch$ applied to $x$ and $y$ producing
outputs $\msketchout(x)$ and $\msketchout(y)$ and
a single instance of $\mreconstruct$ applied to $\msketchout(x)$ and $\msketchout(y)$.

$\msketch(x)$ generates three (partial) labelings  $\ztree,\bvec,$ and $\tstree$ of the nodes of
the tree $T(\BASE^d)$.
$\ztree(x)$ is a string decomposition tree of $x$. 
The set $\nonemptynodes(x)$ is the set of nodes $v$ such that $\ztree_v(x) \neq \emptystr$.
For all $v \in \nonemptynodes$, $\bvec_v(x)$ is a bit-vector of length $\bvsize$ such that \newline $\decode(\bvec_v(x))=\ztree_v(x)$;
$\tstree_v(x)$ is the Ostrovsky-Rabani fingerprint of $\ztree_v(x)$.

$\msketch(x)$ also generates three partial labelings $\ztree^*,\bvec^*,$ and $\fptree$ of the binary tree $T(\{0,1\}^{\tdepth\gsize})$.
$\ztree^*(x)$ is a refinement of $\ztree(x)$ to the tree $T(\{0,1\}^{\tdepth\gsize})$.
$\ztree^*_v(x)$ is explicitly defined on the set $\nonemptynodes^*$, consisting of those binary strings that are prefixes (ancestors) of nodes in $\nonemptynodes$.
For $v \in \nonemptynodes^*$, $\lvec^*(x)$ is the size of the string assigned by $\ztree^*$ to the left-child $v \circ 0$ of $v$,
and $\fptree_v(x)$ is the Karp-Rabin fingerprint of $\ztree^*_v(x)$.

The sketch $\msketchout(x)$ consists of two separate sketches: $\fsketch(x)$ and $\gsketch(x)$.
$\fsketch(x)$ is a sequence of $\tdepth$ HMR sketches $\fsketch^j(x)$ produced by $\vectorcondense$, one for each layer of grammars in $\bvec$.
$\gsketch(x)$ is a sequence of $\tdepth \gsize-1$ sketches $\gsketch^j(x)$ produced by $\locationcondense$ for each layer of $\lvec^*(x)$ other than the last layer.

We make the following definitions associated to
 $\ztree(x)$ and $\ztree(y)$:
 For each node $v$:
\begin{itemize}
\item  $\interval_v(x)$ denotes the interval location $\zint{\ztree}_v(x)$ where $\ztree_v(x)$ is located within $x$ (as defined in Section~\ref{subsec:tree decomp}).
$\startindex_v(x)$ is the left endpoint of $\interval_v(x)$.  
This is equal to the sum of the lengths of all strings at leaves that are to the left of (lexicographically precede) $v$. 
Note that $\interval_v(x)=(\startindex_v(x),\startindex_v(x)+|\ztree_v(x)|]$.
\item $\grid_v$ denotes the subgrid $\grid_{\interval_v(x) \times \interval_v(y)}$, 
and $\edges_v=\edges_v(x,y)$ denotes the set of its edges.  
$\edges_v^+$ denotes the set of annotated edges $\{e^+:e \in \edges_v\}$. 
\item  $\ED_v=\ED(\ztree_v(x),\ztree_v(y))$, 
and the canonical path $\canon(x,y)_{\interval_v(x) \times \interval_v(y)}$ is denoted by $\canon_v$.
$\costly{\canon_v}$ is the set of costly edges of $\canon_v$ (which has size $\ED_v$),
and $\costly{\canon^+_v}$ is the set of costly edges with annotations.
\end{itemize}

The procedure $\mreconstruct$, using subroutines $\findstrings$ and $\findlocations$  constructs a set $\claimededges$
of annotated edges as follows:

\begin{itemize}
    \item $\findstrings$ takes as input $\fsketch(x)$ and $\fsketch(y)$ and constructs
(partial) labelings $\claimedstring(x)$ and $\claimedstring(y)$ 
of the tree $T(\BASE^{\tdepth})$.  For a node $v \in \BASE^{\leq \tdepth}$, if $\claimedstring_v(x)$ (resp.
$\claimedstring_v(y)$) is defined, its value is a string.

\item $\findlocations$ takes as input $\gsketch(x)$ and $\gsketch(y)$ and constructs
two (partial) labelings $\claimedlocation(x)$ and $\claimedlocation(y)$ of $T(\BASE^{\tdepth})$. For a node $v \in \BASE^{\tdepth}$, if $\claimedlocation_v(x)$ (resp. $\claimedstring_v(y)$) is defined, its value is a nonnegative integer.
\item In $\mreconstruct$, the set $\claimednodes$ is defined to be the set of nodes $v$ for which $\claimedstring_v(x)$, $\claimedstring_v(y)$, $\claimedlocation_v(x)$ and $\claimedlocation_v(y)$ are all defined.
\item In $\mreconstruct$ the set $\reportednodes$ is defined to be the subset of $\claimednodes$ consisting of those nodes in $\claimednodes$ that have no ancestor in $\claimednodes$.
\item For each node $v \in \reportednodes$, $\mreconstruct$ computes a set of annotated edges $\claimededges_v$. 
\item The output of $\mreconstruct$ is the set
$\claimededges=\bigcup_{v \in \reportednodes} \claimededges_v$.
\end{itemize}

For each $v \in \reportednodes$ the set $\claimededges_v$ is computed in $\mreconstruct$ as follows.
$\claimedstring_v(x)$ and $\claimedstring_v(y)$ together with $\claimedlocation_v(x)$ and $\claimedlocation_v(y)$ determine an edit distance problem, and that problem has a unique canonical path.  The set of costly annotated edges in that path is  $\claimededges_v$. 
In the procedure $\mreconstruct$, $\claimededges_v$ is evaluated only for $v \in \reportednodes$, but the definition makes sense for all $v \in \claimednodes$, and in what follows we will refer to $\claimededges_v$ for $v \in \claimednodes$. 

We are hoping that $\claimededges$ is close, in a suitable probabilistic sense to the set $\costly{\canon^+(x,y)}$.  
To show this we will need to consider certain properties for classifying nodes.

The first set of properties is determined by $\ztree_v(x)$ and $\ztree_v(y)$ produced by the $\msketch$.
\begin{itemize}
 \item $v$ is \emph{compatible} if the
box $\interval_v(x) \times \interval_v(y)$ is $(x,y)$-compatible.  (Recall  that the box
$I \times J$
is $(x,y)$-compatible provided that the 
the canonical alignment $\canon(x,y)$ passes through the corner points of $\interval_v(x) \times \interval_v(y)$.)   
The importance of this property is that if $v$ is compatible then $\canon_v$ is equal to $\canon(x,y) \cap \edges_v$, 
the portion of $\canon(x,y)$ that lies inside $\grid_v$.    
This means that if we are able to reconstruct the substrings $\ztree_v(x)$ and $\ztree_v(y)$ and the intervals $\interval_v(x)$ and $\interval_v(y)$, $\costly{\canon^+_v}$ will give us the portion of $\costly{\canon^+(x,y)}$ that lies inside of $\grid_v$.
\item $v$ is \emph{compatibly split} provided that all of the children of $v$ are compatible.
This depends on the run of $\bdecomp$ on the strings $\ztree_v(x)$ and $\ztree_v(y)$ that occurs inside of $\mdecomp$. 
\end{itemize}



We say that:
\begin{itemize}
\item 
\emph{$\claimedstring_v(x)$ (respectively, $\claimedstring_v(y)$) is a correct reconstruction} provided that it is defined and equal
to $\ztree_v(x)$ (respective, $\ztree_v(y)$). 
\item
\emph{ $\claimedlocation_v(x)$ (resp. $\claimedlocation_v(y)$) is a correct reconstruction} provided that
it is defined and equal to $\startindex_v(x)$ (resp. $\startindex_v(y)$). 
\item We say that \emph{$v$ is correctly reconstructed} if $\claimedstring_v(x)$, $\claimedstring_v(y)$,
$\claimedlocation_v(x)$ and $\claimedlocation_v(y)$ are all correct reconstructions.
\end{itemize}

\begin{lemma}
    \label{lemma:correct reconstruction}
    Let $v \in \BASE^{\leq \tdepth}$. Under a normal execution:
    \begin{enumerate}
    \item If $\claimedstring_v(x)$ and $\claimedstring_v(y)$ are both defined by $\findstrings$ then $\claimedstring_v(x)=\ztree_v(x)$ and $\claimedstring_v(y)=\ztree_v(y)$.
    \item If $\claimedlocation_v(x)$ and $\claimedlocation_v(y)$ are both defined by $\findlocations$ then
    $\startindex_v(x)=\claimedlocation_v(x)$ and $\startindex_v(y)=\claimedlocation_v(y)$.
    \end{enumerate}
\end{lemma}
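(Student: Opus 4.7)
The plan is to reduce both parts to the Soundness guarantee of $\hmrrecover$ (which holds throughout a normal execution, by definition of the HMR abnormality) combined with structural properties of the objects being sketched.

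For Part~1, suppose $\claimedstring_v(x)$ and $\claimedstring_v(y)$ are both defined, and let $j = |v|$. By Soundness applied to $\hmrrecover(\fsketch^j(x), \fsketch^j(y))$, every triple $((v, i), \alpha, \beta) \in \stringmismatch^j$ reports the true coordinate values of the underlying vectors, so $\alpha = \bvec^j_{v \circ i}(x) = \tstree_v(x) \cdot \bvec_{v,i}(x)$ in $\bigfield$. Hence $\alpha \neq 0$ forces $\bvec_{v,i}(x) = 1$, while indices at which $\findstrings$ does not set a $1$ satisfy $\claimedbv_{v,i}(x) = 0 \le \bvec_{v,i}(x)$ trivially. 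So $\claimedbv_v(x) \le \bvec_v(x)$ bit-wise. The key step is then to invoke the minimality property of the grammars produced by $\bdecomp$ from Section~\ref{subsec:bdecomp}: any bit-vector bit-wise strictly smaller than $\bvec_v(x)$ decodes to $\undefnd$. Since $\claimedstring_v(x) = \decode(\claimedbv_v(x))$ is defined by hypothesis, this forces $\claimedbv_v(x) = \bvec_v(x)$ and thus $\claimedstring_v(x) = \ztree_v(x)$. The argument for $y$ is identical.

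For Part~2, assume $\claimedlocation_v(x)$ and $\claimedlocation_v(y)$ are defined. From the code of $\findlocations$, this forces $\claimedsize_u(x)$ and $\claimedsize_u(y)$ to be defined for every left ancestor $u$ of (the binary encoding of) $v$. Apply Soundness to the call to $\hmrrecover$ at level $|u|$: the returned $\alpha$ at index $u$ equals $\lvec^{|u|}_u(x) = n \cdot \fptree_u(x) + \lvec^*_u(x)$ in $\bigfield$. Next I would note that, since $u$ is a proper prefix of $v$, the subtree of $\ztree^*$ rooted at $u$ covers only a strict subset of the leaves of the binary tree, so $\lvec^*_u(x) < n$; combined with the fact that $\bigfield$ has characteristic large enough that $n \cdot \fptree_u(x) + \lvec^*_u(x)$ lies below it as an integer, this gives $\claimedsize_u(x) = \alpha \bmod n = \lvec^*_u(x)$. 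Finally, by the construction of $\ztree^*$, $\startindex_v(x)$ equals the sum of $\lvec^*_u(x)$ over left ancestors $u$ of $v$: each right-turn along the root-to-$v$ path contributes the length of the corresponding left subtree to the prefix of $x$ preceding $\ztree_v(x)$. Summing yields $\claimedlocation_v(x) = \startindex_v(x)$, and symmetrically for $y$.

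The main conceptual obstacle is the observation in Part~1 that Soundness alone gives only the one-sided inequality $\claimedbv_v(x) \le \bvec_v(x)$: bits that happen to be $1$ in both $\bvec_v(x)$ and $\bvec_v(y)$ with matching watermarks produce no mismatch and so are \emph{not} recovered. The design of $\bdecomp$ compensates for this via the minimality of its output grammars, so the mere definedness of $\decode(\claimedbv_v(x))$ is a strong structural witness that the full grammar was recovered. Part~2 is then essentially a matter of carefully inverting the watermarked encoding and matching the definition of $\startindex_v$ to the summation performed in $\findlocations$.
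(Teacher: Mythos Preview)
Your proof follows essentially the same approach as the paper: for Part~1, use Soundness of $\hmrrecover$ to get $\claimedbv_v(x) \le \bvec_v(x)$ bitwise and then invoke the minimality property of $\decode$; for Part~2, use Soundness to recover each $\lvec^*_u$ from the watermarked value and sum over left ancestors.

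One small correction in Part~2: your justification that $\lvec^*_u(x) < n$ is garbled. Since $u$ is a proper prefix of $v$, the subtree rooted at $u$ \emph{contains} $v$ (it is larger, not smaller), and in particular $u$ can be the root. The relevant fact is that $u$ is a \emph{left} ancestor, so $v$ lies in the right subtree $u \concat 1$; hence $\ztree^*_{u \concat 0}(x)$ omits the content at $v$, and since $\ztree_v(x) \neq \emptystr$ (which follows from $v \in \stringnodes$ together with Part~1) we get $\lvec^*_u(x)=|\ztree^*_{u \concat 0}(x)| \le |x| - 1 < n$. The paper's own proof actually glosses over this mod-$n$ inversion entirely, so your attempt to justify it is already more careful than the original.
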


\begin{proof}
Fix a node $v \in \BASE^{\leq \tdepth}$ and let $j=|v|$. 

Proof of 1. $\claimedstring_v(x)$ and $\claimedstring_v(y)$ are determined
during iteration $j$ of the main loop in $\findstrings$ as follows:
$\hmrrecover(\fsketch^j(x),\fsketch^j(y))$ is evaluated and assigned to $\stringmismatch^j$.  
This is a set of mismatch triples of the form $((w,i),\alpha,\beta)$
where $(w,i) \in \BASE^j \times \{1,\ldots,\bvsize\}$, and $\alpha \neq \beta \in \bigfield$. 
Letting $\stringmismatch^j_v$ denote the set of those triples involving the node $v$, 
$\claimedbv_v(x)$ and $\claimedbv_v(y)$ are determined by $\stringmismatch^j_v$.

Under a normal execution, the call to $\hmrrecover$ is Sound, so $\stringmismatch^j_v$ is a subset of the mismatch information $\MI(\bvec^j_v(x),\bvec^j_v(y))$.  

\begin{claim}
    \label{claim:claimedbv}
    $\claimedbv_{v}(x) \leq \bvec_v(x)$ and $\claimedbv_v(y) \leq \bvec_v(y)$, where $\leq$ is entry-wise. 
\end{claim}

\begin{proof}
We prove the first inequality; the second is analogous. Fix an index $i \in \{1,\ldots,\bvsize\}$.
Assume that $\claimedbv_{v,i}(x)=1$ we must show that $\bvec_{v,i}(x)=1$.   
From the construction of $\claimedbv(x)$ in $\findstrings$, $\claimedbv_{v,i}(x)=1$ implies that there
is a mismatch triple  $((v,i),\alpha,\beta) \in \stringmismatch^j$ such that $\alpha \neq 0$.
Since $\stringmismatch^j \subseteq \MI(\bvec^j(x),\bvec^j(y))$ this implies that $\bvec^j_{v,i}(x)=\alpha \neq 0$ which implies that $\bvec_{v,i}(x)c=1$ by the definition of $\bvec^j$.
\end{proof}

Now if $\claimedbv_v(x) = \bvec_v(x)$ and $\claimedbv_v(y)=\bvec_v(y)$ then we have
$\claimedstring_v(x)=\decode(\claimedbv_v(x))=\decode(\bvec_v(x))=\ztree_v(x)$ and similarly $\claimedstring_v(y)=\ztree_v(y)$.
The conclusion of the first implication holds in this case.
Now consider the case  $\claimedbv_v(x) \neq \bvec_v(x)$.  
By Claim~\ref{claim:claimedbv}, $\claimedbv_v(x) < \bvec_v(x)$.
By the minimality property of $\decode$ (see Section~\ref{subsec:bdecomp}) $\claimedstring_v(x)=\decode(\claimedbv_v(x))$ is
$\undefnd$.
So the premise of the implication does not hold. 
Similarly when $\claimedbv_v(y) \neq \bvec_v(y)$.
This completes the proof of the first part of the lemma.

The argument for the second part of the lemma is similar, but has important differences.  $\claimedlocation_v(x)$ and $\claimedlocation_v(y)$ are determined in $\findlocations$.  The
main loop of $\findlocations$ has $\tdepth \gsize$ iterations, corresponding to the levels of
the binary tree $T(\{0,1\}^{\tdepth\gsize})$.  In iteration $i$, $\hmrrecover$ is applied
to $\gsketch^i(x),\gsketch^i(y)$ to produce $\locmismatch^i$. The sketches $\gsketch^i(x)$ and
$\gsketch^i(y)$ are sketches of the $D$-vectors $\leftchild^i(x)$ and $\leftchild^i(y)$ for $D=\{0,1\}^i$.
$\locmismatch^i$ is a set of  mismatch triples of the form $(w,\alpha,\beta)$ where $w \in \{0,1\}^i$
and $\alpha,\beta \in \bigfield$.  
Let $L_i$ be the subset of nodes $u$ that appear as first coordinate of some mismatch triple in $\locmismatch^i$. 
$\locmismatch^i$ is used to define $\claimedsize_u(x)$ and $\claimedsize_u(y)$ for all $u \in L_i$.
The soundness of $\hmrrecover$ (since the execution is normal) implies $L_i \subseteq \MI(\leftchild^i(x),\leftchild^i(y))$ and 
this implies $\claimedsize_u(x)=\leftchild_u(x)$ and $\claimedsize_u(y)=\leftchild_u(y)$.  

Now $\claimedlocation_v(x)$ is the sum of $\claimedsize_u(x)$ over left ancestors $u$ of $v$.
It is $\undefnd$ if any of the summands is undefined.  
If all summands are defined, 
then by the above, the sum is equal to the sum of $\leftchild_u(x)$ over all left ancestors $u$ of $v$
which is equal to $\startindex_v(x)$.  
The same argument applies to show $\claimedlocation_v(y)=\startindex_v(y)$, completing the proof of the second part of the lemma.
%
\end{proof}

\begin{corollary}
\label{cor:correctly reconstructed}
     Under a normal execution, suppose that $v \in \BASE^{\leq \tdepth}$ is placed in 
     $\claimednodes$ by $\mreconstruct$.  Then
     \begin{enumerate}
         \item Then $v$ is correctly reconstructed, i.e. $\claimededges_v=\costly{\canon^+_v}$.
         \item If, in addition, $v$ is compatible, then $\claimededges_v=\costly{\canon^+(x,y)} \cap \edges^+_v$.
     \end{enumerate}
\end{corollary}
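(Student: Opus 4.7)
The plan is to read the corollary as an almost direct application of Lemma~\ref{lemma:correct reconstruction} together with the definitions of $\claimednodes$, $\claimededges_v$, and canonical alignments of sub-grids.

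For part 1, I would first unfold the definition of $\claimednodes$: $v \in \claimednodes$ precisely when all four of $\claimedstring_v(x)$, $\claimedstring_v(y)$, $\claimedlocation_v(x)$, $\claimedlocation_v(y)$ are defined. Under $\normex$, both bullets of Lemma~\ref{lemma:correct reconstruction} apply, yielding $\claimedstring_v(x) = \ztree_v(x)$, $\claimedstring_v(y) = \ztree_v(y)$, $\claimedlocation_v(x) = \startindex_v(x)$, and $\claimedlocation_v(y) = \startindex_v(y)$, so $v$ is correctly reconstructed in the sense defined in Section~\ref{subsec:mreconstruct prelims}. Now $\mreconstruct$ computes $\claimededges_v$ as the set of costly annotated edges of the canonical alignment of the edit-distance sub-problem whose input strings are $\claimedstring_v(x)$ at location $\claimedlocation_v(x)$ and $\claimedstring_v(y)$ at location $\claimedlocation_v(y)$. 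By correct reconstruction, this sub-problem is exactly the grid $\grid_v = \grid_{\interval_v(x) \times \interval_v(y)}(x,y)$. By Proposition~\ref{prop:subgrid}, this grid is canonically isomorphic to $\grid(\ztree_v(x), \ztree_v(y))$, and the canonical alignments correspond under the isomorphism. Hence the costly edges produced by $\mreconstruct$ are precisely $\costly{\canon_v^+}$.

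For part 2, I would invoke the definition of compatibility: if $v$ is compatible then the box $\interval_v(x) \times \interval_v(y)$ is $(x,y)$-compatible, so by Proposition~\ref{prop:canonical} the restriction $\canon(x,y)_{\interval_v(x) \times \interval_v(y)}$ equals the canonical alignment of $\grid_{\interval_v(x) \times \interval_v(y)}(x,y)$, which is $\canon_v$ by definition. Taking costly edges (with their $(x,y)$-annotations, which are unique by the remark following the definition of $e^+(x,y)$) on both sides yields
\[
\costly{\canon_v^+} \;=\; \costly{\canon^+(x,y)} \cap \edges_v^+,
\]
and combining with part 1 gives $\claimededges_v = \costly{\canon^+(x,y)} \cap \edges_v^+$.

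There isn't really a hard step here; this corollary is essentially bookkeeping on top of Lemma~\ref{lemma:correct reconstruction} and Proposition~\ref{prop:canonical}. The only subtlety to watch is that one must verify that the annotations match when identifying $\costly{\canon_v^+}$ with $\costly{\canon^+(x,y)} \cap \edges_v^+$, but this is immediate because every edge in $\grid(x,y)$ has a unique $(x,y)$-consistent annotation, so restricting $\canon(x,y)$ to the sub-box preserves the annotation on every edge that lies inside $\edges_v$.
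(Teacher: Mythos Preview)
Your proof is correct and follows essentially the same approach as the paper: apply Lemma~\ref{lemma:correct reconstruction} to get correct reconstruction (part~1), then invoke Proposition~\ref{prop:canonical} for the compatible case (part~2). Your version is in fact slightly more careful than the paper's, explicitly noting the role of Proposition~\ref{prop:subgrid} and the uniqueness of annotations.
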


\begin{proof}
By the code for $\mreconstruct$, $v \in \claimednodes$ implies $\claimedstring_v(y)$, $\claimedlocation_v(x)$ and $\claimedlocation_v(y)$ are all defined, and $\claimededges_v$
is the set of costly edges in the canonical path for the edit distance problem defined by $\claimedstring_v(x)$, $\claimedstring_v(y)$, $\claimedlocation_v(x)$ and $\claimedlocation_v(y)$.
By Lemma~\ref{lemma:correct reconstruction}, this is equal to the set of costly edges in the canonical
path associated to 
$\ztree_v(x)$, $\ztree_v(y)$, $\interval_v(x)$ and $\interval_v(y)$ which is the set
    $\canon^+_v$.   For the second part, when $v$ is compatible,  the conclusion follows from the first part together with Proposition~\ref{prop:canonical}.
\end{proof}

\section{Proof of Lemma~\ref{lemma:mreconstruct}.}
\label{sec:main lemma proof}

Suppose $S$ is an arbitrary slice. Without loss of generality, assume that $S=S_V(b)$ is the $b$-th vertical edge slice.  We say that
\emph{slice $S$ is correctly reconstructed} provided that $\claimededges(x,y) \cap S = \costly{\canon^+(x,y)} \cap S$.
We need to show:

\begin{equation}
    \label{eqn:mreconstruct}
    \prob[\text{slice $S$ is not correctly reconstructed} \AND \normex] \rightarrow 0
\end{equation}
as $n$ gets large.

Since $\ztree(x)$ is a string decomposition tree of $x$, for each
level $j \in \{0,\ldots,\tdepth\}$ there is a unique node $w(j)$ of $T(\BASE^d)$ such
that $b \in \interval_{w(j)}(x)$, and the sequence $w(0),\ldots,w(\tdepth)$ forms a root-to-leaf path.
The following lemma implies that, under a normal execution, whether $S$ is correctly reconstructed
only depends on the reconstruction of nodes in $w(0),\ldots,w(\tdepth)$.

\begin{proposition}
    \label{prop:agree on S}
    Assume a normal execution, and suppose $S$ is the $b$th vertical edge slice.
    \begin{enumerate}
    \item $\claimededges \cap S=\bigcup_{j \in \{0,\ldots,\tdepth\}:w(j) \in \reportednodes}\claimededges_{w(j)} \cap S$
    \item
    If $w(j)$ is compatible and $w(j) \in \claimednodes$ then $\claimededges_{w(j)} \cap S=\costly{\canon^+} \cap S$.
    \end{enumerate}  
\end{proposition}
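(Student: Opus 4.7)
\medskip

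\noindent\textbf{Proof plan.} The strategy for both parts is to exploit the geometric fact that an edge of a vertical slice $S = S_V(b)$ lies in a subgrid $\grid_v$ if and only if $b$ belongs to the first-coordinate interval $\interval_v(x)$, combined with Corollary~\ref{cor:correctly reconstructed}, which guarantees that under $\normex$ the set $\claimededges_v$ faithfully represents $\costly{\canon_v^+}$ whenever $v \in \claimednodes$.

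For part 1, I would start by recalling that $\claimededges = \bigcup_{v \in \reportednodes} \claimededges_v$. Fix any $v \in \reportednodes$ and let $j = |v|$. By Corollary~\ref{cor:correctly reconstructed}(1), under $\normex$ the set $\claimededges_v$ equals $\costly{\canon_v^+}$, which consists entirely of edges of $\grid_v$. An edge in slice $S=S_V(b)$ goes from a vertex with first coordinate $b-1$ to one with first coordinate $b$, so it lies in $\grid_v$ precisely when $b \in \interval_v(x)$, i.e., when $v$ is the unique level-$j$ ancestor $w(j)$ of $b$. Hence $\claimededges_v \cap S = \emptyset$ for every $v \in \reportednodes \setminus \{w(0), \ldots, w(\tdepth)\}$, and part 1 follows by taking the union over $j$.

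For part 2, assume $w(j)$ is compatible and $w(j) \in \claimednodes$. By Corollary~\ref{cor:correctly reconstructed}(2), $\claimededges_{w(j)} = \costly{\canon^+(x,y)} \cap \edges_{w(j)}^+$, so $\claimededges_{w(j)} \cap S = \costly{\canon^+(x,y)} \cap \edges_{w(j)}^+ \cap S$. It remains to verify that $\costly{\canon^+(x,y)} \cap S \subseteq \edges_{w(j)}^+$. By Proposition~\ref{prop:exactly one edge}, $\canon(x,y)$ contains exactly one edge of $S$; call it $e$. If $e \notin \costly{\canon(x,y)}$ the intersection is empty and we are done, so assume $e$ is costly. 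Since $w(j)$ is compatible, $\canon(x,y)$ passes through the two corners of the box $\interval_{w(j)}(x) \times \interval_{w(j)}(y)$ and therefore stays inside that box between them. Because $b \in \interval_{w(j)}(x)$, the edge $e$ lies on the portion of $\canon(x,y)$ inside this box, so $e \in \edges_{w(j)}$, which gives the desired equality.

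\medskip

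\noindent\textbf{Main obstacle.} The arguments are essentially bookkeeping about which subgrids contain which slice edges, so there is no genuinely hard step. The only subtlety worth checking carefully is a boundary issue: an edge of $S_V(b)$ starting at first coordinate $b-1$ could a priori lie just outside $\interval_{w(j)}(x) = (\startindex_{w(j)}(x), \startindex_{w(j)}(x)+|\ztree_{w(j)}(x)|]$ on the left, but the definition of $\grid_{I\times J}$ explicitly includes the row/column indexed by $\min(I)-1$ and $\min(J)-1$, so such edges still belong to $\grid_{w(j)}$. Once this is observed, both parts reduce to invocations of Corollary~\ref{cor:correctly reconstructed} together with Proposition~\ref{prop:exactly one edge}, and no further probabilistic or combinatorial input is required.
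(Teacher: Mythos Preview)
Your proposal is correct and follows essentially the same approach as the paper: both parts reduce to Corollary~\ref{cor:correctly reconstructed} together with the observation that an edge of $S_V(b)$ can lie in $\grid_v$ only when $b \in \interval_v(x)$. Your treatment is slightly more detailed (explicitly invoking Proposition~\ref{prop:exactly one edge} and checking the boundary convention for $\grid_{I\times J}$), but the argument is the same.
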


\begin{proof}
For (1), by definition in $\mreconstruct$ 
$\claimededges \cap S=\bigcup_{v \in \reportednodes} \claimededges_v \cap S$.
To establish the claimed equality it suffices to show that $\claimededges_v \cap S=\emptyset$
for $v \not\in \{w(0),\ldots,w(\tdepth)\}$. 
By Corollary~\ref{cor:correctly reconstructed}, $\claimededges_v=\costly{\canon^+_v}$.
The path $\canon_v$ only contains edges in vertical slices $S_V(\ell)$ for which $\ell \in \interval_v(x)$.
Since $b \not\in \interval_v(x)$, $\canon_v \cap S = \emptyset$.

For (2), suppose $w(j)$ is compatible.  By 
Corollary~\ref{cor:correctly reconstructed}(2), $\claimededges_{w(j)}=\costly{\canon^+} \cap \edges^+_{w(j)}$.
Since $b \in \interval_{w(j)}(x)$, $\edges^+_{w(j)}$ includes
the edge of $\canon$ from slice $S$, and the equality follows.
\end{proof}

As a corollary we get  a sufficient condition for $S$ to be correctly reconstructed:

\begin{corollary}
\label{cor:S correct}
Assume a normal execution.  If either of the following conditions hold then $S$ is correctly reconstructed:
    \begin{enumerate}
\item $\costly{\canon^+} \cap S = \emptyset$ and none of the nodes
$w(0),\ldots,w(\tdepth)$ is in $\claimednodes$.
\item There is an index $j \in \{1,\ldots,\tdepth\}$ such that $w(j) \in \claimednodes$
and $w(0),\ldots,w(j)$ are all compatible.
    \end{enumerate}
\end{corollary}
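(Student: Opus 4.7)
The plan is to deduce both cases essentially as corollaries of Proposition~\ref{prop:agree on S}, using the structural fact that $\reportednodes$ is the set of $\claimednodes$-nodes that have no proper ancestor in $\claimednodes$, so that along the root-to-leaf path $w(0),\dots,w(\tdepth)$ at most one node can belong to $\reportednodes$.

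For Condition~1, I would just note that $\reportednodes \subseteq \claimednodes$ by definition, so if none of $w(0),\dots,w(\tdepth)$ lies in $\claimednodes$ then none of them lies in $\reportednodes$ either. Proposition~\ref{prop:agree on S}(1) then gives $\claimededges \cap S = \emptyset$, which agrees with $\costly{\canon^+} \cap S = \emptyset$ by hypothesis.

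For Condition~2, let $j^{\ast} = \min\{j' : w(j') \in \claimednodes\}$; this is well-defined because $w(j) \in \claimednodes$, and moreover $j^{\ast} \le j$. By minimality, every proper ancestor of $w(j^{\ast})$ in the path is $w(j'')$ for some $j'' < j^{\ast}$ and hence lies outside $\claimednodes$, so $w(j^{\ast}) \in \reportednodes$. Since $j^{\ast} \le j$ and $w(0),\dots,w(j)$ are all compatible by hypothesis, $w(j^{\ast})$ is compatible. Any other $w(j')$ with $j' > j^{\ast}$ has $w(j^{\ast})$ as an ancestor in $\claimednodes$, so $w(j') \notin \reportednodes$. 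Thus on the path only $w(j^{\ast})$ contributes to the union in Proposition~\ref{prop:agree on S}(1), giving $\claimededges \cap S = \claimededges_{w(j^{\ast})} \cap S$, and Proposition~\ref{prop:agree on S}(2) then identifies this with $\costly{\canon^+} \cap S$.

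There is no real obstacle here: the corollary is a bookkeeping step that packages Proposition~\ref{prop:agree on S} into the two scenarios that will actually arise later in the analysis (a slice containing no costly edge, and a slice covered by a compatible reconstructed ancestor). The only thing to be careful about is the case $j^{\ast} < j$ in Condition~2, where the relevant reconstructed node is strictly above $w(j)$; compatibility of \emph{all} of $w(0),\dots,w(j)$ is exactly what ensures that this higher node is still usable via Proposition~\ref{prop:agree on S}(2).
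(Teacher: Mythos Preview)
Your proof is correct and follows essentially the same approach as the paper: both parts reduce directly to Proposition~\ref{prop:agree on S}, and for Condition~2 both you and the paper pass to the minimal index $j^\ast$ on the path lying in $\claimednodes$, observe it is the unique $w(\cdot)$ in $\reportednodes$, and then apply parts (1) and (2) of that proposition. Your write-up is slightly more explicit about why $w(j^\ast)$ has no ancestor in $\claimednodes$ and why nodes below it are excluded from $\reportednodes$, but there is no substantive difference.
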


\begin{proof}
For (1), if $\{w(0),\ldots,w(\tdepth)\} \cap \claimednodes=\emptyset$
then by Proposition~\ref{prop:agree on S}(1), $\claimededges \cap S=\emptyset=\costly{\canon^+} \cap S$.

For (2), let $j$ be the least  index satisfying the hypothesis of (2). 
By definition of $\reportednodes$ in $\mreconstruct$, $w(j)$ is
the unique node among $w(0),\ldots,w(\tdepth)$ that belongs to $\reportednodes$ and so
by Proposition~\ref{prop:agree on S}(1), $\claimededges \cap S = \claimededges_{w(j)} \cap S$,
which is equal to $\costly{\canon^+} \cap S$ by Proposition~\ref{prop:agree on S}(2). 
\end{proof}

Our strategy is to show that with fairly high probability one of the two sufficient conditions of
this corollary is satisfied.  To this end we define the following events, indexed by
$j \in \{0,\ldots,\tdepth\}$:

\begin{description}
    \item[$\eventT(j)$] is the event  that $\ED_{w(j)} < \gapthresh_j$.
    \item[$\eventT(\leq j)$] is the event that $\ED_{w(i)}<\gapthresh_i$ for all $i \leq j$.
    \item[$\eventC(j)$] is the event that $w(j)$ is compatible.
    \item[$\eventC(\leq j)$] is the event that $w(0),\ldots,w(j)$ are all compatible.
\end{description}

Recall that we have a global hypothesis that $\ED(x,y) < \lceil 20k\orgap \rceil_2=\gapthresh_0$, so that $\eventT(0)$ holds.
Also, $\eventC(0)$ holds trivially.  The following proposition says that if $\ED_{w(j)}<\gapthresh_j$ and $w(j)$ is compatible then $w(j+1)$ is almost certainly compatible:

\begin{proposition}
\label{prop:U(j)}
\begin{enumerate}
    \item For all $j \in \{1,\ldots,\tdepth\}$,   $\prob[\eventT(j-1) \AND \eventC(j-1) \AND \neg \eventC(j) \AND \normex] \leq \frac{1}{\log^2n}$.
    \item $\prob [\exists j \in \{1,\ldots,\tdepth\}\text{ such that } \eventT(\leq j-1) \AND \neg \eventC( \leq j) \AND \normex] \leq \frac{1}{\log n}$.
    
\end{enumerate}
\end{proposition}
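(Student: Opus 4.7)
The plan is to prove (1) by a direct application of Theorem~\ref{thm:BK}(\ref{BK2a}) at the single node $w(j-1)$, after using the conditioning event $\eventC(j-1)$ to pass between the global canonical alignment of $(x,y)$ and the local canonical alignment of the pair of fragments sitting at $w(j-1)$. Part (2) will then follow by a ``first level of failure'' argument together with a union bound across the $\tdepth = O(\log n)$ levels.

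For part (1), I would fix $j$ and work inside the event $\eventT(j-1) \AND \eventC(j-1)$. Since $\eventC(j-1)$ asserts that the box $\interval_{w(j-1)}(x) \times \interval_{w(j-1)}(y)$ is $(x,y)$-compatible, Proposition~\ref{prop:canonical} identifies the restriction of $\canon(x,y)$ to this box with the canonical alignment of the local pair $(\ztree_{w(j-1)}(x), \ztree_{w(j-1)}(y))$. This gives the clean equivalence: for any $i \in \BASE$, the child box $\interval_{w(j-1)\concat i}(x) \times \interval_{w(j-1)\concat i}(y)$ is $(x,y)$-compatible iff the corresponding pair of boxes produced at $w(j-1)$ by the two independent runs of $\bdecomp(\cdot;\gramsize_j)$ is compatible with the local canonical alignment. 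At this point Theorem~\ref{thm:BK}(\ref{BK2a}) applies as a black box and bounds the probability of some child failing by $\bkfailure \cdot \ED_{w(j-1)}/\gramsize_j$. Plugging in $\ED_{w(j-1)} < \gapthresh_{j-1} = 2\gapthresh_j$ and the algorithm's parameter choice $\gramsize_j = (32\bkfailure \log^6 n)\gapthresh_j$ collapses this to $1/(16\log^6 n)$, which is comfortably below $1/\log^2 n$. Since $\neg\eventC(j)$ requires the one specific child $w(j)$ to be non-compatible, the same bound transfers, and intersecting with $\normex$ can only shrink the event.

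For part (2), I would set $A_j = \eventT(\leq j-1) \AND \eventC(\leq j-1) \AND \neg \eventC(j)$ and argue that the existential event is contained in $\bigcup_{j=1}^{\tdepth} A_j$. Indeed, if some $j$ witnesses the event in (2), take the smallest one: minimality rules out a failure of compatibility at any strictly smaller level, forcing $\eventC(\leq j-1)$, and $\eventT(\leq j-1)$ is given. Since $A_j \subseteq \eventT(j-1)\AND\eventC(j-1)\AND\neg\eventC(j)$, part (1) together with a union bound yields total probability at most $\tdepth \cdot \frac{1}{16\log^6 n} \le \frac{1}{\log n}$ for $n$ large enough, using $\tdepth = O(\log n)$.

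The only non-routine ingredient is the compatibility translation in (1): one must verify that, conditioned on $\eventC(j-1)$, the question of whether the child $w(j)$ is $(x,y)$-compatible is decided entirely by the randomness of the single $\bdecomp$ call at $w(j-1)$, so that Theorem~\ref{thm:BK}(\ref{BK2a}) can be invoked exactly as stated on the fragments $\ztree_{w(j-1)}(x), \ztree_{w(j-1)}(y)$. Once this observation is in place, both parts reduce to substituting the parameter values from the table in Section~\ref{subsec:sketch code} and a union bound over levels.
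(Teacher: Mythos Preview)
Your proposal is correct and follows essentially the same approach as the paper: condition on the first $j-1$ levels, invoke Theorem~\ref{thm:BK}(\ref{BK2a}) at $w(j-1)$ with the parameter ratio $\bkfailure\gapthresh_{j-1}/\gramsize_j$, and then for part~(2) pass to the least failing index and union bound over the $\tdepth$ levels. The one point you make explicit that the paper leaves implicit is the local-versus-global compatibility translation via Proposition~\ref{prop:canonical}; this is indeed needed, since Theorem~\ref{thm:BK}(\ref{BK2a}) as stated yields $(\ztree_{w(j-1)}(x),\ztree_{w(j-1)}(y))$-compatibility of the child boxes rather than $(x,y)$-compatibility directly, and $\eventC(j-1)$ is exactly what lets you identify the two.
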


\begin{proof}
For (1), note that events $\eventT(j-1)$ and $\eventC(j-1)$ are determined by the first $j-1$ levels of $\ztree(x)$ and $\ztree(y)$.  Consider an arbitrary setting of the first $j-1$ levels of $\ztree(x)$ and $\ztree(y)$ for which $\eventT(j-1)$ and $\eventC(j-1)$ hold.  Then $\ED_{w(j-1)}<\gapthresh_{j-1}$.
By Theorem~\ref{thm:BK}, the conditional probability given this setting that $w(j-1)$ is not compatibly split is at most:

\begin{equation}
    \label{eqn:NCS}
    \frac{\bkfailure \ED_{w(j-1)}}{\gramsize_j}\leq \frac{\bkfailure \gapthresh_{j-1}}{\gramsize_j} \le \frac{1}{\log^2n},
\end{equation}
where the final inequality uses the definitions of $\gapthresh_{j-1}$ and $\gramsize_j$.   
Since this holds for any setting of the first $j-1$ levels of $\ztree(x)$
and $\ztree(y)$ for which $\eventT(j-1)$ and $\eventC(j-1)$ hold, we conclude that:

\[
\prob[\eventT(j-1) \AND \eventC(j-1) \AND \neg \eventC(j) \AND \normex] \leq \frac{1}{\log^2 n},
\]
as required for (1).

For (2), if there is an index $j$ such that 
$\eventT(\leq j-1) \AND \neg \eventC(\leq j)$ and $i$ is the least such index then $\eventT(i-1) \AND \eventC(i-1) \AND \neg \eventC(i)$ holds.
Therefore:

\begin{multline}
\prob[\exists j \in \{1,\ldots,\tdepth\}, \eventT(\leq j-1) \AND \neg \eventC( \leq j) \AND \normex]  \\
\leq \sum_{i=1}^{j} \prob[\eventT(j-1) \AND \eventC(j-1) \AND \neg \eventC(j) \AND \normex] \leq \frac{j}{\log^2 n} \leq \frac{1}{\log n}
\end{multline}
where the second to last inequality uses the first part of this proposition.
\end{proof}

Let  $q \in \{0,\ldots,\tdepth\}$ be the largest index for which $\eventT(q)$ holds.  We emphasize that
$q$ is a random variable that depends on the execution.  The following result says that with
fairly high probability
$w(0),\ldots,w(q)$ are all compatible and if $q<\tdepth$ then with fairly high probability $w(q+1)$ is also 
compatible.

\begin{corollary}
\label{cor:U(j)}
\begin{enumerate}
\item $\prob[(q<\tdepth) \AND \text{$w(0),\ldots,w(q+1)$ are not all compatible } \AND \normex] \leq \frac{1}{\log n}$.
    \item $\prob[(q=\tdepth) \AND \text{$w(0),\ldots,w(\tdepth)$ are not all compatible }  \AND \normex] \leq \frac{1}{\log n}$.
    \end{enumerate}
\end{corollary}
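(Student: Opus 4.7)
The plan is to derive both parts of the corollary in essentially one step by invoking Proposition~\ref{prop:U(j)}(2), after identifying the first level along the path $w(0),w(1),\ldots$ at which compatibility breaks. The underlying property of $q$ I will rely on is $\eventT(\le q)$, i.e., $\ED_{w(i)}<\gapthresh_i$ at every level $0\le i\le q$; this is the reading of the definition consistent with the informal assertion that ``$w(0),\ldots,w(q)$ are all compatible,'' and it is automatic at level $0$ from the standing hypothesis $\ED(x,y)<\gapthresh_0$ of Theorem~\ref{thm:outputset}.

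For part~(1), I would fix any outcome in which $\normex$ holds, $q<\tdepth$, and $w(0),\ldots,w(q+1)$ are not all compatible, and set $j^{*}\in\{1,\ldots,q+1\}$ to be the smallest level with $\neg\eventC(j^{*})$; such $j^{*}$ exists because $w(0)$ is trivially compatible as the root of the decomposition. Then $\eventC(\le j^{*}-1) \AND \neg\eventC(\le j^{*})$, and because $j^{*}-1\le q$ and $\eventT(\le q)$ holds, $\eventT(\le j^{*}-1)$ holds as well. Hence $j=j^{*}\in\{1,\ldots,\tdepth\}$ witnesses the event
\[
\exists\, j\in\{1,\ldots,\tdepth\} : \eventT(\le j-1) \AND \neg \eventC(\le j) \AND \normex,
\]
and by Proposition~\ref{prop:U(j)}(2) the probability in question is at most $1/\log n$. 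Part~(2) is the same argument verbatim with $q=\tdepth$ in place of $q<\tdepth$ and the smallest bad level $j^{*}\le\tdepth$ playing the same role.

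The genuine probabilistic work is already packed into Proposition~\ref{prop:U(j)}, which chains Theorem~\ref{thm:BK} (the bound on the probability of a non-compatible split given bounded parent edit distance) across the $\tdepth=O(\log n)$ levels via a union bound. The only substantive step in the present argument is extracting the witness index $j^{*}$ and verifying $\eventT(\le j^{*}-1)$; the main potential obstacle is confirming that the maximality of $q$ is being used as prefix-maximality (so that $\eventT(\le q)$ holds), since that is the form required for Proposition~\ref{prop:U(j)}(2) to apply directly.
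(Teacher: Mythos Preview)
Your proof is correct and matches the paper's approach: the paper's proof is the one-liner ``$\eventT(\le q)$ holds so the event whose probability is being bounded is included in the event whose probability is bounded in the second part of Proposition~\ref{prop:U(j)},'' and your extraction of the least bad index $j^{*}$ just makes this containment explicit. Your caveat about prefix-maximality is well taken---the paper's definition of $q$ as ``the largest index for which $\eventT(q)$ holds'' does not literally yield $\eventT(\le q)$, but the paper's own proof assumes it too, so the intended reading is evidently that $q$ is the largest index with $\eventT(\le q)$ (equivalently, one less than the first level where $\eventT$ fails); this is an imprecision in the definition of $q$, not a gap in your argument.
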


\begin{proof}
For both parts, $\eventT(\le q)$ holds so the event whose probability is being bounded is included in the event
whose probability is bounded in the second part of Proposition~\ref{prop:U(j)}. 
\end{proof}

To prove  (\ref{eqn:mreconstruct}) we need an upper bound on:
\begin{multline}
\label{eqn:S not reconstructed}
    \prob[\text{$S$ is not correctly reconstructed} \AND \normex] \\
    = \prob[\text{$S$ is not correctly reconstructed} \AND \normex \AND (q=\tdepth)]\\
    + \prob[\text{$S$ is not correctly reconstructed} \AND \normex \AND (q<\tdepth)]
\end{multline}

The first term is easy to bound:

\begin{proposition}
    \label{prop:secondary case} 
    \[
    \prob[\text{$S$ is not correctly reconstructed} \AND \normex \AND (q=\tdepth)] \leq \frac{1}{\log n}
    \]
\end{proposition}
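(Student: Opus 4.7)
My plan is to show that, under a normal execution with $q = \tdepth$, the only way slice $S$ can fail to be correctly reconstructed is that some node on the root-to-leaf path $w(0),\dots,w(\tdepth)$ is incompatible; the stated bound then follows immediately from Corollary~\ref{cor:U(j)}(2).

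The first step is to unpack what $\eventT(\tdepth)$ actually asserts. The choices $\gapthresh_0 = \lceil 20k \orgap\rceil_2$ and $\tdepth = \lceil \log(20k\orgap)\rceil$ are set so that $\gapthresh_\tdepth = 1$ (the boundary case $\gapthresh_\tdepth = 2$, if it occurs for particular integer values, can be ruled out by enlarging $\tdepth$ by one, which does not affect any earlier estimate). Consequently, the event $q = \tdepth$ is precisely $\ED_{w(\tdepth)} < 1$, i.e.\ $\ztree_{w(\tdepth)}(x) = \ztree_{w(\tdepth)}(y)$.

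The key step is to argue that, conditional additionally on every $w(0),\dots,w(\tdepth)$ being compatible, slice $S$ is correctly reconstructed deterministically under $\normex$. Compatibility of $w(\tdepth)$ together with $\ztree_{w(\tdepth)}(x) = \ztree_{w(\tdepth)}(y)$ forces $\canon_{w(\tdepth)}$ to be the all-diagonal alignment of two identical strings, so $\costly{\canon_{w(\tdepth)}} = \emptyset$. Since $b \in \interval_{w(\tdepth)}(x)$, the unique edge of $\canon(x,y)$ in the vertical slice $S = S_V(b)$ lies in $\edges_{w(\tdepth)}$ (by Proposition~\ref{prop:exactly one edge}), and hence $\costly{\canon^+(x,y)} \cap S = \emptyset$. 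Now Corollary~\ref{cor:S correct} covers both remaining possibilities: if no $w(j)$ lies in $\claimednodes$, condition (1) of the corollary is satisfied; if some $w(j) \in \claimednodes$ with $j \geq 1$, condition (2) is satisfied because all of $w(0),\dots,w(j)$ have been assumed compatible. Either way, $S$ is correctly reconstructed.

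Combining these two observations gives
\[
\prob[\text{$S$ not correctly reconstructed} \AND \normex \AND (q=\tdepth)] \leq \prob[(q=\tdepth) \AND \text{not all $w(0),\dots,w(\tdepth)$ compatible} \AND \normex],
\]
and the right-hand side is at most $1/\log n$ by Corollary~\ref{cor:U(j)}(2). The main (and only) non-trivial bookkeeping obstacle is confirming that the parameter $\gapthresh_\tdepth$ is indeed $1$, so that $\eventT(\tdepth)$ genuinely forces equality of the leaf-level fragments; once that is in place, the argument reduces cleanly to the already-established compatibility estimate, and no new probabilistic content is required.
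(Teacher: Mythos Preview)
Your proof is correct and follows essentially the same approach as the paper's: both split on whether $\eventC(\leq \tdepth)$ holds, bound the incompatible case via Corollary~\ref{cor:U(j)}(2), and in the compatible case use $\gapthresh_\tdepth = 1$ together with Corollary~\ref{cor:S correct} (cases (1) and (2) according to whether some $w(j)$ lies in $\claimednodes$). If anything, your write-up is slightly more explicit than the paper's in spelling out why $\costly{\canon^+(x,y)} \cap S = \emptyset$ follows from the compatibility of $w(\tdepth)$, and in flagging the parameter check $\gapthresh_\tdepth = 1$ (which the paper records separately in Section~\ref{sec:constaints}).
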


\begin{proof}
We write the probability as:
\begin{multline}
    \prob[\text{$S$ is not correctly reconstructed} \AND \normex \AND (q=\tdepth)] \\
    = \prob[\text{$S$ is not correctly reconstructed} \AND \normex \AND (q=\tdepth) \AND \eventC(\leq \tdepth)]\\
     + \prob[\text{$S$ is not correctly reconstructed} \AND \normex \AND (q=\tdepth) \AND \neg \eventC(\leq \tdepth)].
    \end{multline}

In the final expression, the second term is at most $\frac{1}{\log n}$
by Corollary~\ref{cor:U(j)}(2).
We claim that the first term is 0.  
To prove this we assume a normal execution where $q=\tdepth$ and $\eventC(\le \tdepth)$ both hold and show that $S$ is necessarily correctly reconstructed.  
Since $\eventC(\le \tdepth)$ holds, every node in $w(0), w(1),\ldots,w(\tdepth)$ is compatible. 
If one of them belongs to $\claimednodes$ then by Corollary~\ref{cor:S correct}(2), $S$ is correctly reconstructed.

Now assume that no node of $\{w(0),\ldots,w(\tdepth)\}$ belongs to $\claimednodes$.
By assumption, $\ED_{w(\tdepth)}<\gapthresh_{\tdepth}=1$, which implies that $\ztree_{w(\tdepth)}(x)$ and $\ztree_{w(\tdepth)}(y)$ are the same string, which 
implies that $\costly{\canon_{w(\tdepth)}}=\emptyset$, and the desired conclusion
follows from  Corollary~\ref{cor:S correct}(1).
\end{proof}

To bound the second term in  (\ref{eqn:S not reconstructed}), we have:

\begin{multline}
\prob[\text{$S$ not correctly reconstructed} \AND \normex \AND (q < \tdepth)] = \\
\prob[\text{$S$ is not correctly reconstructed} \AND \normex \AND (q < \tdepth) \AND \neg\eventC(\leq q+1)] + \\
\prob[\text{$S$ is not correctly reconstructed} \AND \normex \AND (q < \tdepth) \AND \eventC(\leq q+1)].
\label{S not correct}
\end{multline}

The first term is at most $\frac{1}{\log n}$ by the Corollary~\ref{cor:U(j)}(1).
To bound the second term define the events:

\begin{description}
\item[\STR$(v)$:] $\claimedstring_{v}(x)$ and $\claimedstring_{v}(y)$ are both defined.
\item[\LOC$(v)$:] $\claimedlocation_{v}(x)$ and $\claimedlocation_{v}(y)$ are both defined.
\end{description}

We will prove:

\begin{lemma}
\label{lemma:STR}
    $\prob[\neg\STR(w(q+1)) \AND \normex \AND (q<\tdepth) \AND \eventC(\leq q+1)] \leq \frac{1}{\log n}.$
\end{lemma}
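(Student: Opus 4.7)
The plan is to reduce $\STR(w(q+1))$ to a load condition on the HMR tree for level $q+1$ and then bound the expected load at every HMR-ancestor of $w(q+1)$. By the code of $\findstrings$, $\claimedstring_{w(q+1)}(x)$ and $\claimedstring_{w(q+1)}(y)$ are defined precisely when $\decode$ succeeds, which by the minimality property of $\decode$ requires $\claimedbv_{w(q+1)}(x)=\bvec_{w(q+1)}(x)$ and likewise for $y$. So the goal is to show that the HMR recovery $\hmrrecover(\fsketch^{q+1}(x),\fsketch^{q+1}(y))$ returns every mismatch bit position lying at node $w(q+1)$.

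First I will argue that under the given conditioning, every 1-bit of either grammar at $w(q+1)$ is indeed a mismatch of $\bvec^{q+1}$. Because $q < \tdepth$, the maximality of $q$ gives $\ED_{w(q+1)} \geq \gapthresh_{q+1}$; combined with the absence of a threshold-detection abnormality (part of $\normex$) this forces $\tstree_{w(q+1)}(x) \neq \tstree_{w(q+1)}(y)$. Since $\bvec^{q+1}_{v,i} = \tstree_v \cdot \bvec_{v,i}$, at $v=w(q+1)$ every position where either $\bvec_{w(q+1)}(x)$ or $\bvec_{w(q+1)}(y)$ is $1$ becomes a mismatch in $\bvec^{q+1}$. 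By the HMR Completeness guarantee in Theorem~\ref{thm:HMR} (which holds since $\normex$ rules out HMR abnormalities), these mismatches are recovered provided that every HMR-ancestor $w(i)$ of these bit positions, for $i \in \{0,1,\ldots,q+1\}$, is $\loadpar$-underloaded. Letting $N_i$ denote the number of mismatches of $\bvec^{q+1}$ in the subtree of the decomposition tree rooted at $w(i)$, this reduces to proving $N_i < \stringcap_i/\loadpar$ with high probability, which I will do via Markov's inequality after bounding $\E[N_i]$.

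To bound $\E[N_i]$ I will split the level-$(q+1)$ descendants $u$ of $w(i)$ into two classes. First, for \emph{compatibly-split descendants} (where every node on the path $w(i)\to u$ was compatibly split), Theorem~\ref{thm:BK}(2b) gives $\HAM(\bvec_u(x),\bvec_u(y)) \leq \dfactor\cdot \ED_u$ when $\tstree_u(x) = \tstree_u(y)$; when the fingerprints differ the mismatch count is at most $2\gramsize_{q+1}$, but by Theorem~\ref{thm:OR} this occurs with probability $O(\orgap\,\ED_u/\gapthresh_{q+1})$, giving an expected contribution per $u$ of $O((\dfactor + \critpar)\ED_u)$. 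Summing these descendants uses the compatibility decomposition $\sum_u \ED_u \leq \ED_{w(i)} < \gapthresh_i$ (valid because $w(i)$ is compatible and the union of the $u$'s covers compatibly-split parts), yielding a total expected compatible contribution of $O((\dfactor+\critpar)\gapthresh_i)$. Second, for \emph{flooded descendants} arising from a bad split at some ancestor at level $j\in\{i,\ldots,q\}$, Theorem~\ref{thm:BK}(2a) bounds the conditional probability of a bad split at a given level-$j$ node by $\bkfailure\,\ED_{w(j)}/\gramsize_{j+1} \leq \bkfailure\gapthresh_j/\gramsize_{j+1} = O(1/\log^6 n)$; the worst-case flood size at a bad-split node is at most the capacity $\stringcap_{j+1}$ at the next level. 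Plugging in $\stringcap_i = (512\bkfailure\orgap \log^{12}n)\gapthresh_i$ and $\loadpar = 4(\tdepth+1) = O(\log n)$, Markov's inequality gives $\prob[N_i \geq \stringcap_i/\loadpar] = O(1/\log^4 n)$, and a union bound over the $O(\log n)$ ancestor levels yields total failure probability at most $1/\log n$.

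The main obstacle is rigorously handling the flood contribution: although each bad split is individually rare, a single bad split at level $j$ can in principle corrupt all level-$(q+1)$ descendants, whose number grows with $q+1-j$. The right way to control this is to exploit the cascading capacity structure -- the total grammar mass under a bad-split node is itself bounded by the next level's capacity, so the flood cost can be charged levelwise against the geometrically shrinking sequence $\stringcap_{i+1}, \stringcap_{i+2}, \ldots$, ultimately a lower-order term compared to the compatible contribution $(\dfactor+\critpar)\gapthresh_i$. Carrying out this charging argument cleanly, while respecting the conditional independence structure of the $\bdecomp$ randomness across levels, is the technical crux of the proof.
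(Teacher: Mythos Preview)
Your overall strategy is the same as the paper's: show that the Ostrovsky--Rabani fingerprints at $w(q+1)$ differ (so every $1$-bit is a mismatch), then show every HMR ancestor of $w(q+1)$ is underloaded via an expectation bound plus Markov, summing compatibly-split contributions by edit distance and charging bad splits to level capacities. But there is a genuine gap in how you set it up.

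You define $N_i$ as the \emph{raw number of mismatches} of $\bvec^{q+1}$ in the subtree under $w(i)$, and then claim that a bad split at level $j$ contributes at most $\stringcap_{j+1}$. That bound is false for the raw count: once $\bdecomp$ splits some $\ztree_v(x),\ztree_v(y)$ incompatibly, the grammar mismatches among $v$'s level-$(q+1)$ descendants can be essentially arbitrary (up to $|\BASE|^{\,q+1-j}\cdot 2\gramsize_{q+1}$), so $\E[N_i]$ is not controlled at all. The capacity $\stringcap_j$ caps only the \emph{HMR load} $\hmrload{\stringcap}^{\leq q+1}_{w(i)}$, because the load recursion applies $\min(\stringcap_\ell,\cdot)$ at every level. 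Showing $N_i<\stringcap_i/\loadpar$ is sufficient but unprovable; you must bound the truncated quantity. The paper does exactly this by introducing an explicit majorant $\loadestimator^j_v$ that hard-codes the truncation (setting $\loadestimator^j_v=\stringcap_{|v|}$ at any non-compatibly-split internal node and $\min(\dfactor\ED_v,\stringcap_j)$ or $2\gramsize_j$ at level-$j$ nodes), proves $\hmrload{\stringcap}^{\leq j}_v\le\loadestimator^j_v$, and then bounds $\E[\loadestimator^j_{w(i)}\mid A_i]\le \critpar\,\ED_{w(i)}(A_i)\,(j-i+2)$ by reverse induction on $i$. Your ``levelwise charging against $\stringcap_{i+1},\stringcap_{i+2},\ldots$'' is precisely this induction, but it only goes through for the truncated quantity, not for $N_i$.

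Two smaller points. First, your bad-split probability $\bkfailure\,\ED_{w(j)}/\gramsize_{j+1}$ refers only to the spine node $w(j)$, but the load at $w(i)$ sees bad splits at \emph{every} compatibly-reached level-$j$ descendant $v$ of $w(i)$; you must sum $\bkfailure\,\ED_v/\gramsize_{j+1}$ over all such $v$ and use $\sum_v \ED_v\le \ED_{w(i)}$. Second, the case $i=q+1$ cannot use $\ED_{w(q+1)}<\gapthresh_{q+1}$ (false by definition of $q$); instead observe directly that the level-$(q+1)$ load of $w(q+1)$ is at most $2\gramsize_{q+1}<\stringcap_{q+1}/(4(\tdepth+1))$, so it is deterministically underloaded.
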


\begin{lemma}
\label{lemma:LOC}
$\prob[\neg\LOC(w(q+1)) \AND \normex \AND (q<\tdepth) \AND \eventC(\leq q+1)] \leq \frac{1}{\log n}.$ 
\end{lemma}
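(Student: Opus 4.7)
The plan is to mirror the proof of Lemma~\ref{lemma:STR}, transported from the grammar decomposition tree $T(\BASE^\tdepth)$ to the binary location tree $T(\{0,1\}^{\tdepth\gsize})$ sketched by $\locationcondense$. Unrolling the code of $\findlocations$, $\LOC(w(q+1))$ fails only if there is a left ancestor $u$ of $w(q+1)$ in the binary tree for which the level-$|u|$ call to $\hmrrecover$ does not output a mismatch triple at $u$. There are at most $(q+1)\gsize \le \tdepth\gsize$ such left ancestors, so it suffices to bound the failure probability at a single $u$ by $O(1/(\tdepth\gsize\log n))$ and take a union bound.

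The first step is to show that under $\normex\AND(q<\tdepth)\AND \eventC(\leq q+1)$, every binary-tree ancestor $u$ of $w(q+1)$ is in fact a genuine mismatch of $\lvec^{|u|}$, so that HMR has something to recover. Since $q<\tdepth$ we have $\gapthresh_{q+1}\ge 1$ and $\ED_{w(q+1)}\ge \gapthresh_{q+1}$, so $\ztree^*_{w(q+1)}(x)\ne \ztree^*_{w(q+1)}(y)$ (using $\ztree^*_{w(q+1)} = \ztree_{w(q+1)}$ at level $(q+1)\gsize$ of the binary tree). A downward induction based on $\ztree^*_v = \ztree^*_{v\concat 0}\concat \ztree^*_{v\concat 1}$ together with $\lvec^*_v=|\ztree^*_{v\concat 0}|$ shows that equality of $\ztree^*$ at $v$ propagates to every descendant; contrapositively, every binary-tree ancestor $u$ of $w(q+1)$ has $\ztree^*_u(x)\ne \ztree^*_u(y)$. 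Fingerprint normality (contained in $\normex$) then forces $\fptree_u(x)\ne \fptree_u(y)$ and hence $\lvec^{|u|}_u(x)\ne \lvec^{|u|}_u(y)$, so $u$ is indeed a mismatch leaf of the HMR instance at level $|u|$. By Soundness and Completeness of $\hmrrecover$ (Theorem~\ref{thm:HMR}, also baked into $\normex$), $u$ is then recovered exactly when it is $(4\tdepth\gsize)$-accessible in $T(\{0,1\}^{|u|})$ under the capacity function $\locationcap^{\le |u|}$, in which case $\claimedsize_u(x)$ and $\claimedsize_u(y)$ are both defined.

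It remains to show that, with probability at least $1-O(1/(\tdepth\gsize\log n))$, the fixed $u$ is $(4\tdepth\gsize)$-accessible, i.e., no ancestor of $u$ in its HMR tree is $(4\tdepth\gsize)$-overloaded. I would transport the Markov-type load argument of Lemma~\ref{lemma:STR} verbatim: for an ancestor $w$ of $u$ at level $i$ in $T(\{0,1\}^{|u|})$, the load $\hmrload{\locationcap^{\le |u|}}_w$ is at most the number of binary-tree nodes at level $|u|$ below $w$ whose $\ztree^*$ differs between $x$ and $y$, and the bad-split accounting from Theorem~\ref{thm:BK}(\ref{BK2a},\ref{BK2b}) combined with the parameter choices $\gramsize_j$, $\gapthresh_j$, $\stringcap_j$, and $\critpar$ bounds the expected number of such mismatches in the same way as for the grammar tree. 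With the rule $\locationcap^{\le j}_i=\stringcap_{\lfloor i/\gsize\rfloor}$, which matches the grammar-tree capacities at multiples of $\gsize$ and changes by at most a factor of $2$ within each block of $\gsize$ consecutive levels, the expected load at each ancestor is at most $\locationcap^{\le |u|}_i/(\critpar\log n)$; Markov's inequality then gives overload probability $O(1/(\tdepth\gsize\log^2 n))$ per ancestor. Union-bounding over the $\le \tdepth\gsize$ ancestors of $u$ and then over the $\le \tdepth\gsize$ choices of $u$ yields the claimed bound $1/\log n$.

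The main obstacle will be making the load bound precise at the ``intermediate'' binary-tree levels $j\gsize+r$ with $0<r<\gsize$, whose capacities are inherited from level $j\gsize$ by the floor rule: the grammar-tree analysis directly controls the mismatch counts only at multiples of $\gsize$. I expect this to reduce to a short monotonicity argument showing that the number of $\ztree^*$-mismatches below a binary-tree node at level $j\gsize+r$ is bounded by the corresponding count at level $j\gsize$, which is in turn the quantity controlled by the grammar-tree machinery already needed for Lemma~\ref{lemma:STR}. Once this is in place, the rest of the calculation (charging edits against bad splits and invoking Markov) slots in unchanged, and the parameter choices $\gramsize_i$ and $\stringcap_i=(512\bkfailure\orgap\log^{12}n)\gapthresh_i$ that drive Lemma~\ref{lemma:STR} give exactly the polylogarithmic margin needed here.
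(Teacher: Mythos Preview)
Your high-level plan is right and matches the paper's: reduce to Proposition~\ref{prop:sufficient}(2), show each binary-tree ancestor of $w(q+1)$ is a genuine $\lvec$-mismatch under $\normex$, and then control the probability of an overload along the path by importing the grammar-tree load analysis. You also correctly flag the intermediate levels $(i,r)$ with $r\neq 0$ as the main obstacle. However, two of your intermediate steps do not go through as written.

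First, you bound the load $\hmrload{\locationcap^{\le |u|}}_w$ by the raw number of level-$|u|$ nodes below $w$ with $\ztree^*(x)\neq\ztree^*(y)$ and then claim the bad-split accounting bounds the expectation of that raw count. It does not: a single node that is not compatibly split at level $i$ can create up to $n^{j-i}$ mismatch leaves below it, so the expected raw count is uncontrolled. The grammar-tree machinery (Proposition~\ref{prop:loadestimator 2}) only controls the \emph{capped} load $\hmrload{\stringcap}^{\le j}$ via $\loadestimator^j$, where bad splits are charged $\stringcap_{|v|}$ rather than the actual number of mismatches. The paper's route is to compare capped load functions directly: for $v$ at a multiple-of-$\gsize$ level, $\hmrload{\locationcap}^{\le (j,0)}_{v'}\le \hmrload{\stringcap}^{\le j}_v$, proved by induction on the recurrence (one mismatch location-leaf at $v$ implies at least one mismatch grammar-leaf below $v$, and the min-with-cap operation preserves the inequality). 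Only after this comparison can one invoke Proposition~\ref{prop:loadestimator 2}.

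Second, your union-bound arithmetic does not close. With overload parameter $\loadpar=4\tdepth\gsize$, Proposition~\ref{prop:loadestimator 2}(2) gives a per-event bound of $\tdepth\gsize/\log^5 n$, not $O(1/(\tdepth\gsize\log^2 n))$; summing over your $(\tdepth\gsize)^2$ pairs of (ancestor, $u$) yields $(\tdepth\gsize)^3/\log^5 n$, which can be as large as $\log n$. The paper resolves this by first collapsing all overload events $\eventL((i,r),(j,t))$ to events $\eventL((i,0),(j',0))$ with $0\le i<j'\le \tdepth$ via two monotonicity steps (load is monotone along root paths, and monotone in the HMR-instance level), leaving only $\tdepth^2$ events to union over; the arithmetic then gives $\tdepth^2\cdot \tdepth\gsize/\log^5 n\le 1/\log n$. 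Your monotonicity sketch only moves the node coordinate $(i,r)\to (i,0)$ and leaves the instance level $|u|$ untouched, so the collapse is incomplete.
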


$\STR(w(q+1))$ and $\LOC(w(q+1))$ together imply that $w(q+1)$ is correctly reconstructed. 
If in addition $\eventC(w(q+1))$ holds then by  Corollary~\ref{cor:S correct}(2), $S$ is correctly reconstructed.  Therefore
these two lemmas imply:

\begin{align*}
\prob & [\text{$S$ is not correctly reconstructed} \AND \normex \AND (q < \tdepth) \AND \eventC(\leq q+1)] \leq \\
&\prob[\neg\STR(w(q+1)) \AND \normex \AND (q<\tdepth) \AND \eventC(\leq q+1)] + \\
&\prob[\neg\LOC(w(q+1)) \AND \normex \AND (q<\tdepth) \AND \eventC(\leq q+1)] \leq \frac{2}{\log n}.
\end{align*}

This bounds the second term on the right of (\ref{S not correct}).  Combined with the
upper bound on the first term we have $\prob[\text{$S$ not correctly reconstructed} \AND \normex \AND (q < \tdepth)] \leq \frac{3}{\log n}$, and combining this 
with Proposition~\ref{prop:secondary case} yields
$\prob[\text{$S$ is not correctly reconstructed} \AND \normex]  \leq \frac{4}{\log n}$ as required
for Lemma~\ref{lemma:mreconstruct}.

To prove Lemmas~\ref{lemma:STR} and~\ref{lemma:LOC} we will need. 

\begin{proposition}
\label{prop:sufficient}
Assume a normal execution for strings $x$,$y$.
Let $v \in \BASE^{\leq \tdepth}$  with  $j=|v|$ such that $\ztree_v(x) \neq \ztree_v(y)$.  
\begin{enumerate}
\item 
The following two conditions are together sufficient for 
$\claimedstring_v(x)$ and $\claimedstring_v(y)$ to both be defined:
\begin{enumerate}
    \item
    \label{condition 1a}$\tstree_v(x) \neq \tstree_v(y)$
    \item 
    \label{condition 1b} For  the HMR-recovery problem associated to $(\bvec^j,T(\BASE^j\times \{1,\dots,\bvsize\}),\stringcap^{\leq j},4(\tdepth+1),1/n^4)$  (see the procedure
$\vectorcondense$), all of the nodes $v_{\leq 0},\ldots,v_{\leq j-1}$ are $\frac{1}{4(\tdepth+1)}$-underloaded (as defined in
Section~\ref{subsec:HMR})
\end{enumerate}
\item
\label{condition 2}
The following condition is sufficient  for $\claimedlocation_v(x)$ and $\claimedlocation_v(y)$ to both be defined: View $v$ as a node of $T(\{0,1\}^{\tdepth \gsize})$ where
its depth is $j \gsize$. For every $i\leq j\gsize$, for the HMR-recovery problem associated to $(\lvec^{i},T(\{0,1\}^{i}),\locationcap^{\leq i},4\tdepth\gsize,1/n^4)$  all of the nodes $v_{\leq 0},\ldots,v_{\leq i-1}$ are $\frac{1}{4\tdepth\gsize}$-underloaded. 
\end{enumerate} 
\end{proposition}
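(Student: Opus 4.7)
The plan is to derive both parts from the HMR recovery guarantee (Theorem~\ref{thm:HMR}) together with the algebraic structure set up by $\vectorcondense$ and $\locationcondense$, working throughout inside a normal execution so that none of the HMR calls fail their Completeness/Soundness contracts.

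For Part 1, I use the identity $\claimedstring_v(x)=\decode(\claimedbv_v(x))$ together with Claim~\ref{claim:claimedbv} and the minimality property of $\decode$ to reduce defined-ness of $\claimedstring_v(x)$ to the exact equality $\claimedbv_v(x)=\bvec_v(x)$. So I must show that every index $i$ with $\bvec_{v,i}(x)=1$ or $\bvec_{v,i}(y)=1$ is reported in $\stringmismatch^j$ with the appropriate nonzero component. Since $\bvec^j_{v\concat i}=\tstree_v\cdot \bvec_{v,i}$, condition (a) $\tstree_v(x)\neq \tstree_v(y)$, combined with the fact that the Ostrovsky--Rabani fingerprint of a nonempty string is nonzero, forces each such $(v,i)$ to be a genuine mismatch leaf of the level-$j$ HMR instance. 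Condition (b) supplies $\tfrac{1}{4(\tdepth+1)}$-underloadedness at every internal ancestor strictly above $v$; the node $v$ itself is automatically underloaded because its load is at most the total $1$-count of the two grammars, bounded by $|\supp(\bvec_v(x))|+|\supp(\bvec_v(y))|\le 2\gramsize_j$, while the cutoff $\stringcap_j/(4(\tdepth+1))$ is larger by a factor of $\Omega(\orgap\log^5 n)$ thanks to $\stringcap_j/\gramsize_j=16\orgap\log^6 n$ and $\tdepth=O(\log n)$. Theorem~\ref{thm:HMR} then recovers all accessible mismatch leaves, delivering $\claimedbv_v(x)=\bvec_v(x)$ and hence $\claimedstring_v(x)=\ztree_v(x)$; the argument for $y$ is symmetric.

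For Part 2, defined-ness of $\claimedlocation_v(x)$ (and $\claimedlocation_v(y)$) requires every left ancestor $u$ of $v$ in the binary tree to be reported by $\hmrrecover$ on the pair $(\gsketch^{|u|}(x),\gsketch^{|u|}(y))$. The key combinatorial step is to show that under a normal execution and the hypothesis $\ztree_v(x)\neq \ztree_v(y)$, every binary ancestor $u$ of $v$ satisfies $\lvec^{|u|}_u(x)\neq \lvec^{|u|}_u(y)$. The packing $\lvec^{|u|}_u=n\cdot\fptree_u+\lvec^*_u$ is injective because $\lvec^*_u<n$, so equality at $u$ would force both $\fptree_u(x)=\fptree_u(y)$ and $\lvec^*_u(x)=\lvec^*_u(y)$; absence of a fingerprinting abnormality upgrades the former to $\ztree^*_u(x)=\ztree^*_u(y)$, and combined with the matching split this forces both binary children of $u$ to agree on $x$ and $y$. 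A descent along the binary path from $u$ down to $v$, propagating equality of both $\ztree^*$ and $\lvec^*$ at each step and ultimately reaching the $\BASE$-ary leaf where $\ztree^*_v=\ztree_v$, yields $\ztree_v(x)=\ztree_v(y)$, contradicting the hypothesis. Once this combinatorial claim is in hand, the hypothesized underloadedness of $v_{\le 0},\dots,v_{\le |u|-1}$ makes every left ancestor $u$ accessible in its level-$|u|$ HMR problem, so Theorem~\ref{thm:HMR} recovers $u$ with $\claimedsize_u(x)=\lvec^*_u(x)$ and $\claimedsize_u(y)=\lvec^*_u(y)$. Summing over all left ancestors then produces a defined $\claimedlocation_v(x)$ and $\claimedlocation_v(y)$.

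The main obstacle I expect is executing the descent step in Part 2 rigorously: the left-size split at each binary node depends on the underlying $\BASE$-ary decomposition, which is produced independently by $\mdecomp$ for $x$ and $y$, so string equality at an intermediate ancestor does not automatically yield split equality at its children, and one has to carefully handle the bifurcation in which the split disagrees at some descendant while the string still agrees. The bookkeeping that ties every such bifurcation back to a genuine mismatch in $\lvec^{|\cdot|}_\cdot$ on the path to $v$, and hence rules out a silent equality chain reaching the $\BASE$-ary leaf $v$ without contradicting $\ztree_v(x)\neq\ztree_v(y)$, is the technical heart of the argument.
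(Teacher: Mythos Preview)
Your Part~1 is correct and tracks the paper's argument closely; in fact you are more explicit than the paper about why condition~(a) is needed (to turn every nonzero grammar bit into a genuine mismatch leaf so that Completeness yields the full bit-vector and not just the symmetric difference).

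For Part~2 your overall plan is right, but the route you outline is more elaborate than what the paper actually does, and the obstacle you flag as ``the technical heart'' is one the paper sidesteps rather than confronts. You aim to prove the weaker statement ``every binary ancestor $u$ is a mismatch'' by a descent that simultaneously propagates equality of $\ztree^*$ \emph{and} $\lvec^*$ from $u$ down toward $v$; you correctly observe that from $(\ztree^*_u,\lvec^*_u)(x)=(\ztree^*_u,\lvec^*_u)(y)$ one only gets $\ztree^*_{u'}(x)=\ztree^*_{u'}(y)$ at the next node $u'$ on the path, with no control over $\lvec^*_{u'}$, so the induction does not obviously continue. The paper instead asserts the \emph{stronger} claim $\ztree^*_w(x)\neq\ztree^*_w(y)$ for \emph{every} binary ancestor $w$ of $v$, in one stroke, with the justification that if $\ztree^*_w(x)=\ztree^*_w(y)$ then the (shared-randomness) decomposition procedure would act identically below $w$, forcing $\ztree_v(x)=\ztree_v(y)$. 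Once that stronger inequality is in hand, the absence of a fingerprinting abnormality immediately gives $\fptree_w(x)\neq\fptree_w(y)$, hence a mismatch at $w$, with no descent and no $\lvec^*$ bookkeeping at all.

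So the comparison is: the paper trades your careful two-variable descent for a single global determinism argument. Your approach would, if completed, prove a slightly weaker (but sufficient) statement with more explicit combinatorics; the paper's approach is shorter but leans on the determinism of $\mdecomp$ on equal substrings, and is stated somewhat loosely for binary ancestors at non-$\BASE$ levels (where $\bdecomp$ is not literally invoked on $\ztree^*_w$). If you want to follow the paper's line, the point to internalize is that you don't need to propagate $\lvec^*$ equality at all: you only need the implication ``$\ztree^*_w(x)=\ztree^*_w(y)$ at any ancestor $w$ of $v$ $\Rightarrow$ $\ztree_v(x)=\ztree_v(y)$'', which is what the paper is invoking.
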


\begin{proof}
In both parts we apply Theorem~\ref{thm:HMR}. 
Since the execution is normal, there is no HMR-abnormality so all executions of
$\hmrrecover$ satisfy Completeness (see Section~\ref{subsec:HMR}), that every mismatch triple with accessible index is correctly recovered, and Soundness, that no incorrect triple is recovered.

When $\hmrrecover$ is applied to $\hmrrecover(\fsketch^j(x),\fsketch^j(y))$ it recovers all $\frac{1}{4\tdepth}$-accessible
mismatch triples correctly. 
We claim that all indices belonging to
$\{(v,i):i \in \{1,\ldots,\bvsize\}\}$ are $\frac{1}{4\tdepth}$-accessible and
for this we need that $v_{\leq 0},\ldots,v_{\leq j}=v$ are all $\frac{1}{4\tdepth}$-underloaded.
Condition \ref{condition 2}(b) says that $v_{\leq 0},\ldots,v_{\leq j-1}$ are all $\frac{1}{4\tdepth}$-underloaded. $v$ itself is also $\frac{1}{4\tdepth}$-underloaded, because 
the load on $v$ is at most the number of mismatches between $\bvec^{j}_v(x)$ and $\bvec^{j}_v(y)$
which is at most the total number of non-zero entries in both.  By Theorem~\ref{thm:BK},
$\bvec_v(x)$ and $\bvec_v(y)$ each have at most $\gramsize_j$ non-zero entries, so the load of $v$ is at most 
$2\gramsize_j$ which is at most $\frac{1}{4\log(n)}\stringcap_j$ (with room to spare). 

Therefore all indices in $\{(v,i):i \in \{1,\ldots,\bvsize\}\}$ are $\frac{1}{4\tdepth}$-accessible
and so
all mismatches between $\bvec^j_v(x)$ and $\bvec^j_v(y)$ are recovered correctly.  Thus
$\claimedbv_v(x)=\bvec_v(x)$ and $\claimedbv_v(y)=\bvec_v(y)$ and so
$\decode(\claimedbv_v(x))=\ztree_v(x)$ and $\decode(\claimedbv_v(y)=\ztree_v(y)$, completing the proof of the first part.

For the second part, we are now working on the binary tree $T(\{0,1\}^{\tdepth\gsize}$),
from the code for $\findlocations$, if $\claimedsize_{v_{\leq i}}(x)$
is defined for all $i <j$ then $\claimedlocation_v(x)$ will be defined.  $\claimedsize_{v_{\leq i}}(x)$
is defined from the run of $\hmrrecover$ on the HMR-sketch of $(\lvec^i,$ $T(\{0,1\}^i),$ $\locationcap^{\le i},$ $4\tdepth\gsize, 1/n^4)$. 
The mismatch triples of this run correspond to nodes $w$ at level $i$ for which $\lvec^i_w(x)\neq \lvec^i_w(y)$ or $\fptree_w(x) \neq \fptree_w(y)$.  
Since we are assuming a normal execution, $\fptree_w(x) \neq \fptree_w(y)$ if and only if $\ztree^*_w(x) \neq \ztree^*_w(y)$. 
Since $\ztree_v(x) \neq \ztree_v(y)$ by hypothesis, it follows that $\ztree_w(x) \neq \ztree_w(y)$ for all ancestors  $w$ of $v$. 
(If $\ztree_w(x)=\ztree_w(y)$ then $\bdecomp$ acts identically on both and so the $\ztree(x)$ and $\ztree(y)$ would be identical below $w$.)
Therefore for $i \leq j$, there is a mismatch triple corresponding to $v_{\leq i}$.  
Thus in the HMR-recovery problem associated to $(\lvec^i,T(\{0,1\}^i),\locationcap^{\le i}, 4\tdepth\gsize,1/n^4)$, 
if  every node on the path
from $v_{\leq i}$ to the root (including $v_{\leq i}$) is $\frac{1}{4\tdepth\gsize}$-underloaded
then the mismatch triple associated to $v_{\leq i}$ will be recovered. The hypothesis of
the second part ensures says that $v_{\leq 0},\ldots,v_{\leq i-1}$ are all $\frac{1}{4\tdepth\gsize}$-underloaded.  $v_{\leq i}$ is also because its load is just 1. Therefore
$\claimedsize_{v_{\leq i}}(x)$ and $\claimedsize_{v_{\leq i}}(y)$ will both be defined, as required to
conclude that $\claimedlocation_v(x)$ and $\claimedlocation_v(y)$ to both be defined.
\end{proof}

It remains to prove Lemmas ~\ref{lemma:STR} and ~\ref{lemma:LOC}.

\subsection{Proof of Lemma~\ref{lemma:STR}}
\label{subsec:STR}

For $i,j$ with  $0 \leq i \leq j \leq \tdepth$ define the following event:

\begin{description}
\item[$\eventF(i,j)$:] $w(i)$ is $\frac{1}{4(\tdepth+1)}$-overloaded 
    with respect to the HMR-recovery problem associated with \newline $(\bvec^j,T(\BASE^j\times \{1,\dots,\bvsize\}),\stringcap^{\leq j},4(\tdepth+1),1/n^4)$,
    which means $\hmrload{\stringcap}^{\leq j}_{w(i)} \geq \frac{1}{4(\tdepth+1)}\stringcap^{\leq j}_i$.
\end{description}

Note that by the first part of Proposition~\ref{prop:sufficient}:

\begin{multline}
\prob[\neg\STR(w(q+1)) \AND  \normex \AND (q<\tdepth) \AND \eventC(\leq q+1)]\\  
    \leq 
    \prob[(\tstree_{w(q+1)}(x) = \tstree_{w(q+1)}(y)) \AND (q<\tdepth) \AND \normex]\\ 
     + \prob[\OR_{0\leq i\leq q} (\eventF(i,q+1) \AND \normex \AND (q<\tdepth)\AND \eventC(\leq i)\AND T(\leq i))]\\
     =   \prob[\OR_{0\leq i\leq q} (\eventF(i,q+1) \AND \normex \AND (q<\tdepth)\AND \eventC(\leq i)\AND T(\leq i))]\\
     \leq  \prob[\OR_{0\leq i\leq q} (\eventF(i,q+1) \AND \normex \AND (q<\tdepth)\AND \eventC(\leq i)\AND T(\leq i))]\\
\leq  
\prob[\OR_{0\leq i<j \leq  \tdepth}( \eventF(i,j) \AND \normex \AND \eventC(\leq i)\AND T(\leq i))]
\label{eqn:sum of F}
\end{multline}

The equality holds  since
$\ED_{w(q+1)} \geq \gapthresh_{q+1}$ by definition of $q$ and so
a normal execution implies $\tstree_{w(q+1)}(x) \neq \tstree_{w(q+1)}(y)$.


Yo prove an upper bound on the right hand side of (~\ref{eqn:sum of F})
we will prove an upper bound on $\expected[\hmrload{\stringcap}^{\leq j}_{w(i)}]$ and  use
Markov's inequality.  To obtain such an upper bound, we introduce an auxiliary integer labeling
$\loadestimator^j$ on the nodes of $T(\BASE^{\leq j})$ which depends 
on the execution of $\fullsketch$ on $x$ and $y$ and is an upper bound on $\hmrload{\stringcap}^{\leq j}$. (Below,
$\dfactor$ is the parameter appearing in Theorem~\ref{thm:BK}.)

\[ \loadestimator^j_v = \begin{cases} 
          \min(\dfactor \ED_v , \stringcap_j)  & \text{if $|v|=j$ and $\tstree_v(x) =\tstree_v(y)$} \\
          2\gramsize_j &\text{if $|v|=j$ and $\tstree_v(x) \neq \tstree_v(y)$}\\
          \sum_{w \in  \child(v)} \loadestimator^j_w & \text{if $|v|<j$ and $v$ is compatibly split}\\
          \stringcap_{|v|} & \text{if $|v|<j$ and $v$ is not compatibly split}
       \end{cases}
    \]


    \begin{proposition}
    \label{prop:loadestimator}
    Suppose $j \in \{1,\ldots,\tdepth\}$.  For all nodes $v$ of $T(\BASE^{\leq j})$,
    $\hmrload{\stringcap}^{\leq j}_v \leq \loadestimator^j_v$.
    \end{proposition}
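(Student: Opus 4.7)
The plan is to proceed by reverse induction on $|v|$, starting from $|v|=j$ and working up to the root at $|v|=0$. The four cases in the definition of $\loadestimator^j_v$ correspond exactly to the cases I would distinguish in the induction.

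For the base case $|v|=j$, observe that $v$ is an internal node of $T(\BASE^j\times\{1,\dots,\bvsize\})$ whose children are the leaves $(v,i)$ for $i\in\{1,\dots,\bvsize\}$. Hence $\hmrload{\stringcap}^{\leq j}_v=\min(\stringcap_j,\HAM(\bvec^j_v(x),\bvec^j_v(y)))$. I would then split according to whether $\tstree_v(x)=\tstree_v(y)$. If they are equal to some common value $\alpha$, then either $\alpha=0$ (in which case both vectors are zero and $\HAM=0$) or $\alpha\neq 0$, in which case multiplication by $\alpha$ is a bijection on $\bigfield$, so the mismatch indices of $\bvec^j_v(x)$ and $\bvec^j_v(y)$ coincide with those of $\bvec_v(x)$ and $\bvec_v(y)$. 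Applying Theorem~\ref{thm:BK}(\ref{BK2b}) to the call to $\bdecomp$ at $\parent(v)$ gives $\HAM(\bvec_v(x),\bvec_v(y))\leq \dfactor\cdot \ED_v$, so $\hmrload{\stringcap}^{\leq j}_v\leq \min(\dfactor\,\ED_v,\stringcap_j)=\loadestimator^j_v$. If $\tstree_v(x)\neq \tstree_v(y)$, the Hamming distance is at most the sum of the supports of $\bvec^j_v(x)$ and $\bvec^j_v(y)$; since multiplication by $\tstree_v$ does not enlarge the support, and each $\bvec_v$ is $\gramsize_j$-sparse by Theorem~\ref{thm:BK}, we get $\HAM\leq 2\gramsize_j=\loadestimator^j_v$.

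For the inductive step $|v|<j$, the children of $v$ in $T(\BASE^j\times\{1,\dots,\bvsize\})$ are exactly the nodes $v\concat a$ for $a\in\BASE$, so $\hmrload{\stringcap}^{\leq j}_v=\min(\stringcap_{|v|},\sum_{w\in\child(v)}\hmrload{\stringcap}^{\leq j}_w)$. If $v$ is compatibly split, I drop the $\min$ and apply the inductive hypothesis to each child, obtaining $\hmrload{\stringcap}^{\leq j}_v\leq \sum_{w\in\child(v)}\loadestimator^j_w=\loadestimator^j_v$. If $v$ is not compatibly split, the trivial clip bound $\hmrload{\stringcap}^{\leq j}_v\leq \stringcap_{|v|}=\loadestimator^j_v$ suffices.

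The only step requiring real content is the base case with $\tstree_v(x)=\tstree_v(y)$, and the delicate point there is tying the Hamming distance of grammars back to $\ED_v$ via Theorem~\ref{thm:BK}(\ref{BK2b}) applied at the parent call of $\bdecomp$; everything else is bookkeeping. A minor thing I would verify is the ``empty vs.\ nonempty'' corner: when $\ztree_v(x)$ or $\ztree_v(y)$ is empty, the corresponding $\bvec_v$ is the zero vector by the specification of $\bdecomp$, and the inequality remains valid in the base case because the fingerprint for an empty string is treated as the default value.
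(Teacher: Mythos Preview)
Your proposal is correct and follows essentially the same approach as the paper's own proof: reverse induction on $|v|$, splitting the base case $|v|=j$ according to whether $\tstree_v(x)=\tstree_v(y)$ (invoking Theorem~\ref{thm:BK}(\ref{BK2b}) when they agree and the $2\gramsize_j$ sparsity bound when they differ), and splitting the inductive step according to whether $v$ is compatibly split. Your explicit remarks about the bijection under multiplication by a common nonzero $\alpha$ and about the empty-string corner case are details the paper glosses over, but they do not change the argument.
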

    
\begin{proof}
    Fix $j \in \{1,\ldots,\tdepth\}$ and let $v$ be a node of $T(\BASE^{\leq j})$. We proceed by reverse induction on $|v|$. 
    
    The base case is $|v|=j$. $\hmrload{\stringcap}^{\leq j}$ is defined on the nodes of $T(\BASE^{j} \times \{1,\ldots,\bvsize\})$.
    The leaves of this tree are pairs $(v,i) \in \BASE^j \times \{1,\ldots,\bvsize\}$ and by definition,
    $\hmrload{\stringcap}^{\leq j}_{v,i}$ is 1 if and only if $\bvec^j_{v,i}(x) \neq \bvec^j_{v,i}(y)$. 
    Thus for $v$ with $|v|=j$
     in $T(\BASE^{\leq j})$, by definition $\hmrload{\stringcap}^{\leq j}_v = \min(\stringcap_j,\sum_{i=1}^{\bvsize} \hmrload{\stringcap}^{\leq j}_{v,i})=\min(\stringcap_j,\HAM(\bvec^j_v(x),\bvec^j_v(y)))$.

     $\HAM(\bvec^j_v(x),\bvec^j_v(y))=\HAM(\bvec_v(x) \times \tstree_v(x),\bvec_v(y) \times \tstree_v(y))$.If $\tstree_v(x)=\tstree_v(y)$, by Theorem~\ref{thm:BK}, $\HAM(\bvec_v(x),\bvec_v(y)) \leq \dfactor \ED_v$, and so $\hmrload{\stringcap}^{\leq j}_v
     \leq \min(\dfactor \ED_v , \stringcap_j)$.
    If $\tstree_v(x)\neq \tstree_v(y)$, this is at most the total number of nonzero entries of $\bvec_v(x)$ and $\bvec_v(y)$ which is at most $2\gramsize_j$.

    Now suppose $v$ is a node of depth $i<j$.  By definition:
    \[
    \hmrload{\stringcap}^{\leq j}_v=\min(\stringcap_{|v|}, \sum_{w \in \child(v)}\hmrload{\stringcap}^{\leq j}_{w}).
    \]

    If $v$ is not compatibly split, then $\loadestimator^j_v=\stringcap_{|v|}  \geq \hmrload{\stringcap}^{\leq j}_v$, as required.

    If $v$ is compatibly split, then by the induction hypothesis

    \begin{eqnarray*}
        \loadestimator^j_v & = &\sum_{w \in \child(v)} \loadestimator^j_w\\
        & \geq & \sum_{w \in \child(v)} \hmrload{\stringcap}^{\leq j}_w \\
        & \geq & \hmrload{\stringcap}^{\leq j}_v.
    \end{eqnarray*}
    \end{proof}

Let $i \leq \tdepth$. 
For the analysis we will fix the outcome of the decomposition tree up-to level $i$,
and analyze the decomposition conditioned on the fixed part.
Let  $g^{\leq i}(x),g^{\leq i}(y)$ be possible outcomes for the random variables $\ztree^{\leq i}(x)$ and $\ztree^{\leq i}(y)$ 
and let $A_i=A(g^{\leq i}(x),g^{\leq i}(y))$ denote the event that $\ztree^{\leq i}(x)=g^{\leq i}(x)$ and $\ztree^{\leq i}(y)=g^{\leq i}(y)$. 
The event $A_i$ determines the value for $\ED_v$ for all $v$ at depth at most $i$, and we denote
this value by $\ED_v(A_i)$.  $A_i$ also determines whether $\eventC(\leq i)$ and $\eventT(\leq i)$ hold.
We now will show:

\begin{proposition}
\label{prop:loadestimator 2}
Let $i \leq j \leq \tdepth$.
Let $g^{\leq i}(x)$ and $g^{\leq i}(y)$ be a possible outcome for the first  $i$ levels of $\ztree(x)$ and $\ztree(y)$.
Let $v$ be a node in $T(\BASE^{\leq j})$ at depth $i$ such that under the event $A_i$, $v$ is compatible and
$\ED_v < \gapthresh_v$.
Then 
\begin{enumerate}
    \item $\expected[\loadestimator^j_v|A_i] \leq \critpar \ED_v(A_i) \times (j-i+2)$. 
    \item For any $\loadpar$, $\prob[\hmrload{\stringcap}^{\leq j}_v 
    \geq \frac{1}{\loadpar}\stringcap_i|A_i] \leq \frac{\loadpar}{4\log^5 n}$.
\end{enumerate}
\end{proposition}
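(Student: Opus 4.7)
The plan is to prove Part 1 by induction on $j-i$, establishing the slightly stronger statement that $\expected[\loadestimator^j_v \mid A_i] \le \critpar \cdot \ED_v(A_i) \cdot (j-i+2)$ holds for every compatible $v$ at depth $i\le j$, with no constraint on $\ED_v$. Part 2 then follows from Part 1 via Proposition~\ref{prop:loadestimator} and Markov's inequality; the hypothesis $\ED_v<\gapthresh_i$ is used only at that final step. For the base case $i=j$, the load estimator equals $\min(\dfactor \ED_v, \stringcap_j)$ when $\tstree_v(x)=\tstree_v(y)$ and equals $2\gramsize_j$ otherwise. Theorem~\ref{thm:OR} bounds the probability of threshold discrepancy by $\orgap\ED_v/\gapthresh_j$, and the identity $2\gramsize_j \cdot \orgap/\gapthresh_j = \critpar$ coming from the parameter definitions yields $\expected[\loadestimator^j_v \mid A_j] \le \dfactor\ED_v + \critpar\ED_v \le 2\critpar\ED_v$ whenever $\dfactor\ED_v \le \stringcap_j$. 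In the complementary regime $\dfactor\ED_v > \stringcap_j$ one uses the trivial cap $\stringcap_j+2\gramsize_j$, and the required bound $2\critpar\ED_v$ still holds since $\critpar/\dfactor \gg 1$ forces $\critpar\ED_v > \stringcap_j$ whenever $\ED_v > \stringcap_j/\dfactor$, while $2\gramsize_j \ll \stringcap_j$ by inspection.

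For the inductive step at depth $i<j$, I would split on the event $S$ that $v$ is compatibly split. Theorem~\ref{thm:BK} gives $\prob[\neg S \mid A_i] \le \bkfailure\ED_v/\gramsize_{i+1}$, so the $\neg S$ contribution is at most $\stringcap_i \cdot \bkfailure\ED_v/\gramsize_{i+1} = (\critpar/2)\ED_v$, using the parameter ratio $\stringcap_i/\gramsize_{i+1} = 32\orgap\log^6 n$. Under $S$, every child $w$ of $v$ is compatible and $\sum_w E_w=\ED_v$ deterministically given $A_i$, so applying the inductive hypothesis to each child together with the tower property and linearity of expectation yields $\sum_w \expected[\loadestimator^j_w \mid A_i, S] \le \critpar(j-i+1)\ED_v$. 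Adding the two contributions gives $\expected[\loadestimator^j_v \mid A_i] \le \critpar\ED_v(j-i+3/2) \le \critpar\ED_v(j-i+2)$, completing the induction.

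Finally, for Part 2, Proposition~\ref{prop:loadestimator} gives $\hmrload{\stringcap}^{\le j}_v \le \loadestimator^j_v$, so Markov's inequality combined with Part 1 and $\ED_v<\gapthresh_i$ yields a bound of $\loadpar \cdot \critpar\gapthresh_i(\tdepth+2)/\stringcap_i$; using $\critpar\gapthresh_i/\stringcap_i = 1/(8\log^6 n)$ and $\tdepth+2 \le 2\log n$ for $n$ large under the standing assumption $n>40k\orgap$, this simplifies to at most $\loadpar/(4\log^5 n)$. The main delicacy will be the base case bookkeeping---ensuring the $2\critpar\ED_v$ bound is uniform across both regimes of $\dfactor\ED_v$ versus $\stringcap_j$, so that the inductive step can proceed without a size constraint on $\ED_w$ for individual children (in particular, children whose edit distance exceeds $\gapthresh_{i+1}$ must still be handled correctly even though the parent satisfies $\ED_v<\gapthresh_i$). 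Once the correct parameter ratios are organized, the remainder is a routine expectation calculation that cleanly separates the compatibility-failure contribution from the recursive child contribution.
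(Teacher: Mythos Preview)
Your proposal is correct and follows essentially the same approach as the paper: reverse induction on $i$ for Part~1, then Markov's inequality plus Proposition~\ref{prop:loadestimator} for Part~2. Your explicit strengthening of the inductive hypothesis (dropping the $\ED_v < \gapthresh_i$ constraint so that it applies to every compatible child regardless of its edit distance) is a cleanup the paper's argument also needs but leaves implicit; once that strengthening is in place, the paper's additional split on whether each child is compatibly split becomes redundant, and your streamlined inductive step yields the same $(j-i+2)$ bound.
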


Note that the event $\hmrload{\stringcap}^{\leq j}_v \geq \frac{1}{\loadpar} \stringcap^{\leq j}_i$ is precisely the event that $v$ is $\loadpar$-overloaded, and this will allow
us to prove an upper bound on the right hand side of (~\ref{eqn:sum of F}).


\begin{proof}
The main thing to prove is the first part.  The second part follows easily from the first part:

\begin{eqnarray*}
\prob[\hmrload{\stringcap}^{\leq j}_v 
    \geq \frac{1}{\loadpar}\stringcap_i|A_i] 
    & \leq & \frac{\loadpar\,\expected[ \hmrload{\stringcap}^{\leq j}_v|A_i]}{\stringcap_i}\\
    &\leq & \frac{\loadpar\,\expected[\loadestimator^j_v|A_i]}{\stringcap_i}\\
    & \leq & \frac{\loadpar\, \critpar \ED_v(A_i) (2\log n)}{\stringcap_i}\\
    & \leq & \frac{\loadpar\, \critpar \gapthresh_i (2 \log n)}{\stringcap_i}\\
    & \leq & \frac{\loadpar}{4\log^5 n}.
\end{eqnarray*}
Here the first inequality comes from Markov's inequality, the second comes from 
Proposition~\ref{prop:loadestimator}, the third comes from the first part of this proposition, the fourth comes from the hypothesis that $\ED_v(A_i) <\gapthresh_i$ and the final inequality follows from the definitions of $\gapthresh_i$ and $\stringcap_i$. 

We prove the first part by 
reverse induction on $i$.  The base case is $i=j$. 
In this case, using Theorem~\ref{thm:OR} we have:

\begin{eqnarray}
\expected[\loadestimator^j_v|A_j] & = &
   \hphantom{+} \; \prob[\tstree_v(x)=\tstree_v(y)|A_j]\cdot \dfactor \cdot \ED_v(A_j) \nonumber \\
& &          + \; \prob[\tstree_v(x)\neq \tstree_v(y)|A_j]\cdot 2\gramsize_j \nonumber \\
& \leq & \dfactor \ED_v(A_j) + \frac{\ED_v(A_j) \cdot \orgap}{\gapthresh_j} \cdot 2\gramsize_j \nonumber\\
& \leq & (\dfactor+\critpar)\cdot \ED_v(A_j) \leq 2\critpar \cdot \ED_v(A_j).\label{eqn:critpar}
\end{eqnarray}

Now suppose $i<j$ and condition on $A_i$. Suppose $v$ is a node at depth $i$.   

Let $\CS$ be the set of nodes that are compatibly split

\begin{eqnarray*}
    \expected[\loadestimator^j_v|A_i] & = & \prob[v \not\in \CS|A_i] \cdot \stringcap_i
    +\prob[v \in \CS|A_i] \cdot \expected[\loadestimator^j_v|A_i \AND v \in CS]\\
    &\leq & \prob[v \not\in \CS|A_i] \cdot \stringcap_i
    +\expected[\loadestimator^j_v|A_i \AND v \in \CS].
\end{eqnarray*}

By Theorem~\ref{thm:BK}: 
\begin{equation}
    \label{eqn:v not compatibly split}
    \prob[v \not\in \CS|A_i] \cdot \stringcap_i \leq
\bkfailure \frac{\ED_v(A_i)}{\gramsize_{i+1}} \stringcap_i \leq \frac{\critpar}{2} \ED_v(A_i).
\end{equation}

To bound the second summand we condition further on the $(i+1)$-st level of $\ztree(x)$ and $\ztree(y)$.
This extension determines whether $v$ is compatibly split, and we 
let $g^{\leq i+1}(x),g^{\leq i+1}(y)$ denote a possible extension of $g^{\leq i}(x),g^{\leq i}(y)$ to level $i+1$ such that $v$ is compatibly split. Let $A_{i+1}=A(g^{\leq i+1}(x),g^{\leq i+1}(y))$.

Let $\CS_v(A_{i+1})$ be the set of children of $v$ that are compatibly split
and $\NCS_v(A_{i+1})$ be the set of children of $v$ that are not compatibly split, conditioned on $A_{i+1}$. In what follows we write
$\#A$ for the cardinality of the set $A$ instead of $|A|$ to avoid confusion with the $|$ in conditional expectation.
\begin{eqnarray*}
\expected[\loadestimator^j_v|A_{i+1}] & = & \sum_{w \in \child(v)} 
\expected[\loadestimator^j_w|A_{i+1}] \\
&\leq & \sum_{w \in \child(v)} \prob[w \in \NCS_v|A_{i+1}] \stringcap_{i+1}\\
&&+\sum_{w \in \child(v)}\prob[w \in CS_v|A_i+1]\critpar \ED_w(A_{i+1}) \times (j-i+1)\\
& \leq & \stringcap_{i+1}\times \expected[\#\NCS(v)|A_{i+1}]\\
& & + \critpar \ED_v(A_i) \times (j-i+1)\\
\end{eqnarray*}

For each child $w$ of $v$, by Theorem~\ref{thm:BK},
$\prob[w \in \NCS(v)|A_{i+1}] \leq \bkfailure\frac{\ED_w(A_{i+1})}{\gramsize_{i+2}}$. 
Therefore $\expected[\#NCS(v)|A_{i+1}] \leq \sum_{w \in \child(v)} \frac{\bkfailure \ED_w(A_{i+1})}{\gramsize_{i+2}} = \bkfailure\frac{\ED_v(A_i)}{\gramsize_{i+2}}$ where the final equality uses
that $v$ is compatibly split.
Therefore:

\begin{eqnarray} 
\label{eqn:NCS(v)}
\stringcap_{i+1}\times \expected[\#\NCS(v)|A_{i+1}]& \leq & 
\bkfailure \frac{\stringcap_{i+1}}{\gramsize_{i+2}}\ED_v(A_i) \leq \frac{1}{2}\critpar \ED_v(A_i).
\end{eqnarray}

and so:

\begin{eqnarray*}
\expected[\loadestimator^j_v|A_{i+1}] 
& \leq & \critpar \ED_v(A_i) \times (j-i+\frac{3}{2})\\
\end{eqnarray*}

 This upper bound is the same for all extensions $A_{i+1}$ of $A_i$ for
which $v$ is compatibly split, so:

\begin{eqnarray*}
\expected[\loadestimator^j_v|A_{i} \AND (v \in CS)] 
& \leq & \critpar \ED_v(A_i) \times (j-i+\frac{3}{2})
\end{eqnarray*}

Combining with (\ref{eqn:v not compatibly split}) yields the desired conclusion:
\begin{eqnarray*}
\expected[\loadestimator^j_v] 
& \leq & \critpar \ED_v(A_i) \times (j-i+2)
\end{eqnarray*}
\end{proof}

Using this Proposition we obtain:

\begin{lemma}
    \label{lemma:F(i,j)} 
    \begin{enumerate}
    \item
        For all $i,j$ with $0 \leq i \leq  j \leq \tdepth$,  $\prob[\eventF(i,j) \AND \eventC(\leq i) \AND T(\leq i) ] \leq \frac{1}{\log^4 n}$. 
    \item $\prob[\OR_{0 \leq i < j \leq \tdepth}(\eventF(i,j) \AND \normex \AND \eventC(\leq i) \AND T(\leq i))] \leq \frac{1}{\log^2 n}$.
    \end{enumerate}
\end{lemma}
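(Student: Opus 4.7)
The plan is to deduce both parts from Proposition~\ref{prop:loadestimator 2}(2), the main subtlety being that the node $w(i)$ is itself random (it depends on the decomposition of $x$ and $y$ up to level $i$), whereas Proposition~\ref{prop:loadestimator 2} is phrased for a fixed node $v$. I will handle this by conditioning on the outcome of the first $i$ levels of the decompositions and then summing.

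For part 1, fix $i, j$ with $0 \leq i \leq j \leq \tdepth$. Partition the sample space according to the pair $A_i = A(g^{\leq i}(x), g^{\leq i}(y))$ that specifies the first $i$ levels of $\ztree(x)$ and $\ztree(y)$, and restrict attention to outcomes $A_i$ under which $\eventC(\leq i) \AND \eventT(\leq i)$ holds (this membership is determined by $A_i$ alone). Under any such $A_i$, the node $w(i)$ is determined (it is the unique node at depth $i$ whose $x$-interval contains the fixed slice coordinate $b$), $w(i)$ is compatible by $\eventC(i)$, and $\ED_{w(i)} < \gapthresh_i$ by $\eventT(i)$. Hence the hypotheses of Proposition~\ref{prop:loadestimator 2} are satisfied with $v = w(i)$. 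Applying part~(2) of that proposition with $\loadpar = 4(\tdepth+1)$ gives
\[
\prob[\eventF(i,j) \mid A_i] \;=\; \prob\Bigl[\hmrload{\stringcap}^{\leq j}_{w(i)} \geq \tfrac{1}{4(\tdepth+1)}\stringcap_i \,\Big|\, A_i\Bigr] \;\leq\; \frac{4(\tdepth+1)}{4\log^5 n} \;=\; \frac{\tdepth+1}{\log^5 n}.
\]
Since $\tdepth = O(\log n)$ (from the definition $\tdepth = \lceil \log(20k\orgap)\rceil$ and $k\orgap \leq n/40$), the right-hand side is at most $1/\log^4 n$ for $n$ sufficiently large. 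Summing the bound over all valid $A_i$ yields
\[
\prob[\eventF(i,j) \AND \eventC(\leq i) \AND \eventT(\leq i)] \leq \frac{1}{\log^4 n},
\]
which is exactly part 1 (and in fact gives a bound that does not require $\normex$).

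For part 2, apply a union bound over the pairs $(i,j)$ with $0 \leq i < j \leq \tdepth$. Since $\tdepth = O(\log n)$, the number of such pairs is at most $\binom{\tdepth+1}{2} = O(\log^2 n)$. Dropping the extra intersection with $\normex$ (which only decreases the probability), each term is bounded by $1/\log^4 n$ by part 1, so the union bound gives
\[
\prob\bigl[\OR_{0 \leq i < j \leq \tdepth}(\eventF(i,j) \AND \normex \AND \eventC(\leq i) \AND \eventT(\leq i))\bigr] \;\leq\; O\!\left(\frac{\log^2 n}{\log^4 n}\right) \;\leq\; \frac{1}{\log^2 n}
\]
for $n$ large enough, as required. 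The main work has already been done in establishing Proposition~\ref{prop:loadestimator 2}; here the only delicate point is the bookkeeping that turns its bound for a fixed $v$ into a bound for the random node $w(i)$, which is exactly what the conditioning on $A_i$ accomplishes.
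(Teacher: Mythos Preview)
Your proof is correct and follows essentially the same approach as the paper: condition on $A_i$ (which determines $w(i)$ and whether $\eventC(\leq i)\wedge\eventT(\leq i)$ holds), apply Proposition~\ref{prop:loadestimator 2}(2) with $\loadpar=4(\tdepth+1)\leq 4\log n$ to get the $1/\log^4 n$ bound, then average over $A_i$ for part~1 and take a union bound over the $O(\log^2 n)$ pairs $(i,j)$ for part~2. Your write-up is in fact more explicit than the paper's about the point that $w(i)$ is random but fixed once $A_i$ is fixed, which is exactly the right way to justify the step.
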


\begin{proof}
For (1), apply Proposition ~\ref{prop:loadestimator 2}(2) with $\loadpar=4(\tdepth+1) \leq 4\log n$
to get $\prob[\eventF(i,j) \AND \normex|A_i] \leq \frac{1}{\log^4 n}$, and then average over all choices of $A_i$ for which $\eventC(\leq i) \AND T(\leq i)$ hold.
The second part  follows immediately by applying a union bound to the first part,
\end{proof}

Part (2) of this lemma implies that the right hand side of (\ref{eqn:sum of F}) is
at most  $\frac{1}{\log n}$, as required to prove Lemma~\ref{lemma:STR}.

\subsection{Proof of Lemma~\ref{lemma:LOC}}
\label{subsec:LOC}

The proof of this lemma is closely related to that of
Lemma~\ref{lemma:STR} and we will reuse parts of that proof.  
One difference is that in $\findlocations$ the HMR-recovery operates on the tree $T(\{0,1\}^{\tdepth\gsize})$ rather than  $T(\BASE^{\tdepth})$.
Since the depth of this tree is $\tdepth\gsize$ the overload parameter for the HMR-recovery scheme
is set to $\frac{1}{4\tdepth\gsize}$ instead of $\frac{1}{4(\tdepth+1)}$.
Levels of the binary tree are denoted
by ordered pairs $(i,r)$ where $0 \leq i \leq \tdepth$ and $0 \leq r \leq \gsize-1$.
Level $(i,r)$ refers to level $i\gsize+r$ of $T(\{0,1\}^{\tdepth\gsize})$.  (If $i=\tdepth$ then $r$ can only be 0.)
Level $(i,0)$ in the binary tree corresponds to level $i$ in $T(\BASE^{\tdepth})$.
The order on levels $(i,r)$ is lexicographic,   We say that 
$v$ has depth $(i,r)$ if $|v|=i\gsize+r$, and has depth at most $(i,r)$ if
$|v| \leq i\gsize+r$.

The nodes in the path of $T(\{0,1\}^{\tdepth\gsize})$ that corresponds to the path from $w(0)$ to $w(\tdepth)$ in $T(\BASE^{\tdepth})$ are denoted by $w(i,r)$ where $(0,0) \leq (i,r) \le (\tdepth,0)$.

For $m=i\gsize+r$ we abbreviate the HMR-recovery problem
$(\lvec^m,T(\{0,1\}^m),\locationcap^{\leq m},4\tdepth\gsize,1/n^4)$ by $\locprob(i,r)$.
 We write $\locationcap^{\leq (i.r)}$ for
$\locationcap^{\leq m}$.  Recall that by definition, for a node $v$ at depth $(i',r') \le (i,r)$,
its capacity for $\locprob(i,r)$ is $\locationcap^{\leq (i,r)}_v=\stringcap_{i'}$, and its load is denoted $\hmrload{\locationcap}^{\le (i,r)}_{v}$.

The algorithm $\findlocations$, perform $\tdepth \gsize$ instances
of HMR-recovery 
corresponding to the labelings $\lvec^{j}$ for  $0\leq j \leq \gsize\tdepth-1$.  This is not
done for the final level $\gsize\tdepth$ because $\lvec^*_v$ is not undefined for leaves of
the binary tree.    
For analysis purposes it is convenient to augment the
algorithm by defining $\lvec_v=0$
for all $v \in \nonemptynodes^*$ and performing HMR sketch-and-recover for level $\gsize\tdepth$, and
including it as part of the sketch.  
We assume that the (trivial) recovery of this level defines $\claimedsize^*_v(x)=\claimedsize^*_v(y)=0$ for all nodes $v \in \{0,1\}^{\tdepth\gsize}$ where $\ztree^*_v(x) \neq \ztree^*_v(y)$.) 
This additional step plays no role in recovering the
canonical alignment of $x,y$, but makes it easier to reuse a part of the proof of 
Lemma ~\ref{lemma:STR}.  We must prove:

$\prob[\neg\LOC(w(q+1,0)) \AND \normex \AND (q<\tdepth) \AND \eventC(\leq q+1)] \leq \frac{1}{\log n}.$ 

Analogous to the events $\eventF(i,j)$ defined in the previous section,  
for $(i,r) \leq (j,t)$, we define the bad event:

\begin{description}
\item[$\eventL((i,r),(j,t))$] is the event that
$w(i,r)$ is $\frac{1}{4\tdepth\gsize}$-overloaded 
with respect to $\locprob(j,t)$, i.e.,
$\hmrload{\locationcap}^{\le (j,t)}_{w(i,r)} \geq \frac{1}{4\tdepth\gsize}\stringcap_i$.
\end{description}
We have:
\begin{eqnarray}
    \prob[& \neg\LOC&(w(q+1,0))  \AND  \normex \AND (q<\tdepth) \AND \eventC(\leq q+1)] \nonumber \\  
    & \leq & 
    \sum \prob[\OR_{(i,r) <  (j,t) < (q+1,0)}\eventL((i,r),(j,t)) \AND \normex \AND (q<\tdepth)\AND \eventC(\leq i) \AND \eventT(\leq i)] \nonumber \\
    & \leq & 
     \prob[\OR_{(i,r) <  (j,t) \leq  (\tdepth,0)}\eventL((i,r),(j,t)) \AND \normex \AND \eventC(\leq i)\AND \eventT(\leq i)] \label{eqn:L}
\end{eqnarray}

We will bound this last quantity by $\frac{1}{\log n}$.
The first step is to show that it suffices to bound the probability
of bad events of the form $\eventL((i,0),(j,0))$:

\begin{proposition}
\label{prop:restrict levels}
For $(i,r) < (j,t) \leq  (\tdepth,0)$, if $t \neq 0$, event $\eventL((i,r),(j,t))$ implies (is contained in)  event $ \eventL((i,0),(j+1,0))$, and  if $t = 0$ event $\eventL((i,r),(j,t))$ implies   $\eventL((i,0),(j,0))$, and therefore:

\begin{multline}
\prob[\OR_{(i,r) <  (j,t) \leq  (\tdepth,0)}\eventL((i,r),(j,t)) \AND \normex \AND \eventC(\leq i) \AND \eventT(\leq i)] \\
=
\prob[\OR_{i < j \leq \tdepth}\eventL((i,0),(j,0)) \AND \normex \AND \eventC(\leq i)\AND \eventT(\leq i)] 
\end{multline}

\end{proposition}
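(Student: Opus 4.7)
}

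The plan is to establish both implications by combining two monotonicity properties of the load function, and then to derive the probability equality from a simple two-sided inclusion. The key structural fact I will exploit throughout is that the capacity $\locationcap^{\le m}_{i'} = \stringcap_{\lfloor i'/\gsize \rfloor}$ is constant inside each length-$\gsize$ block of depths $[i\gsize, (i+1)\gsize)$, and that this common value $\stringcap_i$ does not depend on which problem $\locprob(j,t)$ we are inside.

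First, I will prove a \emph{same-block propagation} lemma: inside any single problem $\locprob(j^*,t^*)$, if $v$ is a node at depth $d$ and $v'$ is an ancestor at depth $d'\le d$ such that every depth in $[d',d]$ lies in a common $\gsize$-block with capacity $\stringcap_i$, then $\hmrload{\locationcap}^{\le (j^*,t^*)}_{v'} \ge \min(\stringcap_i,\hmrload{\locationcap}^{\le (j^*,t^*)}_v)$. This is immediate from the recursion $\hmrload{\locationcap}^{\le (j^*,t^*)}_{v'} = \min(\stringcap_i,\sum_{u\in\child(v')}\hmrload{\locationcap}^{\le (j^*,t^*)}_u)$, since the sum is at least the load of the child on the path down to $v$, and the bound is preserved as we climb through a block where the capacity stays the same. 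Applied to the case $t=0$, this gives the implication $\eventL((i,r),(j,0))\Rightarrow \eventL((i,0),(j,0))$: the nodes $w(i,0),\dots,w(i,r)$ all share capacity $\stringcap_i$ and $\frac{1}{4\tdepth\gsize}\stringcap_i<\stringcap_i$, so the overload at $w(i,r)$ propagates upward to $w(i,0)$ within $\locprob(j,0)$.

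Next, to handle $t\ne 0$ (which, together with $(j,t)\le(\tdepth,0)$, forces $j<\tdepth$ so that $\locprob(j+1,0)$ exists), I will prove a \emph{cross-problem monotonicity} lemma: under a normal execution, for every node $v$ of depth at most $j\gsize+t$, $\hmrload{\locationcap}^{\le (j,t)}_v \le \hmrload{\locationcap}^{\le (j+1,0)}_v$. The proof is by reverse induction on the depth of $v$. At depth exactly $j\gsize+t$ the node $v$ is a leaf in $\locprob(j,t)$; if its load is $1$ then $\lvec^{j\gsize+t}_v(x)\ne\lvec^{j\gsize+t}_v(y)$, which from the definition $\lvec^m_v = n \cdot \fptree_v + \lvec^*_v$ means either $\fptree_v(x)\ne\fptree_v(y)$ or $\lvec^*_v(x)\ne\lvec^*_v(y)$; under normal execution (no fingerprinting abnormality) the former gives $\ztree^*_v(x)\ne\ztree^*_v(y)$ and the latter gives $|\ztree^*_{v\cdot 0}(x)|\ne|\ztree^*_{v\cdot 0}(y)|$, hence $\ztree^*_{v\cdot 0}(x)\ne\ztree^*_{v\cdot 0}(y)$. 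Iterating the observation that a concatenation $\ztree^*_u = \ztree^*_{u\cdot 0}\concat\ztree^*_{u\cdot 1}$ can differ between $x$ and $y$ only if one of its halves does, I descend to some binary-tree node $u$ at depth $(j+1)\gsize$ below $v$ with $\ztree^*_u(x)\ne\ztree^*_u(y)$; normal execution then yields $\fptree_u(x)\ne\fptree_u(y)$ and hence load $1$ at $u$ in $\locprob(j+1,0)$. The same-block lemma propagates this upward through the block containing $v$ to give $\hmrload{\locationcap}^{\le (j+1,0)}_v\ge 1$. The inductive step for internal $v$ at depth below $j\gsize+t$ is immediate because the child sums only get larger as we move from $\locprob(j,t)$ to $\locprob(j+1,0)$, and the capacities at the same depth agree. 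Given this lemma, the implication $\eventL((i,r),(j,t))\Rightarrow\eventL((i,0),(j+1,0))$ follows by first transferring the overload at $w(i,r)$ into $\locprob(j+1,0)$ and then propagating it up to $w(i,0)$ through the same-block lemma applied inside $\locprob(j+1,0)$.

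Finally, the probability equality is obtained by two inclusions. The $\le$ direction is a union bound: each event in the left-hand $\OR$ implies an event in the right-hand $\OR$ by the implications just proved, and these implications preserve the conjuncts $\normex\wedge\eventC(\le i)\wedge\eventT(\le i)$ since the index $i$ is unchanged. The $\ge$ direction is trivial, since each right-hand event $\eventL((i,0),(j,0))$ with $i<j\le\tdepth$ is literally a term in the left-hand $\OR$ (take $r=0$, $t=0$). I expect the cross-problem monotonicity step to be the main obstacle, since it is the one place where we must carefully interleave the ``no fingerprint abnormality'' guarantee with the recursive structure of $\ztree^*$ to turn a mid-block mismatch of $\lvec^{j\gsize+t}_v$ into an actual leaf mismatch at depth $(j+1)\gsize$; the rest of the argument is pure bookkeeping inside the load recursion.
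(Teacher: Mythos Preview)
Your proposal is correct and follows essentially the same approach as the paper: the paper also proves the proposition via two monotonicity claims, one showing that load is nondecreasing along paths toward the root within a fixed problem $\locprob(j,t)$ (giving $\eventL((i,r),(j,t))\Rightarrow\eventL((i,0),(j,t))$), and one showing $\hmrload{\locationcap}^{\le (j,t)}_v\le \hmrload{\locationcap}^{\le (j',t')}_v$ whenever $(j,t)<(j',t')$ by the same reverse induction with the same ``find a differing descendant at the deeper leaf level'' base case. Two minor remarks: the paper uses full path-to-root monotonicity of the load (from monotonicity of the capacities), which is slightly cleaner than your same-block lemma and sidesteps the technicality that the leaf $u$ at depth $(j{+}1)\gsize$ is one step outside block~$j$; and you are right to flag the normal-execution assumption in the base case (the paper uses it implicitly when asserting the leaf-load equivalence with $\ztree^*_v(x)\ne\ztree^*_v(y)$).
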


\begin{proof}
\begin{claim}
For any $(i,r) < (j,t)$, $\eventL((i,r),(j,t))$ implies $\eventL((i,0),(j,t))$.
\end{claim} 

\begin{proof}
From the recurrence relation for   $\hmrload{\locationcap}^{\le (j,t)}$, and
the fact that $\locationcap$ of a node is at most $\locationcap$ of its parent it follows 
that  $\hmrload{\locationcap}^{\le (j,t)}$ at a node is 
at most the $\hmrload{\locationcap}^{\le (j,t)}$ of its parent, and therefore
$\hmrload{\locationcap}^{\le (j,t)}$ does not decrease along paths to the root.  Therefore $\hmrload{\locationcap}^{\le (j,t)}_{w(i,r)} \leq \hmrload{\locationcap}^{\le (j,t)}_{w(i,0)}$. Since $\locationcap_{(i,0)}=\locationcap_{(i,r)}=\stringcap_i$,
$\eventL((i,r),(j,t))$ implies $\eventL((i,0),(j,t))$.
\end{proof}

\begin{claim}
If $(i,0) \leq (j,t)$ with $t \neq 0$ then   $\eventL((i,0),(j,t))$ implies $\eventL((i,0),(j+1,0))$.
\end{claim}

\begin{proof}
This follows from a more general claim: If $v$ is any node of the binary tree at level
$(i,r) \leq (j,t)$ and $(j,t)<(j',t')$ then $\hmrload{\locationcap}^{\le (j,t)}_v \leq \hmrload{\locationcap}^{\le (j',t')}_v$.
We prove this by reverse induction on $(i,r)$. 
For the basis, $(i,r)=(j,t)$, we have that $\hmrload{\locationcap}^{\le (j,t)}_v$ is 0 or 1,  and is 1 if and only if $\ztree^*_v(x) \neq \ztree^*_v(y)$.  So we need that if $\ztree^*_v(x) \neq \ztree^*_v(y)$
then $\hmrload{\locationcap}^{\le (j',t')}_v \geq 1$.  Since $\ztree^*_v(x) \neq \ztree^*_v(y)$
there is a descendant $w$ of $v$ at level $(j',t')$ such that $\ztree^*_w(x) \neq \ztree^*_w(y)$
and so $\hmrload{\locationcap}^{\le (j',t')}_w =1$.  Since $\hmrload{\locationcap}^{\le (j',t')}$ does not
decrease along paths to the root, $\hmrload{\locationcap}^{\le (j',t')}_v \geq 1$, as required 
for the basis step.  For the induction step, if $(i,r)<(j,t)$ then by induction we have:

\begin{eqnarray*}
 \hmrload{\locationcap}^{\le (j,t)}_v & = & \min(\locationcap^{\le (j,t)}_v,\sum_{w \text{ child of }v} \hmrload{\locationcap}^{\le (j,t)})\\
 & \le & \min(\locationcap^{\le (j',t')}_v,\sum_{w \text{ child of }v} \hmrload{\locationcap}^{\le 
 (j',t')}_v) = \hmrload{\locationcap}^{\le (j',t')}_v,
\end{eqnarray*}
\end{proof}
Combining the claims we have that for $t \neq 0$,
$\eventL((i,r),(j,t))$ implies $\eventL((i,0),(j,t))$ implies $\eventL((i,0),(j,0))$,
and if $t=0$ then the first claim gives
$\eventL((i,r),(j,0))$ implies $\eventL((i,0),(j,0))$.
\end{proof}

It is useful to recall the connection between the trees
$T(\BASE^{\tdepth})$ and $T(\{0,1\}^{\gsize d}$.  Since $\BASE=\{0,1\}^{\gsize}$, there
is a natural mapping from $\BASE^{\tdepth}$ to $\{0,1\}^{\gsize d}$, where
$v=v_1\ldots,v_j$ maps to the binary string $v_1 \concat \cdots \concat v_j$
of length $\gsize j$.  This gives a 1-1 correspondence from nodes at level $j$
in $T(\BASE^{\tdepth})$ to nodes at level $(j,0)$ in $T(\{0,1\}^{\gsize d}$.
For node $v$ in $T(\BASE^{\tdepth}$, let $v'$ be the associated node in $T(\{0,1\}^{\gsize d}$.
The HMR-recovery problem for $(\bvec^j,T(\BASE^j\times \{1,\dots,\bvsize\}),\stringcap^{\leq j},4\tdepth\gsize,1/n^4)$ analyzed in the previous subsection
operated on the tree $T(\BASE^j\times \{1,\dots,\bvsize\})$. 
\begin{proposition}
\label{prop:string > location}
For any $j \leq \tdepth$, and for any node $v$ of $T(\BASE^j)$, for any execution,
if the corresponding node $v'$ in $T(\{0,1\}^{\gsize \tdepth})$ is $\frac{1}{4\tdepth\gsize}$-overloaded for $\locprob(j,0)$ then $v$
is  $\frac{1}{4\tdepth\gsize}$-overloaded for the HMR-recovery problem for
$(\bvec^j,T(\BASE^j\times \{1,\dots,\bvsize\}),\stringcap^{\leq j},4\tdepth\gsize,1/n^4)$.
\end{proposition}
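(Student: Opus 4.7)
The plan is to prove the contrapositive and to exploit that $v'$ sits at the leaf level of the HMR tree for $\locprob(j,0)$. Because $T(\{0,1\}^{j\gsize})$ has depth $j\gsize$, the node $v'$ (at depth $j\gsize$) is a leaf, so its HMR load is in $\{0,1\}$. Both $v$ and $v'$ carry capacity $\stringcap_j$. If $\stringcap_j>4\tdepth\gsize$ the hypothesis that $v'$ is overloaded is vacuous, so I may assume $\stringcap_j\leq 4\tdepth\gsize$; then $\stringcap_j/(4\tdepth\gsize)\leq 1$, and it suffices to show $\hmrload{\stringcap}^{\leq j}_v\geq 1$, i.e., $\bvec^j_v(x)\neq \bvec^j_v(y)$, whenever $\lvec^{j\gsize}_{v'}(x)\neq \lvec^{j\gsize}_{v'}(y)$.

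Accordingly I would establish the contrapositive of this reduced claim: assuming $\bvec^j_v(x)=\bvec^j_v(y)$, derive $\lvec^{j\gsize}_{v'}(x)=\lvec^{j\gsize}_{v'}(y)$. The argument splits according to whether $\ztree_v(x)$ and $\ztree_v(y)$ are empty. If both are empty then $v'\notin \nonemptynodes^*(x)\cup\nonemptynodes^*(y)$, so the corresponding $\lvec^{j\gsize}$ entries are $0$ on both sides and the conclusion is immediate. If exactly one is empty, say $\ztree_v(x)=\emptystr$, then by the construction in $\vectorcondense$, $\bvec^j_v(x)=0$, while $\bvec^j_v(y)=\tstree_v(y)\cdot \bvec_v(y)$ is nonzero: Theorem~\ref{thm:OR} guarantees $\tstree_v(y)\in \Z^+$, and by the minimality convention in Theorem~\ref{thm:BK} ($\decode(0^{\bvsize})=\undefnd$) the encoding $\bvec_v(y)$ of the nonempty string $\ztree_v(y)$ must be a nonzero $0$--$1$ vector. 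This contradicts the standing assumption and rules out the mixed case.

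The main case is when both $\ztree_v(x)$ and $\ztree_v(y)$ are nonempty. Then $\tstree_v(x),\tstree_v(y)\neq 0$ and $\bvec_v(x),\bvec_v(y)$ are nonzero $0$--$1$ vectors. Reading the $\bigfield$-equation $\tstree_v(x)\,\bvec_v(x)[i]=\tstree_v(y)\,\bvec_v(y)[i]$ coordinate by coordinate, the supports of $\bvec_v(x)$ and $\bvec_v(y)$ must coincide (otherwise on a coordinate in the symmetric difference one side vanishes and the other does not), and on the common support we immediately get $\tstree_v(x)=\tstree_v(y)$. Therefore $\bvec_v(x)=\bvec_v(y)$, and applying $\decode$ yields $\ztree_v(x)=\ztree_v(y)$.

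The final step invokes the determinism of $\mdecomp$ given the shared randomizing parameter $\rho$: applying $\bdecomp$ to equal inputs with the same sparsity parameter $\gramsize_{j+1}$ and the same $\rho$ produces identical outputs. By downward induction the entire $\ztree$-subtree rooted at $v$ agrees for $x$ and $y$, and hence so does the induced binary refinement $\ztree^*$ beneath $v'$. This forces $\fptree_{v'}(x)=\fptree_{v'}(y)$ and $\lvec^*_{v'}(x)=\lvec^*_{v'}(y)$, and therefore $\lvec^{j\gsize}_{v'}(x)=\lvec^{j\gsize}_{v'}(y)$, as required. The only delicate point I anticipate is verifying that $\tstree_v$ is strictly positive on nonempty inputs (so that the multiplicative equation separates into a clean coincidence of supports and fingerprints); this follows directly from the $\Z^+$-valued output guarantee of Theorem~\ref{thm:OR}, after which the rest of the argument is a routine case analysis plus one deterministic induction.
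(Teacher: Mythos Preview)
Your argument has a genuine gap: you only handle the case where $v$ is a \emph{leaf} of $T(\BASE^j)$, i.e.\ $|v|=j$, so that $v'$ sits at depth $j\gsize$ and is a leaf of the HMR tree for $\locprob(j,0)$. But the proposition is stated for \emph{any} node $v$ of $T(\BASE^j)$, at any depth $i\in\{0,\dots,j\}$, and this is exactly how it is used: in the proof of Lemma~\ref{lemma:LOC} one applies it with $v=w(i)$ for $i<j$, an internal node whose image $v'$ lies at depth $i\gsize<j\gsize$ and is therefore not a leaf of $T(\{0,1\}^{j\gsize})$. For such $v'$ the load $\hmrload{\locationcap}^{\le (j,0)}_{v'}$ need not be in $\{0,1\}$, so your opening reduction collapses.

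What you have written is essentially the base case of the paper's argument (and in fact you are more careful than the paper there, since you work directly with the watermarked vectors $\bvec^j$ rather than the raw grammars $\bvec$). What is missing is the inductive step: one shows the pointwise inequality $\hmrload{\locationcap}^{\le (j,0)}_{v'}\le \hmrload{\stringcap}^{\le j}_{v}$ for every node $v\in T(\BASE^{\le j})$ by reverse induction on $|v|$. For $|v|=i<j$, the key observation is that the descendants of $v'$ in $T(\{0,1\}^{j\gsize})$ at level $(i{+}1)\gsize$ are exactly the images $w'$ of the children $w$ of $v$ in $T(\BASE^j)$, and that every binary-tree level between $(i,0)$ and $(i{+}1,0)$ carries the same capacity $\stringcap_i$; this lets one collapse the $\gsize$ intermediate $\min$'s and compare directly the recurrences for $\hmrload{\locationcap}^{\le (j,0)}$ and $\hmrload{\stringcap}^{\le j}$. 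Since $\locationcap^{\le (j,0)}_{v'}=\stringcap^{\le j}_v$, the load inequality then immediately transfers overload from $v'$ to $v$.
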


\begin{proof}
First we claim that for any non-leaf node $v$ of $T(\BASE^j\times \{1,\dots,\bvsize\})$, $\hmrload{\locationcap}^{\le (j,0)}_{v'} \leq \hmrload{\stringcap}^{\leq j}_v$.
We
prove this by reverse induction on $|v|$. For the basis, $|v|=j$,  
$\hmrload{\locationcap}^{\le (j,0)}_{v'} \in \{0,1\}$ and 
is 1 if and only if $\ztree_v(x) \neq \ztree_v(y)$.
If $\ztree_v(x) \neq \ztree_v(y)$ then $\bvec_v(x) \neq \bvec_v(y)$, which implies
that for some $a \in \{1,\ldots,\bvsize\}$ the child $(v,a)$ of $v$ satisfies  $\bvec_{v\circ a}(x) \neq\bvec_{v \circ a}(y)$ which implies that $\hmrload{\stringcap}^{\leq j}_v \geq 1$.

For the induction step, suppose $|v|=i<j$.   For a node $w$ at level less than $(i+1,0)$
in $T(\{0,1\}^{\tdepth \gsize}$,
let $D(u)$ be the set of its descendants at level $(i+1,0)$.

\begin{claim}
For  all $w$ that is a descendant of $v$ 
at level less than $(i+1,0)$ (including $v$ itself):
$\hmrload{\locationcap}^{\le (j,0)}_u=\min(\stringcap_i,\sum_{w \in D(u)} \hmrload{\locationcap}^{
\leq (j,0)}_w)$.
\end{claim}

This claim follows easily by (reverse) induction using the recurrence
$\hmrload{\locationcap}^{\le (j,0)}_v=\min(\locationcap_v,\sum_{w \text{ a child of }v} \hmrload{\locationcap}^{\leq (j,0)}_u)$ and the fact that  $\locationcap_{(i,r)}=\stringcap_i$ for
any $r \in [0,\gsize)$. 

Observe that $D(v')=\{w':w \text{ is a child of $v$ in } T(\BASE^d)\}$.
From the previous claim, we have: 
\begin{eqnarray*}
\hmrload{\locationcap}^{\leq (j,0)}_{v'}& = & \min(\stringcap_i,\sum_{w \text{ a child of $v$ in } T(\BASE^d)} \hmrload{\locationcap}^{\leq (j,0)}_{w'}\\
& \leq  & \min(\stringcap_i,\sum_{w \text{ a child of v}}\hmrload{\stringcap}^{\leq j}_w)
= \hmrload{\stringcap}^{\leq j}_v..
\end{eqnarray*}

Since $\locationcap^{\leq (j,0)}_{v'}=\stringcap^{\leq j}_v$, it follows that if
$v'$ is overloaded with respect to $\locprob(j,0)$ then $v$
is  $\frac{1}{4\tdepth\gsize}$-overloaded with respect to
$(\bvec^j,T(\BASE^j\times \{1,\dots,\bvsize\}),\stringcap^{\leq j},4\tdepth\gsize,1/n^4)$.
\end{proof}

We can now complete the proof of Lemma ~\ref{lemma:LOC}
Given these propositions we have:
\begin{multline}
\prob[\OR_{(i,r) <  (j,t) \leq (\tdepth,0)}\eventL((i,r),(j,t)) \AND \normex \AND \eventC(\leq i) \AND \eventT(\leq i)] \\
=
\prob[\OR_{i < j \leq \tdepth}\eventL((i,0),(j,0)) \AND \normex \AND \eventC(\leq i)\AND \eventT(\leq i)] \\
\leq \sum_{i<j \leq \tdepth} \prob[\eventL((i,0),(j,0)) \AND \normex \AND \eventC(\leq i)\AND \eventT(\leq i)]\\
= \sum_{i<j \leq \tdepth} \prob[\hmrload{\locationcap}^{\le (j,0)}_{w(i,0)} \leq \frac{1}{4\tdepth\gsize}\stringcap_i\AND \normex \AND \eventC(\leq i)\AND \eventT(\leq i)]\\ 
\le \sum_{i<j \leq \tdepth} \prob[\hmrload{\stringcap}^{\leq j}_{w(i)}\leq \frac{1}{4\tdepth\gsize}\stringcap_i\AND \normex \AND \eventC(\leq i)\AND \eventT(\leq i)].
\end{multline}

Using the second part of Proposition~\ref{prop:loadestimator 2}, each term in the final expression
is at most $\frac{\tdepth\gsize}{\log^5 n} \leq \frac{1}{\log^3 n}$, and summing over all $i,j$
gives an upper bound of $\frac{1}{\log n}$, as required to prove Lemma~\ref{lemma:LOC}.

\section{Constraints for the choice of our parameters}
\label{sec:constaints}

In Section~\ref{subsec:sketch code} we defined various parameters for the scheme.
The parameters $\gapthresh_i$, $\gramsize_i$ and $\stringcap_i$ are adjusted to satisfy various constraints that arise in the analysis.
As stated in Theorem~\ref{thm:main}, the length of the sketch is $\tilde{O}(\stringcap_0)$ so we choose the parameters to satisfy the constraints while (approximately) minimizing $\stringcap_0$. 
For convenient reference, here we collect the constraints that these parameters must satisfy.  

\begin{itemize}
    \item $\gapthresh_0=\lceil 20k\orgap \rceil_2$.  This ensures that condition $\eventC(0)$ holds as noted prior
    to Proposition~\ref{prop:U(j)},
    \item $\gapthresh_{\tdepth}=1$ which is used inside the proof of Proposition~\ref{prop:secondary case}.
\end{itemize}
For $j \in \{1,\ldots,\tdepth\}$:
\begin{enumerate}

    \item 
    \label{constraint 1} $\gramsize_j \geq \stringcap_{j-1} \frac{2 \bkfailure}{\critpar}$. This is used both in (\ref{eqn:v not compatibly split}) and (\ref{eqn:NCS(v)}) to bound the contribution of nodes that are not compatibly split to the expectation of $\hmrload{\stringcap}^{\leq j}$. 
    \item
    \label{constraint 2} $\gapthresh_j \geq \frac{2\orgap}{\critpar}\gramsize_j$. This is used in (\ref{eqn:critpar}) to bound the contribution to
    $\hmrload{\stringcap}^{\leq j}$ from leaves for which $\tstree_v(x) \neq \tstree_v(y)$.
    \item
    \label{constraint 3}$\gramsize_{j} \geq (\bkfailure \log^2n) \cdot \gapthresh_{j-1}$.  This is used in (\ref{eqn:NCS}) to bound the probability that a node $v$ for which $\tstree_v(x) = \tstree_v(y)$ is not compatibly split by $\frac{1}{\log^2 n}$.  This constraint does not need to be explicitly enforced because it follows from the combination of constraints \ref{constraint 1}  and \ref{constraint 4}
    \item
    \label{constraint 4}$\stringcap_j \geq (8 \critpar\log^6n ) \cdot \gapthresh_j$. This allows us to use Markov's inequality
    to deduce the second part of Proposition~\ref{prop:loadestimator 2} to bound the probability that a node
    is overloaded with respect to $\hmrload{\stringcap}^{\leq j}$.
    \label{constraint 5}
    $\stringcap_j \geq 4\gramsize_j$. When doing $\hmrrecover$ on the sketches $\fsketch^j(x),\fsketch^j(y)$, every node $v$ at level $j$ is underloaded.
    \item $\stringcap_j \geq (8\log n)\cdot \gramsize_j$. This is used in the proof of Proposition~\ref{prop:sufficient}. This constraint does not need to be explicitly enforced because it follows from the combinations of constraints \ref{constraint 3} and
    \ref{constraint 4}.
    \item 
    \label{constraint 6}
    $\critpar \geq \dfactor$.  This is used in the basis step of the proof of Proposition~\ref{prop:loadestimator 2}.  It is implied by the constraint obtained by multiplying together \ref{constraint 1}, \ref{constraint 2}, and \ref{constraint 4}.
\end{enumerate}

\bibliographystyle{alpha}
\bibliography{edit_sketch}

\newcommand{\etalchar}[1]{$^{#1}$}
\begin{thebibliography}{CDG{\etalchar{+}}20}

\bibitem[AN20]{AN20}
Alexandr Andoni and Negev~Shekel Nosatzki.
\newblock Edit distance in near-linear time: it's a constant factor.
\newblock {\em CoRR}, abs/2005.07678, 2020.

\bibitem[AS08]{AS08}
Noga Alon and Joel~H. Spencer.
\newblock {\em The Probabilistic Method, Third Edition}.
\newblock Wiley-Interscience series in discrete mathematics and optimization.
  Wiley, 2008.

\bibitem[BI18]{BI15}
Arturs Backurs and Piotr Indyk.
\newblock Edit distance cannot be computed in strongly subquadratic time
  (unless seth is false).
\newblock {\em SIAM Journal on Computing}, 47(3):1087--1097, 2018.

\bibitem[BK23]{BK23}
Sudatta Bhattacharya and Michal Kouck{\'{y}}.
\newblock Locally consistent decomposition of strings with applications to edit
  distance sketching.
\newblock In Barna Saha and Rocco~A. Servedio, editors, {\em Proceedings of the
  55th Annual {ACM} Symposium on Theory of Computing, {STOC} 2023, Orlando, FL,
  USA, June 20-23, 2023}, pages 219--232. {ACM}, 2023.

\bibitem[BR20]{BR20}
Joshua Brakensiek and Aviad Rubinstein.
\newblock Constant-factor approximation of near-linear edit distance in
  near-linear time.
\newblock In {\em Proccedings of the 52nd Annual {ACM} {SIGACT} Symposium on
  Theory of Computing, {STOC} 2020}, pages 685--698. {ACM}, 2020.

\bibitem[BZ16]{belazzougui_zhang}
Djamal Belazzougui and Qin Zhang.
\newblock Edit distance: Sketching, streaming, and document exchange.
\newblock In {\em 2016 IEEE 57th Annual Symposium on Foundations of Computer
  Science (FOCS)}, pages 51--60, 2016.

\bibitem[CDG{\etalchar{+}}20]{CDGKS20}
Diptarka Chakraborty, Debarati Das, Elazar Goldenberg, Michal Kouck{\'{y}}, and
  Michael~E. Saks.
\newblock Approximating edit distance within constant factor in truly
  sub-quadratic time.
\newblock {\em J. {ACM}}, 67(6):36:1--36:22, 2020.

\bibitem[CEPR09]{DBLP:conf/soda/CliffordEPR09}
Rapha{\"{e}}l Clifford, Klim Efremenko, Ely Porat, and Amir Rothschild.
\newblock From coding theory to efficient pattern matching.
\newblock In {\em Proceedings of the Twentieth Annual {ACM-SIAM} Symposium on
  Discrete Algorithms, {SODA} 2009, New York, NY, USA, January 4-6, 2009},
  pages 778--784. {SIAM}, 2009.

\bibitem[CGK16]{CGK16}
Diptarka Chakraborty, Elazar Goldenberg, and Michal Kouck{\'{y}}.
\newblock Streaming algorithms for embedding and computing edit distance in the
  low distance regime.
\newblock In Daniel Wichs and Yishay Mansour, editors, {\em Proceedings of the
  48th Annual {ACM} {SIGACT} Symposium on Theory of Computing, {STOC} 2016,
  Cambridge, MA, USA, June 18-21, 2016}, pages 712--725. {ACM}, 2016.

\bibitem[CKP19]{rollinghashSODA2019}
Rapha{\"{e}}l Clifford, Tomasz Kociumaka, and Ely Porat.
\newblock The streaming k-mismatch problem.
\newblock In Timothy~M. Chan, editor, {\em Proceedings of the Thirtieth Annual
  {ACM-SIAM} Symposium on Discrete Algorithms, {SODA} 2019, San Diego,
  California, USA, January 6-9, 2019}, pages 1106--1125. {SIAM}, 2019.

\bibitem[Die96]{Dietzfelbinger96}
Martin Dietzfelbinger.
\newblock Universal hashing and k-wise independent random variables via integer
  arithmetic without primes.
\newblock In Claude Puech and R{\"{u}}diger Reischuk, editors, {\em {STACS} 96,
  13th Annual Symposium on Theoretical Aspects of Computer Science, Grenoble,
  France, February 22-24, 1996, Proceedings}, volume 1046 of {\em Lecture Notes
  in Computer Science}, pages 569--580. Springer, 1996.

\bibitem[FKSV02]{DBLP:journals/siamcomp/FeigenbaumKSV02}
Joan Feigenbaum, Sampath Kannan, Martin Strauss, and Mahesh Viswanathan.
\newblock An approximate l1-difference algorithm for massive data streams.
\newblock {\em {SIAM} J. Comput.}, 32(1):131--151, 2002.

\bibitem[FNSS92]{SSS93}
Amos Fiat, Moni Naor, Jeanette~P. Schmidt, and Alan Siegel.
\newblock Nonoblivious hashing.
\newblock volume~39, pages 764--782, 1992.

\bibitem[GKLS22]{ED_compressed_string_Soda22}
Arun Ganesh, Tomasz Kociumaka, Andrea Lincoln, and Barna Saha.
\newblock How compression and approximation affect efficiency in string
  distance measures.
\newblock In Joseph~(Seffi) Naor and Niv Buchbinder, editors, {\em Proceedings
  of the 2022 {ACM-SIAM} Symposium on Discrete Algorithms, {SODA} 2022, Virtual
  Conference / Alexandria, VA, USA, January 9 - 12, 2022}, pages 2867--2919.
  {SIAM}, 2022.

\bibitem[Gra16]{G16}
Szymon Grabowski.
\newblock New tabulation and sparse dynamic programming based techniques for
  sequence similarity problems.
\newblock {\em Discret. Appl. Math.}, 212:96--103, 2016.

\bibitem[JNW21]{nelson_edit_sketch}
Ce~Jin, Jelani Nelson, and Kewen Wu.
\newblock An improved sketching algorithm for edit distance.
\newblock In {\em 38th International Symposium on Theoretical Aspects of
  Computer Science, {STACS} 2021,}, volume 187 of {\em LIPIcs}, pages
  45:1--45:16, 2021.

\bibitem[KPS21]{editsketchfocs2021}
Tomasz Kociumaka, Ely Porat, and Tatiana Starikovskaya.
\newblock Small-space and streaming pattern matching with $k$ edits.
\newblock In {\em 2021 IEEE 62nd Annual Symposium on Foundations of Computer
  Science (FOCS)}, pages 885--896, 2021.

\bibitem[KR87]{rabin_karp}
Richard~M. Karp and Michael~O. Rabin.
\newblock Efficient randomized pattern-matching algorithms.
\newblock {\em IBM J. Res. Dev.}, 31(2):249–260, mar 1987.

\bibitem[KS20]{KS20}
Michal Kouck{\'{y}} and Michael~E. Saks.
\newblock Constant factor approximations to edit distance on far input pairs in
  nearly linear time.
\newblock In {\em Proccedings of the 52nd Annual {ACM} {SIGACT} Symposium on
  Theory of Computing, {STOC} 2020}, pages 699--712. {ACM}, 2020.

\bibitem[LMS98]{LMS98}
Gad~M. Landau, Eugene~W. Myers, and Jeanette~P. Schmidt.
\newblock Incremental string comparison.
\newblock {\em {SIAM} J. Comput.}, 27(2):557--582, 1998.

\bibitem[MP80]{MP80}
William~J. Masek and Mike Paterson.
\newblock A faster algorithm computing string edit distances.
\newblock {\em J. Comput. Syst. Sci.}, 20(1):18--31, 1980.

\bibitem[OR07]{OR07}
Rafail Ostrovsky and Yuval Rabani.
\newblock Low distortion embeddings for edit distance.
\newblock {\em J. {ACM}}, 54(5):23, 2007.

\bibitem[PL07]{PL07}
Ely Porat and Ohad Lipsky.
\newblock Improved sketching of hamming distance with error correcting.
\newblock In Bin Ma and Kaizhong Zhang, editors, {\em Combinatorial Pattern
  Matching, 18th Annual Symposium, {CPM} 2007, London, Canada, July 9-11, 2007,
  Proceedings}, volume 4580 of {\em Lecture Notes in Computer Science}, pages
  173--182. Springer, 2007.

\bibitem[WF74]{WF74}
Robert~A. Wagner and Michael~J. Fischer.
\newblock The string-to-string correction problem.
\newblock {\em J. {ACM}}, 21(1):168--173, 1974.

\end{thebibliography}

\end{document}